\documentclass[letterpaper]{amsart}

\usepackage[letterpaper]{geometry}
\usepackage{amsmath, amsthm, amsfonts, amsbsy, thmtools, amssymb}

\usepackage{thm-restate}
\usepackage{hyperref} 
\usepackage[]{ytableau}
\usepackage[mathscr]{euscript}
\usepackage{stmaryrd}
\usepackage[all, arc, curve, frame]{xy} 
\usepackage[T1]{fontenc}
\usepackage{tikz}
\usetikzlibrary{arrows}

\makeatletter
\def\namedlabel#1#2{\begingroup
	#2%
	\def\@currentlabel{#2}%
	\phantomsection\label{#1}\endgroup
}
\makeatother

\numberwithin{equation}{section}

\SetSymbolFont{stmry}{bold}{U}{stmry}{m}{n}

\setcounter{tocdepth}{1} 

\DeclareMathOperator{\Sign}{Sign}

\DeclareMathOperator{\Id}{Id}

\DeclareMathOperator{\b|}{\boldsymbol{|}}

\DeclareMathOperator{\TW}{TW}

\title{Dynamical Stochastic Higher Spin Vertex Models} 

\author{Amol Aggarwal} 

\begin{document}

\newtheorem{thm}{Theorem}[section]
\newtheorem{prop}[thm]{Proposition}
\newtheorem{lem}[thm]{Lemma}
\newtheorem{cor}[thm]{Corollary}
\newtheorem{conj}[thm]{Conjecture}
\newtheorem{que}[thm]{Question}
\theoremstyle{remark}
\newtheorem{rem}[thm]{Remark}
\theoremstyle{definition}
\newtheorem{definition}[thm]{Definition}
\newtheorem{example}[thm]{Example}

\maketitle

\begin{abstract}
	
	We introduce a new family of integrable stochastic processes, called \textit{dynamical stochastic higher spin vertex models}, arising from fused representations of Felder's elliptic quantum group $E_{\tau, \eta} (\mathfrak{sl}_2)$. These models simultaneously generalize the stochastic higher spin vertex models, studied by Corwin-Petrov and Borodin-Petrov, and are dynamical in the sense of Borodin's recent stochastic interaction round-a-face models. 
	
	We provide explicit contour integral identities for observables of these models (when run under specific types of initial data) that characterize the distributions of their currents. Through asymptotic analysis of these identities in a special case, we evaluate the scaling limit for the current of a dynamical version of a discrete-time partial exclusion process. In particular, we show that its scaling exponent is $1 / 4$ and that its one-point marginal converges (in a sense of moments) to that of a non-trivial random variable, which we determine explicitly.
\end{abstract}

\tableofcontents

\section{Introduction}

\label{Introduction}

The search for exactly solvable (or \textit{integrable}) systems has long played a prominent role in mathematical physics. Our goal in this paper is to advance further in this direction by introducing a new family of \textit{stochastic} integrable models, which we term the \textit{dynamical stochastic higher spin vertex models}. In our view, the appeals to these models are threefold. 

The first is that they are integrable, in that one can produce explicit (contour integral) identities for a large class of observables for these models. 

The second is that they are \textit{dynamical}. Specifically, associated with each site is a \textit{dynamical parameter} that depends on both the location of the site and the state of the model. At any given site, the Markov transition functions of the model are dependent on the corresponding dynamical parameter. This produces a process that exhibits heavy spatial and temporal inhomogeneities, and one question of interest is how these inhomogeneities affect the large scale properties of the process.

Although dynamical integrable systems have been studied extensively in the statistical physics literature (see, for instance, Chapter 13 of \cite{ESMSM} and references therein) under the name of \textit{interaction round-a-face} (IRF) models, they are typically (with the exception of the recent ones proposed in \cite{ERSF}) not stochastic. Consequently, few results or predictions have been proposed towards the question asked above. As an application of our framework, we show how the integral identities mentioned previously can be analyzed to understand the limiting statistics of a special case of our dynamical model; we will see that this result indicates that the effect of the dynamical parameter on the long-time behavior of the model is quite significant.	

The third appeal is that our models are quite general. In particular, they are a one-parameter deformation of a four-parameter family of stochastic processes introduced by Corwin and Petrov \cite{SHSVML} called the \textit{stochastic higher spin vertex models}. Those models are already very extensive and, for example, are known to exhibit limit degenerations to most models proven to be in the Kardar-Parisi-Zhang (KPZ) universality class. Our models also comprise a one-parameter deformation of a family of stochastic processes recently introduced and studied by Borodin \cite{ERSF}, which he called \textit{stochastic IRF models}. In fact, our analysis is based on a combination of the results from \cite{ERSF} and the \textit{fusion procedure} applied to the elliptic quantum group $E_{\tau, \eta} (\mathfrak{sl}_2)$. 

Before proceeding to a detailed explanation of our model and results, we will attempt to provide a more tangible description of the term ``dynamical.'' Thus, we begin in Section \ref{DynamicalCornerGrowth} with a simple example of a dynamical stochastic process, which is a certain discrete-time corner growth model. Although this is a very specific case of the far more general model to be introduced later (in Section \ref{ModelGeneral}), it already exhibits a number of nonstandard and intricate integrability and scaling phenomena that are not present in non-dynamical stochastic growth models. 

We then define the dynamical stochastic higher spin vertex models in Section \ref{ModelGeneral}. In Section \ref{DegenerationsParticle} we elaborate two degenerations of this model, one to a dynamical variant of the $q$-Hahn boson model (the non-dynamical version of which was introduced by Povolotsky in \cite{IZRCMFSS}), and the other to a dynamical partial exclusion process (which is a generalization of the corner growth model defined in Section \ref{DynamicalCornerGrowth}). Next, in Section \ref{ObservablesAsymptoticsDynamicalExclusion} we describe integral identities for observables of these models and how they can be used to access asymptotics of the dynamical partial exclusion process. We conclude in Section \ref{Outline} with a brief outline of what is used in the proofs of these results.

\subsection{The Dynamical Midpoint Corner Growth Model} 

\label{DynamicalCornerGrowth}

In this section we introduce a discrete-time corner growth model that can be viewed as a prototypical dynamical stochastic process. However, before describing the dynamical version of this model, let us first define its non-dynamical variant, which we call the \textit{midpoint corner growth model}. 

The model we define $\{ \zeta_t (x) \}_{t \in \mathbb{Z}_{\ge 0}, x \in \mathbb{R}}$ will take place on piecewise linear curves, in which the slope of each segment is either $-2$, $0$, or $2$. The initial data for this model is prescribed by $\zeta_0 (x) = 2 |x|$. Associated with each curve will be a set of \textit{vertices} in $\mathbb{R}^2$, given by $V_t = V_t (\zeta_t) = \big\{ \big( k + \frac{t}{2}, \zeta_t \big( k + \frac{t}{2} \big) \big) \big\}_{k \in \mathbb{Z}} \subset (\mathbb{R} \times \mathbb{Z})^{\mathbb{Z}}$. For example, $V_0 = \big\{ \big(k, 2 |k| \big) \big\}_{k \in \mathbb{Z}}$. 

Now, let us explain how to sample $\zeta_{t + 1}$ given $\zeta_t$, for any $t \ge 0$. To that end, it suffices to prescribe the vertex set $V_{t + 1}$ from $V_t$; connecting adjacent vertices in $V_{t + 1}$ then gives rise to $\zeta_{t + 1}$. Suppose that $z_1 = \big (a, \zeta_t (a) \big)$ and $z_2 = \big( a + 1, \zeta_t (a + 1) \big)$ are adjacent vertices in $V_t$. Then, if the slope of the segment connecting $z_1$ and $z_2$ is $-2$ or $2$, we deterministically append its midpoint $z = (z_1 + z_2) / 2$ to $V_{t + 1}$. If the slope of this segment is $0$, then we will either append $z + (0, 1)$ or $z + (0, -1)$ to $V_{t + 1}$, the former with some probability $p \in [0, 1]$ and the latter with probability $1 - p$; stated alternatively, this midpoint can either go ``up'' with probability $p$ or ``down'' with probability $1 - p$. Repeating this procedure for all $a \in \frac{t}{2} + \mathbb{Z}$ produces the vertex set $V_{t + 1}$ and thus the curve $\zeta_{t + 1}$. Observe that $\zeta_t (x) \ge 2 |x|$ for each $t$ and $x$, and in fact $\zeta_t (x) = 2 |x|$ for $x \notin  [-t, t]$. An example of this corner growth model at times $0, 1, 2, 3$ is depicted in Figure \ref{corner}. 

Let us explain that figure in more detail. At time $0$, the curve $\zeta_0 (x) = 2 |x|$, and vertices are placed at lattice points of the form $(x, 2 |x|)$. At time $1$, the midpoints of each of the segments connecting adjacent vertices are taken; this produces a deterministic vertex set $V_1$ and therefore a deterministic curve $\zeta_1 (x)$, equal to $2 |x|$ for $|x| \ge \frac{1}{2}$ and equal to $1$ otherwise.  

At time $t = 2$, the curve $\zeta_2 (x)$ is no longer deterministic. Since the segment connecting $z_1 = \big( -\frac{1}{2}, 1 \big)$ to $z_2 = \big(\frac{1}{2}, 1 \big)$ is horizontal with midpoint $z = (0, 1)$, a choice can be made to either append $z + (0, 1) = (0, 2)$ or $z + (0, -1) = (0, 0)$ to $V_2$; the former event (depicted in the figure) occurs with probability $p$ and the latter with probability $1 - p$. Away from the vertex with $x$-coordinate $0$, the sets $V_0$ and $V_2$ coincide, giving rise to the curve $\zeta_2$ shown in Figure \ref{corner}. At time $3$, there are two choices to be made, corresponding to whether the midpoints of the two horizontal segments go up or down; the situation depicted is when the leftmost midpoint goes down and the rightmost midpoint goes up. 

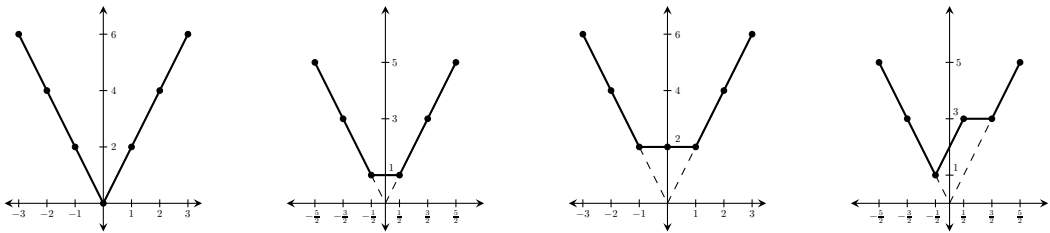
\begin{figure}
	
	\begin{center}

		\begin{tikzpicture}[
		>=stealth,
		scale = .75
		]

		\draw[<->, black] (1.5, -.5) -- (1.5, 3.5);
		\draw[<->, black] (-.25, 0) -- (3.25, 0);
		
		\draw[-, black, thick] (0, 3) -- (.5, 2);
		\draw[-,black, thick] (.5, 2) -- (1, 1);
		\draw[-,black, thick] (1, 1) -- (1.5, 0);
		\draw[-,black, thick] (1.5, 0) -- (2, 1);
		\draw[-,black, thick] (2, 1) -- (2.5, 2);
		\draw[-, black, thick] (2.5, 2) -- (3, 3);

		\draw[-, black] (0, -.07) -- (0, .07) node[scale = .4, below = 7]{$-3$}; 
		\draw[-, black] (.5, -.07) -- (.5, .07) node[scale = .4, below = 7]{$-2$}; 
		\draw[-, black] (1, -.07) -- (1, .07) node[scale = .4, below = 7]{$-1$}; 
		\draw[-, black] (2, -.07) -- (2, .07) node[scale = .4, below = 7]{$1$}; 
		\draw[-, black] (2.5, -.07) -- (2.5, .07) node[scale = .4, below = 7]{$2$}; 
		\draw[-, black] (3, -.07) -- (3, .07) node[scale = .4, below = 7]{$3$}; 
		
		\draw[-, black] (1.43, 1) -- (1.57, 1) node[scale = .4, right]{$2$}; 
		\draw[-, black] (1.43, 2) -- (1.57, 2) node[scale = .4, right]{$4$}; 
		\draw[-, black] (1.43, 3) -- (1.57, 3) node[scale = .4, right]{$6$}; 
		
		\filldraw[fill=black, draw=black] (0, 3) circle [radius=.05]; 
		\filldraw[fill=black, draw=black] (.5, 2) circle [radius=.05];
		\filldraw[fill=black, draw=black] (1, 1) circle [radius=.05];
		\filldraw[fill=black, draw=black] (1.5, 0) circle [radius=.05];
		\filldraw[fill=black, draw=black] (2, 1) circle [radius=.05];
		\filldraw[fill=black, draw=black] (2.5, 2) circle [radius=.05];
		\filldraw[fill=black, draw=black] (3, 3) circle [radius=.05];

		\draw[<->, black] (6.5, -.5) -- (6.5, 3.5);
		\draw[<->, black] (4.75, 0) -- (8.25, 0);

		\draw[-,black, thick] (5.25, 2.5) -- (5.75, 1.5);
		\draw[-,black, thick] (5.75, 1.5) -- (6.25, .5);
		\draw[-,black, thick] (6.25, .5) -- (6.75, .5);
		\draw[-,black, thick] (6.75, .5) -- (7.25, 1.5);
		\draw[-, black, thick] (7.25, 1.5) -- (7.75, 2.5);
		\draw[-, black, dashed] (6.25, .5) -- (6.5, 0); 
		\draw[-, black, dashed] (6.5, 0) -- (6.75, .5); 
		
		\draw[-, black] (6.43, .5) -- (6.57, .5) node[scale = .4, above = 7, right = -4]{$1$}; 
		\draw[-, black] (6.43, 1.5) -- (6.57, 1.5) node[scale = .4, right = -1]{$3$}; 
		\draw[-, black] (6.43, 2.5) -- (6.57, 2.5) node[scale = .4, right = -1]{$5$}; 
		
		\draw[-, black] (5.25, -.07) -- (5.25, .07) node[scale = .4, left = 2, below = 7]{$-\frac{5}{2}$}; 
		\draw[-, black] (5.75, -.07) -- (5.75, .07) node[scale = .4, left = 2, below = 7]{$-\frac{3}{2}$}; 
		\draw[-, black] (6.25, -.07) -- (6.25, .07) node[scale = .4, left = 2, below = 7]{$-\frac{1}{2}$}; 
		\draw[-, black] (6.75, -.07) -- (6.75, .07) node[scale = .4, below = 7]{$\frac{1}{2}$}; 
		\draw[-, black] (7.25, -.07) -- (7.25, .07) node[scale = .4, below = 7]{$\frac{3}{2}$}; 
		\draw[-, black] (7.75, -.07) -- (7.75, .07) node[scale = .4, below = 7]{$\frac{5}{2}$}; 
		
		\filldraw[fill=black, draw=black] (6.25, .5) circle [radius=.05];
		\filldraw[fill=black, draw=black] (6.75, .5) circle [radius=.05];
		\filldraw[fill=black, draw=black] (5.75, 1.5) circle [radius=.05];
		\filldraw[fill=black, draw=black] (7.25, 1.5) circle [radius=.05];
		\filldraw[fill=black, draw=black] (7.75, 2.5) circle [radius=.05];
		\filldraw[fill=black, draw=black] (5.25, 2.5) circle [radius=.05];

		\draw[<->, black] (11.5, -.5) -- (11.5, 3.5);
		\draw[<->, black] (9.75, 0) -- (13.25, 0);
		
		\draw[-, black] (10, -.07) -- (10, .07) node[scale = .4, below = 7]{$-3$}; 
		\draw[-, black] (10.5, -.07) -- (10.5, .07) node[scale = .4, below = 7]{$-2$}; 
		\draw[-, black] (11, -.07) -- (11, .07) node[scale = .4, below = 7]{$-1$}; 
		\draw[-, black] (12, -.07) -- (12, .07) node[scale = .4, below = 7]{$1$}; 
		\draw[-, black] (12.5, -.07) -- (12.5, .07) node[scale = .4, below = 7]{$2$}; 
		\draw[-, black] (13, -.07) -- (13, .07) node[scale = .4, below = 7]{$3$}; 
		
		\draw[-, black] (11.43, 1) -- (11.57, 1) node[scale = .4, above = 8, right]{$2$}; 
		\draw[-, black] (11.43, 2) -- (11.57, 2) node[scale = .4, right]{$4$}; 
		\draw[-, black] (11.43, 3) -- (11.57, 3) node[scale = .4, right]{$6$};

		\draw[-,black, thick] (10, 3) -- (10.5, 2);
		\draw[-,black, thick] (10.5, 2) -- (11, 1);
		\draw[-,black, thick] (11, 1) -- (11.5, 1);
		\draw[-,black, thick] (11.5, 1) -- (12, 1);
		\draw[-,black, thick] (12, 1) -- (12.5, 2);
		\draw[-,black, thick] (12.5, 2) -- (13, 3);
		\draw[-, black, dashed] (10, 3) -- (11.5, 0); 
		\draw[-, black, dashed] (11.5, 0) -- (13, 3);

		\filldraw[fill=black, draw=black] (10, 3) circle [radius=.05];
		\filldraw[fill=black, draw=black] (10.5, 2) circle [radius=.05];
		\filldraw[fill=black, draw=black] (11, 1) circle [radius=.05];
		\filldraw[fill=black, draw=black] (11.5, 1) circle [radius=.05];
		\filldraw[fill=black, draw=black] (12, 1) circle [radius=.05];
		\filldraw[fill=black, draw=black] (12.5, 2) circle [radius=.05];
		\filldraw[fill=black, draw=black] (13, 3) circle [radius=.05];

		\draw[<->, black] (16.5, -.5) -- (16.5, 3.5);
		\draw[<->, black] (14.75, 0) -- (18.25, 0);
		
		\draw[-, black] (16.43, .5) -- (16.57, .5) node[scale = .4, above = 7, right = -4]{$1$}; 
		\draw[-, black] (16.43, 1.5) -- (16.57, 1.5) node[scale = .4, above = 7, right = -4]{$3$}; 
		\draw[-, black] (16.43, 2.5) -- (16.57, 2.5) node[scale = .4, right = -1]{$5$}; 
		
		\draw[-, black] (15.25, -.07) -- (15.25, .07) node[scale = .4, left = 2, below = 7]{$-\frac{5}{2}$}; 
		\draw[-, black] (15.75, -.07) -- (15.75, .07) node[scale = .4, left = 2, below = 7]{$-\frac{3}{2}$}; 
		\draw[-, black] (16.25, -.07) -- (16.25, .07) node[scale = .4, left = 2, below = 7]{$-\frac{1}{2}$}; 
		\draw[-, black] (16.75, -.07) -- (16.75, .07) node[scale = .4, below = 7]{$\frac{1}{2}$}; 
		\draw[-, black] (17.25, -.07) -- (17.25, .07) node[scale = .4, below = 7]{$\frac{3}{2}$}; 
		\draw[-, black] (17.75, -.07) -- (17.75, .07) node[scale = .4, below = 7]{$\frac{5}{2}$};

		\draw[-,black, thick] (15.25, 2.5) -- (15.75, 1.5);
		\draw[-,black, thick] (15.75, 1.5) -- (16.25, .5);
		\draw[-,black, thick] (16.25, .5) -- (16.75, 1.5);
		\draw[-,black, thick] (16.75, 1.5) -- (17.25, 1.5);
		\draw[-,black, thick] (17.25, 1.5) -- (17.75, 2.5);
		\draw[-, black, dashed] (16.25, .5) -- (16.5, 0); 
		\draw[-, black, dashed] (16.5, 0) -- (17.25, 1.5);

		\filldraw[fill=black, draw=black] (16.25, .5) circle [radius=.05];
		\filldraw[fill=black, draw=black] (16.75, 1.5) circle [radius=.05];
		\filldraw[fill=black, draw=black] (15.75, 1.5) circle [radius=.05];
		\filldraw[fill=black, draw=black] (17.25, 1.5) circle [radius=.05];
		\filldraw[fill=black, draw=black] (17.75, 2.5) circle [radius=.05];
		\filldraw[fill=black, draw=black] (15.25, 2.5) circle [radius=.05];

		\end{tikzpicture}
		
	\end{center}
	
	\caption{\label{corner} A sample of the midpoint corner growth model at times $0$, $1$, $2$, and $3$ is shown above, in that order from left to right. The broken lines $\zeta_t$ are solid, and the curve $y = 2 |x|$ is dashed. } 
	
\end{figure}

The question of interest here is how the curve $\zeta_t$ behaves as $t$ tends to $\infty$. The answer will depend on $p$. When $p < \frac{1}{2}$, the curve $\zeta (x)$ ``retracts'' to the origin more often than it expands to infinity and thus one expects $\zeta_t (x)$ to closely approximate $\zeta_0 (x) = 2 |x|$ as $t$ tends to $\infty$. 

When $p = \frac{1}{2}$, it can be shown that $\zeta_T (0)$ is of order $T^{1 / 2}$ as $T$ goes to $\infty$. It can be moreover shown (as will follow from Proposition \ref{momentm} below; see Remark \ref{dynamicalnondynamicalcornergrowth}) that, for any fixed $r > 0$ and $s \in \mathbb{R}$, $T^{-1 / 2} \zeta_{\lfloor r T \rfloor } (s T^{1 / 2})$ has a law of large numbers governed by a normalized heat equation $\partial_r \mathcal{H} = 8 \partial_s^2 \mathcal{H}$. Such \textit{hydrodynamical limits} have been well-studied in the context of continuous-time symmetric exclusion processes; see, for instance, Chapters 4 and 5 of the book \cite{SLIS}, whose methods might be applicable to this setting as well. 

When $p > \frac{1}{2}$, it can be shown that $\zeta_T (0)$ is of order $T$ as $T$ goes to $\infty$. Moreover, $T^{-1} \zeta_{\lfloor r T \rfloor } (s T)$ has a law of large numbers governed by a nonlinear wave equation. Again, this phenomenon has been well-studied in the context of continuous-time asymmetric exclusion processes (or, more generally, \textit{attractive} interacting particle systems) by Rezakhalou \cite{HLAPS}; see also Chapter 8 of the book \cite{SLIS}. In Appendix \ref{AsymmetricCorner} (see Proposition \ref{asymmetriccornerfluctuations}) we will outline a derivation of this hydrodynamical limit and furthermore examine the fluctuations of $\zeta_T (sT)$ around its limit shape. We will see that, in analogy with those of the asymmetric simple exclusion process (ASEP), the fluctuations of $\zeta_T (sT)$ are of order $T^{1 / 3}$ and scale to the Gaussian Unitary Ensemble (GUE) Tracy-Widom distribution, a characteristic feature of models in the KPZ universality class; see the survey \cite{EUC} for more information. 

The midpoint corner growth model described above is not dynamical, since the probabilities of shifting a midpoint $z = (z_1 + z_2) / 2$ (of two adjacent vertices $z_1, z_2 \in V_t$) up or down are independent of $\zeta_t$. Let us propose an altered version of this model, in which these probabilities are some functions $p ( t, \zeta_t, z )$ and $1 - p (t, \zeta_t, z)$, respectively, dependent on the time $t$, the state $\zeta_t$, and the site $z$. 

As one of the simplest possible explicit cases, we can fix a parameter $\gamma > 1$  and set $p (t, \zeta_t, z) = \frac{1}{2} \big( 1 - \big( \gamma + \zeta_t (z) \big)^{-1} \big)$; we call the resulting model the \textit{dynamical midpoint corner growth model}. This becomes a perturbation of the $p = \frac{1}{2}$ non-dynamical case of the corner growth mentioned above (one can also consider integrable dynamical deformations of the $p > \frac{1}{2}$ midpoint corner growth model; see Example \ref{dynamicalasymmetricmidpointcorner} below). However, we will see as a special case of Theorem \ref{momentmdynamic} that $\zeta_T (0)$ is of order $T^{1 / 4}$, far smaller than that of the $p = \frac{1}{2}$ non-dynamical model. Also unusual is that the rescaled height function $T^{-1 / 4} \zeta_{\lfloor r T \rfloor} (s T^{1 / 4})$ does not exhibit a hydrodynamical limit shape; it converges to a random variable that we evaluate explicitly (again as a special case of Theorem \ref{momentmdynamic}; see Remark \ref{dynamicalnondynamicalcornergrowth}). 

Interestingly, these phenomena are similar to those found by Borodin as part 4 of Theorem 11.2 of \cite{ERSF}, in which he showed that the height function for the dynamical analog of the symmetric simple exclusion process (SSEP) also exhibits a scaling exponent of $1 / 4$ and has a random profile. We will discuss dynamical asymptotics in more detail in Section \ref{ObservablesAsymptoticsDynamicalExclusion}.

\subsection{The General Model}

\label{ModelGeneral} 

Having provided an example of a dynamical stochastic process, let us introduce the general model (called the dynamical stochastic higher spin vertex model) to be studied in this paper. Although these models might at first appear to be quite different from the dynamical midpoint corner growth model introduced above, the fact that local Markov transition probabilities depend on the state of the model will remain; in Section \ref{DynamicalPartialExclusion} we will see that the former can in fact be degenerated to the latter. 

Our model will depend on several parameters. In what follows, we fix complex numbers $\delta, q \in \mathbb{C}$, a countably infinite set of positive integers $\mathcal{J} = (J_1, J_2, \ldots ) \subset \mathbb{Z}_{> 0}$, and countably infinite sets of complex variables $U = (u_1, u_2, \ldots ) \subset \mathbb{C}$, $\Xi = (\xi_1, \xi_2, \ldots ) \subset \mathbb{C}$, and $S = (s_1, s_2, \ldots ) \subset \mathbb{C}$. These parameters will be subject to certain restrictions that we explain later. 

Let us begin by describing the state space for our model, which is the set of directed path ensembles with (multiplicative) dynamical parameter. 

\subsubsection{Directed Path Ensembles With a Dynamical Parameter}

\label{PathEnsembles}

A \emph{directed path} is a collection of \emph{vertices}, which are lattice points in the non-negative quadrant $\mathbb{Z}_{\ge 0}^2$, connected by \emph{directed edges} (which we may also refer to as \emph{arrows}). A directed edge can connect a vertex $(i, j)$ to either $(i + 1, j)$ or $(i, j + 1)$, if $(i, j) \in \mathbb{Z}_{> 0}^2$; we also allow directed edges to connect $(k, 0)$ to $(k, 1)$ or $(0, k)$ to $(1, k)$, for any positive integer $k$. Thus, directed edges connect adjacent points, always point either up or to the right, and do not lie on the $x$-axis or $y$-axis. 

A \emph{directed path ensemble} is a collection of paths such that each path contains an edge connecting $(0, k)$ to $(1, k)$ or $(k, 0)$ to $(k, 1)$ for some $k > 0$. Stated alternatively, every path ``emanates'' from either the $x$-axis or the $y$-axis; see Figure \ref{ensemble} for an example.

\begin{figure}
	
	\begin{center} 
		
		\begin{tikzpicture}[
		>=stealth,
		scale = 1
		]

		\draw[->, black, thick] (0, 0) -- (0, 3);
		\draw[->, black, thick] (0, 0) -- (3, 0);
		
		\draw[->,black] (0, .47) -- (.45, .47) node[scale = .65, above = 1, left = 22]{$1$};
		\draw[->, black] (0, .53) -- (.45, .53);
		
		\draw[->,black] (0, .97) -- (.45, .97) node[scale = .65, above = 1, left = 22]{$2$};	
		\draw[->,black] (0, 1.03) -- (.45, 1.03);
		
		\draw[->,black] (0, 1.5) -- (.45, 1.5) node[scale = .65, left = 22]{$3$};
		
		\draw[->,black] (0, 2) -- (.45, 2) node[scale = .65, left = 22]{$4$};
		
		\draw[->,black] (.5, .55) -- (.5, .95);
		
		\draw[->,black] (.55, .5) -- (.95, .5);
		
		\draw[->, black] (.55, .97) -- (.95, .97); 
		\draw[->,black] (.55, 1.03) -- (.95, 1.03);
		
		\draw[->, black] (1.05, 1) -- (1.45, 1); 
		
		\draw[->, black] (1.55, 1) -- (1.95, 1); 
		
		\draw[->, black] (2.05, 1) -- (2.45, 1); 
		
		\draw[->,black] (.55, 1.5) -- (.95, 1.5);
		
		\draw[->,black] (.53, 1.05) -- (.53, 1.45);
		\draw[->,black] (.47, 1.05) -- (.47, 1.45);
		
		\draw[->,black] (.53, 1.55) -- (.53, 1.95);
		\draw[->,black] (.47, 1.55) -- (.47, 1.95);
		
		\draw[->,black] (.55, 2.05) -- (.55, 2.45);
		\draw[->,black] (.5, 2.05) -- (.5, 2.45);
		\draw[->,black] (.45, 2.05) -- (.45, 2.45);

		\draw[->,black] (1.05, 1.5) -- (1.45, 1.5);
		\draw[->,black] (1, 1.55) -- (1, 1.95);
		\draw[->,black] (1, 1.05) -- (1, 1.45);
		\draw[->,black] (1, 2.05) -- (1, 2.45);
		\draw[->,black] (1.05, .5) -- (1.45, .5);	
		
		\draw[->,black] (1.5, .55) -- (1.5, .95);
		\draw[->,black] (1.5, 1.05) -- (1.5, 1.45);
		\draw[->,black] (1.47, 1.55) -- (1.47, 1.95);
		\draw[->,black] (1.53, 1.55) -- (1.53, 1.95);
		\draw[->,black] (1.5, 2.05) -- (1.5, 2.45);
		\draw[->,black] (1.55, 2) -- (1.95, 2);
		
		\draw[->,black] (2.05, 2) -- (2.45, 2);

		\draw[->, black] (2.5, 1.05) -- (2.5, 1.45);
		\draw[->, black] (2.5, 1.55) -- (2.5, 1.95);
		\draw[->, black] (2.47, 2.05) -- (2.47, 2.5);
		\draw[->, black] (2.53, 2.05) -- (2.53, 2.5);

		\filldraw[fill=gray!50!white, draw=black] (.5, .5) circle [radius=.05] node[scale = .65, below = 24]{$1$};
		\filldraw[fill=gray!50!white, draw=black] (.5, 1) circle [radius=.05];
		\filldraw[fill=gray!50!white, draw=black] (.5, 1.5) circle [radius=.05];
		\filldraw[fill=gray!50!white, draw=black] (.5, 2) circle [radius=.05];
		
		\filldraw[fill=gray!50!white, draw=black] (1, .5) circle [radius=.05] node[scale = .65, below = 24]{$2$};
		\filldraw[fill=gray!50!white, draw=black] (1, 1) circle [radius=.05];
		\filldraw[fill=gray!50!white, draw=black] (1, 1.5) circle [radius=.05];
		\filldraw[fill=gray!50!white, draw=black] (1, 2) circle [radius=.05];
		
		\filldraw[fill=gray!50!white, draw=black] (1.5, .5) circle [radius=.05] node[scale = .65, below = 24]{$3$};
		\filldraw[fill=gray!50!white, draw=black] (1.5, 1) circle [radius=.05];
		\filldraw[fill=gray!50!white, draw=black] (1.5, 1.5) circle [radius=.05];
		\filldraw[fill=gray!50!white, draw=black] (1.5, 2) circle [radius=.05];

		\filldraw[fill=gray!50!white, draw=black] (2, .5) circle [radius=.05] node[scale = .65, below = 24]{$4$};
		\filldraw[fill=gray!50!white, draw=black] (2, 1) circle [radius=.05];
		\filldraw[fill=gray!50!white, draw=black] (2, 1.5) circle [radius=.05];
		\filldraw[fill=gray!50!white, draw=black] (2, 2) circle [radius=.05];
		
		\filldraw[fill=gray!50!white, draw=black] (2.5, .5) circle [radius=.05] node[scale = .65, below = 24]{$5$};
		\filldraw[fill=gray!50!white, draw=black] (2.5, 1) circle [radius=.05];
		\filldraw[fill=gray!50!white, draw=black] (2.5, 1.5) circle [radius=.05];
		\filldraw[fill=gray!50!white, draw=black] (2.5, 2) circle [radius=.05];

		\end{tikzpicture}
		
	\end{center}

	\caption{\label{ensemble} Above is a directed path ensemble with $\mathcal{J}$-step initial data, where $\mathcal{J} = (2, 2, 1, 1, \ldots)$. } 
\end{figure}
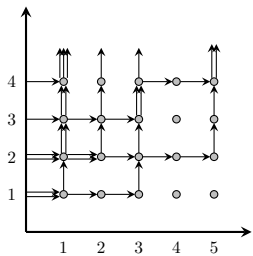

Associated with each $(x, y) \in \mathbb{Z}_{> 0 }^2$ in a path ensemble is an \emph{arrow configuration}, which is a quadruple $(i_1, j_1; i_2, j_2) = (i_1, j_1; i_2, j_2)_{(x, y)} = \big( i_1 (x, y), j_1 (x, y); i_2 (x, y), j_2 (x, y) \big)$ of non-negative integers. Here, $i_1$ denotes the number of directed edges from $(x, y - 1)$ to $(x, y)$; equivalently, $i_1$ denotes the number of vertical incoming arrows at $(x, y)$. Similarly, $j_1$ denotes the number of horizontal incoming arrows; $i_2$ denotes the number of vertical outgoing arrows; and $j_2$ denotes the number of horizontal outgoing arrows. For example, see the right side of Figure \ref{arrows3}. 

Observe that, at any vertex in the positive quadrant, the total number of incoming arrows is equal to the total number of outgoing arrows, that is, $i_1 + j_1 = i_2 + j_2$. This is sometimes referred to as \emph{arrow conservation} (or \emph{spin conservation}). Any assignment of arrow configurations to all vertices of $\mathbb{Z}_{> 0}^2$ satisfying arrow conservation corresponds to a unique directed path ensemble. 	

The models we are interested in will take place on \textit{path ensembles with a (multiplicative) dynamical parameter}. These are path ensembles in which each vertex $(x, y) \in \mathbb{Z}_{> 0} \times \mathbb{Z}_{\ge 0}$ is associated with a complex number called a \textit{(multiplicative) dynamical parameter}, denoted by $\kappa_{x, y}$, that changes from vertex to vertex according to certain rules. Explicitly, we fix $\kappa_{1, 0} = \delta$ and further set
\begin{flalign}
\label{kappamultiplicative}
\kappa_{x, y + 1} = q^{J_{y + 1} - 2j_1 (x, y + 1)} \kappa_{x, y}; \qquad \kappa_{x + 1, y} = q^{2 i_2 (x, y)} s_x^2 \kappa_{x, y}, 
\end{flalign} 

\noindent for any $(x, y) \in \mathbb{Z}_{> 0} \times \mathbb{Z}_{\ge 0}$; see the left side of Figure \ref{arrows3}. Observe the similarity between $s_x^2$ and $q^{J_{y + 1}}$ in \eqref{kappamultiplicative}; $\kappa$ is multiplied by the former as it shifts along the $x$-axis and is multiplied by the latter as it shifts along the $y$-axis. In particular, the rules \eqref{kappamultiplicative} are invariant under the symmetry that reflects the path ensemble across the line $x = y + 1$ and interchanges each $s_x^2$ with $q^{J_x}$. 

The choice $\kappa_{1, 0} = \delta$ and the identity \eqref{kappamultiplicative} together provide a way to associate a complex number with each vertex in the positive quadrant. Under different notation (that requires switching vertices with faces), these directed path ensembles with a dynamical parameter in the special case $(J_1, J_2, \ldots ) = (1, 1, \ldots )$ were the setting of the stochastic IRF models introduced in Section 1 and Section 8 of \cite{ERSF}. 

Assigning values $j_1$ to vertices on the line $(1, y)$ and values $i_1$ to vertices on the line $(x, 1)$ can be viewed as imposing boundary conditions on the vertex model. If $j_1 (1, y) = J_y$ and $i_1 (1, x) = 0$ for all $x, y \in \mathbb{Z}_{> 0}$, then $J_y$ paths enter through each vertex $(0, y)$ on the $y$-axis, and no paths enter through the $x$-axis; see Figure \ref{ensemble}. This particular assignment is called \textit{$\mathcal{J}$-step initial data}; it sets $\kappa_{x, 0} = s^{2 (x - 1)} \delta$ and $\kappa_{1, y} = q^{-\sum_{i = 1}^y J_i} \delta$ for each $x, y > 1$. In general, we will refer to any assignment of $i_1$ to $\mathbb{Z}_{> 0} \times \{ 1 \}$ and $j_1$ to $\{ 1 \} \times \mathbb{Z}_{> 0}$ as \emph{initial data}.

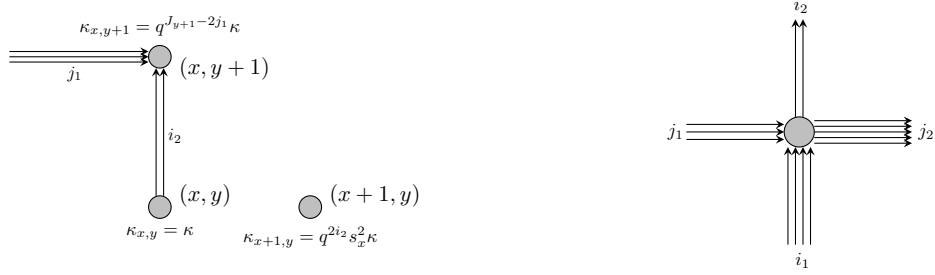
\begin{figure}
	
	\begin{center} 
		
		\begin{tikzpicture}[
		>=stealth,
		scale = 1
		]

		\draw[->,black] (-3.95, 1.7) -- (-3.95, 3) node[scale = .7, left = 1, above]{$i_2$};
		\draw[->,black] (-4.05, 1.7) -- (-4.05, 3);
		
		\draw[->,black] (-3.85, 0) -- (-3.85, 1.3) node[scale = .7, above = 8, left = 65]{$j_1$};
		\draw[->,black] (-3.95, 0) -- (-3.95, 1.3);
		\draw[->,black] (-4.05, 0) -- (-4.05, 1.3);
		\draw[->,black] (-4.15, 0) -- (-4.15, 1.3);
		
		\draw[->,black]  (-5.5, 1.4) -- (-4.2, 1.4) node[scale = .7, right = 10, below = 58]{$i_1$};
		\draw[->,black]  (-5.5, 1.5) -- (-4.2, 1.5);
		\draw[->,black]  (-5.5, 1.6) -- (-4.2, 1.6);
		
		\draw[->,black] (-3.8, 1.35) -- (-2.5, 1.35); 
		\draw[->,black] (-3.8, 1.425) -- (-2.5, 1.425); 
		\draw[->,black] (-3.8, 1.5) -- (-2.5, 1.5) node[scale = .7, right]{$j_2$}; 
		\draw[->,black] (-3.8, 1.575) -- (-2.5, 1.575); 
		\draw[->,black] (-3.8, 1.65) -- (-2.5, 1.65);

		\filldraw[fill=gray!50!white, draw=black] (-4, 1.5) circle [radius=.2];

		\draw[->,black] (-14.5, 2.43) -- (-12.65, 2.43); 
		\draw[->,black] (-14.5, 2.5) -- (-12.65, 2.5) node [scale = .65, below = 10, left = 35]{$j_1$}; 
		\draw[->,black] (-14.5, 2.57) -- (-12.65, 2.57); 
		
		\draw[->,black] (-12.55, .65) -- (-12.55, 2.35) node[scale = .65, right = 11, below = 32]{$i_2$}; 
		\draw[->,black] (-12.45, .65) -- (-12.45, 2.35); 
		
		\filldraw[fill=gray!50!white, draw=black] (-12.5, .5) circle [radius=.15] node[below = 5, scale = .7]{$\kappa_{x, y} = \kappa$} node[scale = .85, above = 5, right = 5]{$(x, y)$};
		
		\filldraw[fill=gray!50!white, draw=black] (-12.5, 2.5) circle [radius=.15] node[above = 5, scale = .7]{$\kappa_{x, y + 1} = q^{J_{y + 1} - 2 j_1} \kappa$} node[scale = .85, below = 5, right = 5]{$(x, y + 1)$};
		
		\filldraw[fill=gray!50!white, draw=black] (-10.5, .5) circle [radius=.15] node[below = 5, scale = .7]{$\kappa_{x + 1, y} = q^{2 i_2} s_x^2 \kappa$} node[scale = .85, above = 5, right = 5]{$(x + 1, y)$};
		\end{tikzpicture}
		
	\end{center}

	\caption{\label{arrows3} Shown to the left are three vertices at positions $(x, y)$, $(x + 1, y)$, and $(x, y + 1)$ for which $i_2 (x, y) = 2$ and $j_1 (x, y + 1) = 3$. From the definition, we have $\kappa_{x, y + 1} = q^{J_{y + 1} - 2j_1 (x, y + 1)} \kappa_{x, y}$ and $\kappa_{x + 1, y} = q^{2 i_2 (x, y)} s_x^2 \kappa_{x, y}$. Shown to the right is a vertex whose arrow configuration is $(i_1, j_1; i_2, j_2) = (4, 3; 2, 5)$.}
\end{figure}

\subsubsection{Definition of the Model}

\label{ProbabilityMeasures}

Having described both the state space and boundary data for our model, let us proceed to explain how it is sampled. The dynamical stochastic higher spin vertex model $\mathcal{P} = \mathcal{P} (U, \Xi, S; q; \delta)$ will be the infinite-volume limit of a a family of probability measures $\mathcal{P}_n = \mathcal{P}_n (U, \Xi, S; q, \delta)$, which are defined on the set of directed path ensembles with multiplicative dynamical parameter whose vertices are all contained in triangles of the form $\mathbb{T}_n = \{ (x, y) \in \mathbb{Z}_{\ge 0}^2: x + y \le n \}$. The first two measures $\mathcal{P}_0$ and $\mathcal{P}_1$ are both supported by the empty ensembles. 

For each positive integer $n \ge 1$, we will define $\mathcal{P}_{n + 1}$ from $\mathcal{P}_n$ through the following Markovian update rules. Use $\mathcal{P}_n$ to sample a directed path ensemble $\mathcal{E}_n$ on $\mathbb{T}_n$. This yields arrow configurations (and hence dynamical parameters) for all vertices in the triangle $\mathbb{T}_{n - 1}$. To extend this to a path ensemble on $\mathbb{T}_{n + 1}$, we must prescribe arrow configurations to all vertices $(x, y)$ on the complement $\mathbb{T}_n \setminus\mathbb{T}_{n - 1}$, which is the diagonal $\mathbb{D}_n = \{ (x, y) \in \mathbb{Z}_{> 0}^2: x + y = n \}$. Since any incoming arrow to $\mathbb{D}_n$ is an outgoing arrow from $\mathbb{D}_{n - 1}$, $\mathcal{E}_n$ and the initial data prescribe the first two coordinates, $i_1$ and $j_1$, of the arrow configuration to each $(x, y) \in \mathbb{D}_n$. Thus, it remains to explain how to assign the second two coordinates ($i_2$ and $j_2$) to any vertex on $\mathbb{D}_n$, given the first two coordinates. 

This is done by producing $(i_2, j_2)_{(x, y)}$ from $(i_1, j_1)_{(x, y)}$ according to the transition probabilities 
\begin{flalign}
\label{pndynamicalstochasticspin} 
& \mathcal{P}_n \big[ (i_2, j_2) \big| (i_1, j_1)  \big] = \psi_{\xi_x u_y; s_x; q; J_y} \big( i_1, j_1; i_2, j_2 \b| \kappa_{x, y - 1} \big), 
\end{flalign}

\noindent where the function $\psi$ is given by Definition \ref{psiweightdefinition} below. Throughout, we assume that all parameters $q$, $\delta$, $\mathcal{J}$, $U$, $\Xi$, and $S$ are chosen such that $\psi_{u_y \xi_x; s_x; q; J_y} \big( i_1, j_1; i_2, j_2 \b| \kappa_{x, y - 1}\big)$ is nonnegative for each path ensemble $\mathcal{E}_n$; such a choice is possible, and we will give some examples later in the paper.

Choosing $(i_2, j_2)$ according to the above transition probabilities yields a random directed path ensemble $\mathcal{E}_{n + 1}$, now defined on $\mathbb{T}_{n + 1}$; the probability distribution of $\mathcal{E}_{n + 1}$ is then denoted by $\mathcal{P}_{n + 1}$. We define $\mathcal{P} = \lim_{n \rightarrow \infty} \mathcal{P}_n$ as the probability measure for the dynamical stochastic higher spin vertex model.

To complete the description of the model, we require the $\psi$ stochastic weights, which are given by the following definition. 

\begin{definition}
	
\label{psiweightdefinition} 

Fix parameters $J \in \mathbb{Z}_{> 0}$ and $u, q, s, \kappa \in \mathbb{C}$. For any $i_1, j_1, i_2, j_2 \in \mathbb{Z}_{\ge 0}$, define $\psi (i_1, j_1; i_2, j_2) = \psi \big( i_1, j_1; i_2, j_2 \b| \kappa) = \psi_{u; s; q; J} \big( i_1, j_1; i_2, j_2 \b| \kappa \big)$ by 
\begin{flalign}
\label{psistochastic} 
\begin{aligned}
& \psi (i_1, j_1; i_2, j_2) \\
& \quad = \textbf{1}_{i_1 + j_1 = i_2 + j_2} q^{(j_2 - i_1) J} \left(\displaystyle\frac{u}{s} \right)^{j_1} \displaystyle\frac{(q^{i_1 - j_2 + 1}; q)_{j_2} (q^{j_2 - J}; q)_{j_1} (s u q^J; q)_{i_1 - j_2} (s^2 q^{i_1 - j_2}; q)_{j_1}}{(su; q)_{i_1 + j_1} (q; q)_{j_2} (q^{j_2 - J}; q)_{j_1 - j_2}} \\
& \qquad \times \displaystyle\frac{\big( u s^{-1} \kappa^{-1} q^{-i_1}; q\big)_{j_2} \big( q^{1 - i_1 - J} u^{-1} s^{-1} \kappa^{-1}; q \big)_{j_1} \big( q \kappa^{-1} ; q \big)_{j_1} \big( q^{j_2 - 2 i_1 + 1} s^{-2} \kappa^{-1} ; q \big)_{i_1 - j_2}}{ \big( q^{1 - j_2} \kappa^{-1}; q \big)_{j_1} \big( q^{j_2 - i_1 - J + 1} s^{-2} \kappa^{-1}; q \big)_{j_1} \big( q^{j_2 - 2 i_1 - J} s^{-2} \kappa^{-1}; q \big)_{j_2} \big( q^{2j_2 - 2i_1 - J + 1} s^{-2} \kappa^{-1}; q \big)_{i_1 - j_2}}  \\
& \qquad \times {_{10}	W_9} \left( q^{-j_2} \kappa^{-1}; q^{-j_1}, q^{-j_2}, q^{j_1 - J} \kappa^{-1}, q^{1 - i_1} s^{-2} \kappa^{-1}, s u^{-1} q^{i_1 - j_2 + 1}, u s q^{i_1 - j_2 + J}, q^{-i_1} \kappa^{-1} \right). 
\end{aligned} 
\end{flalign}

\noindent In \eqref{psistochastic}, $(a; q)_k$ denotes the \textit{$q$-Pochhammer symbol}, defined by
\begin{flalign}
\label{productq}
(a; q)_k = \displaystyle\prod_{j = 0}^{k - 1} (1 - q^j a), \quad \text{and} \quad (a; q)_k = \displaystyle\prod_{j = 1}^{-k} \displaystyle\frac{1}{1 - q^j a}, 
\end{flalign}

\noindent if $k \in \mathbb{Z}_{\ge 0}$ and $k \in \mathbb{Z}_{< 0}$, respectively. Furthermore, for any $r \in \mathbb{Z}_{> 0}$ and $a_1, a_4, a_5, \ldots , a_{r + 1}; z \in \mathbb{C}$,
\begin{flalign}
\label{poisedhypergeometric}
_{r + 1} W_r (a_1; a_4, a_5, \ldots , a_{r + 1}; q, z) = \displaystyle\sum_{k = 0}^{\infty} \displaystyle\frac{z^k (a_1; q)_k}{ \big( q; q \big)_k} \displaystyle\frac{1 - a_1 q^{2k}}{1 - a_1} \displaystyle\prod_{j = 4}^{r + 1} \displaystyle\frac{(a_j; q)_k}{\big( q a_1 / a_j; q \big)_k},
\end{flalign}

\noindent denotes the \textit{very-well poised basic hypergeometric series}. 
\end{definition}

For the dynamical stochastic higher spin vertex model to be well-defined we require that all of the $\psi (i_1, j_1; i_2, j_2)$ be nonnegative and that $\sum_{i_2, j_2} \psi (i_1, j_1; i_2, j_2) = 1$; the former was imposed as a condition on the parameters of the model. However, the latter is not so immediately apparent to us, and we provide an indirect proof (through a Pieri-type identity for a family of symmetric functions studied in \cite{ERSF}) as Proposition \ref{psisumstochastic} in Section \ref{PsiStochasticity}.

\subsection{Two Degenerations} 

\label{DegenerationsParticle}

Although the $\psi$ weights \eqref{psistochastic} might not appear so pleasant, they are governed by five parameters ($q$, $u$, $s$, $J$, and $\kappa$) and are thus quite general. In particular, by specializing these parameters in various ways, one can obtain a number of stochastic processes that have been studied in the past, as well as new integrable models that seemingly have not been considered previously. 

In the simplest case, we can consider the dynamical stochastic higher spin vertex model in which $\delta$ tends to $0$; in this setting, all of the $\kappa_{x, y}$ also tend to $0$, so the model is no longer dynamical. Letting $\kappa$ tend to $0$ in \eqref{psistochastic}, one can verify that the $\psi$ weights tend to the $R_{\alpha}^{(J)}$ weights given by equation (3.16) in \cite{SHSVML} (see also equation (5.6) of \cite{HSVMRSF}), which are the transition functions of the \textit{stochastic higher spin vertex model}. Thus, setting $\delta = 0$ in the dynamical stochastic higher spin vertex model yields Corwin-Petrov's \cite{SHSVML} stochastic higher spin vertex model. 

That model is already remarkably general. In particular, under suitable choices of parameters, it can be specialized \cite{HSVMRSF,SHSVML} to the \textit{stochastic six-vertex model} \cite{CFSAEPSSVMCL,PTAEPSSVM,SSVM,SVMCP,SVMRSASH} and the \textit{$q$-Hahn boson model} \cite{TBP,TBPT,IZRCMFSS,LAEP}. Both are in turn known to specialize to many ostensibly different models in the KPZ universality class \cite{EUC}. For instance, the former degenerates to the ASEP \cite{CFSAEPSSVMCL,PTAEPSSVM,SSVM}; the latter to the $q$-totally asymmetric simple exclusion process (TASEP) \cite{DIGM,DTT,MP,SHSVMAEP,RTAEM}, some multi-particle asymmetric diffusion models (MADMs) \cite{ESOFP,AEP,DWE}, and various directed polymer models \cite{RWBRE,MP,SWLP}; and both to the KPZ equation \cite{MP}. Thus, our dynamical stochastic higher spin vertex model degenerates to dynamical generalizations of some of those models.

As another example, one could set $\mathcal{J} = (1, 1, \ldots )$; then the $\psi$ weights \eqref{psistochastic} degenerate to the transition functions for Borodin's \textit{stochastic IRF models} given by equation (1.1) of \cite{ERSF}. Those models are also quite general and include as special cases dynamical versions of the ASEP and stochastic six-vertex model; see Section 1 and Section 9 of \cite{ERSF}. 

The purpose of this section is to describe two explicit degenerations of our dynamical stochastic higher spin vertex model, neither of which appears to be a special case of the models mentioned above. The first (given in Section \ref{LambdaSpecializationModelGeneral}) is a dynamical variant of the $q$-Hahn boson model, and the latter (given in Section \ref{DynamicalPartialExclusion}) is a dynamical discrete-time partial exclusion process.

\subsubsection{The Dynamical \texorpdfstring{$q$}{}-Hahn Boson Specialization} 

\label{LambdaSpecializationModelGeneral}

In order to simplify the $\psi$ stochastic weights, we seek a specialization of the parameters $u$, $s$, $q$, $J$, and $\kappa$ for which the $_{10} W_9$ hypergeometric series on the right side of \eqref{psistochastic} factors completely. In simpler non-dynamical setting, such a factorization was found in Proposition 6.7 of \cite{FSRF}, which considers the case $u = s$. 

It will happen that a similar factorization occurs here. Specifically, if we set $u = s$, then 
\begin{flalign}
\label{weightmodel1}
\begin{aligned}
\psi & \big( i_1, j_1; i_2, j_2 \b| \kappa \big) \\
& = (s^2 q^J)^{j_2} \displaystyle\frac{\textbf{1}_{j_2 \le i_1} (q; q)_{i_1}}{(q; q)_{i_1 - j_2} (q; q)_{j_2}}  \displaystyle\frac{ \ (q^{-J}; q)_{j_2} (s^2 q^J; q)_{i_1 - j_2} }{ (s^2; q)_{i_1} } \displaystyle\frac{(s^2 q^{i_1} \kappa; q)_{i_1 - j_2} (q^{i_1 - j_2 + 1} \kappa; q)_{j_2}}{(s^2 q^{i_1 + J - j_2} \kappa; q)_{i_1 - j_2} (s^2 q^{2i + J - 2j_2 + 1} \kappa; q)_{j_2}}.
\end{aligned}
\end{flalign} 

One can establish \eqref{weightmodel1} as an application of Jackson's identity for very well-poised, balanced ${{_8} W_7}$ terminating basic hypergeometric series (see equation (2.6.2) of \cite{BHS}). However, this is also a quick consequence of Lemma \ref{dynamicalparticlesvlambda} and Lemma \ref{lpsiweights} from Section \ref{DynamicalParticles} below.  

Changing variables $b = s^2$, $c = q^J$, and $a = bc$ on the right side of \eqref{weightmodel1} leads to the following stochastic weights; observe in the definition below that $a / b \in \mathbb{C}$ can be arbitrary and need not be an integer power of $q$. 

\begin{definition}
	
	\label{stochasticparticlesjumps} 
	
	Fix parameters $q, a, b, \kappa \in \mathbb{C}$. For all nonnegative integers $i, j$ (including $i = \infty$ if $|q| < 1$), define 
	\begin{flalign*}
	\varphi \big( j \b| i \big) = \varphi_{q, a, b, \kappa} \big( j \b| i \big) = a^j \displaystyle\frac{\textbf{1}_{j \le i} (q; q)_{i}}{(q; q)_{j} (q; q)_{i - j}} \displaystyle\frac{(b / a; q)_{j} (a; q)_{i - j}}{ (b; q)_{i}} \displaystyle\frac{(q^{i} b \kappa; q)_{i - j} (q^{i - j + 1} \kappa; q)_{j}}{(q^{i - j} a \kappa; q)_{i - j} (q^{2i - 2j + 1} a \kappa; q)_{j}}. 
	\end{flalign*}
\end{definition}	

Let us briefly outline two ways to verify that these weights indeed satisfy the stochasticity property $\sum_{j = 0}^{\infty} \varphi \big(j \b| i \big) = 1$. The first is by specializing $u = s$ in Proposition \ref{psisumstochastic}, using \eqref{weightmodel1}, relabeling $(a, b) = (s^2 q^J, s^2)$, and then analytically continuing in $a$ to expand the range of $a / b$ from $q^{\mathbb{Z}_{> 0}}$ to all of $\mathbb{C}$. The second is as a quick application of the Rogers identity, stated as \eqref{hypergeometric65sumterminating} below, for very well-poised, balanced ${{_6} W_5}$ terminating basic hypergeometric series (in particular, apply \eqref{hypergeometric65sumterminating} with $a$ set to $1 / q^{2i} a \kappa$, $b$ set to $b / a$, $c$ set to $1 / q^i \kappa$, and $n$ set to $i$). 

Granting this stochasticity, we will proceed to define a probability measure $\mathcal{M}$, called the \textit{dynamical $q$-Hahn boson model}, on the set of directed path ensembles with (multiplicative) dynamical parameter, that is governed by the $\varphi$ weights. This model will depend on several parameters. To that end, fix $q, \delta \in \mathbb{C}$ and countably infinite sets of complex numbers $B = (b_1, b_2, \ldots ) \subset \mathbb{C}$ and $C = (c_1, c_2, \ldots ) \subset \mathbb{C}$; further denote $a_{x, y} = b_x c_y$ for each $x, y \in \mathbb{Z}_{> 0}$. Given a directed path ensemble $\mathcal{E}$, we can associate a (multiplicative) dynamical parameter $\kappa_{x, y}$ to each $(x, y) \in \mathbb{Z}_{\ge 0} \times \mathbb{Z}_{> 0}$ by setting $\kappa_{1, 0} = \delta$ and replacing \eqref{kappamultiplicative} with $\kappa_{x, y + 1} = q^{-2 j_1 (x, y + 1)} c_{y + 1} \kappa_{x, y}$ and $\kappa_{x + 1, y} = q^{2 i_2 (x, y)} b_x \kappa_{x, y}$; observe that this is the effect of substituting $b_x = s_x^2$ and $c_y = q^{J_y}$ in \eqref{kappamultiplicative}. 

Now, define $\mathcal{M} = \mathcal{M} (B, C; q; \delta)$ in the same way as $\mathcal{P}$ was defined in Section \ref{ProbabilityMeasures}, as a limit of measures $\mathcal{M}_n$ (on triangles $\mathbb{T}_n$) that are entirely analogous to the measures $\mathcal{P}_n$, except that \eqref{pndynamicalstochasticspin} is replaced by 
\begin{flalign}
\label{pndynamicalstochasticspin2} 
& \mathcal{M}_n \big[ (i_2, j_2) \big| (i_1, j_1)  \big] = \varphi_{q, a_{x, y}, b, \kappa_{x, y - 1}} \big( j_2 \b| i_1 \big) \textbf{1}_{i_1 + j_1 = i_2 + j_2} . 
\end{flalign}

We assume that the parameters $q$, $\delta$, $B$, and $C$ are chosen such that the weights on the right side of \eqref{pndynamicalstochasticspin2} are always nonnegative. This can be ensured, for instance, if $0 < b_{x'} < a_{x, y} < 1$ for each $x, x', y \in \mathbb{Z}_{> 0}$, $q \in (0, 1)$, and $\delta \le 0$. 

The $\delta = 0$ (non-dynamical) case of the $\varphi \big( i \b| j \big)$ from Definition \ref{stochasticparticlesjumps} coincides with the stochastic weights of the \textit{$q$-Hahn boson model} introduced by Povolotsky in equation (8) of \cite{IZRCMFSS} and later studied in a number of other works \cite{RWBRE,HSVMRSF,TBP,TBPT,LAEP}, partly due to its connections with directed random polymers \cite{RWBRE} and KPZ universality \cite{LAEP}. This is the reason we refer to the above model as the ``dynamical'' $q$-Hahn boson model. 

To explicitly relate it with the model from Section \ref{ModelGeneral}, consider the dynamical stochastic higher spin vertex model with parameters $q$, $\delta$, $U = (u_1, u_2, \ldots )$, $\Xi = (\xi_1, \xi_2, \ldots)$, and $S = (s_1, s_2, \ldots )$. Set $U$, $S$, and $\Xi$ such that $u_1 = u_2 = \cdots = u$ for some $u \in \mathbb{C}$ and $u \xi_j = s_j$ for all $j \in \mathbb{Z}_{> 0}$. Further reparameterize $b_x = s_x^2$, $c_y = q^{J_y}$, and $a_{x, y} = b_x c_y$ for each $x, y \in \mathbb{Z}_{> 0}$. 

Then, due to \eqref{weightmodel1}, \eqref{pndynamicalstochasticspin2} follows from \eqref{pndynamicalstochasticspin}. Therefore, as least when each $c_y \in q^{\mathbb{Z}_{> 0}}$, the dynamical $q$-Hahn boson model is a special case of the general dynamical stochastic higher spin vertex model. In fact, many of our results can be analytically continued in the $\{ c_y \}$, thereby making them valid for the general dynamical $q$-Hahn boson model as well; however, such analytic continuations will often be omitted, as they do not serve as the primary focus of this work. 

There are a number of stochastic processes that can be formed as further degenerations of the dynamical $q$-Hahn boson model. The next section provides one such example, which is a dynamical discrete-time partial exclusion process. A list of some other degenerations is provided in Section \ref{JumpRandomWalk}.

\subsubsection{The Dynamical \texorpdfstring{$(J; \gamma)$}{}-PEP}

\label{DynamicalPartialExclusion}

In this section we describe a further degeneration of the dynamical $q$-Hahn boson model to a type of dynamical partial exclusion process, which we refer to as the dynamical $(J; \gamma)$-partial exclusion process (PEP); the $J = 1$ case of this process in fact coincides with the midpoint corner growth model from Section \ref{DynamicalCornerGrowth} (see Remark \ref{cornerpartialexclusion}). Here, we only define the dynamical $(J; \gamma)$-PEP; the specific way in which to derive it as a degeneration of the dynamical $q$-Hahn boson model will be detailed in Example \ref{dynamicjgammaexclusion} below. 

However, let us remark that this degeneration involves taking limit as the parameter $q$ from the boson model tends to $1$. It is also possible to consider an analog of the dynamical $(J; \gamma)$-PEP, in which $q < 1$ remains bounded away from $1$. In the $J = 1$ case, that process is detailed in Example \ref{dynamicalasymmetricmidpointcorner} below, where it is explained that the resulting system is a dynamical variant of the asymmetric $p > \frac{1}{2}$ midpoint corner growth model from Section \ref{DynamicalCornerGrowth}. 

Before describing the dynamical $(J; \gamma)$-PEP, let us first provide an informal explanation of its non-dynamical version, which we refer to as the \textit{$(J; \infty)$-PEP}. It can also be viewed as a limit degeneration of Povolotsky's original $q$-Hahn boson model \cite{IZRCMFSS}, although we have seen no mention to it until now.

 Suppose infinitely many boxes $\ldots,  B_{-1}, B_0, B_1, \ldots $ are placed next to each other in a line, in that order from left to right; each box $B_i$ has $J + 1$ \textit{slots}, and each slot can contain at most one ball (so each box contains at most $J + 1$ balls). Initially, the boxes $B_0, B_{-1}, \ldots $ are \textit{full} (meaning that all of their $J + 1$ slots are occupied by balls), and the remaining boxes $B_1, B_2, \ldots $ are \textit{empty} (meaning that none of their slots are occupied). At each time step, we pick one slot $s_i \in [1, J + 1]$ uniformly at random and independently for each $i \in \mathbb{Z}$. Then, in parallel, each box $B_i$ retains its ball (if it has one) in slot $s_i$ and transfers all of its other balls to box $B_{i + 1}$, filling the slots in $B_{i + 1}$ not equal to $s_{i + 1}$ at random. 
 
 One would then be interested in the long-time behavior of the \textit{current} for this model, that is, the number of balls that are past some site $x$ after a large time $T$. However, although this model appears to be quite simple, we are unaware of any attempt to analyze it in either the probability or mathematical physics literature; we will outline some asymptotic results in Section \ref{ObservablesAsymptoticsDynamicalExclusion} below. 
 
 Now let us define the general, dynamical version of this model, which is dependent on an additional parameter $\gamma$. 

\begin{definition}
	
	\label{dynamicjgammaexclusion1} 
	
	Fix $J \in \mathbb{Z}_{> 0}$ and a positive real number $\gamma > J + 1$. The \textit{dynamical $(J; \gamma)$-PEP} is an interacting particle system $\eta (t) = \big\{ \eta_1 (t), \eta_2 (t), \ldots \big\}_{t \in \mathbb{Z}_{\ge 0}}$, in which $\eta_k (t) \in \{ 0, 1, \ldots , J + 1 \}$ for all $k, t \in \mathbb{Z}_{> 0}$. At time $0$, we set $\eta_1 (0) = \eta_2 (0) = \cdots = 0$. Then, assume this model has been run up to time $s \in \mathbb{Z}_{\ge 0}$; $\eta (s + 1)$ is defined from $\eta (s)$ as follows. Set $X_0 (s) = J$ and let $X_1 (s), X_2 (s), \ldots \in \mathbb{Z}_{\ge 0}$ be mutually independent random variables such that $X_k (s) \in \big\{ \eta_k (s) - 1, \eta_k (s) \big\}$ almost surely and, for each $k \ge 1$,  
	\begin{flalign}
	\label{dynamicxk}
	\begin{aligned}
	\mathbb{P} \big[ X_k (s) = \eta_k (s) - 1 \big] & = \displaystyle\frac{\eta_k (s)}{J + 1} \left( 1 + \displaystyle\frac{ J + 1 - \eta_k (s) }{\Upsilon} \right); \\
	 \mathbb{P} \big[ X_k (s) = \eta_k (s) \big] & = \displaystyle\frac{J + 1 - \eta_k (s)}{J + 1} \left( 1 - \displaystyle\frac{\eta_k (s) }{\Upsilon} \right), 
	 \end{aligned}
	\end{flalign}
	
	\noindent for all $k > 1$. Then set $\eta_k (s + 1) = X_{k - 1} (s) - X_k (s) + \eta_k (s)$ for each $k \ge 1$. 
	
	To explain the definition of $\Upsilon$ above, let $\mathfrak{h}_t (x) = \mathfrak{h}_t (x; \eta) = \sum_{j = x}^{\infty} \eta_j (t)$ and $\mathfrak{a}_t (x) = \mathfrak{a}_t (x; \eta) = \sum_{j = 1}^{x - 1} \big( J + 1 - \eta_j (t) \big)$, for any positive integers $x$ and $t$. In particular, the \textit{current} $\mathfrak{h}_t (x)$ denotes the number of particles at or to the right of $x$ at time $t$, and the \textit{anti-current} $\mathfrak{a}_t (x)$ denotes the number of ``available'' slots that can accommodate particles to the left of site $x$. 
	
	Then, $\Upsilon = \Upsilon (x; s) = \Upsilon (\eta; x; s) = \gamma + \mathfrak{h}_s (x) + \mathfrak{a}_s (x)$; this is the value of $\Upsilon$ that is used in the probabilities \eqref{dynamicxk}, which are nonnegative since $\Upsilon \ge \gamma \ge J + 1 \ge \eta_k (s)$.

\end{definition}

	This model can be alternatively viewed as a partial exclusion process on $\mathbb{Z}_{> 0}$ in which each site $k \in \mathbb{Z}_{> 0}$ can hold at most $J + 1$ particles. Initially (at time $0$), no site is occupied by any particles. At each (discrete) time step $t > 0$, $J$ particles enter the leftmost site (site 1), and then particles move to the right as follows. If a site $k \in \mathbb{Z}_{> 0}$ had $i = i_k (t - 1)$ particles at time $t - 1$, then it transfers $i$ (all) particles to site $k + 1$ with probability $(J - i + 1) (\Upsilon - i) / (J + 1) \Upsilon$, and it transfers $i - 1$ (all but one) particles to site $k + 1$ with the complementary probability $i (\Upsilon + J + 1 - i) / (J + 1) \Upsilon$; this occurs independently and at parallel for all sites $k \in \mathbb{Z}_{> 0}$. Through this interpretation, we see that sending $\gamma$ to $\infty$ indeed degenerates this model to the $(J; \infty)$-PEP informally described above. 
	
	The parameter $\Upsilon$ is an analog of the height function for the TASEP when viewed as a corner growth model. Since $J$ particles enter the system at each step, $\Upsilon$ can be expressed entirely in terms of the current $\mathfrak{h}_t (x)$ through
	\begin{flalign}
	\label{dynamicalparameterpartialexclusion}
	\Upsilon = \gamma + 2 \mathfrak{h}_t (x) + (J + 1) (x - 1) - J t. 
	\end{flalign}
	
	\begin{rem}
		
	\label{cornerpartialexclusion}
	
	Let us consider the dynamical $(J; \gamma)$-PEP with $J = 1$, so that $\eta_k (s) \in \{ 0, 1, 2 \}$ for all $k$ and $s$. Then, the probabilities \eqref{dynamicxk} imply that $X_k (s) = 0$ if $\eta_k (s) = 0$ and $X_k (s) = 1$ if $\eta_k (s) = 2$, almost surely. If $\eta_k (s) = 1$, then $\mathbb{P} \big[ X_k (s) = 0 \big] = \frac{1}{2} (1 + \Upsilon^{-1}) $ and $\mathbb{P}\big[ X_k (s) = 1 \big] = \frac{1}{2} (1 - \Upsilon^{-1})$. Combining this with \eqref{dynamicalparameterpartialexclusion} matches the distribution of $\Upsilon (x, t)$ of the dynamical $(1; \gamma)$-PEP with that of the shifted height function $\zeta_t \big( x - \frac{t}{2} - 1 \big) + \gamma$ of the dynamical midpoint corner growth model from Section \ref{DynamicalCornerGrowth}, for all $x \in \mathbb{Z}_{> 0}$ and $t \in \mathbb{Z}_{\ge 0}$. 
	
	It is also possible to associate the dynamical $(J; \gamma)$-PEP with a dynamical corner growth model when $J > 1$, in which case the slopes of the broken lines can take a larger range of values. However, this is a bit more complicated so we will not pursue it here. 
	\end{rem}

	\subsection{Observables and Asymptotics}
	
	\label{ObservablesAsymptoticsDynamicalExclusion}
	
	As mentioned previously, one of our reasons for introducing the dynamical stochastic higher spin vertex models is that they are integrable, in that it is possible to provide explicit integral identities for large families of observables for these models. 
	
	For the dynamical $q$-Hahn boson model, this result is given by Theorem \ref{fusedhigherspin}. In what follows, $\mathfrak{h}_N (x)$ denotes the number of paths in the model at or to the right of $(x, N)$. 
		
		\begin{thm}
			
			\label{fusedhigherspin}
			
			Fix $0 < q < 1$, $\delta \in \mathbb{C}$, $B = (b_1, b_2, \ldots ) \subset \mathbb{C}$, and $C = (c_1, c_2, \ldots ) \subset \mathbb{C}$. Assume that there exists $\mathcal{J} = (J_1, J_2, \ldots ) \subset \mathbb{Z}_{> 0}$ such that $c_y = q^{J_y}$ for each $y \in \mathbb{Z}_{\ge 0}$. Furthermore, assume that all parameters are chosen such that $\varphi_{q, b_x, a_{x, y}, \kappa_{x, y - 1}} \big( j \b| i \big) \ge 0$ for all nonnegative integers $x$ and $y$ any $\mathcal{E}$. Moreover, assume that $b_j \notin \big[ 1, \max_i |q|^{1 - J_i} \big]$ for each $j$. 
			
			Consider the dynamical $q$-Hahn boson model $\mathcal{M} (B, C; q; \delta)$, as defined in Section \ref{LambdaSpecializationModelGeneral}, run with $\mathcal{J}$-step initial data for some time $N \in \mathbb{Z}_{> 0}$. Then, for any $k \in \mathbb{Z}_{> 0}$ and $x_1, x_2, \ldots , x_k \in \mathbb{Z}_{> 0}$, we have that 
			\begin{flalign}
			\label{fusedhigherspinintegral}
			\begin{aligned}
			& \displaystyle\frac{1}{\big( \delta^{-1}; q \big)_k} \mathbb{E} \left[ \displaystyle\prod_{j = 0}^{k - 1} \bigg( \delta^{-1} q^j - q^{-\mathfrak{h}_N (x_{j + 1})} \displaystyle\prod_{i = 1}^N c_i \displaystyle\prod_{i = 1}^{x_j - 1} b_i \bigg) \big(q^j - q^{\mathfrak{h}_N (x_{j + 1})} \big) \right] \\
			& \qquad \qquad \quad = \displaystyle\frac{q^{\binom{k}{2}}}{(2 \pi \textbf{\emph{i}})^k} \displaystyle\oint \cdots \displaystyle\oint \displaystyle\prod_{1 \le i < j \le k} \displaystyle\frac{z_i - z_j}{z_i - q z_j} \displaystyle\prod_{j = 1}^k \left( \displaystyle\prod_{i = 1}^{x_j - 1} \displaystyle\frac{1 - z_j}{1 - b_i^{-1} z_j} \displaystyle\prod_{i = 1}^N \displaystyle\frac{1 - c_i z_j}{1 - z_j} \right) \displaystyle\frac{d z_j}{z_j},
			\end{aligned}
			\end{flalign}
			
			\noindent where the $z_j$ contours are positively oriented and contain $\bigcup_{i = 1}^N \{ 1, q^{-1}, \ldots , q^{1 - J_i} \}$ but no other poles of the integrand. 
		\end{thm}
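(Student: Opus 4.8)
The plan is to establish \eqref{fusedhigherspinintegral} by the standard ``$q$-moments to nested contour integrals'' strategy adapted to the dynamical setting, treating it as a consequence of a more general duality identity for the dynamical $q$-Hahn boson model obtained from the elliptic (IRF) framework. The left-hand side is a dynamical analog of the usual $q$-moment $\mathbb{E}[q^{k \mathfrak{h}_N(x)}]$: the characteristic double-product structure $\big(\delta^{-1}q^j - q^{-\mathfrak{h}} \prod c_i \prod b_i\big)\big(q^j - q^{\mathfrak{h}}\big)$ is exactly what one expects from an observable dual to the dynamical weights, since in the non-dynamical limit $\delta \to 0$ the first factor degenerates (after normalization by $(\delta^{-1};q)_k$) and only the $\big(q^j - q^{\mathfrak{h}}\big)$ factors survive, recovering the Corwin--Petrov $q$-moment formula. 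So the first step is to identify the correct family of observables $\mathcal{O}_k$ whose expectation appears on the left and to verify that in the $\delta\to0$ limit this reduces to the known fused higher-spin result, which anchors the computation.

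The second step is to derive the integral identity itself. I would exploit the fact, stated in the excerpt, that the dynamical $q$-Hahn boson model with $c_y = q^{J_y}$ is a genuine specialization (via $u=s$, $u\xi_j = s_j$, $b_x = s_x^2$, $c_y = q^{J_y}$) of the general dynamical stochastic higher spin vertex model, which in turn arises from the fusion procedure applied to $E_{\tau,\eta}(\mathfrak{sl}_2)$ combined with Borodin's stochastic IRF results \cite{ERSF}. The natural route is therefore to obtain \eqref{fusedhigherspinintegral} by \emph{fusion} from an identity for the unfused ($\mathcal{J}=(1,1,\dots)$) dynamical six-vertex / IRF model, where Borodin has already produced contour-integral formulas for the analogous observables. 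Concretely, I would (i) write the observable expectation for the unfused model with spectral parameters $\xi_1,\dots,\xi_N$ as a $k$-fold nested contour integral whose integrand has simple poles governed by those parameters, then (ii) fuse $J_y$ consecutive rows by sending the corresponding spectral parameters into a geometric string $\xi, q\xi, \dots, q^{J_y-1}\xi$; under this specialization each cluster of $J_y$ simple poles coalesces into the pole string $\{1, q^{-1}, \dots, q^{1-J_i}\}$ appearing in the stated contour prescription, and the single-variable factor for row-block $i$ collapses to $\tfrac{1-c_i z_j}{1-z_j}$ with $c_i = q^{J_i}$. The cross terms $\prod_{i<j}\frac{z_i-z_j}{z_i - q z_j}$ and the spatial factors $\prod_{i=1}^{x_j-1}\frac{1-z_j}{1-b_i^{-1}z_j}$ are preserved by fusion since they do not depend on the fused rows. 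The hypothesis $b_j \notin [1, \max_i |q|^{1-J_i}]$ is exactly the condition guaranteeing that the spatial-factor poles at $z_j = b_i$ stay outside the contour, so that enclosing only $\bigcup_i\{1,\dots,q^{1-J_i}\}$ is consistent throughout the fusion limit.

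The third step is to justify the pole bookkeeping and the stochasticity input. Here I would invoke the Pieri-type / stochasticity results of the excerpt — in particular Proposition \ref{psisumstochastic} and the $u=s$ factorization \eqref{weightmodel1} with its $_6W_5$/Rogers-identity evaluation — to guarantee that the fused weights are genuine transition probabilities, so that the expectation is well-defined and the duality functional is finite under $\mathcal{J}$-step initial data. One subtlety is that the formula is first proven for $c_y \in q^{\mathbb{Z}_{>0}}$ (integer $J_y$) and the general statement is obtained by analytic continuation in the $c_y$; for this theorem the integer-power constraint $c_y = q^{J_y}$ is assumed outright, so analytic continuation is not needed, but I would still verify that both sides are rational in the parameters of matching degree so the fusion specialization is a genuine identity rather than merely a limit.

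The main obstacle I anticipate is the fusion step (ii): controlling the coalescence of poles as the $k$ contour variables each see a cluster of $J_y$ colliding simple poles, while simultaneously tracking the residue contributions and confirming that no spurious poles are created or lost among the cross-factors $\frac{z_i - z_j}{z_i - q z_j}$. Making this rigorous requires showing that the nested-contour ordering is compatible with the geometric-string specialization of spectral parameters — i.e. that one can deform all $k$ contours to the common curve enclosing $\bigcup_i\{1,\dots,q^{1-J_i}\}$ without crossing the $z_i = q z_j$ walls or the $z_j = b_i$ poles. The hypotheses $0<q<1$ and the separation condition on the $b_j$ are precisely what make this deformation legal, and I would phrase the fusion argument so that these appear as the analytic conditions ensuring uniform convergence of the residue sums to the stated integral.
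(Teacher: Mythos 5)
Your overall pipeline --- reduce to the unfused contour-integral identity of \cite{ERSF} and specialize the spectral parameters to geometric strings --- is the same one the paper follows, but you have misplaced the difficulty, and the step you single out as the main obstacle is not an obstacle at all. The string specialization here is an \emph{exact substitution}, not a limit: the points $u, qu, \ldots, q^{J_i - 1}u$ are distinct (since $0 < q < 1$), so no poles collide and there is no coalescence to control. On the integral side everything is immediate: the single-variable factors in \eqref{integralbunfused} telescope,
\begin{flalign*}
\displaystyle\prod_{\ell = 0}^{J_i - 1} \displaystyle\frac{1 - q^{\ell + 1} u y_j}{1 - q^{\ell} u y_j} = \displaystyle\frac{1 - q^{J_i} u y_j}{1 - u y_j} = \displaystyle\frac{1 - c_i u y_j}{1 - u y_j},
\end{flalign*}
and the change of variables $z_j = u y_j$ (using $s_i = u \xi_i$, $b_i = s_i^2$) turns the spatial factors into $\frac{1 - z_j}{1 - b_i^{-1} z_j}$ and carries the unfused pole set $\{ u_i^{-1} \}$ onto $\bigcup_i \{1, q^{-1}, \ldots , q^{1 - J_i}\}$; the apparent extra poles at $z_j = q^{-m}$ have vanishing residues (Remark \ref{contouryunfused}). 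There is no ``uniform convergence of residue sums'' to verify.

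The genuine gap is on the other side of the identity. Proposition \ref{higherspin} (Theorem 10.1 and (10.8) of \cite{ERSF}) is established only under restrictive hypotheses --- all the $s_j^{-1}\xi_j$ and $u_j^{-1}$ must lie in a common small open set $\mathcal{D}$, together with the stated separation conditions --- and these hypotheses \emph{fail} at the fused specialization precisely because the $u_j^{-1}$ differ by powers of $q$ and hence are not close to one another. You assert that ``analytic continuation is not needed'' (you consider only continuation in the $c_y$, which the paper indeed does not pursue here), but analytic continuation in the remaining parameters is exactly the missing ingredient: the paper shows the expectation side is analytic on the region $\mathcal{R}_1$ where the sum \eqref{bsumconvergence} converges absolutely, the integral side is analytic on the region $\mathcal{R}_2$ where admissible contours exist, the two agree on an open subset supplied by Proposition \ref{higherspin}, and equality then propagates to $\mathcal{R}_1 \cap \mathcal{R}_2$. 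This is also where the theorem's hypotheses do work your sketch does not assign them: nonnegativity of the $\varphi$ weights forces \eqref{bsumconvergence} to converge (to $1$), placing the specialized parameters in $\mathcal{R}_1$ --- via Corollary \ref{measuremmodelb}, which identifies $\textbf{M}_N$ with $B_{\mu}^{(s)}$ at the string specialization --- while $b_j \notin \big[ 1, \max_i |q|^{1 - J_i} \big]$ places them in $\mathcal{R}_2$; in your write-up these conditions are only given the softer roles of ``the expectation is well-defined'' and ``the $z_j = b_i$ poles stay outside the contour.'' Your fallback rationality argument (``both sides are rational of matching degree'') does not apply as stated, since the left side is an infinite sum over signatures rather than a rational function of the parameters. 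Without the analytic-continuation step, your argument invokes Proposition \ref{higherspin} outside its domain of validity, and the proof does not close.
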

		
		\begin{rem}
		
		\label{deltaintegral}
		
		Observe that the right side of \eqref{fusedhigherspinintegral} is independent of $\delta$. In particular, by sending $\delta$ to $0$, we find that the observable on the left side of \eqref{fusedhigherspinintegral} is equal to $\mathbb{E} \big[ \prod_{j = 0}^{k - 1} \big(q^j - q^{\mathfrak{h}_N (x_{j + 1})} \big) \big]$, where the expectation is now taken with respect to the non-dynamical $q$-Hahn boson model under $\mathcal{J}$-step initial data. This comparison of the dynamical to non-dynamical model will be particularly useful for asymptotic analysis of the dynamical $(J, \infty)$-PEP in Section \ref{DynamicPartialExclusionAsymptotic}. 
		
		\end{rem}

	Theorem \ref{fusedhigherspin} is not stated above in its most general form. For example, it admits an extension to the general dynamical stochastic higher spin vertex model $\mathcal{P}$ that is not much more intricate than \eqref{fusedhigherspinintegral}; we mention it in Remark \ref{generalfusedcontour} below. 
	
	Furthermore, it is possible to analytically continue the $\{ c_y \}$ parameters so that they need not reside in $q^{\mathbb{Z}_{> 0}}$ (at the cost of running the model under different, non-deterministic initial data). However, as stated above, such analytic continuations do not serve the primary focus of this work and thus we will not pursue them here.

	Recall that the dynamical $q$-Hahn boson model admits a degeneration to the dynamical $(J, \gamma)$-PEP. Applying this degeneration to Theorem \ref{fusedhigherspin} yields the following corollary; recall in the below that the current $\mathfrak{h}_t (x)$ denotes the number of particles at or to the right of $x$ at time $t$. 
	
	\begin{cor}
		
		\label{kexclusionidentity} 
		
		Let $J$ and $k$ be positive integers and $\gamma > J + 1$ be a real number. For any positive integers $x_1, x_2, \ldots , x_k$, we have that 
		\begin{flalign*}
		& \displaystyle\frac{1}{\prod_{j = 0}^{k - 1}	 (\gamma + j) } \mathbb{E} \left[ \displaystyle\prod_{j = 0}^{k - 1} \big( NJ - (J + 1) x_j - \mathfrak{h}_N (x_j + 1) - \gamma - j \big) \big( \mathfrak{h}_N (x_j + 1) - j \big) \right] \\
		& \qquad = \displaystyle\frac{1}{(2 \pi \textbf{\emph{i}})^k} \displaystyle\oint \cdots \displaystyle\oint \displaystyle\prod_{1 \le i < j \le k} \displaystyle\frac{y_i - y_j}{y_i - y_j - 1} \displaystyle\prod_{j = 1}^k \left( \displaystyle\frac{y_j - J - 1 }{y_j} \right)^{x_j + 1} \left( \displaystyle\frac{ y_j - 1}{y_j  -J - 1}\right)^N d y_j, 
		\end{flalign*}
		
		\noindent where the contours for $y_j$ are positively oriented and enclose $2, 3, \ldots , J + 1$ but not $0$, and where the expectation on the left side is taken with respect to the dynamical $(J, \gamma)$-PEP. 
		
	\end{cor}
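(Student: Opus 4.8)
The plan is to derive Corollary \ref{kexclusionidentity} as the confluent $q \to 1$ limit of the contour identity in Theorem \ref{fusedhigherspin}, after specializing the parameters so that the dynamical $q$-Hahn boson model degenerates to the dynamical $(J; \gamma)$-PEP. Following the degeneration to be given in Example \ref{dynamicjgammaexclusion}, I would set $c_y = q^J$ for every $y$ (so that $\mathcal{J} = (J, J, \ldots)$ and $\prod_{i = 1}^N c_i = q^{NJ}$), take the $b_x$ homogeneous with $b_x = q^{-(J + 1)}$ (which one checks satisfies the hypothesis $b_x \notin \big[1, \max_i |q|^{1 - J_i}\big]$ of Theorem \ref{fusedhigherspin}), identify $\delta = q^{-\gamma}$, and write $q = e^{-\varepsilon}$ with $\varepsilon \to 0^+$. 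Under this scaling the boson current $\mathfrak{h}_N$ converges to the PEP current, and I would expand both sides of \eqref{fusedhigherspinintegral} to leading order in $\varepsilon$.

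On the left side I would Taylor expand each $q$-factor about $\varepsilon = 0$. The prefactor becomes $(\delta^{-1}; q)_k = (q^\gamma; q)_k = \prod_{j = 0}^{k - 1} (1 - q^{\gamma + j}) \sim \varepsilon^k \prod_{j = 0}^{k - 1} (\gamma + j)$, reproducing the denominator in the corollary. Each factor satisfies $q^j - q^{\mathfrak{h}_N} \sim \varepsilon (\mathfrak{h}_N - j)$, while $\delta^{-1} q^j - q^{-\mathfrak{h}_N} \prod_{i = 1}^N c_i \prod_{i = 1}^{x_j - 1} b_i \sim \varepsilon \big( NJ - (J + 1) x_j - \mathfrak{h}_N - \gamma - j \big)$, where I use the conservation law $\sum_k \eta_k = NJ$ (exactly $J$ particles enter per step) to rewrite $NJ - (J + 1) x_j - \mathfrak{h}_N (x_j + 1) = -\mathfrak{a}_N (x_j + 1)$, matching the first factor of the corollary. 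Since at fixed $N$ the boson model has a finite state space, the expectation is a finite sum; the weights $\varphi$ converge termwise to the PEP probabilities \eqref{dynamicxk} by Example \ref{dynamicjgammaexclusion}, so the limit passes inside, leaving the left side of the corollary times $\varepsilon^k$.

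On the right side the enclosed poles $\bigcup_i \{1, q^{-1}, \ldots, q^{1 - J}\} = \{q^0, q^{-1}, \ldots, q^{-(J - 1)}\}$ all collapse to $z = 1$, so I would resolve this confluence with the substitution $z_j = q^{y_j - J - 1}$. Under it the kernel obeys $\frac{z_i - z_j}{z_i - q z_j} \to \frac{y_i - y_j}{y_i - y_j - 1}$, the spatial factor $\frac{1 - z_j}{1 - b_i^{-1} z_j} \to \frac{y_j - J - 1}{y_j}$, and the time factor $\frac{1 - c_i z_j}{1 - z_j} \to \frac{y_j - 1}{y_j - J - 1}$, so the integrand tends to that of the corollary; meanwhile $q^{\binom{k}{2}} \to 1$ and $\frac{d z_j}{z_j} = (\log q)\, d y_j$ contributes one factor of $\varepsilon$ per variable, the sign being absorbed into the orientation reversal of $y \mapsto q^{y - J - 1}$. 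The image of the enclosed set is $\{y : q^{y - J - 1} = q^{-m},\ 0 \le m \le J - 1\} = \{2, 3, \ldots, J + 1\}$, and the excluded pole $z = b_i = q^{-(J + 1)}$ of $\frac{1}{1 - b_i^{-1} z}$ maps to $y = 0$; this is precisely the contour prescription of the corollary. As the $y$-contours can be chosen compact and independent of $\varepsilon$, the $q$-product integrand converges uniformly on them, so $\lim \oint = \oint \lim$, and the right side equals the corollary's right side times $\varepsilon^k$.

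Dividing both sides by $\varepsilon^k$ and letting $\varepsilon \to 0$ then yields the corollary. The main obstacle I anticipate is the confluent degeneration of the contour integral: one must choose the affine substitution $z_j \mapsto y_j$ so that the $J$ coalescing poles separate into exactly $\{2, \ldots, J + 1\}$, verify that each rational factor limits to the stated form (with the correct exponents $x_j + 1$ and $N$), and track the contour orientation so that no spurious sign survives. The only other point requiring care is the exact coordinate dictionary between the boson model and the PEP — the precise spatial offset identifying $\mathfrak{h}_N^{\mathrm{boson}}$ with $\mathfrak{h}_N (x_j + 1)$ and fixing the product ranges — which is furnished by Example \ref{dynamicjgammaexclusion}; once it is in place the two leading-order expansions align term by term, and the interchange of the limit with the (finite) expectation and with the contour integral is routine.
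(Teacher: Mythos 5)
Your proposal is correct and follows essentially the same route as the paper's own proof: specialize $\delta = q^{-\gamma}$, $b_i = q^{-J-1}$, $c_i = q^J$ in Theorem \ref{fusedhigherspin}, divide both sides by $(1-q)^k$, substitute $z_j = q^{y_j - J - 1}$, and let $q \to 1$, with the factor $(\log q)/(1-q) \to -1$ cancelling against the orientation reversal of the $y_j$-contours exactly as you describe. If anything, you are slightly more careful than the paper, checking the hypothesis $b_x \notin [1, q^{1-J}]$ and justifying the interchange of limit and expectation, which the paper's proof leaves implicit.
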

	
	In Section \ref{NonDynamickAsymptotic} and Section \ref{DynamicPartialExclusionAsymptotic} below, we show how Corollary \ref{kexclusionidentity} can be used to access asymptotics for the current $\mathfrak{h}_T (x)$ of dynamical $(J; \gamma)$-PEP, both when $\gamma$ tends to $\infty$ and remains finite. 
	
	In either case, we will see that if $T$ is large, then with high probability sites to the left of $(J / (J + 1) - \varepsilon) T$ will be fully occupied (with $J + 1$ particles) and sites to the right of $(J / (J + 1) + \varepsilon) T$ will be empty (with $0$ particles), for any fixed $\varepsilon > 0$. The transition from full sites to empty sites will occur in a region centered at $JT / (J + 1)$. In the non-dynamical case $\gamma = \infty$, this region will be of size $T^{1 / 2}$; in the dynamical case $\gamma < \infty$, it will be of size $T^{1 / 4}$. 
	
	In particular, in the non-dynamical setting, we will be interested in asymptotics for
	\begin{flalign}
	\label{htsr}
	\mathfrak{H}_T (s, r) = T^{- 1 / 2} \mathfrak{h}_{\lfloor r T \rfloor} \bigg( \Big\lfloor \displaystyle\frac{J r T}{J + 1} + s T^{1 / 2} \Big\rfloor \bigg), 
	\end{flalign}
	
	\noindent for $s, r \in \mathbb{R}$ and $T \in \mathbb{Z}_{> 0}$. 
	
	We will see that this current is governed by the heat equation. Specifically, define $\mathcal{H} (s, r)$ by the (unique) solution to the normalized heat equation $J \partial_r \mathcal{H} = 2 (J + 1)^2 \partial_s^2 \mathcal{H}$, with initial data $\mathcal{H} (s, 0) = (J + 1) |s| \textbf{1}_{s < 0}$; this solution has an explicit integral form given by \eqref{hsr} below. Then, we show as Proposition \ref{momentm} the convergence  
	\begin{flalign}
	\label{hh}
	\mathfrak{H}_T (s, r) \rightarrow \mathcal{H} (s, r), \qquad \text{as $T$ tends to $\infty$.} 
	\end{flalign}

	\noindent  In the dynamical setting, we are instead interested in asymptotics for the current 
	\begin{flalign}
	\label{htsrgamma}
	\mathfrak{H}_T (s, r; \gamma) = T^{-1 / 4} \mathfrak{h}_{\lfloor r T \rfloor} \bigg( \Big\lfloor \displaystyle\frac{J r T}{J + 1} + s T^{1 / 4} \Big\rfloor \bigg).
	\end{flalign}

	 We will find that the normalized current $\mathfrak{H}_T (s, r; \gamma)$ becomes random in the large $T$ limit. To explain further, recall that $\chi_{a, b}$ is an \textit{$(a, b)$-Gamma distributed random variable} if 
	\begin{flalign*}
	\mathbb{P} [\chi_{a, b} \le r] = \Gamma (a)^{-1} \displaystyle\int_0^r b^a x^{a - 1} e^{-bx} dx,
	\end{flalign*}
	
	\noindent for each $r \ge 0$, where the normalization constant $\Gamma (a)$ is the Gamma function. Then, as Theorem \ref{momentmdynamic}, we will show the convergence 
	\begin{flalign}
	\label{convergencedynamical1}
	\mathfrak{H}_T (s, r; \gamma) \rightarrow \sqrt{\displaystyle\frac{s^2 (J + 1)^2}{4} + \chi_{a, b}} - \displaystyle\frac{s (J + 1)}{2}, \qquad \text{where $a = \gamma$ and $b = \sqrt{\displaystyle\frac{2 \pi}{r J}}$}. 
	\end{flalign}
	
	\noindent Both convergences \eqref{hh} and \eqref{convergencedynamical} are established in a certain sense of moments (although it is plausible that this sense of convergence can be improved through a more refined analysis). We refer to Theorem \ref{momentm} and Theorem \ref{momentmdynamic} below for a more precise statement. 
	
	\begin{rem}
		\label{dynamicalnondynamicalcornergrowth} 
		
		Combining \eqref{dynamicalparameterpartialexclusion}, Remark \ref{cornerpartialexclusion}, and the $J = 1$ cases of \eqref{hh} and \eqref{convergencedynamical1} yields 
		\begin{flalign*}
		T^{-1 / 2} \zeta_{\lfloor r T \rfloor} (s T^{1 / 2}) \rightarrow 2 s + 2 \mathcal{H} (r, s), \quad \text{and} \quad T^{-1 / 4} \zeta_{\lfloor r T \rfloor } (s T^{1 / 4}) \rightarrow 2 \displaystyle\sqrt{s^2 + \chi_{a, b}},
		\end{flalign*}
		
		\noindent as $T$ tends to $\infty$, for the height functions $\zeta$ of the non-dynamical and dynamical (with parameter $\gamma$) midpoint corner growth models from Section \ref{DynamicalCornerGrowth}, respectively; here $\mathcal{H}(r, s)$ denotes the solution to a normalized heat equation $\partial_r \mathcal{H} = 8 \partial_s^2 \mathcal{H}$, with initial data $\mathcal{H} (r, s) = 2 |s| \textbf{1}_{s < 0}$.  
		\end{rem}

	\subsection{Outline}
	
	\label{Outline} 
	
	The algebraic framework underlying the proofs of the results stated above comes from two sources, both of which have by now been used quite extensively within the context of (quantum and elliptic) integrability. 
	
	The first is known as the \textit{fusion procedure}; it was introduced in a representation theoretic setting by Kirillov, Reshetikhin, and Sklyanin \cite{ERT} as a way of producing higher dimensional representations of quantum groups from lower dimensional representations. The matrix elements of the $R$-matrices from these fused representations give rise to vertex weights for models that are exactly solvable (but often not stochastic) through the Yang-Baxter equation. 
	
	The second is to use algebraic properties (branching, Cauchy, and Pieri identities) satisfied by families of symmetric functions. This was first implemented through the Schur measures and Schur processes introduced by Okounkov \cite{IWRP} and Okounkov-Reshetikhin \cite{CPLGR,EIMAS}, respectively. The branching rules for the underlying symmetric functions can be used to produce Markov processes with integrable dynamics \cite{DP}, and the Cauchy and Pieri rules ensure that these processes are indeed stochastic. 
	
	The idea to combine these two methods first appeared in the Section 6 of the work of Borodin \cite{FSRF}. Here, the underlying symmetric functions were two-parameter deformations of the Schur functions introduced by Povolotsky \cite{IZRCMFSS} and later called the \textit{spin Hall-Littlewood functions} \cite{SP}. As observed in \cite{FSRF}, these symmetric functions can be viewed as partition functions for certain vertex models (related to representations of the affine quantum group $U_q (\widehat{\mathfrak{sl}_2})$) that are solvable through the Yang-Baxter equation. It was therefore possible to consider ``fused analogs'' \cite{FSRF, SP} of these symmetric functions that were soon later used by Corwin-Petrov \cite{SHSVML} to produce a large family of stochastic, quantum integrable systems, called the stochastic higher spin vertex models. Thus, the solvability of these models can be viewed as a consequence of the properties of these spin Hall-Littlewood symmetric functions and (sometimes fusion through) the Yang-Baxter equation.   
	
	Very recently, Borodin \cite{ERSF} analyzed elliptic generalizations of the spin Hall-Littlewood functions, which were introduced by Felder-Varchenko \cite{AEQG} as Bethe eigenfunctions for Felder's \cite{ISAEP} elliptic quantum group $E_{\tau, \eta} (\mathfrak{sl}_2)$. These elliptic functions (in fact, versions of them with more general boundary data) were also analyzed in the works of Tarasov-Varchenko \cite{HFQAAEQG} and Rosengren \cite{EQGEHSRS} under the name of \textit{elliptic weight functions}; therefore, we refer to the ones studied in \cite{ERSF} as \textit{elliptic weight (symmetric) functions} as well. It was shown in \cite{ERSF} that the elliptic weight functions also satisfy branching, Cauchy, and Pieri identities; thus it was possible to use these identities to create stochastic models, called stochastic IRF models, that are integrable through what is known as the \textit{dynamical Yang-Baxter equation}. 
	
	 Our results on the dynamical stochastic higher spin vertex model are based on a combination of the results from \cite{ERSF} on the elliptic weight functions with the fusion procedure applied to $E_{\tau, \eta} (\mathfrak{sl}_2)$. This happens to be more intricate than what was done in the quantum case (through spin Hall-Littlewood functions \cite{HSVMRSF,SHSVML}) and essentially proceeds in four parts.
	 
	 The first part is to introduce ``fused versions'' of the elliptic weight functions studied in \cite{ERSF}, and explain how they can be evaluated as partition functions of a vertex model with certain ``fused weights'' $W_J = W_J (i_1, j_1; i_2, j_2)$. This is a mild modification of what was done in the quantum setting (see, for instance, Proposition 5.5 of \cite{HSVMRSF}) and is implemented in Section \ref{EllipticSymmetric} below. 
	 
	 The second part is to explicitly determine	 these fused weights. In the quantum case, this was done by Mangazeev \cite{ESVM} and Corwin-Petrov \cite{SHSVML} (see also the older work of Kirillov-Reshetikhin \cite{REA}). In that setting, the fused weights are expressible in terms of $q$-Racah polynomials, which are balanced ${{_4} \varphi_3}$ basic hypergeometric series. In our situation, the fused weights are considerably more complex, as they are given by very-well poised, balanced ${{_{12}} v_{11}}$ elliptic hypergeometric series (see Theorem \ref{fusedweighthypergeometric}); as a result, we do not know of a way to use the methods from \cite{SHSVML,ESVM,REA} to access them. Instead, we evaluate the $W_J$ by implementing a recursive procedure, originally due to Date-Jimbo-Kuniba-Miwa-Okado \cite{ESSOSM} in the closely related context of the fused eight vertex solid-on-solid (SOS) model \cite{ESSOSMLHP,ESSOSM,FEVSOSM}. This is done in Section \ref{FusedWeights} below. 
	 
	The issue at this point is that these fused weights are not stochastic and therefore do not give rise to stochastic processes. Thus, the third part is to multiply these fused weights by a ``stochastic correction'' $C_J = C_J (i_1, j_1; i_2, j_2)$ so that these ``stochastically corrected'' weights $\sigma_J = \sigma_J (i_1, j_1; i_2, j_2)$ sum to one and still give rise to an integrable model. In the quantum setting, this amounted to a quick gauge transformation of the original weights $W_J$ (see Remark 2.2 of \cite{SHSVML}). This relatively simple stochastic correction could be explained by the fact that the stochastically corrected weights $\{ \sigma_1 \}$ still satisfied the Yang-Baxter equation. Thus, the fusion procedure could instead be applied to the $\{ \sigma_1 \}$ weights to directly produce the $\sigma_J$ weights that were stochastic and defined an integrable system; indeed, this was the route taken in Section 3 of \cite{SHSVML}.
	
	In the elliptic case, finding a suitable family of stochastic corrections $\{ C_J \}$ appears to be considerably more involved. By considering the trigonometric degeneration (letting the $\tau$ parameter in $E_{\tau, \eta} (\mathfrak{sl}_2)$ tend to $\textbf{i} \infty$) of this elliptic setup,  \cite{ERSF} proposed a family of stochastic corrections $\{ C_1 \}$ such that the resulting $\{ \sigma_1 \}$ weights still gave rise to an integrable stochastic process. However, the trouble is that this stochastic correction does not appear to be compatible with fusion, in that the $\{ \sigma_1 \}$ weights do not satisfy the dynamical Yang-Baxter equation. Thus, it does not seem to quickly follow that one can fuse them. 
	
	In Section \ref{Stochastic} below we overcome this issue by proposing a generalization $\{ C_J \}$ of the $\{ C_1 \}$ corrections from  \cite{ERSF} so that the associated $\sigma_J$ weights give rise (in a way to be explained in Section \ref{Transition}) to an integrable stochastic process. These stochastic corrections $\{ C_J \}$ are a bit more intricate than the $\{ C_1 \}$ corrections produced in \cite{ERSF}. In fact, showing that the resulting $\sigma_J$ weights are indeed stochastic is not immediately apparent to us, and we present a proof of this fact (through a Pieri-type identity for the elliptic weight functions) as Proposition \ref{psisumstochastic}. 
	
	The fourth part is to establish contour integral identities for observables of the dynamical stochastic higher spin vertex models. This follows from a suitable analytic continuation of similar results from \cite{ERSF} in the $J = 1$ setting and is implemented in Section \ref{ObservablesDegenerations} below. As an application, we use these identities to access the large-scale statistics of the dynamical $(J; \gamma)$-PEP in Section \ref{ObservablesDegenerations}.

\subsection*{Acknowledgments}

The author heartily thanks Alexei Borodin for many valuable conversations, for helpful comments on an early draft of this manuscript, and for sending him an early copy of the paper \cite{ERSF}. This work was funded by the NSF Graduate Research Fellowship under grant number DGE1144152.

\section{Miscellaneous Preliminaries} 

In this section we collect some preliminaries that will be useful to us later. In Section \ref{Parameters} we provide a brief list of some common notation to appear in this paper. In Section \ref{Hypergeometric} we recall the definitions of some elliptic functions and hypergeometric series that we will encounter throughout the article. In Section \ref{InteractingParticles} we define the current for a path ensemble and explain how the dynamical stochastic higher spin vertex model can be interpreted as an interacting particle system. 
	
\subsection{Notation} 

\label{Parameters}

In this section we list some notation that will appear commonly throughout the article. We begin by listing several complex parameters that will serve different purposes in this paper; explaining their relationships, and specifying their domains. 

\begin{itemize}

\item Fixed parameters: Let $q \in \mathbb{C}$ and $\eta, \tau \in \mathbb{C}$, with $\Im \tau > 0$. We often impose $q = e^{- 4 \pi \textbf{i} \eta}$. 

\item Spectral parameters: Let $U = (u_1, u_2, \ldots ) \subset \mathbb{C}$ and $W = (w_1, w_2, \ldots ) \subset \mathbb{C}$ denote countably infinite families of complex numbers. We often relate them by $u_j = e^{2 \pi \textbf{i} (\eta - w_j)}$ for each $j$.

\item Spin parameters: Let $S = (s_0, s_1, \ldots ) \subset \mathbb{C}$ and $L = (\Lambda_0, \Lambda_1, \ldots ) \subset \mathbb{C}$ denote countably infinite sets of complex numbers. We often relate them by $s_j = e^{2 \pi \textbf{i} \eta \Lambda_j}$ for each $j$.

\item Inhomogeneity parameters: Let $\Xi = (\xi_0, \xi_1, \ldots ) \subset \mathbb{C}$ and $Z = (z_0, z_1, \ldots ) \subset \mathbb{C}$ denote countably infinite sets of complex numbers. We often relate them by $\xi_j = e^{2 \pi \textbf{i} z_j}$ for each $j$, and we also often denote $v_{i, j} = w_j - z_i - \eta$ for each $i, j$. 

\item Fusion parameters: Let $\mathcal{J} = (J_1, J_2, \ldots ) \subset \mathbb{C}$. We will often impose that $\mathcal{J} \subset \mathbb{Z}_{\ge 1}$. 

\item $q$-Hahn parameters: Let $B = (b_1, b_2, \ldots ) \subset \mathbb{C}$ and $C = (c_1, c_2, \ldots ) \subset \mathbb{C}$ denote countably infinite sets of complex numbers; further denote $a_{x, y} = b_x c_y$ for each $x, y > 1$. We often relate $B$ and $C$ to the above parameters by $b_x = s_x^2$ and $c_y = q^{J_y}$ for each $x, y$. 

\item Multiplicative dynamical parameters: Let $\delta \in \mathbb{C}$. For any path ensemble (or assignment of arrow configurations to the positive quadrant satisfying arrow configuration), define $\kappa_{x, y}$ for each $(x, y) \in \mathbb{Z}_{> 0} \times \mathbb{Z}_{\ge 0}$ through $\kappa_{1, 0} = \delta$ and \eqref{kappamultiplicative}. 

\item Additive dynamical parameters: Let $\lambda \in \mathbb{C}$. For any path ensemble (or assignment of arrow configurations to the positive quadrant satisfying arrow configuration), define $\Phi_{x, y}$ for each $(x, y) \in \mathbb{Z}_{> 0} \times \mathbb{Z}_{\ge 0}$ through $\Phi_{1, 0} = \lambda$ and 
\begin{flalign}
\label{lambdaadditive}
\Phi_{x, y + 1} = 2 \eta J_{y + 1} - 4 \eta j_1 (x, y + 1) + \Phi_{x, y}; \qquad \kappa_{x + 1, y} = 4 \eta i_2 (x, y) - 2 \eta \Lambda_x + \Phi_{x, y}.  
\end{flalign} 

\noindent In particular, the multiplicative dynamical parameters $\kappa_{x, y}$ can be obtained from the additive dynamical parameters $\Phi_{x, y}$ by setting $\delta = e^{- 2 \pi \textbf{i} \lambda}$ and $\kappa_{x, y} = e^{-2 \pi \textbf{i} \Phi_{x, y}}$ for all $x \in \mathbb{Z}_{> 0}$ and $y \in \mathbb{Z}_{\ge 0}$. 

\end{itemize}

We will also come across various types of vertex weights, which are listed below. 

\begin{itemize} 
	
\item General hypergeometric stochastic weights: Fixing $u, s, q, \kappa \in \mathbb{C}$ and $J \in \mathbb{Z}$, the stochastic weight $\psi_{u; s; q; J} \big( i_1, j_1; i_2, j_2 \b| \kappa \big)$ governing the dynamical stochastic higher spin vertex model is given by Definition \ref{psiweightdefinition} for any $i_1, j_1, i_2, j_2 \in \mathbb{Z}_{\ge 0}$. 

\item $q$-Hahn stochastic weights: Fixing $q, a, b, \kappa \in \mathbb{C}$, the stochastic weight $\varphi_{q, a, b, \kappa} \big( j \b| i \big)$ governing the dynamical $q$-Hahn boson model is given by Definition \ref{stochasticparticlesjumps} for any $i_1, j_1, i_2, j_2 \in \mathbb{Z}_{\ge 0}$. 

\item Unfused vertex weights: Fixing $v, \lambda, \Lambda \in \mathbb{C}$, the vertex weight $W_1 (i_1, j_1; i_2, j_2 \b| v, \lambda, \Lambda)$ is given by \eqref{weightsj1} for any integers $i_1, i_2 \in \mathbb{Z}_{\ge 0}$ and $j_1, j_2 \in \{ 0, 1 \}$. 

\item Fused vertex weights: Fixing $v, \lambda, \Lambda \in \mathbb{C}$ and $J \in \mathbb{Z}_{> 0}$, one defines the fused vertex weight $W_J (i_1, j_1; i_2, j_2 \b| v, \lambda, \Lambda)$ by \eqref{fusedweightdefinition} (and also explicitly by Theorem \ref{fusedweighthypergeometric}) for any integers $i_1, j_1, i_2, j_2 \in \mathbb{Z}_{\ge 0}$. 

\item Summed fused weights: Fixing $v, \lambda, \Lambda \in \mathbb{C}$ and $J \in \mathbb{Z}_{> 0}$, the summed fused vertex weight $\widehat{W}_J (i_1, j_1; i_2, j_2 \b| v, \lambda, \Lambda)$ is given by \eqref{wjwsums} for any integers $i_1, j_1, i_2, j_2 \in \mathbb{Z}_{\ge 0}$. These are explicitly related to the fused vertex weights $W_J$ by \eqref{hatwwj}. 

\item Stochastic corrections: Fixing $\lambda, \Lambda \in \mathbb{C}$, the stochastic correction $C_J (i_1, j_1; i_2, j_2 \b| \lambda, \Lambda)$ is given by Definition \ref{cstochastic} for any integers $i_1, j_1, i_2, j_2 \in \mathbb{Z}_{\ge 0}$. 

\item Stochastic weights: Fixing $v, \lambda, \Lambda \in \mathbb{C}$, the stochastic weight $\sigma_J (i_1, j_1; i_2, j_2 \b| v, \lambda, \Lambda)$ is given by \eqref{sigmajdefinition} for any integers $i_1, j_1, i_2, j_2 \in \mathbb{Z}_{\ge 0}$. It is shown as Lemma \ref{lpsiweights} that under the transformations mapping the $\eta, v, \lambda, \Lambda$ parameters to the $\eta, u, \kappa, s$ parameters, $\sigma_J$ becomes the general hypergeometric stochastic weight $\psi$ above. 

\item In some contexts (in Section \ref{DynamicalParticles} and Section \ref{PsiStochasticity}) we will abbreviate $\sigma (j) = \sigma_J (i_1, J; i_2, j_2)$, especially when $i_1 = 0$. 

\end{itemize}
	
A \emph{non-negative signature} $\mu = (\mu_1, \mu_2, \ldots , \mu_n)$ of \emph{length} $n$ is a non-increasing sequence of $n$ integers $\mu_1 \ge \mu_2 \ge \cdots \ge \mu_n \ge 0$; let us provide some notation for signatures. 

\begin{itemize}
	
\item Signatures will commonly be denoted by $\mu$ and $\nu$. 

\item The length of $\mu$ is denoted $\ell (\mu) = n$. 

\item The set of non-negative signatures of length $n$ is denoted by $\Sign_n^+$, and the set of all non-negative signatures is denoted by $\Sign^+ = \bigcup_{n = 0}^{\infty} \Sign_n^+$. 

\item For any signature $\mu$ and integer $j$, let $m_j (\mu)$ denote the number of indices $i$ for which $\mu_i = j$, that is, $m_j (\mu)$ is the multiplicity of $j$ in $\mu$. 

\item For any $\mu \in \Sign^+$ and $k \in \mathbb{Z}_{\ge 0}$, let $\mathfrak{h}_{\mu} (k)$ denote the number of indices $j$ for which $\mu_j \ge x$. 

\item For each positive integer $n$, $\textbf{P}_n$ denotes the measure on $\Sign^+$, given explicitly by Definition \ref{measuresignaturesvertexmodel}, that is governed by the stochastic higher spin vertex model $\mathcal{P}$.

\item For each positive integer $n$, $\textbf{M}_n$ denotes the measure on $\Sign^+$, given explicitly by Definition \ref{measuresignaturesvertexmodel}, that is governed by the dynamical $q$-Hahn boson model $\mathcal{M}$.
	
\end{itemize}
	
\subsection{Theta Functions and Hypergeometric Series}

\label{Hypergeometric}

In this section we briefly recall the definitions and some properties of several types of elliptic functions and hypergeometric series that we will encounter later in this paper. A more comprehensive survey of this very vast theory can be found in the book \cite{BHS}. As in Section \ref{Parameters}, we fix $\eta, \tau, q \in \mathbb{C}$ with $\Im \tau > 0$. 	

Throughout this article, we will repeatedly come across the \emph{first Jacobi theta function} $\theta_1 (z; \tau) = \theta (z; \tau) = \theta (z)$, defined (for any $z \in \mathbb{C}$) by 
\begin{flalign}
\label{theta1}
\theta (z) = - \displaystyle\sum_{j = - \infty}^{\infty} \exp \left( \pi \textbf{i} \tau \left( j + \displaystyle\frac{1}{2} \right)^2 + 2 \pi \textbf{i} \left( j + \displaystyle\frac{1}{2} \right) \left( z + \displaystyle\frac{1}{2} \right) \right), 
\end{flalign}

\noindent which converges since $\Im \tau > 0$. Observe that these theta functions degenerate to trigonometric functions, in that $\lim_{\tau \rightarrow \textbf{i} \infty} e^{- \pi \textbf{i} \tau / 4} \theta (z) = 2 \sin (\pi z)$. We will use the notation $f(z)$ to denote either $\theta (z)$ or $\sin (\pi z)$, depending on whether the limit as $\tau$ tends to $\textbf{i} \infty$ has been taken. The former case will sometimes be referred to as the \emph{elliptic case} and the latter as the \emph{trigonometric case}. 

In either case, $f$ satisfies the \textit{Riemann identity} 
\begin{flalign}
\label{quarticrelationf}
\begin{aligned}
f(x + z) & f(x - z) f(y + w) f(y - w) \\
& = f(x + y) (x - y) f(z + w) f (z - w) + f(x + w) f(x - w) f(y + z) f(y - z), 
\end{aligned}
\end{flalign}

\noindent for any $w, x, y, z \in \mathbb{C}$.

For any complex number $a \in \mathbb{C}$ and nonnegative integer $k$, the \textit{rational Pochhammer symbol} $(a)_k$; \emph{$q$-Pochhammer symbol} $(a; q)_k$ and \emph{elliptic Pochhammer symbol} $[a]_k = [a; \eta]_k$ are defined by 
\begin{flalign}
\label{productbasicelliptic}
(a)_k = \prod_{j = 0}^{k - 1} (a + j); \qquad (a; q)_k = \prod_{j = 0}^{k - 1} (1 - q^j a); \qquad [a]_k = \prod_{j = 0}^{k - 1} f(a - 2 \eta j).
\end{flalign}

\noindent The definitions of these Pochhammer symbols are extended to negative integers $k$ by 
\begin{flalign*}
(a)_k = \displaystyle\prod_{j = 1}^{-k} \displaystyle\frac{1}{a - j}; \qquad (a; q)_k = \displaystyle\prod_{j = 1}^{-k} \displaystyle\frac{1}{1 - q^{-j} a}; \qquad [a]_k = \displaystyle\prod_{j = 1}^{-k} \displaystyle\frac{1}{f (a + 2 \eta j)}.
\end{flalign*}

\noindent One can quickly verify that the $q$-Pochhammer symbols satisfy the identities
\begin{flalign}
\label{basichypergeometric1}
\begin{aligned}
&  (a; q)_k (a q^k; q)_m = (a; q)_{k + m}; \qquad \qquad (q^{m - k} a; q)_{m - k} (q^{2m - 2k + 1} a; q)_k = \displaystyle\frac{(q^{m - k} a; q)_{m + 1}}{1 - q^{2m - 2k} a}; \\
& (q^{-k} a; q)_m = \displaystyle\frac{(a; q)_m \big( \frac{q}{a}; q \big)_k}{q^{mk} \big( \frac{1}{q^{m - 1} a}; q \big)_k}; \qquad \quad  (a; q)_{m - k} = \displaystyle\frac{q^{\binom{k + 1}{2} - mk}}{(-a)^k} \displaystyle\frac{(a; q)_m}{\big( \frac{1}{a q^{m - 1}}; q \big)_k}; \\
& (a; q)_k (-a; q)_k = (a^2; q)_k;  \qquad \qquad \quad \displaystyle\frac{(q^2 a; q^2)_k}{(a; q^2)_k} = \displaystyle\frac{1 - q^{2k} a}{1 - a} = \displaystyle\frac{a^{-1} - q^{2k}}{a^{-1} - 1}; \\
& \qquad \qquad \qquad \qquad \quad \displaystyle\lim_{q \rightarrow 1} (1 - q)^{-k} (q^a; q)_k = (a)_k. 
\end{aligned}
\end{flalign}

\noindent for any $a \in \mathbb{C}$ and $k, m \in \mathbb{Z}$. Similarly, the elliptic Pochhammer symbols satisfy the identities 
\begin{flalign}
\label{elliptichypergeometricidentities1}
\begin{aligned} 
& [a]_m = (-1)^m \big[ 2 \eta (m - 1) - a \big]_k; \qquad [a]_{m - k} = \displaystyle\frac{[a]_m}{[a - 2 \eta k]_k} = \displaystyle\frac{(-1)^k [a]_m}{\big[ 2 \eta (m - 1) - a \big]_k}; \\
& \qquad \qquad \qquad \quad \big[ a \big]_k \big[ a + 2 \eta (k + 1) \big]_{m - k} = \displaystyle\frac{[a]_{m + 1}}{f (a + 2 \eta k)}. 
\end{aligned} 
\end{flalign}

\noindent From the Pochhammer symbols, we can define the \emph{basic hypergeometric series},
\begin{flalign}
\label{basichypergeometric}
_p \varphi_r \left( \begin{array}{ccccc}  a_1, a_2, \ldots , a_p \\ b_1, b_2, \ldots , b_r \end{array} \right| q, z \bigg) = \displaystyle\sum_{k = 0}^{\infty} \displaystyle\frac{z^k}{ \big( q; q \big)_k} \displaystyle\prod_{j = 1}^p (a_j; q)_k \displaystyle\prod_{j = 1}^r (b_j; q)_k^{-1},
\end{flalign}

\noindent and the \emph{elliptic hypergeometric series}, 
\begin{flalign}
\label{elliptichypergeometric}
_p e_r \left( \begin{array}{ccccc}  a_1, a_2, \ldots , a_p \\ b_1, b_2, \ldots , b_r \end{array} \right| z  \bigg) = \displaystyle\sum_{k = 0}^{\infty} \displaystyle\frac{z^k}{ [-2 \eta]_k} \displaystyle\prod_{j = 1}^p [a_j]_k \displaystyle\prod_{j = 1}^r [b_j]_k^{-1},
\end{flalign}

\noindent for any positive integers $p, r$ and complex numbers $a_1, a_2, \ldots , a_p; b_1, b_2, \ldots , b_r; z \in \mathbb{C}$. Here, we must assume that the series \eqref{basichypergeometric} and \eqref{elliptichypergeometric} converge. In the situations we come across, this will be guaranteed since these series will always \textit{terminate}. For \eqref{basichypergeometric} this means that some $a_j = q^{-r}$, and for \eqref{elliptichypergeometric} this means that some $a_j = 2 \eta r$, where in both cases $r$ is a nonnegative integer. 

While the basic hypergeometric series are over a century old, elliptic hypergeometric series were introduced to the mathematics community much more recently, by Frenkel and Turaev \cite{ESEMHF} in 1997 (although they also implicitly appeared in the physics work of Date-Jimbo-Kuniba-Miwa-Okado \cite{ESSOSM} almost a decade previously) in their study of elliptic solutions to the Yang-Baxter equation. Since that work, elliptic hypergeometric series have been studied extensively; we refer to the surveys \cite{EHF, TEHF} or Chapter 11 of the book \cite{BHS} for more information. 

It is well-understood that both basic and elliptic hypergeometric series become significantly more manageable under certain specializations of their parameters. The two we will discuss here are \textit{very well-poised hypergeometric series} and \textit{balanced hypergeometric series}. 

The hypergeometric series \eqref{basichypergeometric} is called \emph{very well-poised} if $p = r + 1$, $q a_1 = a_{k + 1} b_k$ for each $k \in [1, r]$, $a_2 = q a_1^{1 / 2}$, and $a_3 = - q a_1^{1 / 2}$. In this case, we denote the resulting basic hypergeometric series by $_{r + 1} W_r (a_1; a_4, a_5, \ldots , a_{r + 1}; q, z)$, which is explicitly given by \eqref{poisedhypergeometric} (due to the fifth and sixth identities in \eqref{basichypergeometric1}).

 Analogously, the \emph{very-well poised elliptic hypergeometric series} is defined by 
\begin{flalign}
\label{elliptichypergeometricpoised}
_{r + 1} v_r (a_1; a_6, a_7, \ldots , a_{r + 1}; z) = \displaystyle\sum_{k = 0}^{\infty} \displaystyle\frac{z^k [a_1]_k}{ [-2 \eta]_k} \displaystyle\frac{f (a_1 - 4 \eta k)}{f (a_1)} \displaystyle\prod_{j = 6}^{r + 1} \displaystyle\frac{[a_j]_k}{[a_1 - a_j - 2 \eta ]_k},
\end{flalign}

\noindent which can similarly be viewed as a special case of the elliptic hypergeometric series \eqref{elliptichypergeometric} where $a_2$, $a_3$, $a_4$, and $a_5$ are fixed in terms of $a_1$, $\tau$, and $\eta$; see Section 11.3 of \cite{BHS}. We also call the very-well poised elliptic hypergeometric series $_{r + 1} v_r (a_1; a_6, a_7, \ldots , a_{r + 1})$ \textit{balanced} if 
\begin{flalign*}
\displaystyle\frac{(r - 5) (a_1 - 2 \eta)}{2} = \sum_{j = 6}^{r + 1} a_j - 2 \eta, 
\end{flalign*}

\noindent which is often imposed in order to ensure total ellipticity of the sum \eqref{elliptichypergeometricpoised}. 

There are several situations when very-well poised or balanced hypergeometric series simplify; one is given by the Rogers summation identity for very well-poised $_6 W_5$, terminating basic hypergeometric series, which states (see equation (2.4.2) of \cite{BHS}) that
\begin{flalign}
\label{hypergeometric65sumterminating}
_6 W_5 \left( a; b, c, q^{-n}; q, \displaystyle\frac{a q^{n + 1}}{bc} \right) = \displaystyle\frac{(aq; q)_n \left( \frac{aq}{bc}; q \right)_n}{\left( \frac{aq}{b}; q \right)_n \left( \frac{aq}{c}; q \right)_n}.  
\end{flalign}

\noindent for any $a, b, c, q \in \mathbb{C}$ and $n \in \mathbb{Z}_{\ge 0}$. 

One has a similar simplification for the very well-poised, balanced $_{10} v_{9}$ elliptic hypergeometric series. Specifically, the \textit{(terminating) elliptic Jackson identity} (which was originally found by Frenkel-Turaev \cite{ESEMHF}; see also equation (11.3.19) of \cite{BHS}, in which all parameters are divided by $-2 \eta$) states that 
\begin{flalign}
\label{hypergeometric109sumterminating} 
_{10} v_9 (a; b, c, d, e, -n; 1) = \displaystyle\frac{[a - 2 \eta]_n [a - b - c - 2 \eta]_n [a - b - d - 2 \eta]_n [a - c - d - 2 \eta]_n}{[a - b - 2 \eta]_n [a - c - 2 \eta]_n [a - d - 2 \eta]_n [a - b - c - d - 2 \eta]_n},
\end{flalign}

\noindent for any nonnegative integer $n$. 

We will encounter the above identities and simplifications later in the paper; specifically, we require \eqref{hypergeometric65sumterminating} in Section \ref{Stochastic} to verify stochasicity of certain weights, and we come across \eqref{hypergeometric109sumterminating} in Section \ref{FusedWeights} when we consider how fused weights in the elliptic quantum group simplify under specific choices of parameters.

\subsection{Currents and Interacting Particle Systems}

\label{InteractingParticles}

In this section we define the current of a vertex model and explain how random path ensembles can be viewed as interacting particle systems. 

We begin with the former. Let $\mathcal{E}$ be a directed path ensemble (as in Section \ref{PathEnsembles}, with or without dynamical parameter), in which every path exits the strip $\mathbb{Z}_{\ge 0} \times [0, N]$ through its top boundary, for each $N > 0$. Associated with each vertex $(X, Y) \in \mathbb{Z}_{> 0} \times \mathbb{Z}_{\ge 0}$ in $\mathcal{E}$ is an observable called the \textit{current} $\mathfrak{h} (X, Y) = \mathfrak{h}_Y (X)$. To define it, color a directed path in $\mathcal{E}$ blue if it emanates from the $y$-axis, and color it red if it emanates from the $x$-axis. Then, $\mathfrak{h}_Y (X)$ denotes the number of red paths that intersect the line $y = Y$ to the left of $(X, Y)$ subtracted from the number of blue paths that intersect the line $y = Y$ at or to the right of $(X, Y)$. In the special case when there are no red paths (meaning that all paths enter through the $y$-axis), $\mathfrak{h}_Y (X)$ denotes the number of paths that intersect the line $y = Y$ at or to the right of $(X, Y)$. Observe further that $\mathfrak{h}_0 (1) = 0$, independent of the boundary data.

Understanding scaling properties (for instance, a law of large numbers and fluctuations) of the current of random path ensembles has long been a topic of great interest within the probability and mathematical physics literature \cite{CFSAEPSSVMCL,PTAEPSSVM,RWBRE,AEP,SHSVMM,ERSF,MP,SSVM,IEJM,SWLP,SVMCP,PTAP,SLIS,SHSVMAEP,LAEP}. When the random path ensemble happens to be integrable, one can hope to establish exact identities for this current and use these identities to recover information about the large-scale behavior of $\mathfrak{h}$. In the case of path ensembles without a dynamical parameter (where the corresponding vertex models are the stochastic higher spin vertex models introduced in \cite{SHSVML}), such identities were recently established in \cite{SHSVMM,HSVMRSF,SHSVML,SHSVMAEP}, which were later used to study asymptotics in \cite{CFSAEPSSVMCL,PTAEPSSVM,SHSVMM,IEJM,PTAP,SHSVMAEP}. For \textit{unfused} path ensembles with a dynamical parameter (meaning paths cannot share horizontal edges), analogous identities were obtained in \cite{ERSF}, in which work a scaling limit was also analyzed in one special case. One of the goals of the present paper is to extend those results to obtain exact identities and asymptotics for the currents of some fused, random directed path ensembles with a dynamical parameter; we will explain this further in Section \ref{ObservablesDegenerations}. 

Observe that the rules for the dynamical parameter, listed as \eqref{kappamultiplicative} in the multiplicative case and as \eqref{lambdaadditive} in the additive case, can be quickly interpreted in terms of the current. Specifically, one defines the dynamical parameters through  
\begin{flalign}
\label{dynamicalparametercurrent}
\Phi_{x, y} = \lambda - 4 \eta \mathfrak{h} (x, y) + 2 \eta \displaystyle\sum_{k = 1}^y J_k - 2 \eta \displaystyle\sum_{k = 1}^{x - 1} \Lambda_k; \qquad \kappa_{x, y} = q^{-2 \mathfrak{h} (x, y)} \delta \displaystyle\prod_{k = 1}^{x - 1} b_k \displaystyle\prod_{k = 1}^y c_k, 
\end{flalign}

\noindent for each vertex $(x, y)$, in the additive and multiplicative cases, respectively; above, we denoted $b_x = s_x^2$ and $c_y = q^{J_y}$ for each $x, y$. 

Now let us discuss interacting particle systems. To that end, let $\mathcal{P}^{(T)}$ denote the restriction of the random path ensemble given by the dynamical stochastic higher spin vertex model $\mathcal{P}$ (from Section \ref{ProbabilityMeasures}) to the strip $\mathbb{Z}_{> 0} \times [0, T]$. Assume that all paths in this restriction almost surely exit the strip $\mathbb{Z}_{> 0} \times [0, T]$ through its top boundary; this will be the case, for instance, if $\mathcal{P}_n \big[ (i_2, j_2) = (0, k) \big| (i_1, j_1) = (0, k) \big] < 1$ uniformly in $n$ and $k$. Define $\mathcal{M}^{(T)}$ in an entirely analogous way, with the difference that $\mathcal{P}$ is replaced by the probability measure $\mathcal{M}$ for the dynamical $q$-Hahn boson model (from Section \ref{LambdaSpecializationModelGeneral}).

We will use the probability measure $\mathcal{P}^{(T)}$ (or $\mathcal{M}^{(T)}$) to produce a discrete-time interacting particle system on $\mathbb{Z}_{> 0}$, defined up to time $T - 1$, as follows. Sample a line ensemble $\mathcal{E}$ randomly under $\mathcal{P}^{(T)}$, and consider the arrow configuration it associates with some vertex $(p, t) \in \mathbb{Z}_{> 0} \times [1, T]$. We will place $k$ particles at position $p$ and time $t - 1$ if and only if $i_1 = k $ at the vertex $(p, t)$. Therefore, the paths in the path ensemble $\mathcal{E}$ correspond to space-time trajectories of the particles. 

We can associate a configuration of $n$ particles in $\mathbb{Z}_{\ge 0}$ with a signature of length $n$ as follows. A signature $\mu = (\mu_1, \mu_2, \ldots , \mu_n)$ is associated with the particle configuration that has $m_j (\mu)$ particles at position $j$, for each non-negative integer $j$. Stated alternatively, $\mu$ is the ordered set of positions in the particle configuration. If $\mathcal{E}$ is a directed line ensemble on $\mathbb{Z}_{\ge 0} \times [0, n]$, let $p_n (\mathcal{E}) \in \Sign^+$ denote the signature associated with the particle configuration produced from $\mathcal{E}$ at time $n$. 

Then, $\mathcal{P}^{(n)}$ and $\mathcal{M}^{(n)}$ induce probability measures on $\Sign^+$,  defined as follows. 

\begin{definition}
	
	\label{measuresignaturesvertexmodel} 
	
	Let $q$, $\delta$, $U$, $\Xi$, $S$, and $\mathcal{J}$ be sets of parameters as in Section \ref{ModelGeneral}. For any $n \in \mathbb{Z}_{> 0}$, let $\textbf{P}_n$ denote the measure on $\Sign^+$ (dependent on $q$, $\delta, U, \Xi, S, \mathcal{J}$) defined by setting $\textbf{P}_n (\mu) = \mathcal{P}^{(n)} [p_n (\mathcal{E}) = \mu]$, for each $\mu \in \Sign^+$, where we run the dynamical stochastic higher spin vertex model $\mathcal{P}$ with $\mathcal{J}$-step initial data. 
	
	Similarly, let $q$, $\delta$, $B$, and $C$ be sets of parameters as in Section \ref{LambdaSpecializationModelGeneral}, and assume that $c_y = q^{J_y}$ for each $y > 0$, where $\mathcal{J} = (J_1, J_2, \ldots ) \subset \mathbb{Z}_{> 0}$. For any positive integer $n$, let $\textbf{M}_n$ denote the measure on $\Sign^+$ (dependent on $q, \delta, B, C$) defined by setting $\textbf{M}_n (\mu) = \mathcal{M}^{(n)} [p_n (\mathcal{E}) = \mu]$, for each $\mu \in \Sign^+$, where we again run the dynamical $q$-Hahn boson model $\mathcal{M}$ with $\mathcal{J}$-step initial data.
\end{definition}

\section{Fused Elliptic Symmetric Functions} 

\label{EllipticSymmetric} 

The integrability of the dynamical stochastic higher spin vertex model (defined in Section \ref{ProbabilityMeasures}) is based on properties of a family of a elliptic symmetric functions that arise from fused representations of Felder's \cite{ISAEP} elliptic quantum group $E_{\tau, \eta} (\mathfrak{sl}_2)$. The idea to use elliptic symmetric functions to produce and solve dynamical stochastic processes was very recently introduced in a simpler setting by Borodin \cite{ERSF}. There, the symmetric functions he studied (called elliptic weight functions) can be viewed as partition functions for vertex models that take place on unfused directed path ensembles with a dynamical parameter. Recall here that the term ``unfused'' means that no two paths of the ensemble can share a horizontal edge or, equivalently, $j_1, j_2 \in \{ 0, 1 \}$ at each vertex of the ensemble.

However, the dynamical stochastic higher spin models take place on more general path ensembles. Thus, in order to analyze these systems from this symmetric function viewpoint, we must generalize the elliptic weight functions to partition functions that can give \textit{fused} (meaning not unfused) directed path ensembles non-zero weight. The purpose of this section is to explain how to do this through the fusion procedure in $E_{\tau, \eta} (\mathfrak{sl}_2)$.

To that end, we begin by defining the elliptic weight functions in Section \ref{EllipticWeight} and Section \ref{EllipticGroupSymmetricFunctions}, and then explaining how fusion applies to $E_{\tau, \eta} (\mathfrak{sl}_2)$ in Section \ref{Fusion}. We next define fused versions of the elliptic weight functions in Section \ref{FusionSymmetric}.

\subsection{Elliptic Weight Symmetric Functions}

\label{EllipticWeight}

In this section we recall the definition and some properties of a family of elliptic symmetric functions studied in \cite{ERSF}. There are several ways to define these functions. The first is through the partition function of a vertex model with specific weights. The second is through the action of certain operators in the elliptic quantum group $E_{\tau, \eta} (\mathfrak{sl}_2)$ on finitary vectors of the infinite tensor product $V_{\Lambda_1} \otimes V_{\Lambda_2} \otimes \cdots$ of evaluation Verma modules. These two definitions are equivalent; the first and second are explained in Sections 7 and 3 of \cite{ERSF}, respectively. In this section we adopt the former definition; we describe the latter in Section \ref{EllipticGroupSymmetricFunctions}. 

The functions we introduce, which will be denoted by $B_{\mu / \nu}$ and $D_{\mu / \nu}$ (for signatures $\mu$ and $\nu$), will be partition functions of unfused directed path ensembles with a dynamical parameter that are weighted in a specific way. Each path ensemble consists of a collection of vertices and arrow configurations assigned to each vertex. To each vertex will be associated a \emph{vertex weight}, which will depend on the position, dynamical parameter, and arrow configuration of the vertex. The weight of the directed path ensemble will then be the product of the weights of all vertices in the ensemble. 

Let us explain this in more detail. For this section, we ``shift'' all directed path ensembles to the left by one in order to make our presentation consistent with that of Borodin \cite{HSVMRSF}. That is, any directed path in a path ensemble either contains an edge from $(-1, k + 1)$ to $(0, k + 1)$ or from $(k, 0)$ to $(k, 1)$, for some non-negative integer $k$. Arrow configurations are now defined on $\mathbb{Z}_{\ge 0} \times \mathbb{Z}_{> 0}$ instead of on $\mathbb{Z}_{> 0}^2$; see Figure \ref{fgpaths}.

\begin{figure}
	
	\begin{center}

		\begin{tikzpicture}[
		>=stealth,
		scale = 1.2
		]

		\draw[->, black, dashed	] (.5, 0) -- (.5, 3);
		\draw[->, black, dashed] (.5, 0) -- (3, 0);
		\draw[->,black, thick] (0, .5) -- (.45, .5);
		\draw[->,black, thick] (0, 1) -- (.45, 1);
		\draw[->,black, thick] (0, 1.5) -- (.45, 1.5);
		\draw[->,black, thick] (0, 2) -- (.45, 2);
		
		\draw[->, black, thick	] (.5, 0) -- (.5, .45);
		\draw[->,black, thick] (.53, .55) -- (.53, .95);
		\draw[->,black, thick] (.47, .55) -- (.47, .95);
		\draw[->,black, thick] (.55, 1) -- (.95, 1);
		\draw[->,black, thick] (.55, 1.5) -- (.95, 1.5);
		\draw[->,black, thick] (.53, 1.05) -- (.53, 1.45);
		\draw[->,black, thick] (.47, 1.05) -- (.47, 1.45);
		\draw[->,black, thick] (.53, 1.55) -- (.53, 1.95);
		\draw[->,black, thick] (.47, 1.55) -- (.47, 1.95);
		\draw[->,black, thick] (.55, 2.05) -- (.55, 2.45);
		\draw[->,black, thick] (.5, 2.05) -- (.5, 2.45);
		\draw[->,black, thick] (.45, 2.05) -- (.45, 2.45);

		\draw[->,black, thick] (1.05, 1.5) -- (1.45, 1.5);
		\draw[->,black, thick] (1, 1.55) -- (1, 1.95);
		\draw[->,black, thick] (1, 1.05) -- (1, 1.45);
		\draw[->,black, thick] (1, 2.05) -- (1, 2.45);
		\draw[->, black, thick	] (1, 0) -- (1, .45);
		\draw[->,black, thick] (1.05, .5) -- (1.45, .5);

		\draw[->,black, thick] (1.5, .55) -- (1.5, .95);
		\draw[->,black, thick] (1.5, 1.05) -- (1.5, 1.45);
		\draw[->,black, thick] (1.47, 1.55) -- (1.47, 1.95);
		\draw[->,black, thick] (1.53, 1.55) -- (1.53, 1.95);
		\draw[->,black, thick] (1.5, 2.05) -- (1.5, 2.45);
		\draw[->,black, thick] (1.55, 2) -- (1.95, 2);

		\draw[->,black, thick] (2.05, .5) -- (2.45, .5);
		\draw[->,black, thick] (2.05, 2) -- (2.45, 2);
		\draw[->, black, thick] (2, 0) -- (2, .45);
		
		\draw[->, black, thick] (2.5, .55) -- (2.5, .95);
		\draw[->, black, thick] (2.5, 1.05) -- (2.5, 1.45);
		\draw[->, black, thick] (2.5, 1.55) -- (2.5, 1.95);
		\draw[->, black, thick] (2.47, 2.05) -- (2.47, 2.5);
		\draw[->, black, thick] (2.53, 2.05) -- (2.53, 2.5);

		\filldraw[fill=gray!50!white, draw=black] (.5, .5) circle [radius=.05] node[scale = .6, below = 30]{$0$} node[scale = .6, left = 9, below = 3]{$1$};
		\filldraw[fill=gray!50!white, draw=black] (.5, 1) circle [radius=.05] node[scale = .6, left = 9, below = 3]{$2$};
		\filldraw[fill=gray!50!white, draw=black] (.5, 1.5) circle [radius=.05] node[scale = .6, left = 9, below = 3]{$3$};
		\filldraw[fill=gray!50!white, draw=black] (.5, 2) circle [radius=.05] node[scale = .6, left = 9, below = 3]{$4$};
		
		\filldraw[fill=gray!50!white, draw=black] (1, .5) circle [radius=.05] node[scale = .6, below = 30]{$1$};
		\filldraw[fill=gray!50!white, draw=black] (1, 1) circle [radius=.05];
		\filldraw[fill=gray!50!white, draw=black] (1, 1.5) circle [radius=.05];
		\filldraw[fill=gray!50!white, draw=black] (1, 2) circle [radius=.05];
		
		\filldraw[fill=gray!50!white, draw=black] (1.5, .5) circle [radius=.05] node[scale = .6, below = 30]{$2$};
		\filldraw[fill=gray!50!white, draw=black] (1.5, 1) circle [radius=.05];
		\filldraw[fill=gray!50!white, draw=black] (1.5, 1.5) circle [radius=.05];
		\filldraw[fill=gray!50!white, draw=black] (1.5, 2) circle [radius=.05];

		\filldraw[fill=gray!50!white, draw=black] (2, .5) circle [radius=.05] node[scale = .6, below = 30]{$3$};
		\filldraw[fill=gray!50!white, draw=black] (2, 1) circle [radius=.05];
		\filldraw[fill=gray!50!white, draw=black] (2, 1.5) circle [radius=.05];
		\filldraw[fill=gray!50!white, draw=black] (2, 2) circle [radius=.05];
		
		\filldraw[fill=gray!50!white, draw=black] (2.5, .5) circle [radius=.05] node[scale = .6, below = 30]{$4$};
		\filldraw[fill=gray!50!white, draw=black] (2.5, 1) circle [radius=.05];
		\filldraw[fill=gray!50!white, draw=black] (2.5, 1.5) circle [radius=.05];
		\filldraw[fill=gray!50!white, draw=black] (2.5, 2) circle [radius=.05];

		\draw[->, black, dashed] (7.5, 0) -- (7.5, 3);
		\draw[->, black, dashed] (7.5, 0) -- (10, 0);
		
		\draw[->, black, thick	] (7.5, 0) -- (7.5, .45);
		\draw[->,black, thick] (7.5, .55) -- (7.5, .95);
		\draw[->,black, thick] (7.55, 1) -- (7.95, 1);

		\draw[->,black, thick] (8.05, 0) -- (8.05, .45);
		\draw[->,black, thick] (8, 0) -- (8, .45);
		\draw[->,black, thick] (7.95, 0) -- (7.95, .45);
		\draw[->,black, thick] (8.03, .55) -- (8.03, .95);
		\draw[->,black, thick] (7.97, .55) -- (7.97, .95);
		\draw[->,black, thick] (8, 1.55) -- (8, 1.95);
		
		\draw[->,black, thick] (8.05, 1.5) -- (8.45, 1.5);
		\draw[->,black, thick] (8.03, 1.05) -- (8.03, 1.45);
		\draw[->,black, thick] (7.97, 1.05) -- (7.97, 1.45);
		\draw[->,black, thick] (8.05, 1) -- (8.45, 1);
		\draw[->,black, thick] (8.05, .5) -- (8.45, .5);
		\draw[->,black, thick] (8.05, 2) -- (8.45, 2);

		\draw[->,black, thick] (8.5, .55) -- (8.5, .95);
		\draw[->,black, thick] (8.5, 1.05) -- (8.5, 1.45);
		\draw[->,black, thick] (8.47, 1.55) -- (8.47, 1.95);
		\draw[->,black, thick] (8.53, 1.55) -- (8.53, 1.95);
		\draw[->,black, thick] (8.47, 2.05) -- (8.47, 2.45);
		\draw[->,black, thick] (8.53, 2.05) -- (8.53, 2.45);
		\draw[->,black, thick] (8.55, 2) -- (8.95, 2);
		\draw[->,black, thick] (8.55, 1) -- (8.95, 1);

		\draw[->,black, thick] (9.05, .5) -- (9.45, .5);
		\draw[->,black, thick] (9.05, 2) -- (9.45, 2);
		\draw[->, black, thick] (9, 0) -- (9, .45);
		\draw[->, black, thick] (9, 1) -- (9, 1.45);
		\draw[->, black, thick] (9, 1.55) -- (9, 1.95);
		\draw[->, black, thick] (9, 2.05) -- (9, 2.45);
		
		\draw[->, black, thick] (9.5, .55) -- (9.5, .95);
		\draw[->, black, thick] (9.5, 1.05) -- (9.5, 1.45);
		\draw[->, black, thick] (9.5, 1.55) -- (9.5, 1.95);
		\draw[->, black, thick] (9.47, 2.05) -- (9.47, 2.5);
		\draw[->, black, thick] (9.53, 2.05) -- (9.53, 2.5);

		\filldraw[fill=gray!50!white, draw=black] (7.5, .5) circle [radius=.05] node[scale = .6, below = 30]{$0$}  node[scale = .6, left = 3]{$1$};
		\filldraw[fill=gray!50!white, draw=black] (7.5, 1) circle [radius=.05]  node[scale = .6, left = 3]{$2$};
		\filldraw[fill=gray!50!white, draw=black] (7.5, 1.5) circle [radius=.05] node[scale = .6, left = 3]{$3$};
		\filldraw[fill=gray!50!white, draw=black] (7.5, 2) circle [radius=.05] node[scale = .6, left = 3]{$4$};
		
		\filldraw[fill=gray!50!white, draw=black] (8, .5) circle [radius=.05] node[scale = .6, below = 30]{$1$};
		\filldraw[fill=gray!50!white, draw=black] (8, 1) circle [radius=.05];
		\filldraw[fill=gray!50!white, draw=black] (8, 1.5) circle [radius=.05];
		\filldraw[fill=gray!50!white, draw=black] (8, 2) circle [radius=.05];
		
		\filldraw[fill=gray!50!white, draw=black] (8.5, .5) circle [radius=.05] node[scale = .6, below = 30]{$2$};
		\filldraw[fill=gray!50!white, draw=black] (8.5, 1) circle [radius=.05];
		\filldraw[fill=gray!50!white, draw=black] (8.5, 1.5) circle [radius=.05];
		\filldraw[fill=gray!50!white, draw=black] (8.5, 2) circle [radius=.05];
		
		\filldraw[fill=gray!50!white, draw=black] (9, .5) circle [radius=.05] node[scale = .6, below = 30]{$3$};
		\filldraw[fill=gray!50!white, draw=black] (9, 1) circle [radius=.05];
		\filldraw[fill=gray!50!white, draw=black] (9, 1.5) circle [radius=.05];
		\filldraw[fill=gray!50!white, draw=black] (9, 2) circle [radius=.05];

		\filldraw[fill=gray!50!white, draw=black] (9.5, .5) circle [radius=.05] node[scale = .6, below = 30]{$4$};
		\filldraw[fill=gray!50!white, draw=black] (9.5, 1) circle [radius=.05];
		\filldraw[fill=gray!50!white, draw=black] (9.5, 1.5) circle [radius=.05];
		\filldraw[fill=gray!50!white, draw=black] (9.5, 2) circle [radius=.05];

		\end{tikzpicture}
		
	\end{center}
	
	\caption{\label{fgpaths} To the left is a path that would be counted by $B_{\mu / \nu}$, with $\mu = (4, 4, 2, 1, 0, 0, 0)$ and $\nu = (3, 1, 0)$; to the right is a path that would be counted by $D_{\mu / \nu}$, with $\mu = (4, 4, 3, 2, 2)$ and $\nu = (3, 1, 1, 1, 0)$.  } 
	
\end{figure}
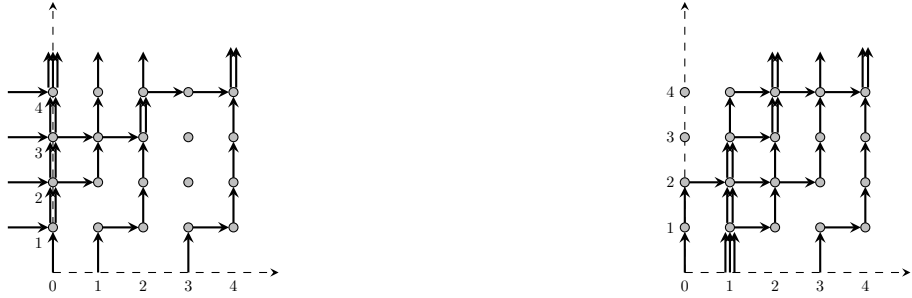

Fix parameters $\eta, \tau \in \mathbb{C}$, and recall the definition of the elliptic function $f (z) = \theta (z)$ from \eqref{theta1}. For any $v, \lambda, \Lambda \in \mathbb{C}$, define the \textit{vertex weights} $W(i_1, j_1; i_2, j_2) = W \big( i_1, j_2; i_2, j_2 \b| v, \lambda \big) = W_1 \big( i_1, j_2; i_2, j_2 \b| v, \lambda \big) = W_1 \big( i_1, j_2; i_2, j_2 \b| v, \lambda, \Lambda \big) $	 by 
\begin{flalign}
\label{weightsj1}
\begin{aligned}
W (k, 0; k, 0) &= \displaystyle\frac{f\big( \eta (\Lambda - 2k) - v \big) f (\lambda + 2 k \eta)}{f \big( \eta \Lambda - v \big) f (\lambda)}, \\
W (k, 1; k + 1, 0) &= \displaystyle\frac{f\big( v + \lambda + \eta (2k + 2 - \Lambda) \big) f ( 2 \eta)}{f \big(  \eta \Lambda - v \big) f (\lambda)}, \\
W (k, 0; k - 1, 1) &= \displaystyle\frac{f\big( \lambda - v \eta (2k - 2 - \Lambda) \big) f \big( 2 \eta (\Lambda + 1 - k) \big) f (2 k \eta) }{f \big( \eta \Lambda - v \big) f (\lambda) f (2 \eta) }, \\
W (k, 1; k, 1) &= \displaystyle\frac{f\big( \eta (2k - \Lambda) - v \big) f \big( \lambda + 2 \eta (k - \Lambda) \big)}{f \big( \eta \Lambda - v \big) f (\lambda)}, 
\end{aligned}
\end{flalign}

\noindent for each nonnegative integer $k$, and $W (i_1, j_1; i_2, j_2) = 0$ if $(i_1, j_1; i_2, j_2)$ is not of the above form. 

Now, further fix parameters $\lambda \in \mathbb{C}$, $W = (w_1, w_2, \ldots ) \subset \mathbb{C}$; $Z = (z_0, z_1, \ldots ) \subset \mathbb{C} $; and $L = (\Lambda_0, \Lambda_1, \ldots) \subset \mathbb{C}$. Also set $v_{i, j} = w_j - z_i - \eta$ for each $i, j$. 

Let $\mathcal{E}$ be a directed path ensemble, all of whose vertices are contained in some region $[-1, \infty) \times [0, N + 1]$, for some positive integer $N$. Associate an additive dynamical parameter $\Phi_{x, y}$ to each vertex in $\mathcal{E}$ through the additive rules listed \eqref{lambdaadditive}, where $(J_1, J_2, \ldots )$ from that explanation is set to $(1, 1, \ldots )$ and all parameters are shifted to the left one coordinate (so in particular $\Phi_{0, 0} = \lambda$). 

Define an \textit{interior vertex} of $\mathcal{E}$ to be a vertex of $\mathcal{E}$ contained in the strip $[0, \infty) \times [1, N] \subset \mathbb{Z}_{\ge 0} \times \mathbb{Z}_{> 0}$; these are the vertices of $\mathcal{E}$ satisfying arrow conservation. Set the \textit{weight of an interior vertex} $(x, y)$ of $\mathcal{E}$ to be $W_1 (i_1, j_1; i_2, j_2 \b| v, \Phi, \Lambda)$, where $(i_1, j_1; i_2, j_2)$ is the arrow configuration at $(x, y)$; $v = v_{x, y}$; $\Phi = \Phi_{x, y}$ is the dynamical parameter at $(x, y)$; and $\Lambda = \Lambda_x$. 

Next define the \emph{weight} of $\mathcal{E}$ to be the product of the weights of the interior vertices of $\mathcal{E}$. Assuming that all paths of $\mathcal{E}$ are finite, this product is well defined since all but finitely many vertices in $[0, \infty) \times [0, N]$ have vertex type $(0, 0; 0, 0)$, and $W (0, 0; 0, 0) = 1$ at any $(x, y) \in \mathbb{Z}_{\ge 0} \times \mathbb{Z}_{> 0}$. 

Now, using these weights we can define the elliptic weight functions $B_{\mu / \nu }$ and $D_{\mu / \nu}$, which were given in Section 7 (or Section 3) of \cite{ERSF}. We will provide an alternative definition (which was actually the original one from \cite{ERSF}) in terms of the elliptic quantum group in Section \ref{EllipticGroupSymmetricFunctions}.

\begin{definition}
	\label{definitionb}
	Let $M, N \ge 0$ be integers and $\mu \in \Sign_{M + N}^+$ and $\nu \in \Sign_N^+$ be signatures. Then, the function $B_{\mu / \nu} \big( w_1, w_2, \ldots , w_M \b| \lambda - 2 \eta M \big) = B_{\mu / \nu} \big( W \b| \lambda - 2 \eta M,Z, L \big)$ denotes the sum of the weights of all unfused directed path ensembles, consisting of $M + N$ paths whose interior vertices are contained in the rectangle $[0, \mu_1] \times [1, M]$, satisfying the following two properties.
	
	\begin{itemize}
		\item{Every path contains one edge that either connects $(-1, k)$ to $(0, k)$ for some $k \in [1, M]$ or connects $(\nu_j, 0)$ to $(\nu_j, 1)$ for some $j \in [1, N]$; the latter part of this statement holds with multiplicity, meaning that $m_i (\nu)$ paths connect $(i, 0)$ to $(i, 1)$ for each $i \in \nu$.}
		\item{ Every path contains an edge connecting $(\mu_k, M)$ to $(\mu_k, M + 1)$, for some $k \in [1, M + N]$; again, this holds with multiplicity.}
		
	\end{itemize} 
\end{definition}

\begin{definition}
	\label{definitiond} 
	Let $M, N \ge 0$ be integers and $\mu, \nu \in \Sign_M^+$ be signatures. Then, the function $D_{\mu / \nu} \big( w_1, w_2, \ldots , w_N \b| \lambda + 2 \eta N \big) = D_{\mu / \nu} \big( W \b| \lambda + 2 \eta N, Z, L \big)$ denotes the sum of the weights of all unfused directed path ensembles, consisting of $M$ paths whose interior vertices are contained in the rectangle $[0, \mu_1] \times [1, N]$, satisfying the following two properties (both with multiplicity). 
	
	\begin{itemize}
		\item{Each path contains an edge connecting $(\nu_k, 0)$ to $(\nu_k, 1)$ for some $k \in [1, M]$.}
		
		\item{Each path contains an edge connecting $(\mu_k, N)$ to $(\mu_k, N + 1)$ for some $k \in [1, M]$.}
	\end{itemize}
	
\end{definition}

Our choice to use $\lambda - 2 \eta M$ in Definition \ref{definitionb} was made in order to make our presentation consistent with that in \cite{ERSF}. In particular, since $\Phi_{0, 0} = \lambda$, the dynamical parameter at $(0, M)$ is $\lambda - 2 \eta M$, as in the definition of $B$. Analogously, our choice to use $\lambda + 2 \eta N$ in Definition \ref{definitiond} corresponds to the dynamical parameter at $(0, N)$ being $\lambda + 2 \eta N$. 

We further define $B_{\mu}$ to be the function $B_{\mu / \nu}$ when $\nu$ is empty; similarly, we define $D_{\mu} = D_{\mu / \nu}$, where $\nu = (0, 0, \ldots , 0)$ is the signature with $\ell (\mu)$ zeroes. Examples of the types of paths counted by $B_{\mu / \nu}$ and $D_{\mu / \nu}$ are depicted in Figure \ref{fgpaths}. 

It was shown in Section 3 of \cite{ERSF} that the $B$ and $D$ functions are symmetric, that is, that they are invariant under permutations of the $\{ w_j \}$. Corollary 4.3 of \cite{ERSF} states that these functions also satisfy the following branching, Cauchy, and Pieri type identities, some of which we will encounter later. In what follows, we denote $\Lambda_{[i, j]} = \sum_{k = i}^j \Lambda_k$ for any $j \ge i \ge 0$ and define the \textit{normalized} $D$-symmetric functions 
\begin{flalign*}
D_{\mu / \nu}^{(\text{n})} \big( w_1, w_2, \ldots , w_N \b| \lambda \big) = D_{\mu / \nu} \big( w_1, w_2, \ldots , w_N \b| \lambda \big) \displaystyle\prod_{k = 0}^{N - 1} f (\lambda + 2 \eta k),
\end{flalign*}

\noindent for any $\mu, \nu \in \Sign_M^+$ and $M, N \in \mathbb{Z}_{> 0}$. 

\begin{prop}[{\cite[Corollary 4.3]{ERSF}}]
	
	\label{bdidentities}  	
	
	Let $\lambda \in \mathbb{C}$, $W = (w_1, w_2, \ldots ) \subset \mathbb{C}$, $Z = (z_0, z_1, \ldots ) \subset \mathbb{C}$; $U = (u_1, u_2, \ldots) \subset \mathbb{C}$; and $L = (\Lambda_0, \Lambda_1, \ldots ) \subset \mathbb{C}$. Further let $m \ge 1$ and $N \ge 0$ be integers. For any $\nu \in \Sign_N$ and $\mu \in \Sign_{N + m + 1}$, we have the \textit{branching identity}
	\begin{flalign}
	\label{branching}
	B_{\mu / \nu} \big( w_1, w_2, \ldots, w_{m + 1} \b| \lambda \big) = \displaystyle\sum_{\kappa \in \Sign_{N + m}} B_{\mu / \kappa} \big( w_1, w_2, \ldots , w_m \b| \lambda \big) B_{\kappa / \nu} \big( w_{m + 1} \b| \lambda + 2 \eta m \big). 
	\end{flalign}
	
	\noindent Now, assume for each $n \ge 0$ that 
	\begin{flalign*}
	\displaystyle\lim_{R \rightarrow \infty} \displaystyle\frac{f(w_i - u_j + \lambda - 2 \eta \big( \Lambda_{[0, R]} - 2n \big)}{f \big( \lambda - 2 \eta (\Lambda_{[0, R]} - 2n) \big)} \displaystyle\prod_{k = 0}^R \displaystyle\frac{f \big( z_k - w_i + \eta (1 - \Lambda_k) \big) f \big( z_k - u_j + \eta (\Lambda_k + 1) \big) }{f \big( z_k - u_i + \eta (\Lambda_k + 1) \big) f \big( z_k - u_j + \eta (1 - \Lambda_k) \big) } = 0. 
	\end{flalign*}
	
	\noindent Then, for any positive integers $k$ and $r$, we have the \textit{Pieri identity} 
	\begin{flalign}
	\label{stochasticbranching}
	\begin{aligned}
	& \displaystyle\sum_{\kappa \in \Sign_{N + 1}^+} D_{\kappa} \big( u_1, u_2, \ldots , u_k \b| \lambda \big) B_{\kappa / \nu} \big( w_1 \b| \lambda  + 2 \eta k \big) \\
	& \qquad = \left( \displaystyle\frac{f \big( \lambda + w_1 + \eta (\Lambda_0 - 1 - 2N) \big) }{f \big( z - w_1 + \eta (\Lambda_0 + 1) \big) } \displaystyle\frac{f(2 \eta)}{f (\lambda)} \displaystyle\prod_{j = 1}^k \displaystyle\frac{u_j - w_1 - 2 \eta}{u_j - w_1} \right)  D_{\nu} \big( u_1, u_2, \ldots , u_k \b| \lambda + 2 \eta\big), 
	\end{aligned}]
	\end{flalign}
	
	\noindent and the \textit{Cauchy identity}
	\begin{flalign}
	\label{stochasticsum}
	\begin{aligned}
	& \displaystyle\sum_{\kappa \in \Sign_M^+} D_{\kappa}^{(\text{\emph{n}})} \big( u_1, u_2, \ldots , u_k \b| \lambda \big) B_{\kappa} \big( w_1, w_2, \ldots , w_r \b| \lambda  + 2 \eta k \big) \\
	& \qquad = f(2 \eta)^r \displaystyle\prod_{j = 1}^r \displaystyle\frac{f \big( \lambda - z_0 + w_j + \eta (2r - 1 - \Lambda_0) \big)}{f \big( \lambda + 2 \eta (j - 1) \big) f \big( z_0 - w_j + \eta (\Lambda_0 + 1) \big)} \displaystyle\prod_{i = 1}^k \displaystyle\prod_{j = 1}^r \displaystyle\frac{f(u_i - w_j - 2 \eta)}{f(u_i - w_j)} . 
	\end{aligned}
	\end{flalign}                                                                                                                                                    
\end{prop}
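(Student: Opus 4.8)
The three identities are all consequences of the realization of $B_{\mu/\nu}$ and $D_{\mu/\nu}$ as partition functions of the vertex model with weights \eqref{weightsj1}, together with the integrability of that model; the plan is to treat the branching identity by a direct combinatorial decomposition and the Pieri and Cauchy identities through the (dynamical) Yang--Baxter equation satisfied by $W_1$. Since these are precisely the statements of \cite[Corollary 4.3]{ERSF}, I would follow the algebraic route of that paper, viewing $B$ and $D$ as matrix elements of products of monodromy-matrix operators $\mathsf{B}(w)$ and $\mathsf{D}(u)$ acting on the infinite tensor product $V_{\Lambda_0}\otimes V_{\Lambda_1}\otimes\cdots$ of evaluation Verma modules over $E_{\tau,\eta}(\mathfrak{sl}_2)$, the construction recalled in Section \ref{EllipticGroupSymmetricFunctions}.

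For the branching identity \eqref{branching}, I would cut the path ensemble along the horizontal line $y = 1$, separating the single bottom row (to which, using the symmetry of $B$ in its spectral variables, I assign $w_{m+1}$) from the top $m$ rows carrying $w_1,\dots,w_m$. Because the weight of an ensemble is by definition the product of its interior-vertex weights, this product factors as the weight of the bottom row times the weight of the top block, and the arrow configuration crossing $y = 1$ is recorded by an intermediate signature $\kappa\in\Sign_{N+m}^+$. Summing over all admissible $\kappa$ then yields \eqref{branching}; the only point requiring care is the bookkeeping of the additive dynamical parameter \eqref{lambdaadditive} across the cut, which fixes the value at the line $y=1$ to be $\lambda + 2\eta m$ and thereby produces the shifted argument in the bottom factor $B_{\kappa/\nu}(w_{m+1}\mid\lambda+2\eta m)$.

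For the Pieri identity \eqref{stochasticbranching} and the Cauchy identity \eqref{stochasticsum}, this combinatorial decomposition no longer suffices, and I would invoke the commutation relations among the operators $\mathsf{B}(w)$ and $\mathsf{D}(u)$. These are the $RLL$ (dynamical Yang--Baxter) relations attached to the weights \eqref{weightsj1}, and they take the schematic form $\mathsf{D}(u)\,\mathsf{B}(w) = \Theta(u,w,\lambda)\,\mathsf{B}(w)\,\mathsf{D}(u) + (\text{correction terms})$, with $\Theta$ the explicit theta-function ratio appearing on the right-hand sides of \eqref{stochasticbranching} and \eqref{stochasticsum}. Since $D_\kappa(u_1,\dots,u_k\mid\lambda)$ is a matrix element of $\mathsf{D}(u_1)\cdots\mathsf{D}(u_k)$ and $B_{\kappa/\nu}(w_1\mid\lambda+2\eta k)$ a matrix element of $\mathsf{B}(w_1)$, the sum over $\kappa$ is a completeness (matrix-product) relation; pushing $\mathsf{B}(w_1)$ through each $\mathsf{D}(u_j)$ and evaluating on the boundary vector collapses the sum and produces the prefactor in \eqref{stochasticbranching}. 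Iterating with the $r$ creation operators $\mathsf{B}(w_1),\dots,\mathsf{B}(w_r)$, and again using symmetry of $B$ and $D$ in their spectral variables, yields \eqref{stochasticsum}.

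The main obstacle is the Cauchy identity \eqref{stochasticsum}, and specifically the passage from the terminating (finite-signature) version to the infinite sum over $\kappa\in\Sign_M^+$. The iterated commutation relations generate, besides the wanted diagonal term, a family of correction terms indexed by the subsets of $\{u_i\}$ that get swapped past the $\mathsf{B}$'s; showing that these reorganize into the clean product on the right of \eqref{stochasticsum} is a nontrivial theta-function identity, and it is here that the Riemann relation \eqref{quarticrelationf} does the essential work. The hypothesized limit condition in the statement is exactly the input needed to guarantee both that the infinite sum converges and that the boundary contribution from signatures with unboundedly large parts vanishes; verifying this vanishing, and thereby justifying the interchange of the limit with the operator manipulations, is the delicate technical step I expect to occupy most of the argument.
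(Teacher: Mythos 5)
The paper does not actually prove this proposition: it is imported verbatim from \cite[Corollary 4.3]{ERSF}, and Section \ref{EllipticGroupSymmetricFunctions} explicitly defers the proof to Section 4 of that reference, recalling only the underlying operator framework. Your sketch --- branching by cutting the partition function along a horizontal line with the dynamical-parameter bookkeeping of \eqref{lambdaadditive}, and the Pieri/Cauchy identities via the dynamical Yang--Baxter commutation relations among the $\widetilde{b}(w)$ and $\widetilde{d}(u)$ operators (cf. \eqref{caacdbbd} and \eqref{dacb}) acting on $V_{\Lambda_0; z_0} \otimes V_{\Lambda_1; z_1} \otimes \cdots$, with the stated decay hypothesis controlling convergence of the infinite sum over signatures in the infinite-volume limit --- is precisely the route of that cited proof, so your approach is essentially the same as the one the paper relies on.
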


Let us conclude this section by remarking that a seemingly similar setup appeared in the work of Schlosser \cite{EELNP}, who considered partition functions of vertex models on directed ensembles consisting of non-intersecting paths, under some special elliptic vertex weights. In some cases, he showed that these partition functions factor, leading to combinatorial derivations of (multivariate generalizations of) the Frenkel-Turaev elliptic Jackson identity \eqref{hypergeometric109sumterminating}. We are unsure if those identities can be adapted into the setting outlined above.

\subsection{Interpretation Through the Elliptic Quantum Group \texorpdfstring{$E_{\tau, \eta} (\mathfrak{sl}_2)$}{}}

\label{EllipticGroupSymmetricFunctions}

The algebraic framework underlying the proof of Proposition \ref{bdidentities} is the representation theory of the elliptic quantum group $E_{\tau, \eta} (\mathfrak{sl}_2)$. Although we will not provide a proof of this proposition here (one can be found in Section 4 of \cite{ERSF}), we will provide a very brief description of this elliptic quantum group and one of its infinite-dimensional representations given by the evaluation Verma module. We then explain how the $B$-symmetric functions (from Definition \ref{definitionb}) are expressible as matrix elements of operators in the elliptic quantum group. A similar interpretation holds for the $D$-symmetric functions (from Definition \ref{definitiond}) as well; however, this is a bit more intricate and we will moreover not need it, so we do not discuss it here. For more information, we refer to Section 3 of \cite{ERSF}. 

The \textit{elliptic quantum group} $E_{\tau, \eta} (\mathfrak{sl}_2)$ was introduced by Felder \cite{ISAEP} and is generated by four families of operators, $a (\lambda, w)$, $b (\lambda, w)$, $c (\lambda, w)$, and $d (\lambda, w)$ (where $w$ and $\lambda$ range over $\mathbb{C}$), which are subject to 16 families of relations. These relations correspond to commutation identities, often called the \emph{dynamical Yang-Baxter equation}, between certain explicit families of $4 \times 4$ matrices. We will not need all of these explicit relations here, but they can be found in equation (3) of \cite{REQG}. In fact, we only use three consequences of these 16 identities, given by  
	\begin{flalign}
\label{caacdbbd}
\begin{aligned}
c (\lambda, w + 2 \eta) a (\lambda - 2 \eta, w) & = a (\lambda, w + 2 \eta) c (\lambda - 2 \eta, w); \\
d (\lambda, w + 2 \eta) b (\lambda + 2 \eta, w) & = b (\lambda, w + 2 \eta) d (\lambda + 2 \eta, w), 
\end{aligned}
\end{flalign}

\noindent and 
\begin{flalign}
\label{dacb}
\begin{aligned}
d (\lambda, w + 2 \eta) a (\lambda + 2 \eta, w) + & c (\lambda, w + 2 \eta) b (\lambda - 2 \eta, w) \\
& = a (\lambda, w + 2 \eta) d (\lambda - 2 \eta, w) + b (\lambda, w + 2 \eta) c (\lambda + 2 \eta, w), 
\end{aligned}
\end{flalign}

\noindent which appear as the first, second, and fifth identities in equation (4) in \cite{REQG}.

Some representations of $E_{\tau, \eta} (\mathfrak{sl}_2)$ can be defined in terms of the theta functions \eqref{theta1}. The one of interest to us will be the \emph{evaluation Verma module} $V = V_{\Lambda} = V_{\Lambda; z}$ (dependent on two complex parameters $\Lambda, z \in \mathbb{C}$), which is spanned by a semi-infinite collection of vectors $\{ e_0, e_1, \ldots \}$. 

Denoting $v = w - z - \eta$, the generators of $E_{\tau, \eta} (\mathfrak{sl}_2)$ act on this basis through 
\begin{flalign*}
a (\lambda, w) e_k &= W(k, 0; k, 0) e_k; \qquad \qquad \quad b (\lambda, w) e_k = W(k, 1; k + 1, 0) e_{k + 	1}, \\
c (\lambda, w) e_k &= W (k, 0; k - 1, 1) e_{k - 1}; \qquad d (\lambda, w) e_k = W (k, 1; k, 1) e_k, 
\end{flalign*}

\noindent where the vertex weights $W(i_1, j_1; i_2, j_2) = W_1 \big( i_1, j_2; i_2, j_2 \b| v, \lambda, \Lambda \big) $ are given by \eqref{weightsj1}. 

It will also be useful to define variants $\widetilde{a}$, $\widetilde{b}$, $\widetilde{c}$, and $\widetilde{d}$ of these operators, which act on functions $G: \mathbb{C} \rightarrow V$ through the identities 
\begin{flalign}
\label{modifiedabcd}
\begin{aligned}
\widetilde{a} (w) G (\lambda) & = a (\lambda, w) G (\lambda - 2 \eta); \qquad \widetilde{b} (w) G (\lambda) = b (\lambda, w) G (\lambda + 2 \eta); \\
\widetilde{c} (w) G (\lambda) & = c (\lambda, w) G (\lambda - 2 \eta); \qquad \widetilde{d} (w) G (\lambda) = d (\lambda, w) G (\lambda + 2 \eta). 
\end{aligned} 
\end{flalign}

Now let us explain how $E_{\tau, \eta} (\mathfrak{sl}_2)$ acts on tensor products of evaluation Verma modules. To that end, let $V = V_{\Lambda}$ be an evaluation Verma module, and let $\mathsf{h}$ denote its \textit{weight}; specifically, for any function $F: \mathbb{C} \rightarrow \mathbb{C}$ and integer $k \ge 0$, we set $F ( \mathsf{h}) e_k = F (\Lambda - 2k) e_k$.

As explained in Section 2 of \cite{REQG}, the tensor product $V_1 \otimes V_2$ of two $E_{\tau, \eta} (\mathfrak{sl}_2)$-modules $V_1$ and $V_2$ is again an $E_{\tau, \eta} (\mathfrak{sl}_2)$-module in which $a$, $b$, $c$, and $d$ act on $v_1 \otimes v_2$ (for any $v_1 \in V_1$ and $v_2 \in V_2$) by 
\begin{flalign}
\label{ellipticv1v2}
\left[ \begin{matrix} a (\lambda, w) & b (\lambda, w) \\ c (\lambda, w) & d (\lambda, w) \end{matrix}\right] v_1 \otimes v_2 = \left[ \begin{matrix} a_2 \big( \lambda - 2 \eta \mathsf{h}^{(1)}, w \big) & b_2 \big( \lambda - 2 \eta \mathsf{h}^{(1)}, w \big) \\ c_2 \big( \lambda - 2 \eta \mathsf{h}^{(1)}, w \big) & d_2 \big( \lambda - 2 \eta \mathsf	{h}^{(1)}, w \big) \end{matrix}\right] \left[ \begin{matrix} a_1 (\lambda, w) & b_1 (\lambda, w) \\ c_1 (\lambda, w) & d_1 (\lambda, w) \end{matrix}\right] v_1 \otimes  v_2. 
\end{flalign}

 Above, $a_1, b_1, c_1, d_1$ and $a_2, b_2, c_2, d_2$ denote the actions of $a, b, c, d$ on the $V_1$ and $V_2$ components of $V_1 \otimes V_2$, respectively. Furthermore, $\mathsf{h}^{(1)}$ denotes the action of $\mathsf{h}$ on the $V_1$-component of $V_1 \otimes V_2$ after the action of $\left[\begin{smallmatrix} a_1 & b_1 \\ c_1 & d_1 \end{smallmatrix} \right]$ on $V_1$. The weight $\mathsf{h}$ of $V_1 \otimes V_2$ is equal to the sum $\mathsf{h}^{(1)} + \mathsf{h}^{(2)}$ of the weights of $V_1$ and $V_2$, meaning that $F (\mathsf{h}) v = F \big( \mathsf{h}^{(1)} + \mathsf{h}^{(2)} \big) v$ for any $v \in V_1 \otimes V_2$. 

In Section 3 of \cite{ERSF}, the above setup was used to provide the original definition for the $B_{\mu / \nu}$-symmetric functions. To explain that viewpoint further, fix $M \in \mathbb{Z}_{> 0}$ and $W = (w_1, w_2, \ldots , w_M ) \subset \mathbb{C}$, $Z = (z_0, z_1, \ldots ) \subset \mathbb{C}$, and $L = (\Lambda_0, \Lambda_1, \ldots ) \subset \mathbb{C}$. For each $k \in \mathbb{Z}_{\ge 0}$, let $V_k = V_{\Lambda_0; z_0} \otimes V_{\Lambda_1; z_1} \otimes \cdots \otimes V_{\Lambda_k; z_k}$; observe that $V_k$ maps into $V_{k + 1}$ under the injection sending $v_k \in V_k$ to $v_k \otimes e_0 \in V_{k + 1}$. 

Now let $V = \varinjlim V_k$ denote the direct limit of the $\{ V_k \}$; then $v \in V$ if and only if it is \textit{finitary}, that is, expressible in the form $v_r \otimes e_0 \otimes e_0 \otimes \cdots $ for some integer $r \ge 0$ and $v_r \in V_r$. There exists a basis for $V$ is indexed by signatures, namely $\{ e_{\mu} \}_{\mu \in \Sign^+}$, where $e_{\mu} = e_{m_0 (\mu)} \otimes e_{m_1 (\mu)} \otimes \cdots \in V$. In view of \eqref{ellipticv1v2} and the fact that $a (\lambda, w) e_0 = e_0$ in any evaluation Verma module, the action of $b (\lambda, w) \in E_{\tau, \eta} (\mathfrak{sl}_2)$ on $V$ is well-defined. 

Using this notation, one can verify (for instance, see Section 7 of \cite{ERSF}, or Section 4 of \cite{HSVMRSF} for simpler quantum analog) that Definition \ref{definitionb} is equivalent to the $B_{\mu / \nu}$-functions satisfying
\begin{flalign}
\label{bsumv}
\left( \displaystyle\prod_{j = 1}^M \widetilde{b}_j (\lambda, w_j) \right) e_{\nu} = \displaystyle\sum_{\mu \in \Sign_{M + N}} B_{\mu / \nu} \big( w_1, w_2, \ldots , w_M \b| Z, L, \lambda \big) e_{\mu}, 
\end{flalign}

\noindent for any $\nu \in \Sign_N^+$; the product on the right side of the above equality is well-defined since the $\widetilde{b} (\lambda, w_j)$ operators commute (see the fourth identity listed in equation (3) of \cite{REQG}). 

One can define the $D$-symmetric functions (from Definition \ref{definitiond}) in a similar way, by replacing the appearances of $\widetilde{b} (\lambda, w)$ in \eqref{bsumv} with $\widetilde{d} (\lambda, w)$. However, since $d (\lambda, w) e_0 \ne e_0$, this requires a pre-normalization of the $\widetilde{d}$-operators; we refer to Section 3 of \cite{ERSF} for a more detailed explanation.

\subsection{Fusion in the Elliptic Quantum Group} 

\label{Fusion}

Having established this notation, let us discuss fusion. The fusion procedure was originally introduced by Kulish-Reshetikhin-Sklyanin \cite{ERT} as a way of producing higher dimensional representations of the quantum group $U_q (\widehat{\mathfrak{sl}}_2)$ from lower dimensional representations. In the context of the elliptic quantum group $E_{\tau, \eta} (\mathfrak{sl}_2)$, fusion of evaluation Verma modules was considered first in Section 8 of the work of Felder-Varchenko \cite{REQG}. Fusion was also explained from a statistical mechanical viewpoint \cite{ESSOSM, REA} as a way of producing new quantum (or elliptic) integrable systems from more restrictive ones. In this section we review the latter framework in the case of $E_{\tau, \eta} (\mathfrak{sl}_2)$, with the intent of introducing the fused weights $W_J (i_1, j_1; i_2, j_2)$ given by Definition \ref{fusedweightdefinition}. 

Fusion in $E_{\tau, \eta} (\mathfrak{sl}_2)$ can be expressed in terms of repeated applications of the operators $\widetilde{a}$, $\widetilde{b}$, $\widetilde{c}$, and $\widetilde{d}$ on (tensor products of) $V$. Applying products of these operators to a single vector $e_k \in V$ produces products of weights, which are given by the following definition.

\begin{definition}
	
	\label{vertexmodelsinglerow}
	
	Fix $J \in \mathbb{Z}_{> 0}$. Let $i_1, i_2 \in \mathbb{Z}_{\ge 0}$; $\mathcal{J}_1 = (j_{1, 1}, j_{1, 2}, \ldots , j_{1, J}) \in \{ 0, 1 \}^J$, and $\mathcal{J}_2 = (j_{2, 1}, j_{2, 2}, \ldots , j_{2, J}) \in \{ 0, 1 \}^J$. Denote $|\mathcal{J}_1|$ and $|\mathcal{J}_2|$ by the number of ones in $\mathcal{J}_1$ and $\mathcal{J}_2$, respectively. 
	
	Set $W_J (i_1, \mathcal{J}_1; i_2, \mathcal{J}_2 \b| v, \lambda ) = 0$ if $i_1 + |\mathcal{J}_1| \ne i_2 + |\mathcal{J}_2|$. Otherwise set
	\begin{flalign}
	\label{definitionhatww}
	\begin{aligned}
	W_J (i_1, \mathcal{J}_1; i_2, \mathcal{J}_2) & = \Bigg\langle X_J X_{J - 1} \cdots X_1 (\lambda) e_k, e_k \Bigg\rangle, 
	\end{aligned}
	\end{flalign}
	
	\noindent where the $X_k$ are defined as follows. If $(j_{1, k}, j_{2, k}) = (0, 0)$, then $X_k = \widetilde{a} \big( w + 2 \eta (k - 1), \lambda \big)$; if $(j_{1, k}, j_{2, k}) = (1, 0)$, then $X_k = \widetilde{b} \big( w + 2 \eta (k - 1), \lambda \big)$; if $(j_{1, k}, j_{2, k}) = (0, 1)$, then $X_k = \widetilde{c} \big( w + 2 \eta (k - 1), \lambda \big)$; and if $(j_{1, k}, j_{2, k}) = (1, 1)$, then $X_k = \widetilde{d} \big( w + 2 \eta (k - 1), \lambda \big)$. 
	
	Further define 
	\begin{flalign}
	\label{wjwsums}
	W_J (i_1, j_1; i_2, \mathcal{K}) = \displaystyle\sum_{|\mathcal{J}_1| = j_1} W_J (i_1, \mathcal{J}_1; i_2, \mathcal{K}); \quad \widehat{W}_J (i_1, j_1; i_2, j_2) = \displaystyle\sum_{|\mathcal{J}_2| = j_2} W_J (i_1, j_1; i_2, \mathcal{J}_2), 
	\end{flalign}
	
	\noindent for any nonnegative $i_1$, $j_1$, $i_2$, $j_2$, and $\mathcal{K} \in \{ 0, 1 \}^J$. 
\end{definition}

Let us define an alternative, more pictorial interpretation of the weights $W_J$ and $\widehat{W}_J$ above through vertex models; this will be quite similar to the vertex model interpretation of \eqref{bsumv} given by Definition \ref{definitionb}. To that end, fix complex numbers $w, z, \Lambda, \lambda \in \mathbb{C}$, and consider a column $\{ 1 \} \times \{1, 2, \ldots,  J \} \subset \mathbb{Z}_{> 0}^2$. We associate the complex numbers $z$ and $\Lambda$ with the column, and we associate the \textit{spectral parameter} $w_k = w + 2 \eta (k - 1)$ with the $k$th row of this column for each integer $k \in [1, J]$. We furthermore denote $v = w - z - \eta$ and $v_k = w_k - z - \eta$ for each $k$.

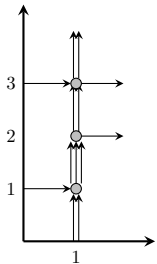
\begin{figure}
	
	\begin{center} 
		
		\begin{tikzpicture}[
		>=stealth,
		scale = .7
		]
		
		\draw[->, black, thick] (-6, -2) -- (-3.5, -2);
		\draw[->, black, thick] (-6, -2) -- (-6, 2.5);
		
		\draw[->, black] (-6, -1) -- (-5.1, -1);
		\draw[->, black] (-6, 1) -- (-5.1, 1);

		\draw[->, black] (-4.95, -2) -- (-4.95, -1.1);
		\draw[->, black] (-5.05, -2) -- (-5.05, -1.1);
		
		\draw[->, black] (-4.9, -.9) -- (-4.9, -.1);
		\draw[->, black] (-5, -.9) -- (-5, -.1);
		\draw[->, black] (-5.1, -.9) -- (-5.1, -.1);
		\draw[->, black] (-5.05, .1) -- (-5.05, 1);
		\draw[->, black	] (-4.95, -.1) -- (-4.95, 1);
		\draw[->, black] (-5, 0) -- (-4.1, 0);
		\draw[->, black] (-5, 1) -- (-4.1, 1);
		\draw[->, black] (-4.95, 1.1) -- (-4.95, 2);
		\draw[->, black] (-5.05, 1.1) -- (-5.05, 2);
		
		\filldraw[fill=gray!50!white, draw=black] (-5, -1) circle [radius=.1] node[scale = .7, left = 29]{$1$} node[scale = .7, below = 30	]{$1$};
		\filldraw[fill=gray!50!white, draw=black] (-5, 0) circle [radius=.1] node[scale = .7, left = 29]{$2$}; 
		\filldraw[fill=gray!50!white, draw=black] (-5, 1) circle [radius=.1]  node[scale = .7, left = 29]{$3$};

		\end{tikzpicture}
		
	\end{center}

	\caption{\label{arrows2} To the left is an example of a column with three vertices. Here, $i_1 = 2$; $\mathcal{J}_1 = (1, 0, 1)$; $i_2 = 2$; and $\mathcal{J}_2 = (0, 1, 1)$. } 
\end{figure}

Let us consider an unfused path ensemble on this column. Specifically, through each vertex $(1, y)$ of this column will be some number of incoming and outgoing arrows; see Figure \ref{arrows2} for an example when $J = 3$. Denote the arrow configuration at any vertex $(1, y)$ by $(i_1, j_1; i_2, j_2) = \big( i_1 (1, y), j_1 (1, y); i_2 (1, y), j_2 (1, y) \big)$. Since the ensemble is unfused, we must have that $j_1, j_2 \in \{ 0, 1 \}$ at all vertices $(1, y)$.

Associated with each vertex $(1, y)$ will also be an additive dynamical parameter $\Phi_{1, y} \in \mathbb{C}$, defined as in Section \ref{PathEnsembles} with $(J_1, J_2, \ldots ) = (1, 1, \ldots )$ and the difference that we set the topmost dynamical parameter $\Phi_{1, J}$ equal to $\lambda$ (instead of $\Phi_{1, 0} = \lambda$). Specifically, one first sets $\Phi_{1, J} = \lambda$. Then, if $\Phi_{1, y + 1}$ is defined for some $y \in [1, J - 1]$, one defines $\Phi_{1, y}$ as follows. If $j_1 (1, y + 1) = 0$ then $\Phi_{1, y} = \Phi_{1, y + 1} - 2 \eta$, and if $j_1 (1, y + 1) = 1$ then $\Phi_{1, y} = \Phi_{1, y + 1} + 2 \eta$; see Figure \ref{lambdachange} for an example. 

Now we can associate with each vertex $(1, y)$ the vertex weight $W_1 \big( i_1, j_1; i_2, j_2 \b| v_k, \Phi, \Lambda \big)$, where $i_1$, $j_1$, $i_2$, $j_2$, and $\Phi$ are all taken at $(1, y)$. We define the \textit{weight of the column} $\{ 1 \} \times \{1, 2, \ldots , J \}$ to be the product of the $J$ weights associated with its vertices. 

Then, following the notation from Definition \ref{vertexmodelsinglerow}, if $\mathcal{J}_1 = \big( j_1 (1, 1), j_1 (1, 2), \ldots , j_1 (1, J) \big)$; $\mathcal{J}_2 = \big( j_2 (1, 1), j_2 (1, 2), \ldots , j_2 (1, J) \big)$; $i_1  (1, 1) = i_1$; and $i_2 (1, J) = i_2$, then the weight of the column $\{1 \} \times \{1, 2, \ldots , J \}$ will be equal to $W_J (i_1, \mathcal{J}_1; i_2, \mathcal{J}_2)$. Moreover, $W_J (i_1, j_1; i_2, \mathcal{K})$ is equal to the sum of the weights of all $J$-element columns satisfying $\mathcal{K} = \big( j_2 (1, 1), j_2 (1, 2), \ldots , j_2 (1, J) \big)$; $i_1  (1, 1) = i_1$; and $i_2 (1, J) = i_2$. Similarly, $\widehat{W}_J (i_1, j_1; i_2, j_2)$ is equal to the sum of the weights of all $J$-element columns that have $i_1$ incoming vertical arrows, $j_1$ incoming horizontal arrows, $i_2$ outgoing vertical arrows, and $j_2$ outgoing horizontal arrows; Figure \ref{arrows2} provides an example of such a column with $J = 3$ and $(i_1, j_1; i_2, j_2) = (2, 2; 2, 2)$.

The following lemma, which is very similar to the first part of Lemma 2.1.1 of \cite{ESSOSM}, states that the vertex weights $W_J (i_1, j_1; i_2, \mathcal{K})$ defined above do not depend on the explicit choice of $\mathcal{K}$, as long as $|\mathcal{K}|$ remains fixed.  

\begin{lem}
	
	\label{exchangeability}
	
	Fix a positive integer $J$ and nonnegative integers $i_1$, $j_1$, $i_2$, and $j_2$. Let $\mathcal{K}, \mathcal{K}' \in \{ 0, 1 \}^J$ satisfy $|\mathcal{K}| = j_2 = |\mathcal{K}'|$. Then, $W_J (i_1, j_1; i_2, \mathcal{K}) = W_J (i_1, j_1; i_2, \mathcal{K}')$. 
\end{lem}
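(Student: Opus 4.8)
The plan is to follow the strategy of the first part of Lemma 2.1.1 of \cite{ESSOSM}, reducing the claim to a local exchange relation between two adjacent vertices that is supplied by the defining relations \eqref{caacdbbd} and \eqref{dacb} of $E_{\tau, \eta}(\mathfrak{sl}_2)$. First I would observe that any two $\mathcal{K}, \mathcal{K}' \in \{0,1\}^J$ with $|\mathcal{K}| = |\mathcal{K}'|$ are connected by a sequence of adjacent transpositions, each interchanging a $(0,1)$ with a $(1,0)$ in two consecutive coordinates. Thus it suffices to treat the case where $\mathcal{K}$ and $\mathcal{K}'$ agree outside positions $m$ and $m+1$, and there have $(j_{2,m}, j_{2,m+1})$ equal to $(0,1)$ and $(1,0)$, respectively. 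Recalling from \eqref{wjwsums} that $W_J(i_1, j_1; i_2, \mathcal{K}) = \sum_{|\mathcal{J}_1| = j_1} W_J(i_1, \mathcal{J}_1; i_2, \mathcal{K})$, I would split this sum by first fixing the incoming horizontal arrows $j_{1,\ell}$ at all positions $\ell \ne m, m+1$; the operators $X_\ell$ with $\ell \ne m, m+1$ in \eqref{definitionhatww} are then identical for $\mathcal{K}$ and $\mathcal{K}'$, and the remaining inner sum runs over $(j_{1,m}, j_{1,m+1})$ with fixed total $t = j_{1,m} + j_{1,m+1} \in \{0,1,2\}$. Since $X_J \cdots X_1 (\lambda)$ factors as (operators for rows above $m+1$)$\cdot (X_{m+1} X_m) \cdot$(operators for rows below $m$), the problem reduces, for each $t$, to an operator identity for the middle block $\sum_{j_{1,m}+j_{1,m+1}=t} X_{m+1} X_m$ comparing the outgoing pattern $(0,1)$ against $(1,0)$.

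Second, I would establish this middle-block identity by rewriting \eqref{caacdbbd} and \eqref{dacb} in terms of the modified operators from \eqref{modifiedabcd}, using $w_{m+1} = w_m + 2\eta$. A direct computation shows $\widetilde{c}(w_{m+1}) \widetilde{a}(w_m) = \widetilde{a}(w_{m+1}) \widetilde{c}(w_m)$ and $\widetilde{d}(w_{m+1}) \widetilde{b}(w_m) = \widetilde{b}(w_{m+1}) \widetilde{d}(w_m)$, which are precisely the $t=0$ and $t=2$ cases (a single term on either side), corresponding to incoming $(0,0)$ and $(1,1)$, respectively. For the $t=1$ case, where the incoming is $(0,1)$ or $(1,0)$ and the terms genuinely mix, the same rewriting turns \eqref{dacb} into $\widetilde{d}(w_{m+1})\widetilde{a}(w_m) + \widetilde{c}(w_{m+1})\widetilde{b}(w_m) = \widetilde{a}(w_{m+1})\widetilde{d}(w_m) + \widetilde{b}(w_{m+1})\widetilde{c}(w_m)$, whose left and right sides are exactly the inner sum with outgoing $(0,1)$ and $(1,0)$, respectively. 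In each case the net dynamical shift of the middle block is $4\eta(t-1)$, independent of the outgoing pattern, so the surrounding operators receive the same dynamical argument, and the factorization legitimately reduces the matrix element for $\mathcal{K}$ to that for $\mathcal{K}'$ term by term in the outer sum; summing over the outer configuration then yields $W_J(i_1, j_1; i_2, \mathcal{K}) = W_J(i_1, j_1; i_2, \mathcal{K}')$.

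The main obstacle I anticipate is purely bookkeeping: verifying that \eqref{caacdbbd} and \eqref{dacb}, which are stated for the bare generators $a, b, c, d$ with explicit dynamical arguments $\lambda \pm 2\eta$, translate under \eqref{modifiedabcd} into the claimed identities for $\widetilde{a}, \widetilde{b}, \widetilde{c}, \widetilde{d}$ with the correct matching of spectral and dynamical shifts. The crucial point to check carefully is that the inner block's net dynamical shift depends only on $t$ and not on the outgoing pattern, since this is what guarantees the outer operators are unchanged and makes the factorization valid; it holds because the relations \eqref{caacdbbd} and \eqref{dacb} are themselves balanced in their $\lambda$-arguments.
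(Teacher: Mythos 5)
Your proof is correct and follows essentially the same route as the paper's: reduce to an adjacent transposition, then verify the three cases $t = j_{1,m} + j_{1,m+1} \in \{0, 1, 2\}$ via the relations \eqref{caacdbbd} and \eqref{dacb} (indeed, your explicit factorization into a middle two-row block, with the observation that its net dynamical shift $4\eta(t-1)$ depends only on $t$, is precisely the detail behind the paper's terser reduction ``we can further assume $J = 2$''). Your translation of the bare relations into the stated identities for $\widetilde{a}, \widetilde{b}, \widetilde{c}, \widetilde{d}$ with $w_{m+1} = w_m + 2\eta$ checks out, so the argument is complete.
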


\begin{proof} 
	
	We may first assume that $i_1 + j_1 = i_2 + j_2$; otherwise $W_J (i_1, j_1; i_2, \mathcal{K}) = 0 = W_J (i_1, j_1; i_2, \mathcal{K}')$ by arrow conservation, and the lemma holds. 
	
	Now denote $\mathcal{K} = (k_1, k_2, \ldots , k_J)$ and $\mathcal{K}' = (k_1', k_2', \ldots , k_J')$. It suffices to establish the lemma when $\mathcal{K}$ and $\mathcal{K}'$ differ by a transposition of two consecutive indices, that is, when there exists an integer $s \in [1, J - 1]$ such that $k_i = k_i'$ when $i \notin \{ s, s + 1 \}$, but such that $(k_s, k_{s + 1}) = (k_{s + 1}', k_s')$. Therefore, we also assume that this is the case. Moreover, since $\mathcal{K}$ and $\mathcal{K}'$ only differ in two consecutive indices, we can further assume $J = 2$.
	
	If either $(k_1, k_2) = (0, 0)$ or $(k_1, k_2) = (1, 1)$ held, then we must have that $\mathcal{K}' = (k_1', k_2') = (k_1, k_2) = \mathcal{K}$ since $|\mathcal{K}| = |\mathcal{K}'|$, from which the lemma would follow. Thus, we can additionally assume $\mathcal{K} = (k_1, k_2) = (0, 1)$ and $\mathcal{K}' = (k_1', k_2') = (1, 0)$. 
	
	Now, there are three cases to consider, depending on the value of $j_1 \in \{ 0, 1, 2 \}$. Each case follows from one of the equations \eqref{caacdbbd} or \eqref{dacb} listed above. For instance, when $j_1 = 1$, the left side of \eqref{dacb} can be equated with $W_2 (i_1, j_1; i_2, \mathcal{K})$, and the right side can be equated with $W_2 (i_1, j_1; i_2, \mathcal{K}')$. Hence, the case $j_1 = 1$ follows from \eqref{dacb}; similarly, the remaining cases $j_1 = 0$ and $j_1 = 2$ follow from the first and second parts of \eqref{caacdbbd}, respectively.  
\end{proof} 

\begin{figure}
	
	\begin{center}
		
		\begin{tikzpicture}[
		>=stealth,
		scale = .8
		]

		\draw[-> ] (-6, -5) -- (-5.1, -5);

		\draw[->] (-5.05, -3.9) -- (-5.05, -3);
		\draw[->] (-4.95, -3.9) -- (-4.95, -3);
		\draw[->] (-5.05, -4.9) -- (-5.05, -4.1);
		\draw[->] (-4.95, -4.9) -- (-4.95, -4.1);
		\draw[->] (-5, -6) -- (-5, -5.1);
		
		\filldraw[fill=gray!50!white, draw=black] (-5, -5) circle [radius=.1] node[right=2, scale=.8]{$\Phi - 2 \eta$};
		\filldraw[fill=gray!50!white, draw=black] (-5, -4) circle [radius=.1] node[right=2, scale=.8]{$\Phi$};

		\draw[-> ] (1, -5) -- (1.9, -5);
		\draw[-> ] (2.1, -5) -- (3, -5);
		\draw[->, black ] (1, -4) -- (1.9, -4);

		\draw[->] (2, -6) -- (2, -5.1);
		\draw[->] (2, -4.9) -- (2, -4.1);
		\draw[->] (1.95, -3.9) -- (1.95, -3);
		\draw[->] (2.05, -3.9) -- (2.05, -3);

		\filldraw[fill=gray!50!white, draw=black] (2, -5) circle [radius=.1] node[right=15, below = 3, scale=.8]{$\Phi + 2 \eta$};
		\filldraw[fill=gray!50!white, draw=black] (2, -4) circle [radius=.1] node[right=2, scale=.8]{$\Phi$};	
		
		\end{tikzpicture}
		
	\end{center}
	
	\caption{ \label{lambdachange} Some possible evolutions for the dynamical parameter $\Phi$ are shown above.  }
\end{figure}
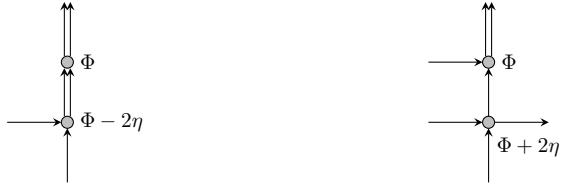

We can now define a fused weight. 

\begin{definition}
	
	\label{fusedweightdefinition}
	
	Fix $J \in \mathbb{Z}_{> 1}$ and $i_1, j_1, i_2, j_2 \in \mathbb{Z}_{\ge 0}$. Define the \emph{fused weight} $W_J (i_1, j_1; i_2, j_2) = W_J \big( i_1, j_1; i_2, j_2 \b| v, \lambda \big)  = W_J (i_1, j_1; i_2, \mathcal{J}_2)$, for any $\mathcal{J}_2 \in \{ 0, 1 \}^J$ satisfying $|\mathcal{J}_2| = j_2$; by Lemma \ref{exchangeability}, $W (i_1, j_1; i_2, j_2)$ does not depend on the choice of $\mathcal{J}_2$. 
\end{definition}

\noindent Theorem \ref{fusedweighthypergeometric} in Section \ref{FusionSymmetric} (to be established in Section \ref{FusedWeights}) explicitly evaluates the $W_J (i_1, j_1; i_2, j_2)$ fused weights as elliptic hypergeometric series. Before stating that result, however, one can quickly observe that 
\begin{flalign}
\label{wjlargesmallj}
W_J (i_1, j_1; i_2, j_2) = 0, \quad \text{if $j_1 \not\in \{ 0, 1, \ldots , J \}$, $j_2 \notin \{ 0, 1, \ldots , J \}$, or $i_1 + j_1 \ne i_2 + j_2$.}
\end{flalign}

\noindent Furthermore, \eqref{wjwsums} and Lemma \ref{exchangeability} imply that
\begin{flalign}
\label{hatwwj}
W_J (i_1, j_1; i_2, j_2) =  \binom{J}{j_2}^{-1} \widehat{W}_J (i_1, j_1; i_2, j_2). 
\end{flalign}

\subsection{Fusion for Symmetric Functions} 

\label{FusionSymmetric} 

In Section \ref{Fusion} we observed how partition functions of a vertex model on one column simplify when the spectral parameters $w_1, w_2, \ldots , w_J$ are chosen such that $w_k = w + 2 \eta (k - 1)$ for some $w \in \mathbb{C}$ and all $k \in [1, J]$. The purpose of this section is to understand what happens to the symmetric functions $B_{\mu / \nu} \big (w_1, w_2, \ldots , w_J \b| \lambda \big)$ and $D_{\mu / \nu} \big( w_1, w_2, \ldots , w_J \b| \lambda \big)$ when one specializes the $\{ w_k \}$ in the same way. 

In fact, let us consider a slight generalization. In this section we fix $r \in \mathbb{Z}_{> 0}$; $J_1, J_2, \ldots , J_r \in \mathbb{Z}_{> 0}$; $W = (w_1, w_2, \ldots , w_r) \subset \mathbb{C}$; $Z = (z_0, z_1, \ldots ) \subset \mathbb{C}$; and $L = (\Lambda_0, \Lambda_1, \ldots ) \subset \mathbb{C}$. Further define $v_{i, j} = w_j - z_i - \eta$ for each $i, j$; set $J = \sum_{k = 1}^r J_i$; and set $\varpi_k = \big( w_k, w_k + 2 \eta, \ldots , w_k + 2 \eta (J_k - 1) \big)$ for each $k \in [1, r]$. 

We would like to understand $B_{\mu / \nu} \big( \varpi_1, \varpi_2, \ldots , \varpi_r \b| \lambda \big)$ and $D_{\mu / \nu} \big( \varpi_1, \varpi_2, \ldots , \varpi_r \b| \lambda \big)$ for signatures $\mu$ and $\nu$. From Definition \ref{definitionb} and Definition \ref{definitiond}, these functions can be expressed as partition functions of a vertex model (with an additive dynamical parameter) on $J$ rows. Proposition \ref{bdfunctionsfused} below will explain how they can be alternatively expressed as partition functions of a vertex model on $r$ rows but different weights. 

To explain further, let us associate with any directed path ensemble $\mathcal{E}$ on $\mathbb{Z}_{\ge 0} \times \{ 1, 2, \ldots , r \}$ (as in Section \ref{EllipticSymmetric}, we shift all path ensembles to the left one coordinate so that, for example, $\Phi_{0, 0} = \lambda$) an additive dynamical parameter through the rules listed in \eqref{lambdaadditive}, with the $(J_1, J_2, \ldots )$ from that explanation set to the $(J_1, J_2, \ldots )$ here. We define the \textit{fused weight of an interior vertex} $(x, y) \in \mathcal{E}$ to be $W_{J_y} (i_1, j_1; i_2, j_2 \b| v_{x, y}, \Phi, \Lambda_x)$, where the arrow configuration $(i_1, j_1; i_2, j_2)$ and dynamical parameter $\Phi$ are taken at the vertex $(x, y)$. We moreover define the \textit{fused weight of the ensemble} $\mathcal{E}$ to be the product of the fused weights of the interior vertices of $\mathcal{E}$.

\begin{prop}
	\label{bdfunctionsfused}
	
	The following two statements hold. 	
	
	\begin{enumerate}
	
	\item{Let $N \in \mathbb{Z}_{\ge 0}$, $\mu \in \Sign_{N + J}^+$, and $\nu \in \Sign_N^+$. Then, $B_{\mu / \nu} \big( \varpi_1, \varpi_2, \ldots , \varpi_r \b| \lambda - 2 \eta J \big) = B_{\mu / \nu} \big( W \b| \lambda - 2 \eta J, Z, L \big)$ equals the sum of the fused weights of all (not necessarily unfused) directed path ensembles, consisting of $N + J$ paths whose interior vertices are contained in the rectangle $[0, \mu_1] \times [1, r]$, satisfying the following two properties (both with multiplicity).
	
	\begin{itemize}
		\item{Every path contains one edge that either connects $(-1, k)$ to $(0, k)$ for some $k \in [1, r]$ or connects $(\nu_j, 0)$ to $(\nu_j, 1)$ for some $j \in [1, N]$. Furthermore, there exist $J_k$ paths connecting $(-1, k)$ to $(0, k)$ for each $k \in [1, r]$. }
		\item{ Every path contains an edge connecting $(\mu_k, M)$ to $(\mu_k, M + 1)$, for some $k \in [1, N + J]$.}
		
	\end{itemize}
	
	}
	
	\item{Let $M \in \mathbb{Z}_{\ge 0}$ and $\mu, \nu \in \Sign_M^+$. Then, $D_{\mu / \nu} \big( \varpi_1, \varpi_2, \ldots , \varpi_r \b| \lambda + 2 \eta J \big) = D_{\mu / \nu} \big( W \b| \lambda + 2 \eta J, Z, L\big)$ equals the sum of the fused weights of all (not necessarily unfused) directed path ensembles, consisting of $M$ paths whose interior vertices are contained in the rectangle $[0, \mu_1] \times [1, N]$, satisfying the following two properties (both with multiplicity). 
	
	\begin{itemize}
		\item{Each path contains an edge connecting $(\nu_k, 0)$ to $(\nu_k, 1)$ for some $k \in [1, M]$.}
		
		\item{Each path contains an edge connecting $(\mu_k, N)$ to $(\mu_k, N + 1)$ for some $k \in [1, M]$.}
	\end{itemize}

	}
	
	\end{enumerate}
	
\end{prop}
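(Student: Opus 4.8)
The plan is to start from the graphical definitions of $B_{\mu/\nu}$ and $D_{\mu/\nu}$ (Definition \ref{definitionb} and Definition \ref{definitiond}) as partition functions of \emph{unfused} path ensembles on $J = \sum_{k=1}^r J_k$ rows, and to reorganize that sum by grouping the rows into $r$ consecutive blocks, the $k$-th block consisting of the $J_k$ rows carrying spectral parameters $w_k, w_k + 2\eta, \ldots, w_k + 2\eta(J_k - 1)$. At a fixed spatial position $x$, the $J_k$ unfused vertices in block $k$ form exactly a column of the type in Definition \ref{vertexmodelsinglerow}, which I call the \emph{fused vertex} $(x,k)$. For a fixed vertical arrow entering the block from below, together with horizontal arrangements $\mathcal{J}_1 \in \{0,1\}^{J_k}$ (the left-incoming horizontal arrows, level by level) and $\mathcal{J}_2 \in \{0,1\}^{J_k}$ (the right-outgoing ones), the internal vertical arrow counts are determined, and the product of the $J_k$ weights $W_1$ over the block is by construction $W_{J_k}(i_1, \mathcal{J}_1; i_2, \mathcal{J}_2 \b| v_{x,k}, \Phi_{x,k}, \Lambda_x)$. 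Hence $B_{\mu/\nu}(\varpi_1, \ldots, \varpi_r \b| \lambda - 2\eta J)$ equals the sum, over all admissible vertical arrows and all horizontal arrangements $\mathcal{A}_{x,k} \in \{0,1\}^{J_k}$ placed on each inter-column boundary, of $\prod_{x,k} W_{J_k}(i_1(x,k), \mathcal{A}_{x-1,k}; i_2(x,k), \mathcal{A}_{x,k})$, where $\mathcal{A}_{x-1,k}$ and $\mathcal{A}_{x,k}$ are the incoming and outgoing arrangements of $(x,k)$. Since one edge $\mathcal{A}_{x,k}$ serves simultaneously as the outgoing arrangement of $(x,k)$ and the incoming arrangement of $(x+1,k)$, these arrangement sums are \emph{coupled} across neighboring fused vertices.

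The heart of the argument is to show that, after fixing all vertical arrows and all totals $j_2(x,k) = |\mathcal{A}_{x,k}|$ (so that $j_1(x,k) = |\mathcal{A}_{x-1,k}| = j_2(x-1,k)$), summing over the internal arrangements along each fused row collapses the coupled sum into a plain product of fused weights:
\begin{flalign*}
\sum_{\{\mathcal{A}_{x,k}\,:\,|\mathcal{A}_{x,k}| = j_2(x,k)\}} \prod_{x} W_{J_k}\big(i_1(x,k), \mathcal{A}_{x-1,k}; i_2(x,k), \mathcal{A}_{x,k}\big) = \prod_{x} W_{J_k}\big(i_1(x,k), j_1(x,k); i_2(x,k), j_2(x,k)\big),
\end{flalign*}
the right side being the fused weights of Definition \ref{fusedweightdefinition}. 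I would prove this by telescoping from the leftmost column rightward. The crucial inputs are Lemma \ref{exchangeability}, which gives that $W_{J_k}(i_1, j_1; i_2, \mathcal{K}) = \sum_{|\mathcal{J}_1| = j_1} W_{J_k}(i_1, \mathcal{J}_1; i_2, \mathcal{K})$ is independent of the outgoing arrangement $\mathcal{K}$, together with the boundary data. For $B_{\mu/\nu}$ the block receives $J_k$ paths through its left edge, one per level, so the leftmost incoming arrangement is the all-ones vector, the \emph{unique} arrangement of total $J_k$; consequently the leftmost fused vertex already has weight $W_{J_k}(i_1, J_k; i_2, j_2)$ and, by Lemma \ref{exchangeability}, this is independent of its outgoing arrangement $\mathcal{A}_{0,k}$. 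Summing over $\mathcal{A}_{0,k}$ at fixed total then converts the next vertex's weight into a fused weight, again independent of \emph{its} outgoing arrangement, and the induction proceeds column by column. Summing finally over the totals and the vertical arrows produces exactly the fused partition function asserted in the statement.

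Before all of this I would record the routine check that the additive dynamical parameters match: the block-internal evolution of $\Phi$ dictated by \eqref{lambdaadditive} with all $J_y = 1$ produces, across $J_k$ levels of total horizontal inflow $j_1(x,k)$, the net shift $2\eta J_k - 4\eta j_1(x,k)$, which is precisely the shift \eqref{lambdaadditive} prescribes for a fused row of spin $J_k$; hence the value of $\Phi$ at the top of block $k$ coincides with the fused dynamical parameter $\Phi_{x,k}$ feeding into $W_{J_k}$. Part (2), for $D_{\mu/\nu}$, is entirely parallel; the only change is that its paths enter through the bottom edge rather than the left, so the leftmost incoming horizontal arrangement of each block is the all-zeros vector (the unique arrangement of total $0$) in place of the all-ones vector, and the telescoping runs verbatim. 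The main obstacle is exactly the decoupling in the displayed identity: a priori the arrangement on each internal edge is shared by two fused vertices, so the sum does not visibly factor; the resolution hinges on combining exchangeability (Lemma \ref{exchangeability}) with the fact that the boundary feeds in an \emph{extreme} (all-ones or all-zeros) arrangement, which is the unique arrangement of its total and is therefore compatible with the exchangeability-driven telescoping. Everything else is bookkeeping already encoded in Definition \ref{vertexmodelsinglerow} and the relation \eqref{hatwwj}.
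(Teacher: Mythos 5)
Your proof is correct and follows essentially the same route as the paper: both arguments rest on the exchangeability statement (Lemma \ref{exchangeability}) applied in a left-to-right telescoping sum over the horizontal arrangements at fixed totals, seeded by the forced extreme boundary arrangement (all-ones for $B_{\mu/\nu}$, all-zeros for $D_{\mu/\nu}$, each being the unique arrangement of its total), together with the same check that the block-wise shift $2\eta J_k - 4\eta j_1$ of the dynamical parameter matches the fused rule \eqref{lambdaadditive}. The only cosmetic difference is that you treat general $r$ directly by grouping the unfused rows into $r$ blocks, whereas the paper first reduces to $r = 1$ via the branching identity \eqref{branching}; the two bookkeeping schemes are interchangeable.
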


In particular, Proposition \ref{bdfunctionsfused} implies the $B_{\mu / \nu}$ and $D_{\mu / \nu}$ symmetric functions can be defined as partition functions of (potentially fused) vertex models on $r$ rows, in which the vertex weight at $(x, y)$ is a $W_{J_y}$ fused weight instead of a $W_1$ weight. The proof of this result is similar to that of its quantum analog, given by Proposition 5.5 in \cite{HSVMRSF}. 

\begin{proof}[Proof of Proposition \ref{bdfunctionsfused}]
	
	We only establish the first part of the proposition, as the proof of the second part is very similar. We also only establish this in the case $r = 1$, as the general case follows from the branching identity \eqref{branching}. 
	
	To that end, for any unfused path ensemble $\mathcal{E}$ on $[0, \mu_1] \times [1, J]$, define $\mathcal{J}^{(0)} (\mathcal{E}), \mathcal{J}^{(1)} (\mathcal{E}), \ldots \in \{ 0, 1 \}^J$ as follows. For each integer $k \ge 0$, the $J$-tuple $\mathcal{J}^{(k)} = \mathcal{J}^{(k)} (\mathcal{E}) = \big( j_1^{(k)}, j_2^{(k)}, \ldots , j_J^{(k)} \big)$ is produced by setting $j_s^{(k)}$ equal to $1$ if there is an incoming horizontal arrow at the vertex $(k, s)$ and equal to $0$ otherwise. 
	
	Then, letting $m_{[i, j)} (\mu) = \sum_{k = i}^{j - 1} m_k (\mu)$ and $\Lambda_{[i, j)} = \sum_{k = i}^{j - 1} \Lambda_k$ for any $j \ge i \ge 0$, Definition \ref{definitionb} implies that $B_{\mu / \nu} \big(\varpi_1 \b| \lambda \big)$ is equal to 
	\begin{flalign}
	\label{b1sum}
	\displaystyle\sum_{\mathcal{J}^{(1)}, \mathcal{J}^{(2)}, \ldots } \displaystyle\prod_{k = 0}^{\infty} W_J \Big( m_k (\nu), \mathcal{J}^{(k)}; m_k (\mu), \mathcal{J}^{(k + 1)} \b| \lambda + 2 \eta \Lambda_{[0, k)} + 4 \eta m_{[0, k)} (\mu) \Big),  
	\end{flalign}
	
	\noindent where $\mathcal{J}^{(0)}$ is set to $(1, 1, \ldots , 1)$, the other $\mathcal{J}^{(k)}$ can be arbitrary, and we recall the definition of $W_J (i_1, \mathcal{J}_1; i_2, \mathcal{J}_2)$ from Definition \ref{vertexmodelsinglerow}. The expression \eqref{b1sum} arises from evaluating $B_{\mu / \nu}$ as a sum of weights of directed path ensembles as in Definition \ref{definitionb}; by writing the weight of each such ensemble $\mathcal{E}$ as a product of the weights of its columns; and by observing that the value of the dynamical parameter at the vertex $(k, J)$ (at the top of the $k$th column) is $\lambda + 2 \eta \Lambda_{[0, k)} + 4 \eta m_{[0, k)} (\mu)$.  	
	
	Next, we sum over $\mathcal{J}^{(1)}$, fixing all other $\mathcal{J}^{(i)}$. From Lemma \ref{exchangeability}, \eqref{b1sum} becomes 
	\begin{flalign}
	\label{b2sum}
	\begin{aligned}
	& W_J \Big( m_0 (\nu), J; m_0 (\mu), \big| \mathcal{J}^{(1)} \big| \b| \lambda \Big) \displaystyle\sum_{\mathcal{J}^{(2)}, \mathcal{J}^{(3)}, \ldots }  W_J \Big( m_1 (\nu), \big| \mathcal{J}^{(1)} \big| ; m_1 (\mu), \mathcal{J}^{(2)} \b| \lambda + 2 \eta \Lambda_0 + 4 \eta m_0 (\mu) \Big) \\
	& \qquad \times \displaystyle\prod_{k = 2}^{\infty} W_J \Big( m_k (\nu), \mathcal{J}^{(k)}; m_k (\mu), \mathcal{J}^{(k + 1)} \b| \lambda + 2 \eta \Lambda_{[0, k)} + 4 \eta m_{[0, k)} (\mu) \Big). 
	\end{aligned} 
	\end{flalign}
	
	\noindent Repeating (summing over the other $\mathcal{J}^{(i)}$ in order), we obtain that \eqref{b2sum} is equal to 
	\begin{flalign}
	\label{b3sum}
	\displaystyle\prod_{k = 0}^{\infty} W_J \Big( m_k (\nu), |\mathcal{J}^{(k)}|; m_k (\mu), |\mathcal{J}^{(k + 1)}| \b| \lambda + 2 \eta \Lambda_{[0, k)} + 4 \eta m_{[0, k)} (\mu) \Big), 
	\end{flalign}
	
	\noindent where $\big| \mathcal{J}^{(0)} \big| = J$ and the other $\big| \mathcal{J}^{(k)} \big|$ are defined inductively by $\big| \mathcal{J}^{(k)} \big| + m_k (\nu) = \big| \mathcal{J}^{(k + 1)} \big| + m_k (\mu)$. This matches with the first part of the proposition since there is only one path ensemble satisfying the two listed properties, and it can be verified that weight of this path ensemble is \eqref{b3sum}. 
\end{proof}

In view of the above proposition, it will be of interest to explicitly determine the fused $W_J$ weights. We can do this quickly if $j_1 = J$, in which case Lemma \ref{exchangeability} will be particularly useful. The special setting $j_1 = J$ will also be relevant when we discuss stochasticity later in Section \ref{PsiStochasticity}.

\begin{lem}
	
	\label{wjjj}
	
	For any $i, j \in \mathbb{Z}_{\ge 0}$ and $J \in \mathbb{Z}_{> 0}$ with $J \ge j$, we have that 
	\begin{flalign*}
	W_J \big( i, J; i + J - j, j \b| v, \lambda \big) & = f (2 \eta)^{J - j}  \displaystyle\frac{\big[ 2 \eta i - \eta \Lambda - v \big]_j }{[\eta \Lambda -v]_J } \\	
	& \qquad \times \displaystyle\frac{\big[ v + \lambda - \eta \Lambda + 2 \eta (i + 2 J - j - 1) \big]_{J - j} \big[ \lambda + 2 \eta (i + J - \Lambda - 1) \big]_j}{\big[ \lambda + 2 \eta (J - 1) \big]_J}. 
	\end{flalign*}
	
\end{lem}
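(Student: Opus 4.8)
The plan is to exploit the hypothesis $j_1 = J$, which makes the incoming horizontal data maximal, and thereby reduce the fused weight to a single, explicit product of $W_1$ weights that I can then collect into elliptic Pochhammer symbols.

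First I would note that, since $j_1 = J$, the only $\mathcal{J}_1 \in \{0,1\}^J$ with $|\mathcal{J}_1| = J$ is $\mathcal{J}_1 = (1, 1, \ldots, 1)$. Hence the sum over $\mathcal{J}_1$ implicit in $W_J(i, J; i+J-j, j)$ collapses, and by Definition \ref{fusedweightdefinition} we have $W_J(i, J; i+J-j, j) = W_J\big(i, (1^J); i+J-j, \mathcal{J}_2\big)$ for \emph{any} $\mathcal{J}_2$ with $|\mathcal{J}_2| = j$. By Lemma \ref{exchangeability} I am free to choose $\mathcal{J}_2$, and I would take $j_{2,s} = 1$ for the bottom $j$ rows $s \in \{1, \ldots, j\}$ and $j_{2,s} = 0$ for the top $J - j$ rows. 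With every row carrying an incoming horizontal arrow, arrow conservation determines the vertical arrows uniquely, so in the column picture of Definition \ref{vertexmodelsinglerow} there is a single admissible configuration; thus $W_J$ is an honest product of $J$ weights $W_1$ with no internal summation.

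Next I would track the two propagating parameters along the column. Because every $j_1(1,y) = 1$, the dynamical parameter descends as $\Phi_{1,y} = \lambda + 2\eta(J - y)$, while the spectral parameter at row $y$ is $v_{1,y} = v + 2\eta(y-1)$. The vertical count stays at $i$ through the $j$ bottom $(1,1)$-rows and then rises by one through each of the $J-j$ top $(1,0)$-rows, so a $(1,0)$-row at height $y$ has incoming count $k = i + (y - j - 1)$. Reading off \eqref{weightsj1}, each bottom row contributes $\tfrac{f(\eta(2i - \Lambda) - v_{1,y})\, f(\Phi_{1,y} + 2\eta(i - \Lambda))}{f(\eta\Lambda - v_{1,y})\, f(\Phi_{1,y})}$ and each top row contributes $\tfrac{f(v_{1,y} + \Phi_{1,y} + \eta(2k + 2 - \Lambda))\, f(2\eta)}{f(\eta\Lambda - v_{1,y})\, f(\Phi_{1,y})}$, and the lemma asserts that the product over all rows is the claimed closed form.

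Finally I would assemble the factors into the four elliptic Pochhammer symbols of \eqref{productbasicelliptic}. The $J-j$ copies of $f(2\eta)$ give the prefactor $f(2\eta)^{J-j}$; the common denominators $\prod_y f(\eta\Lambda - v_{1,y})$ and $\prod_y f(\Phi_{1,y})$ telescope into $[\eta\Lambda - v]_J$ and $[\lambda + 2\eta(J-1)]_J$; and the two numerator families of the bottom rows collect into $[2\eta i - \eta\Lambda - v]_j$ and $[\lambda + 2\eta(i + J - \Lambda - 1)]_j$. The one genuinely informative step is the top-row numerator: the crucial observation is that $v_{1,y} + \Phi_{1,y} = v + \lambda + 2\eta(J-1)$ is \emph{independent of} $y$, since the spectral parameter rises exactly as the dynamical parameter falls. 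These factors then become $f\big(v + \lambda - \eta\Lambda + 2\eta(i + J + t)\big)$ for $t = 0, \ldots, J-j-1$, which after the reindexing $t \mapsto J-j-1-t$ is precisely $[v + \lambda - \eta\Lambda + 2\eta(i + 2J - j - 1)]_{J-j}$. The main obstacle is therefore purely the careful index arithmetic needed to align each product with its Pochhammer symbol; beyond Lemma \ref{exchangeability} and the explicit weights \eqref{weightsj1}, no further identity is needed.
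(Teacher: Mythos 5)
Your proof is correct and follows essentially the same route as the paper's: since $j_1 = J$ forces $\mathcal{J}_1 = (1, \ldots, 1)$, one uses Definition \ref{fusedweightdefinition} and Lemma \ref{exchangeability} to fix a convenient $\mathcal{J}_2$, reducing the fused weight to a single product of the explicit $W_1$ weights \eqref{weightsj1}, which is then collected into elliptic Pochhammer symbols. The only (immaterial) difference is that you place the $j$ pass-through rows at the bottom of the column whereas the paper places them at the top; because $v_{1,y} + \Phi_{1,y}$ is constant along the column, both choices yield literally the same factors.
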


\begin{proof}
	
	To evaluate $W_J \big( i, J; i + J - j, j \b| v, \lambda \big)$, we use Definition \ref{fusedweightdefinition} with $\mathcal{J}_2 = (0, 0, \ldots, 0, 1, 1, \ldots , 1)$, where the first $j$ elements of $\mathcal{J}_2$ are equal to $0$ and the last $J - j$ elements are equal to $1$. Therefore, since $j_1 = J$ implies that $\mathcal{J}_1$ (from Definition \ref{vertexmodelsinglerow}) must only consist of ones, we deduce that
	\begin{flalign*}
	W_J \big( i, J; i + J - j, j \b| v, \lambda \big) & = \displaystyle\prod_{k = 0}^{j - 1} W \big( i + J - j, 1; i + J - j, 1 \b| \lambda + 2 \eta (j - k -1), v + 2 \eta (J - j + k) \big) \\
	& \qquad \times \displaystyle\prod_{k = 0}^{J - j - 1} W \big( i + k, 1; i + k + 1, 0 \b| \lambda + 2 \eta (J - k - 1), v + 2 \eta k \big), 
	\end{flalign*}
	
	\noindent from which we deduce the explicit form of $W_J \big( i, J; i + J - j, j \b| v, \lambda \big)$ using the definition \eqref{weightsj1} of the $W_1$ weights. 
\end{proof}

More generally, we have the following theorem, which establishes the general fused weights in terms of very well-poised, balanced elliptic $_{12} v_{11}$ hypergeometric series (from \eqref{elliptichypergeometricpoised}). In the similar setting of the fused eight-vertex SOS model, such weights were identified by Date-Jimbo-Kuniba-Miwa-Okado \cite{ESSOSM,FEVSOSM}. For completeness (and since their setting does not precisely coincide with ours), we will provide a proof of the below theorem in Section \ref{FusedWeightsGeneral}.

\begin{thm}
	
	\label{fusedweighthypergeometric}
	
	Let $J \in \mathbb{Z}_{> 0}$ and $i_1, j_1, i_2, j_2 \in \mathbb{Z}_{\ge 0}$. Set 
	\begin{flalign*}
	a_1 & = \lambda + 2 \eta (2 j_1 + j_2 - J); \quad a_6 = 2 \eta j_1; \quad a_7 = 2 \eta j_2; \quad a_8 = \lambda + 2 \eta j_1; \\
	a_9 & = \lambda + 2 \eta (i_1 + 2 j_1 - J - 1 - \Lambda); \quad a_{10} = \eta \Lambda + v + 2 \eta (j_2 - i_1 - 1); \\
	a_{11} & = \eta \Lambda - v - 2 \eta (i_1 - j_2 + J); \qquad \quad a_{12} = \lambda + 2 \eta (i_2 + j_1 + j_2 - J). 
	\end{flalign*}
	
	\noindent Then, 
	\begin{flalign*}
	W_J & \big( i_1, j_1; i_2, j_2 \b| v, \lambda \big) \\
	& = f (2 \eta)^{i_2 - i_1} \displaystyle\frac{[2 \eta \Lambda]_{i_1}}{[2 \eta \Lambda]_{i_2}} \displaystyle\frac{\big[ 2 \eta i_1 \big]_{j_2} \big[ 2 \eta (J - j_2) \big]_{j_1} \big[ \eta \Lambda - v - 2 \eta (i_1 + j_1) \big]_{J - j_1 - j_2} \big[ 2 \eta (\Lambda - i_1 + j_2) \big]_{j_1} }{\big[ 2 \eta j_1 \big]_{j_1} \big[ \eta \Lambda - v \big]_J  } \\
	& \quad \times \displaystyle\frac{ \big[ \lambda + 2 \eta i_2 \big]_{J - j_1 - j_2}  \big[ \lambda + 2 \eta (i_1 + 2j_1 - J) - \eta \Lambda - v \big]_{j_2} \big[ \lambda + v + 2 \eta (i_1 + 2j_1 - 1) - \eta \Lambda \big]_{j_1}}{\big[ \lambda + 2 \eta (2j_1 + j_2 - J) \big]_{j_2} \big[ \lambda + 2 \eta j_1 \big]_{J - j_1 - j_2} \big[ \lambda + 2 \eta (2 j_1 + j_2 - J - 1) \big]_{j_1}} \\
	& \quad {\times _{12} v_{11}} (a_1; a_6, a_7, a_8, a_9, a_{10}, a_{11}, a_{12}; 1). 
	\end{flalign*}

\end{thm}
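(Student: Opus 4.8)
The plan is to prove the theorem by induction on $J$, following the recursive strategy of Date-Jimbo-Kuniba-Miwa-Okado and verifying at each stage that the proposed explicit formula obeys the same recursion. The base case $J = 1$ is a direct check: when $J = 1$ we have $j_1, j_2 \in \{0, 1\}$, so that $a_6 = 2 \eta j_1$ and $a_7 = 2 \eta j_2$ force the terminating series $_{12} v_{11}(a_1; a_6, \ldots, a_{12}; 1)$ to collapse to at most a couple of terms, and one verifies directly that the accompanying elliptic-Pochhammer prefactor reproduces the four weights of \eqref{weightsj1}. As an independent consistency check (and a reservoir of boundary values that will pin down the recursion), I would also match the $j_1 = J$ specialization against Lemma \ref{wjjj}; there the series can be summed in closed form using the elliptic Jackson identity \eqref{hypergeometric109sumterminating}.

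For the inductive step, I would decompose the column of $J$ vertices computing $W_J \big( i_1, j_1; i_2, j_2 \b| v, \lambda \big)$ into its top vertex and the remaining sub-column of $J - 1$ vertices. Using Lemma \ref{exchangeability} to fix the outgoing horizontal label $j_{2, J} \in \{0, 1\}$ of the top vertex in an admissible way, and invoking the dynamical-parameter rules of Section \ref{Fusion} (under which the top sub-column carries top dynamical parameter $\lambda_\epsilon = \lambda + 2 \eta (2 \epsilon - 1)$ when the top incoming horizontal label is $\epsilon$), Definition \ref{vertexmodelsinglerow} yields
\[
W_J \big( i_1, j_1; i_2, j_2 \b| v, \lambda \big) = \sum_{\epsilon \in \{0, 1\}} W_{J - 1} \big( i_1, j_1 - \epsilon; i_2 + j_{2, J} - \epsilon, j_2 - j_{2, J} \b| v, \lambda_\epsilon \big) \, W_1 \big( i_2 + j_{2, J} - \epsilon, \epsilon; i_2, j_{2, J} \b| v + 2 \eta (J - 1), \lambda \big).
\]
Here arrow conservation at the top vertex forces the internal vertical count to equal $i_2 + j_{2, J} - \epsilon$, so the only surviving sum is the two-term sum over $\epsilon$. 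Substituting the inductive hypothesis for the two $W_{J - 1}$ factors and combining the $W_1$ factor with the prefactor of each $W_{J - 1}$ via the telescoping elliptic Pochhammer identities \eqref{elliptichypergeometricidentities1}, the right-hand side rearranges into the prefactor claimed for $W_J$ times a combination of two very-well-poised balanced elliptic hypergeometric series whose parameters are shifted from those of the target by $\pm 2 \eta$.

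The theorem then reduces to a single identity: that this two-term combination equals the one series $_{12} v_{11}(a_1; a_6, \ldots, a_{12}; 1)$ with the stated parameters. Equivalently, one must establish a three-term contiguous relation for very-well-poised balanced $_{12} v_{11}$ series (two shifted series on one side, the target series on the other). I expect this to be the main obstacle, and to reduce — after clearing the common Pochhammer factors — to a termwise application of the Riemann (quartic theta) identity \eqref{quarticrelationf}, with the four arguments taken to be suitable affine combinations of $\lambda$, $v$, $\eta \Lambda$, and the summation index $2 \eta k$. The delicate points are checking that one fixed choice of these four arguments works uniformly across every term of the series and that the boundary terms of the two shifted (hence differently terminating) series are accounted for; the balancedness condition on the $_{12} v_{11}$ is exactly what guarantees that the quartic relation can be applied in the same form at every $k$, and hence what makes the collapse to a single series possible.

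Finally, I would confirm that the recursion together with the $J = 1$ base case determines $W_J$ uniquely (each step strictly lowers $J$, and the admissibility of $j_{2, J}$ can always be arranged unless the weight vanishes by \eqref{wjlargesmallj}), so that the verified recursion and base case together establish the claimed closed form for all $J$.
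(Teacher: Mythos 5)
Your recursion and base case are sound---peeling the top vertex off the column reproduces, in substance, the paper's Lemma \ref{fusedweightrecursion}---but the pivotal step fails as stated. You propose that the two shifted ${}_{12} v_{11}$ series collapse into the target series \emph{termwise}, via one application of the Riemann identity \eqref{quarticrelationf} with a fixed choice of four arguments. Writing $c_k$, $c_k^{(0)}$, $c_k^{(1)}$ for the $k$-th terms of the target and of the two shifted series, the coefficients $A, B$ multiplying the shifted series (ratios of prefactors and $W_1$ weights) are independent of the summation index $k$, whereas the three series do not terminate at the same place: the target has nonzero terms for $k \le \min (j_1, j_2)$ (the factors $[2 \eta j_1]_k [2 \eta j_2]_k$ vanish beyond), while, say with $j_{2, J} = 0$, the $\epsilon = 1$ series already vanishes for $k > \min (j_1 - 1, j_2)$. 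If $j_1 \le j_2$, then at the boundary index $k = j_1$ the termwise identity degenerates to $A c_k^{(0)} = c_k$, pinning $A$ to the value of the genuinely $k$-dependent theta-function ratio $c_k / c_k^{(0)}$ at that single $k$; this is incompatible with the interior constraint $A c_k^{(0)} + B c_k^{(1)} = c_k$ at smaller $k$. So no single uniform application of \eqref{quarticrelationf} can work: a correct proof of the needed three-term contiguous relation must split a theta factor inside the summand and then \emph{shift the summation index} in one of the resulting sums, so that the identity holds after telescoping rather than term by term. That is a substantial piece of elliptic hypergeometric machinery which your proposal leaves unproven; balancedness by itself does not supply it.

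The paper avoids this obstacle altogether. Instead of inducting on the full formula, it uses a second decomposition (Lemma \ref{fusedweightrecursion2}): by Lemma \ref{exchangeability} one may fix $\mathcal{K} = (1, \ldots , 1, 0, \ldots , 0)$ and cut the column into the bottom $j_2$ vertices (all outgoing horizontal labels equal to $1$) and the top $J - j_2$ vertices (all equal to $0$), summing over the number $k$ of incoming horizontal arrows absorbed by the bottom block. The two blocks carry weights of the extreme forms $W_{j_2} ( \cdot \, ; \cdot , j_2)$ and $W_{J - j_2} (\cdot \, ; \cdot , 0)$, which factor completely (Propositions \ref{wjj2j} and \ref{wjj20}); those factorizations are proved by exactly your kind of induction on $J$, but there the recursion of Lemma \ref{fusedweightrecursion} has only two surviving \emph{products}---no series---and \eqref{quarticrelationf} collapses two products into one in a single stroke. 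After the Pochhammer bookkeeping of \eqref{elliptichypergeometricidentities1}, the sum over $k$ in Lemma \ref{fusedweightrecursion2} is then \emph{literally} the terminating ${}_{12} v_{11}$ series of the theorem, so no summation formula or contiguous relation is ever invoked. If you wish to salvage your framework, aim the induction at the cases $j_2 \in \{ 0, J \}$ first, and then assemble the general weight through this block decomposition.
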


In view of this theorem, the fused $B$ and $D$ functions can be viewed as partition functions of (fused) vertex models whose weights are given by $_{12} v_{11}$ elliptic hypergeometric series. Interestingly, another family of symmetric functions whose single-variable specializations are $_{12} v_{11}$ elliptic hypergeometric integrals was introduced and studied by Rains \cite{SAF, TEHI}. Those functions seems to be different from the fused $B$ and $D$ functions described above. In particular, through a suitable choice of parameters in Proposition \ref{bdidentities}, one can show that the latter satisfy a Cauchy identity that does not appear to hold for the former. Furthermore, the functions of Rains seem to satisfy a different collection of algebraic properties that imply quite general identities for elliptic hypergeometric integrals on the $BC_n$ root system, and we do not know how to derive such general statements from the fused $B$ and $D$ functions. In fact, we are not aware of any connection between his symmetric functions with the fused $B$ and $D$ functions, and we would certainly find one intriguing. 

In a different direction, it will later be relevant to observe that the $_{12} v_{11}$ elliptic hypergeometric series above simplify considerably in the case $v = - \eta \Lambda$. This result is more precisely stated in the following theorem, to be established later, in Section \ref{FusedWeightsGeneral}. 

\begin{thm}
	
	\label{vlambdaspecialization}
	
	Fix $J \in \mathbb{Z}_{> 0}$ and $i \in \mathbb{Z}_{\ge 0}$, and assume that $v = - \eta \Lambda$. Then, for any $j_1, j_2 \in \{ 0, 1, \ldots , J \}$, we have that 
	\begin{flalign*}
	W_J \big( i_1, j_1; i_2, j_2 \b| v, \lambda\big) & = f (2 \eta)^{i_2 - i_1} \displaystyle\frac{\textbf{\emph{1}}_{i_1 \ge j_2} \big[ 2 \eta J \big]_J}{\big[ 2 \eta j_1 \big]_{j_1} \big[ 2 \eta (J - j_1) \big]_{J - j_1}} \displaystyle\frac{\big[ 2 \eta \Lambda \big]_{i_1}}{\big[ 2 \eta \Lambda \big]_{i_2}} \displaystyle\frac{ \big[ 2 \eta i_1 \big]_{j_2} \big[ 2 \eta (\Lambda - i_1) \big]_{J - j_2} }{\big[ 2 \eta \Lambda \big]_J} \\
	& \qquad \times \displaystyle\frac{\big[ \lambda + 2 \eta i_2 \big]_{J - j_1} \big[ \lambda + 2 \eta (i_2 + j_1 - \Lambda - 1) \big]_{j_1}}{\big[ \lambda + 2 \eta j_1 \big]_{J - j_1} \big[ \lambda + 2 \eta (2j_1 - J - 1) \big]_{j_1} }. 
	\end{flalign*}
	
\end{thm}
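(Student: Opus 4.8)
The plan is to specialize Theorem \ref{fusedweighthypergeometric} at $v = -\eta\Lambda$ and to show that the very-well poised $_{12}v_{11}$ elliptic hypergeometric series appearing there collapses, via the terminating elliptic Jackson summation \eqref{hypergeometric109sumterminating}, to the ratio of elliptic Pochhammer symbols claimed in Theorem \ref{vlambdaspecialization}. By \eqref{wjlargesmallj} we may assume arrow conservation $i_1 + j_1 = i_2 + j_2$ (so $i_2 = i_1 + j_1 - j_2$) throughout, since otherwise $W_J$ vanishes and there is nothing to prove.

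First I would substitute $v = -\eta\Lambda$ into the parameters $a_1, a_6, \ldots, a_{12}$ of Theorem \ref{fusedweighthypergeometric}. The two $v$-dependent parameters then collapse to $a_{10} = 2\eta(j_2 - i_1 - 1)$ and $a_{11} = 2\eta(\Lambda - i_1 + j_2 - J)$. The crucial observation is that, writing the $_{12}v_{11}$ in the form \eqref{elliptichypergeometricpoised}, the factors indexed by $a_{10}$ and $a_{12}$ are mutually complementary: a direct computation using $i_2 = i_1 + j_1 - j_2$ gives $a_{10} = a_1 - a_{12} - 2\eta$ (equivalently $a_{12} = a_1 - a_{10} - 2\eta$), so that
\[
\frac{[a_{10}]_k}{[a_1 - a_{10} - 2\eta]_k}\cdot\frac{[a_{12}]_k}{[a_1 - a_{12} - 2\eta]_k} = 1
\]
for every $k$. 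Hence the seven-parameter $_{12}v_{11}$ reduces to the five-parameter series $_{10}v_9(a_1; a_6, a_7, a_8, a_9, a_{11}; 1)$, which is exactly the shape to which \eqref{hypergeometric109sumterminating} applies.

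Next I would verify the two hypotheses needed to invoke the terminating elliptic Jackson identity. The series terminates because $a_6 = 2\eta j_1$ and $a_7 = 2\eta j_2$ force $[a_6]_k [a_7]_k = 0$ once $k > \min(j_1, j_2)$; I would take $a_7 = 2\eta j_2$ as the terminating ``$-n$'' slot with $n = j_2$, matching the additive conventions of \eqref{hypergeometric109sumterminating} through the reflection identities in \eqref{elliptichypergeometricidentities1}. One also checks directly the balancing condition $a_6 + a_7 + a_8 + a_9 + a_{11} = 2 a_1 - 2\eta$ (both sides equal $2\lambda + 2\eta(4j_1 + 2j_2 - 2J - 1)$), which is precisely what \eqref{hypergeometric109sumterminating} requires. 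Applying the identity then evaluates the $_{10}v_9$ as an explicit product of eight elliptic Pochhammer symbols.

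Finally I would multiply this evaluation by the $v = -\eta\Lambda$ prefactor of Theorem \ref{fusedweighthypergeometric} and simplify. Here the factor $[2\eta i_1]_{j_2}$ vanishes precisely when $0 \le i_1 \le j_2 - 1$, which produces the indicator $\textbf{1}_{i_1 \ge j_2}$; one must confirm that in this regime the Jackson evaluation contributes no compensating pole, so that the weight genuinely vanishes. For $i_1 \ge j_2$ the remaining collapse is bookkeeping: repeated use of the product, reflection, and shift identities for $[\cdot]_k$ in \eqref{elliptichypergeometricidentities1}, together with the substitution $i_2 = i_1 + j_1 - j_2$, combines the eight Jackson factors with the prefactor into the claimed formula. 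I expect this last simplification — and correctly pinning down the sign and shift conventions relating the terminating ``$-n$'' parameter of \eqref{hypergeometric109sumterminating} to $a_7 = 2\eta j_2$ — to be the main obstacle; the conceptual content lies entirely in the complementary-pair cancellation of the second step, but the ensuing elliptic Pochhammer algebra is lengthy and error-prone.
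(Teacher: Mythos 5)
Your proposal is correct and coincides with the first of the two routes the paper itself indicates for Theorem \ref{vlambdaspecialization}: specializing Theorem \ref{fusedweighthypergeometric} at $v = -\eta\Lambda$, where the $a_{10}$/$a_{12}$ pair becomes complementary (your identities check out: $a_{10} = a_1 - a_{12} - 2\eta$ under $i_2 = i_1 + j_1 - j_2$, and $a_6 + a_7 + a_8 + a_9 + a_{11} = 2a_1 - 2\eta = 2\lambda + 2\eta(4j_1 + 2j_2 - 2J - 1)$), so the ${}_{12}v_{11}$ collapses to a terminating, balanced ${}_{10}v_9$ evaluable by the elliptic Jackson identity \eqref{hypergeometric109sumterminating}. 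The paper states this route in a single sentence and instead elaborates the inductive alternative modeled on Propositions \ref{wjj20} and \ref{wjj2j}, but your fuller account — including the reduction to arrow-conserving configurations via \eqref{wjlargesmallj}, the termination at $a_7 = 2\eta j_2$, and the origin of the indicator $\textbf{1}_{i_1 \ge j_2}$ from the vanishing of $[2\eta i_1]_{j_2}$ — is exactly the intended first proof.
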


Although IRF models with weights given by $_{12} v_{11}$ hypergeometric series (similar to the ones provided by Theorem \ref{fusedweighthypergeometric}) were studied at length in \cite{ESSOSMLHP,ESSOSM,FEVSOSM}, we are not aware of any previous specialized analysis of the degenerated vertex model whose weights factor as in Theorem \ref{vlambdaspecialization}. We will see that, after a trigonometric degeneration and stochastic correction, this model will becomes the dynamical $q$-Hahn boson model introduced in Section \ref{LambdaSpecializationModelGeneral}. 

Before proceeding to the proofs of the two theorems stated above, let us mention that Theorem \ref{vlambdaspecialization} might also have an application to the theory of symmetric functions. Indeed, in the quantum degeneration of our elliptic setting (obtained by letting $\tau$ and $- \lambda$ both tend to $\textbf{i} \infty$), the elliptic weight functions studied in this section become the spin Hall-Littlewood functions that were analyzed in \cite{FSRF}. Very recently, Borodin and Wheeler \cite{SP} observed that applying the fusion procedure and then specializing as in (the quantum degeneration of) Theorem \ref{vlambdaspecialization} transforms the spin Hall-Littlewood symmetric functions into one-parameter deformations of the $q$-Whittaker functions, which they called \textit{spin $q$-Whittaker functions}. It is plausible that, using Theorem \ref{vlambdaspecialization}, one can introduce and analyze elliptic generalizations of these spin $q$-Whittaker functions, which should also satisfy branching, Cauchy, and Pieri type identities. However, we will not pursue this further here.

\section{Fused Weights}

\label{FusedWeights}

The purpose of this section is to establish Theorem \ref{fusedweighthypergeometric}, which explicitly evaluates the fused weights $W_J \big( i_1, j_1; i_2, j_2 \b| v, \lambda \big)$ given by Definition \ref{fusedweightdefinition} in terms of elliptic hypergeometric series. As a consequence, we will also derive Theorem \ref{vlambdaspecialization}, which provides a special value of $v$ at which these weights simplify into a fully factored form.

The fused weights given by Theorem \ref{fusedweighthypergeometric} have essentially been determined in a number of very similar settings. In the quantum case, they were originally evaluated by Kirillov and Reshetikhin \cite{REA}, and later different proofs were found by Mangazeev \cite{ESVM} and Corwin-Petrov \cite{SHSVML}. In the elliptic case, fused weights similar to those in Theorem \ref{fusedweighthypergeometric} were first found by Date-Jimbo-Kuniba-Miwa-Okado \cite{ESSOSM} in their study of fusion of the eight-vertex SOS model. Since then, similar derivations have been performed in a number of contexts, including from the perspective of (modular) elliptic hypergeometric series \cite{ESEMHF}; matrix elements of co-representations of the elliptic quantum group \cite{EQGEHS}; and co-braidings in dynamical quantum groups \cite{PADQG, EQGEHSRS}. 

Unfortunately, we are not aware of any work that evaluates these weights in our setup and terminology. The closest we know of seems to be that of Koelink-van Norden-Rosengren \cite{EQGEHS} (or \cite{PADQG, EQGEHSRS}), which evaluates matrix elements of co-representations of $E_{\tau, \eta} (\mathfrak{sl}_2)$; in principle, one could derive our weights from their identities, but this would require introducing some additional notation that we wish to avoid. 

Thus, we instead derive our weights in a similar way to what was done in the original work of Date-Jimbo-Kuniba-Miwa-Okado \cite{ESSOSM}. This will proceed as follows. First, in Section \ref{WeightRecursion}, we find two recursions for the fused weights $W_J (i_1, j_1; i_2, j_2)$; these are given by Lemma \ref{fusedweightrecursion} and Lemma \ref{fusedweightrecursion2}. In particular, the second expresses $W_J (i_1, j_1; i_2, j_2)$ through weights of the form $W_k (i, j; i', 0)$ and $W_ k (i, j; i', k)$. Then, in Section \ref{FusedWeights0J}, we use the first recursion given by Lemma \ref{fusedweightrecursion} to obtain explicit, factored forms for the latter two types of weights. We then conclude with the proofs of Theorem \ref{fusedweighthypergeometric} and Theorem \ref{vlambdaspecialization} in Section \ref{FusedWeightsGeneral}. 

\subsection{A Recursion for the Fused Weights} 

\label{WeightRecursion}  

We begin with the following lemma, which provides a recursive identity for the $\widehat{W}_J$ fused weights (recall their definition from \eqref{wjwsums}). 

\begin{lem} 
	
	\label{fusedweightrecursion} 
	
	Fix a positive integer $J$, and let $i_1, j_1, i_2, j_2 \in \mathbb{Z}_{\ge 0}$. Then, 
	\begin{flalign*}
	& \widehat{W}_J \big( i_1, j_1; i_2, j_2 \b| v, \lambda \big) \\
	& = \widehat{W}_{J - 1} \big( i_1, j_1; i_2, j_2 \b| v, \lambda - 2 \eta \big) \displaystyle\frac{f \big( \eta \Lambda - 2 \eta (i_2 + J - 1) - v \big) f \big( \lambda + 2 \eta i_2 \big) }{f \big( \eta \Lambda - 2 \eta (J - 1) - v \big) f \big( \lambda \big)} \\
	& \quad + \widehat{W}_{J - 1} \big( i_1, j_1 - 1; i_2 - 1, j_2 \b| v, \lambda + 2 \eta \big) \displaystyle\frac{f \big( v + \lambda - \eta \Lambda + 2 \eta (i_2 - 1 + J) \big) f \big( 2 \eta \big) }{f \big( \eta \Lambda - 2 \eta (J - 1) - v \big) f \big( \lambda \big)} \\
	& \quad + \widehat{W}_{J - 1} \big( i_1, j_1; i_2 + 1, j_2 - 1 \b| v, \lambda - 2 \eta \big) \\
	& \qquad \qquad \qquad \times \displaystyle\frac{f \big( \lambda - v - \eta \Lambda + 2 \eta (i_2 + 1 - J) \big) f \big(2 \eta (\Lambda - i_2) \big) f \big( 2 \eta (i_2 + 1) \big) }{f \big( \eta \Lambda - 2 \eta (J - 1) - v \big) f \big( \lambda \big) f(2 \eta)} \\
	& \quad + \widehat{W}_{J - 1} \big( i_1, j_1 - 1; i_2, j_2 - 1 \b| v, \lambda + 2 \eta \big) \displaystyle\frac{f \big( 2 \eta (i_2 - J + 1) - \eta \Lambda - v \big) f \big( \lambda + 2 \eta (i_2 - \Lambda) \big) }{f \big( \eta \Lambda - 2 \eta (J - 1) - v \big) f \big( \lambda \big)}. 
	\end{flalign*}

\end{lem}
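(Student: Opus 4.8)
The plan is to prove the recursion by \emph{fusion}, peeling off the topmost vertex (row $J$) of the single-column vertex model that computes $\widehat{W}_J$, in direct analogy with the first part of Lemma 2.1.1 of \cite{ESSOSM}. Recall from Definition \ref{vertexmodelsinglerow} that $W_J(i_1, \mathcal{J}_1; i_2, \mathcal{J}_2)$ is the matrix element of the operator product $X_J X_{J-1} \cdots X_1$ applied to $e_{i_1}$, where $X_k \in \{ \widetilde{a}, \widetilde{b}, \widetilde{c}, \widetilde{d} \}$ is determined by $(j_{1, k}, j_{2, k})$ and carries spectral parameter $w_k = w + 2 \eta (k - 1)$, so that row $J$ effectively sees $v_J = v + 2 \eta (J - 1)$. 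Summing over $\mathcal{J}_1$ and $\mathcal{J}_2$ with $|\mathcal{J}_1| = j_1$ and $|\mathcal{J}_2| = j_2$ produces $\widehat{W}_J$ by \eqref{wjwsums}. First I would use the associativity of this operator product to separate the last (topmost) factor, writing $X_J X_{J-1} \cdots X_1 = X_J \circ (X_{J-1} \cdots X_1)$, and then split the double sum according to the four possibilities $(j_{1, J}, j_{2, J}) \in \{ 0, 1 \}^2$ for the horizontal arrows at row $J$.

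The crucial bookkeeping is the dynamical parameter. Using the definitions \eqref{modifiedabcd} of the modified operators, applying $X_J$ to the subcolumn function and evaluating at dynamical parameter $\lambda$ produces $a(\lambda, w_J)$, $b(\lambda, w_J)$, $c(\lambda, w_J)$, or $d(\lambda, w_J)$ acting on the $(J-1)$-row weight evaluated at $\lambda - 2 \eta$, $\lambda + 2 \eta$, $\lambda - 2 \eta$, or $\lambda + 2 \eta$ respectively; this records the top value $\Phi_{1, J} = \lambda$ together with the descent rule for $\Phi$. For each of the four top-vertex types the intermediate vertical count entering row $J$ from below is forced by arrow conservation to be $i_2$, $i_2 - 1$, $i_2 + 1$, and $i_2$, respectively, and the residual sum over the first $J - 1$ horizontal entries collapses (by \eqref{wjwsums}) to $\widehat{W}_{J-1}(i_1, j_1; i_2, j_2)$, $\widehat{W}_{J-1}(i_1, j_1 - 1; i_2 - 1, j_2)$, $\widehat{W}_{J-1}(i_1, j_1; i_2 + 1, j_2 - 1)$, and $\widehat{W}_{J-1}(i_1, j_1 - 1; i_2, j_2 - 1)$, at the shifted values of $\lambda$ just described. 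These are exactly the four $\widehat{W}_{J-1}$ factors appearing in the statement.

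It then remains to check that $a, b, c, d$ acting on the relevant basis vector reproduce the four theta-function coefficients. Using the action of the generators on the basis (from Section \ref{EllipticGroupSymmetricFunctions}) together with \eqref{weightsj1}, the top-vertex weights are $W(i_2, 0; i_2, 0)$, $W(i_2 - 1, 1; i_2, 0)$, $W(i_2 + 1, 0; i_2, 1)$, and $W(i_2, 1; i_2, 1)$, all evaluated at spectral parameter $v_J$ and dynamical parameter $\lambda$. Substituting $v_J = v + 2 \eta (J - 1)$ into \eqref{weightsj1} and simplifying the arguments — for instance $\eta \Lambda - v_J = \eta \Lambda - v - 2 \eta (J - 1)$ in every denominator, and $\eta (\Lambda - 2 i_2) - v_J = \eta \Lambda - v - 2 \eta (i_2 + J - 1)$ in the first numerator — yields precisely the coefficients listed in the four terms. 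The only genuine difficulty is this routine but error-prone simplification of the theta arguments and the careful tracking of the $\pm 2 \eta$ shifts in the dynamical parameter; once the peeling decomposition is in place, no input beyond the factorization of the operator product and the explicit $W_1$ weights \eqref{weightsj1} is required.
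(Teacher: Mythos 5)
Your proof is correct and follows essentially the same route as the paper's: peel off the topmost row of the column, split the sum in \eqref{wjwsums} over the four choices of $(j_{1,J}, j_{2,J})$, track the $\pm 2\eta$ shifts of $\lambda$ via \eqref{modifiedabcd} to produce the four $\widehat{W}_{J-1}$ factors, and identify the coefficients as the $W_1$ weights from \eqref{weightsj1} at spectral parameter $v + 2\eta(J-1)$. All your bookkeeping (the forced intermediate vertical counts $i_2$, $i_2 - 1$, $i_2 + 1$, $i_2$ and the simplified theta arguments) matches the paper's computation exactly.
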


\begin{proof}
	
	Recall from \eqref{wjwsums} that $\widehat{W}_J \big( i_1, j_1; i_2, j_2 \b| v, \lambda \big)$ is a sum over $\mathcal{J}_1 = (j_{1, 1}, j_{1, 2}, \ldots , j_{1, J}) \in \{ 0, 1 \}^J$ and $\mathcal{J}_2 = (j_{2, 1}, j_{2, 2}, \ldots , j_{2, J}) \in \{0, 1 \}^J$. Letting $\widetilde{\mathcal{J}_1} = (j_{1, 1}, j_{1, 2}, \ldots , j_{1, J - 1}) \in \{ 0, 1 \}^{J - 1}$ and $\widetilde{\mathcal{J}_2} = (j_{2, 1}, j_{2, 2}, \ldots , j_{2, J - 1}) \in \{0, 1 \}^{J - 1}$, we may express the sum \eqref{wjwsums} over $\mathcal{J}_1$ and $\mathcal{J}_2$ by first summing over $j_{1, J}$ and $j_{2, J}$, and then summing over $\widetilde{\mathcal{J}_1}$ and $\widetilde{\mathcal{J}_2}$. 
	
	The sums over $j_{1, J}$ and $j_{2, J}$ give rise to $W_1$ weights, and the sums over $\widetilde{\mathcal{J}_1}$ and $\widetilde{\mathcal{J}_2}$ give rise to $\widehat{W}_{J - 1}$ weights. Specifically, this yields 
	\begin{flalign*}
	\widehat{W}_J \big( & i_1, j_1; i_2, j_2 \b| v, \lambda \big) \\
	& = \widehat{W}_{J - 1} \big( i_1, j_1; i_2, j_2 \b| v, \lambda - 2 \eta \big) W_1 \big( i_2, 0; i_2, 0 \b| v+ 2 \eta (J - 1), \lambda \big) \\
	& \qquad + \widehat{W}_{J - 1} \big( i_1, j_1 - 1; i_2 - 1, j_2 \b| v, \lambda + 2 \eta \big) W_1 \big( i_2 - 1, 1; i_2, 0 \b| v+ 2 \eta (J - 1), \lambda \big) \\
	& \qquad + \widehat{W}_{J - 1} \big( i_1, j_1; i_2 + 1, j_2 - 1 \b| v, \lambda - 2 \eta \big) W_1 \big( i_2 + 1, 0; i_2, 1 \b| v+ 2 \eta (J - 1), \lambda \big) \\
	& \qquad + \widehat{W}_{J - 1} \big( i_1, j_1 - 1; i_2, j_2 - 1 \b| v, \lambda + 2 \eta \big) W_1 \big( i_2, 1; i_2, 1 \b| v+ 2 \eta (J - 1), \lambda \big);
	\end{flalign*}
	
	\noindent here, the different values of $\lambda$ in the $\widehat{W}_{J - 1}$ weights arise from \eqref{modifiedabcd}. 
	
	Now the lemma follows from inserting the explicit forms of the $W_1$ weights (given by \eqref{weightsj1}) into the above equation. 
\end{proof}

Directly from Lemma \ref{fusedweightrecursion}, one can express $\widehat{W}_J \big( i_1, j_1; i_2, j_2 \big)$ as a sum of $4^J$ factored terms. However, Theorem \ref{fusedweighthypergeometric} states that this fused weight can in fact be expressed as a sum of $J$ fully factored terms. 

The proof of this fact is strongly based on a method that dates back at least to the work of Date-Jimbo-Kuniba-Miwa-Okado \cite{ESSOSM} in 1988 (although we also refer to the works \cite{REA, ESVM} for alternative derivations in the six-vertex case). In particular, it uses two remarkable facts. 

The first fact is that the weight $W_J (i_1, j_1; i_2, j_2)$ can be expressed as a sum of $J$ products of terms of the form $W_k (i_1, j_1; i_2, 0)$ and $W_k (i_1, j_1; i_2, k)$. This is given more explicitly by the following lemma, whose proof uses Lemma \ref{exchangeability} from Section \ref{Fusion}. 

\begin{lem}

\label{fusedweightrecursion2}

Fix $J \in \mathbb{Z}_{> 0}$ and $i_1, j_1, i_2, j_2 \in \mathbb{Z}_{\ge 0}$. Then, 
\begin{flalign*}
W_J \big( i_1, j_1; i_2, j_2 \b| v, \lambda \big) & =  \displaystyle\sum_{k = 0}^J W_{j_2} \big( i_1, k; i_1 + k - j_2, j_2 \b| v, \lambda - 2 \eta (J - 2j_1 - j_2 + 2k)  \big)  \\
& \qquad \times W_{J - j_2} \big( i_1 + k - j_2, j_1 - k; i_2, 0 \b| v  + 2 \eta j_2, \lambda \big). 
\end{flalign*}

\end{lem}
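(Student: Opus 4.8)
The plan is to prove this by choosing a convenient arrangement of the outgoing horizontal arrows and then factoring the defining operator product of Definition \ref{vertexmodelsinglerow} into two blocks. First I would invoke Lemma \ref{exchangeability} (through Definition \ref{fusedweightdefinition}) to select the specific $\mathcal{J}_2 = (\underbrace{1, \ldots, 1}_{j_2}, \underbrace{0, \ldots, 0}_{J - j_2}) \in \{ 0, 1 \}^J$, placing all $j_2$ outgoing horizontal arrows in the bottom $j_2$ rows of the column; this leaves $W_J (i_1, j_1; i_2, j_2)$ unchanged. Writing $W_J (i_1, j_1; i_2, j_2) = \sum_{|\mathcal{J}_1| = j_1} W_J (i_1, \mathcal{J}_1; i_2, \mathcal{J}_2)$ and factoring the product $X_J \cdots X_1$ as $(X_J \cdots X_{j_2 + 1})(X_{j_2} \cdots X_1)$, the bottom factor consists only of $\widetilde{c}$ and $\widetilde{d}$ operators (since $\mathcal{J}_2$ restricted to the bottom is all ones), while the top factor consists only of $\widetilde{a}$ and $\widetilde{b}$ operators (since $\mathcal{J}_2$ restricted to the top is all zeros).

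Next I would insert a resolution of the identity at the interface between rows $j_2$ and $j_2 + 1$, i.e. sum over the number $m$ of vertical arrows crossing the interface. Writing $k$ for the number of incoming horizontal arrows carried by the bottom block, arrow conservation in that block forces $m = i_1 + k - j_2$, and correspondingly the top block receives $j_1 - k$ incoming horizontal arrows. Summing the bottom $\mathcal{J}_1$-configurations with fixed $k$, the bottom block contributes $W_{j_2} \big( i_1, k; i_1 + k - j_2, j_2 \b| v, \,\cdot\, \big)$ and the top block contributes $W_{J - j_2} \big( i_1 + k - j_2, j_1 - k; i_2, 0 \b| \,\cdot\,, \lambda \big)$, so the sum over $m$ becomes a sum over $k$ from $0$ to $J$; the terms outside the ranges $0 \le k \le j_2$ and $0 \le j_1 - k \le J - j_2$ vanish automatically by \eqref{wjlargesmallj}.

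It then remains to track the two parameter shifts. The top block occupies rows $j_2 + 1, \ldots , J$, whose spectral parameters begin at $w + 2 \eta j_2$, so its base spectral parameter is $v + 2 \eta j_2$ (explaining the argument of $W_{J - j_2}$), while the bottom block retains the base $v$. For the dynamical parameter, the top of the whole column carries $\lambda$, which is inherited by the top block. Propagating $\lambda$ downward through the top block via \eqref{modifiedabcd} — each of the $j_1 - k$ rows with an incoming horizontal arrow passes $+2 \eta$ and each of the remaining $(J - j_2) - (j_1 - k)$ rows passes $-2 \eta$ — gives the interface value $\lambda + 2 \eta \big( 2 (j_1 - k) - (J - j_2) \big) = \lambda - 2 \eta (J - 2 j_1 - j_2 + 2 k)$, which is precisely the dynamical argument appearing in $W_{j_2}$. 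Assembling these pieces yields the claimed identity.

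The main obstacle I anticipate is the bookkeeping of the dynamical-parameter shifts: the operators $\widetilde{a}, \widetilde{b}, \widetilde{c}, \widetilde{d}$ shift $\lambda$ by $\pm 2 \eta$ according to \eqref{modifiedabcd}, and one must confirm that the $\lambda$-argument seen by each block matches the stated formula and that the direction of propagation (evaluating the composite function at the top dynamical parameter) is consistent with the convention $\Phi_{1, J} = \lambda$ from Section \ref{Fusion}. Verifying that the net shift accumulated through the top block equals $-2 \eta (J - 2 j_1 - j_2 + 2 k)$ is the computational heart of the argument.
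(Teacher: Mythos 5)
Your proposal is correct and follows essentially the same route as the paper's proof: Lemma \ref{exchangeability} is used to fix $\mathcal{J}_2 = \mathcal{K} = (1,\ldots,1,0,\ldots,0)$, the sum over $\mathcal{J}_1$ is split into its bottom part $\mathcal{J}_1^{(1)}$ (with $\big|\mathcal{J}_1^{(1)}\big| = k$) and top part $\mathcal{J}_1^{(2)}$, and the operator product factors into a bottom $W_{j_2}$ block and a top $W_{J-j_2}$ block with the spectral shift $v + 2\eta j_2$. Your bookkeeping of the dynamical shift through the top block, $\widetilde{\lambda} = \lambda + 2\eta\big(2(j_1 - k) - (J - j_2)\big) = \lambda - 2\eta(J - 2j_1 - j_2 + 2k)$, matches the paper's computation via \eqref{modifiedabcd} exactly, so nothing is missing.
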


\begin{proof}

Denote $\mathcal{K} = (1, 1, \ldots,  1, 0, 0, \ldots , 0) \in \{ 0, 1 \}^J$, the $J$-dimensional $0-1$ vector whose first $j_2$ components are equal to $1$ and whose last $J - j_2$ components are equal to $0$. In view of Lemma \ref{exchangeability}, we have that 
\begin{flalign}
\label{wjksum} 
W_J \big( i_1, j_1; i_2, j_2 \b| v, \lambda \big) = W_J \big( i_1, j_1; i_2, \mathcal{K} \b| v, \lambda \big) = \displaystyle\sum_{|\mathcal{J}_1| = j_1} W_J (i_1, \mathcal{J}_1; i_2, \mathcal{K} \b| v, \lambda). 
\end{flalign}

\noindent Now let $\mathcal{J}_1^{(1)}$ denote the $j_2$-dimensional $0-1$ vector formed from the first $j_2$ coordinates of $\mathcal{J}_1$, and let $\mathcal{J}_1^{(2)}$ denote the $(J - j_2)$-dimensional $0-1$ vector formed from the last $J - j_2$ coordinates of $\mathcal{J}_1$. Then, 
\begin{flalign}
\label{wjj2jj2} 
\begin{aligned}
W_J (i_1, \mathcal{J}_1; i_2, \mathcal{K} \b| v, \lambda ) & = W_{j_2} \big( i_1, \mathcal{J}_1^{(1)}; i_1 + |\mathcal{J}_1^{(1)}| - j_2, j_2 \b| v, \widetilde{\lambda} \big) \\
& \qquad \times W_{J - j_2} \big( i_1 + |\mathcal{J}_1^{(1)}| - j_2, \mathcal{J}_1^{(2)}; i_2, 0 \b| v + 2 \eta j_2, \lambda \big).
\end{aligned}
\end{flalign} 

\noindent Here, $\widetilde{\lambda}$ is determined from $\lambda$ through the relations \eqref{modifiedabcd}. In particular, $\lambda$ increases by $2 \eta$ for every $1$ in $\mathcal{J}_1^{(2)}$ and decreases by $2 \eta$ for every $0$ in $\mathcal{J}_1^{(2)}$. Therefore, $\widetilde{\lambda} = \lambda - 2 \eta \big( J - j_2 - 2 | \mathcal{J}_1^{(2)} | \big)$. 

Thus, the lemma follows from inserting \eqref{wjj2jj2} into \eqref{wjksum} and denoting $k = \big| \mathcal{J}_1^{(1)} \big| = j_1 - \big| \mathcal{J}_1^{(2)} \big|$. 
\end{proof}

The second fact is that weights of the form $W_k (i_1, j_1; i_2, 0)$ and $W_k (i_1, j_1; i_2, k)$ factor completely. This was originally verified in the related context of the eight-vertex SOS model in the work of Date-Jimbo-Kuniba-Miwa-Okado \cite{ESSOSM}. The more recent work \cite{EQGEHS} strongly suggests that an alternative derivation should be accessible directly from repeated application of the dynamical Yang-Baxter equation and furthermore indicates that this factorization is a phenomenon that holds in a far more general setting (for example, one should be able to replace the evaluation Verma module with an arbitrary representation of $E_{\tau, \eta} (\mathfrak{sl}_2)$). However, since explaining that framework would require introducing some additional terminology that we wish to avoid, we instead verify the factorizations of the $W_k (i_1, j_1; i_2, 0)$ and $W_k (i_1, j_1; i_2, k)$ weights by closely following the original method of Date-Jimbo-Kuniba-Miwa-Okado \cite{ESSOSM} in the next section.

\subsection{Fused Weights in the Cases \texorpdfstring{$j_2 \in \{ 0, J \}$}{} }

\label{FusedWeights0J}

In this section we analyze fused weights of the form $W_k (i_1, j_1; i_2, 0)$ and $W_k (i_1, j_1; i_2, k)$. Using Lemma \ref{fusedweightrecursion}, we give explicit, fully factored forms for these weights; see Proposition \ref{wjj20} for the case $j_2 = 0$ and Proposition \ref{wjj2j} for the case $j_2 = k$. The proofs of these propositions are similar to those of Lemma 2.1.3 and Lemma 2.1.4 in \cite{ESSOSM}. 

\begin{prop}
	
	\label{wjj20} 
	
	Let $i \in \mathbb{Z}_{\ge 0}$, $J \in \mathbb{Z}_{> 0}$, and $j \in \mathbb{Z}$. If $j < 0$ or $j > J$, then $W_J (i, j; i + j - J, J) = 0$. Otherwise, 
	\begin{flalign}
	\label{wjj20identity}
	\begin{aligned}
	W_J \big( i, j; i + j, 0 \b| v, \lambda \big) & = f (2 \eta)^j \displaystyle\frac{\big[ 2 \eta J \big]_J }{\big[ 2 \eta (J - j) \big]_{J - j} \big[ 2 \eta j \big]_j} \displaystyle\frac{\big[ \lambda + 2 \eta (i + j) \big]_{J - j} }{\big[ \eta \Lambda - v \big]_J} \\
	& \qquad \times \displaystyle\frac{\big[ v + \lambda + 2 \eta (i + 2j - 1) - \eta \Lambda \big]_j \big[ \eta \Lambda - 2 \eta (i + j) - v \big]_{J - j}}{\big[ \lambda + 2 \eta j \big]_{J - j} \big[ \lambda + 2 \eta (2j - J - 1) \big]_j}.
	\end{aligned}
	\end{flalign}
	
\end{prop}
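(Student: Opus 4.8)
The plan is to induct on $J$, using the recursion of Lemma \ref{fusedweightrecursion} specialized to $j_2 = 0$. First, if $j < 0$ or $j > J$ then $j_1 = j \notin \{ 0, 1, \ldots, J \}$, so $W_J(i, j; i + j, 0) = 0$ is immediate from \eqref{wjlargesmallj} (and the same applies to the outgoing weight named in the statement). Thus I may assume $0 \le j \le J$. Since $\binom{J}{0} = 1$, \eqref{hatwwj} gives $W_J(i, j; i + j, 0) = \widehat{W}_J(i, j; i + j, 0)$, so it suffices to compute the latter. Setting $(i_1, j_1; i_2, j_2) = (i, j; i + j, 0)$ in Lemma \ref{fusedweightrecursion}, the third and fourth terms on the right there each carry a factor $\widehat{W}_{J - 1}$ whose number of outgoing horizontal arrows is $j_2 - 1 = -1 < 0$; such a weight is an empty sum and hence vanishes. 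I am left with a two-term recursion expressing $\widehat{W}_J(i, j; i + j, 0 \b| v, \lambda)$ through $\widehat{W}_{J - 1}(i, j; i + j, 0 \b| v, \lambda - 2 \eta)$ and $\widehat{W}_{J - 1}(i, j - 1; i + j - 1, 0 \b| v, \lambda + 2 \eta)$, with explicit theta-function coefficients $A$ and $B$ read off from Lemma \ref{fusedweightrecursion}.

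For the base case $J = 1$, only $j \in \{ 0, 1 \}$ give nonzero weights, and I would check \eqref{wjj20identity} directly against the explicit $W_1$ weights in \eqref{weightsj1}: the case $j = 0$ matches $W(i, 0; i, 0)$ and the case $j = 1$ matches $W(i, 1; i + 1, 0)$, after cancelling the trivial (length $0$ or $1$) elliptic Pochhammer symbols.

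For the inductive step, write $F_J(i, j; \lambda)$ for the right side of \eqref{wjj20identity} and assume \eqref{wjj20identity} at level $J - 1$. Substituting the inductive formulas into the two surviving recursion terms, it remains to verify $F_J(i, j; \lambda) = F_{J - 1}(i, j; \lambda - 2 \eta)\, A + F_{J - 1}(i, j - 1; \lambda + 2 \eta)\, B$. The point is that the ratios $F_{J - 1}(i, j; \lambda - 2 \eta) / F_J(i, j; \lambda)$ and $F_{J - 1}(i, j - 1; \lambda + 2 \eta) / F_J(i, j; \lambda)$ simplify dramatically: nearly every elliptic Pochhammer symbol in \eqref{wjj20identity} changes between levels $J$ and $J - 1$ by a single factor $f(\cdot)$, which I would extract using the identities \eqref{elliptichypergeometricidentities1}. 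For instance $[2 \eta J]_J / [2 \eta (J - 1)]_{J - 1} = f(2 \eta J)$ and $[\eta \Lambda - v]_J / [\eta \Lambda - v]_{J - 1} = f(\eta \Lambda - v - 2 \eta (J - 1))$, the latter cancelling the common denominator of $A$ and $B$. After this cancellation the asserted identity collapses to a relation among a fixed number of first Jacobi theta functions, independent of $i$, $j$, and $J$.

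The main obstacle, and the bulk of the work, is to recognize this remaining theta relation as an instance of the Riemann (quartic) identity \eqref{quarticrelationf}; I would carry this out by choosing the four arguments $x, y, z, w$ there so that its two right-hand product terms match $F_{J - 1}(i, j; \lambda - 2 \eta)\, A$ and $F_{J - 1}(i, j - 1; \lambda + 2 \eta)\, B$ (up to the common factor $F_J$) and its left side matches $F_J$ itself. The boundary subcases $j = 0$ and $j = J$ are easier: then only the first (respectively second) recursion term survives, so the claim reduces to a single telescoping computation rather than a three-term theta identity. Careful tracking of the arguments of the elliptic Pochhammer symbols under the shifts $\lambda \mapsto \lambda \pm 2 \eta$ and $J \mapsto J - 1$ is where the delicacy lies; this mirrors the proof of Lemma 2.1.3 of \cite{ESSOSM}.
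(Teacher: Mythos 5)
Your proposal is correct and follows essentially the same route as the paper: induction on $J$ via the $j_2 = 0$ specialization of Lemma \ref{fusedweightrecursion} (with the third and fourth terms vanishing), identification $W_J = \widehat{W}_J$ through \eqref{hatwwj}, simplification of the Pochhammer ratios via \eqref{elliptichypergeometricidentities1}, and resolution of the resulting two-term theta relation by the Riemann identity \eqref{quarticrelationf} --- the paper carries out exactly this, choosing $w = \frac{\lambda}{2} - \eta J$, $x = \frac{\lambda}{2} + \eta (2j - J)$, $y = \frac{\lambda}{2} + \eta J$, $z = \frac{\lambda}{2} + v - \eta \Lambda + \eta (2i + 2j + J - 2)$ in \eqref{quarticrelationf}.
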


\begin{proof}
	
	By \eqref{wjlargesmallj}, the proposition holds when $j < 0$ or $j > J$, so it remains to establish \eqref{wjj20identity}. To that end, we induct on $J$; the cases $J \in \{ 0, 1 \}$ are quickly verified so suppose that \eqref{wjj20identity} holds when $J$ is replaced by $J - 1$, for all $0 \le j \le J - 1$. To establish \eqref{wjj20identity}, let us first apply Lemma \ref{fusedweightrecursion} in the case $j_2 = 0$. Using \eqref{wjlargesmallj} and \eqref{hatwwj}, this yields 
	\begin{flalign}
	\label{wjj20recursion}
	\begin{aligned}
	& W_J \big( i, j; i + j, 0 \b| v, \lambda \big) \\
	& \quad = W_{J - 1} \big( i, j; i + j, 0 \b| v, \lambda - 2 \eta \big) \displaystyle\frac{f \big( \eta \Lambda + 2 \eta (i + j + J - 1) - v \big) f \big( \lambda + 2 \eta (i + j) \big) }{f \big( \eta \Lambda - 2 \eta (J - 1) - v \big) f \big( \lambda \big)} \\
	& \qquad + W_{J - 1} \big( i, j - 1; i + j - 1, 0 \b| v, \lambda + 2 \eta \big) \displaystyle\frac{f \big( v + \lambda + 2 \eta (i + j - 1 + J) - \eta \Lambda \big) f (2 \eta) }{f \big( \eta \Lambda - 2 \eta (J - 1) - v  \big) f \big( \lambda \big)}. 
	\end{aligned}
	\end{flalign}
	
	\noindent From the inductive hypothesis, we have that 
	\begin{flalign}
	\label{wj1j201}
	\begin{aligned}
	W_{J - 1} & \big(  i, j; i + j, 0 \b| v, \lambda - 2 \eta \big) \\
	& = f (2 \eta)^j \displaystyle\frac{\big[ 2 \eta (J - 1) \big]_{J - 1}}{\big[ 2 \eta (J - j - 1) \big]_{J - j - 1} \big[ 2 \eta j \big]_j} \displaystyle\frac{ \big[ \eta \Lambda - 2 \eta (i + j) - v \big]_{J - j - 1} }{\big[ \eta \Lambda - v \big]_{J - 1}} \\
	& \qquad \times \displaystyle\frac{\big[ v + \lambda + 2 \eta (i + 2j - 2) - \eta \Lambda \big]_j \big[ \lambda + 2 \eta (i + j - 1) \big]_{J - j - 1} }{\big[ \lambda + 2 \eta (j - 1) \big]_{J - j - 1} \big[ \lambda + 2 \eta (2j - J - 1) \big]_j } 
	, 
	\end{aligned} 
	\end{flalign}
	
	\noindent and 
	\begin{flalign}
	\label{wj1j202}
	\begin{aligned}
	W_{J - 1} & \big( i, j - 1; i + j - 1, 0 \b| v, \lambda + 2 \eta \big) \\
	& = f(2 \eta)^{j - 1} \displaystyle\frac{\big[ 2 \eta (J - 1) \big]_{J - 1}}{\big[ 2 \eta (J - j) \big]_{J - j} \big[ 2 \eta (j - 1) \big]_{j - 1}} \displaystyle\frac{ \big[ \eta \Lambda - 2 \eta (i + j - 1) - v \big]_{J - j} }{\big[ \eta \Lambda - v \big]_{J - 1}} \\
	& \qquad \times \displaystyle\frac{\big[ v + \lambda + 2 \eta (i + 2j - 2) - \eta \Lambda \big]_{j - 1} \big[ \lambda + 2 \eta (i + j) \big]_{J - j} }{\big[ \lambda + 2 \eta j \big]_{J - j} \big[ \lambda + 2 \eta (2j - J - 1) \big]_{j - 1} }.
	\end{aligned}
	\end{flalign}
	
	\noindent Inserting \eqref{wj1j201} and \eqref{wj1j202} into \eqref{wjj20recursion} (using the second identity from \eqref{elliptichypergeometricidentities1}) yields 
	\begin{flalign}
	\label{wjj20recursion2}
	\begin{aligned}
	W_J \big( i, j; i + j, 0 \b| v, \lambda \big) & = f(2 \eta)^j \displaystyle\frac{S}{f (\lambda)} \displaystyle\frac{\big[ 2 \eta (J - 1) \big]_{J - 1}}{\big[ 2 \eta (J - j) \big]_{J - j} \big[ 2 \eta j \big]_{j}} \displaystyle\frac{ \big[ \eta \Lambda - 2 \eta (i + j) - v \big]_{J - j - 1} }{\big[ \eta \Lambda - v \big]_J } \\
	& \quad \times \displaystyle\frac{\big[ v + \lambda + 2 \eta (i + 2j - 2) - \eta \Lambda \big]_{j - 1} \big[ \lambda + 2 \eta (i + j) \big]_{J - j} }{\big[ \lambda + 2 \eta j \big]_{J - j} \big[ \lambda + 2 \eta (2j - J - 1) \big]_j} ,
	\end{aligned}
	\end{flalign}
	
	\noindent where 
	\begin{flalign*}
	S & = f \big( 2 \eta (j - J) \big) f (\lambda + 2 \eta j) f \big( v + 2 \eta (i + j + J - 1) - \eta \Lambda \big) f \big( v + \lambda + 2 \eta (i + j - 1) - \eta \Lambda \big) \\
	& \quad + f (2 \eta j) f \big( \lambda + 2\eta (j - J) \big) f \big( v + \lambda - \eta \Lambda + 2 \eta (i + j - 1 + J) \big) f \big( \eta \Lambda - 2 \eta (i + j - 1) - v \big).  
	\end{flalign*}
	
	\noindent To evaluate $S$, we apply \eqref{quarticrelationf} with 
	\begin{flalign*}
	w = \displaystyle\frac{\lambda}{2} - \eta J; \quad x = \displaystyle\frac{\lambda}{2} + \eta (2j - J); \quad y = \displaystyle\frac{\lambda}{2} + \eta J; \quad z = \displaystyle\frac{\lambda}{2} + v - \eta \Lambda + \eta (2i + 2j + J - 2), 
	\end{flalign*}
	
	\noindent to find that 
	\begin{flalign}
	\label{fsumj20} 
	S = f (2 \eta J) f (\lambda) f \big( \lambda + v - \eta \Lambda + 2 \eta (i + 2j - 1) \big) f \big( \eta \Lambda - v - 2 \eta (i + J - 1) \big). 
	\end{flalign} 
	
	\noindent Inserting \eqref{fsumj20} into \eqref{wjj20recursion2} (and using the second identity from \eqref{elliptichypergeometricidentities1}) yields \eqref{wjj20identity}; this establishes the proposition.  
\end{proof}

\begin{prop}

\label{wjj2j}

Let $i \in \mathbb{Z}_{\ge 0}$, $J \in \mathbb{Z}_{> 0}$, and $j \in \mathbb{Z}$. If $j < 0$ or $j > J$, then $\widehat{W}_J (i, j; i + j - J, J) = 0$. Otherwise, 
\begin{flalign}
\label{wj2jidentity}
\begin{aligned}
& W_J (i, j; i + j - J, J) \\
& \quad = f(2 \eta)^{j - J} \displaystyle\frac{[2 \eta J]_J}{\big[ 2 \eta (J - j) \big]_{J - j} [2 \eta j]_j} \displaystyle\frac{[2 \eta i]_{J - j} \big[ 2 \eta ( i + j - J) - \eta \Lambda - v \big]_j \big[ 2 \eta (\Lambda - i - j + J) \big]_{J - j}}{ [\eta \Lambda - v]_J } \\
& \quad \times \displaystyle\frac{\big[ \lambda + 2 \eta (i + j - J) - \eta \Lambda - v \big]_{J - j} \big[ \lambda + 2 \eta (i + 2j - J - \Lambda - 1) \big]_j}{[\lambda + 2 \eta j]_{J - j} \big[ \lambda + 2 \eta (2j - J - 1) \big]_j} .
\end{aligned} 
\end{flalign} 
\end{prop}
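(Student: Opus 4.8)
The plan is to mirror the inductive argument used to prove Proposition \ref{wjj20}, now in the complementary case $j_2 = J$. The vanishing claim when $j < 0$ or $j > J$ is immediate from \eqref{wjlargesmallj}: in this regime $j_1 = j \notin \{0, 1, \ldots, J\}$, and moreover $\widehat{W}_J = \binom{J}{J} W_J = W_J$ since the outgoing horizontal count is maximal, so the two sides of the vanishing statement agree. It then remains to establish \eqref{wj2jidentity} for $0 \le j \le J$, which I would do by induction on $J$. The base cases $J \in \{0, 1\}$ follow by directly comparing \eqref{wj2jidentity} against the explicit $W_1$ weights listed in \eqref{weightsj1}.

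For the inductive step, I would apply the recursion of Lemma \ref{fusedweightrecursion} with $(i_1, j_1; i_2, j_2) = (i, j; i + j - J, J)$. Since a $\widehat{W}_{J-1}$ weight cannot carry $j_2 = J$ horizontal outgoing arrows, the first two terms on the right side of that recursion vanish by \eqref{wjlargesmallj}. Passing between $\widehat{W}$ and $W$ via \eqref{hatwwj}—where all the relevant binomial coefficients equal $1$, as the outgoing horizontal counts are maximal—this leaves a two-term recursion expressing $W_J(i, j; i + j - J, J)$ through $W_{J-1}\big(i, j; i + j - (J-1), J-1\big)$ and $W_{J-1}\big(i, j-1; i + (j-1) - (J-1), J-1\big)$. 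Crucially, both of these are again of the form governed by \eqref{wj2jidentity} with $J$ replaced by $J - 1$, so the inductive hypothesis applies to each.

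Next I would substitute the two inductive expressions into this recursion and factor out the elliptic Pochhammer symbols common to both terms, using the shift identities in \eqref{elliptichypergeometricidentities1} to align the indices (in particular the $\lambda \mapsto \lambda \pm 2\eta$ shifts carried by the $\widehat{W}_{J-1}$ weights through the recursion). After this reduction the non-factoring $\lambda$- and $v$-dependence collapses into a single quantity $S$ equal to a sum of two products of four theta functions $f$, exactly as in the analogous $S$ appearing in the proof of Proposition \ref{wjj20}. The decisive step is then to recognize $S$ as an instance of the Riemann identity \eqref{quarticrelationf}: with an appropriate choice of the four arguments $w, x, y, z$—built from $\lambda/2$, $\eta$, $v$, $\Lambda$, and the integers $i, j, J$, in analogy with the substitution used for Proposition \ref{wjj20}—the two-term sum $S$ telescopes to a single product of four theta functions, which substituted back yields the fully factored right side of \eqref{wj2jidentity}.

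I expect the main obstacle to be twofold: the bookkeeping of matching every elliptic Pochhammer index after the $\lambda$-shifts are resolved, and—above all—pinning down the exact quadruple $(w, x, y, z)$ that makes \eqref{quarticrelationf} applicable. Once that substitution is identified, the remainder is a routine (if lengthy) manipulation using \eqref{elliptichypergeometricidentities1}. As an alternative to guessing $(w, x, y, z)$ outright, one could instead look for a crossing-type symmetry of the $W_J$ weights relating the present $j_2 = J$ case to the already-established $j_2 = 0$ case of Proposition \ref{wjj20}; but the inductive route above is the most direct parallel to the existing argument and is what I would carry out.
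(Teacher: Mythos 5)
Your plan coincides with the paper's own proof: induction on $J$ with base cases $J \in \{0, 1\}$, applying Lemma \ref{fusedweightrecursion} with the first two terms vanishing by \eqref{wjlargesmallj} (so that only the two $\widehat{W}_{J-1}$ weights with maximal horizontal output $J - 1$ survive, to which the inductive hypothesis applies, all binomial factors in \eqref{hatwwj} being $1$), followed by collapsing the remaining two-term theta sum via the Riemann identity \eqref{quarticrelationf}. The one quadruple you leave unspecified is, in the paper, $w = \frac{\lambda}{2} - \eta J$, $x = \frac{\lambda}{2} + \eta (2i + 2j - 3J + 2) - \eta \Lambda - v$, $y = \frac{\lambda}{2} + \eta J$, $z = \frac{\lambda}{2} + \eta (2j - J)$, which reduces the sum to $f(\lambda) f(2 \eta J) f\big( \lambda + 2 \eta (i + 2j - 2J + 1) - \eta \Lambda - v \big) f\big( 2 \eta (i - J + 1) - \eta \Lambda - v \big)$ and yields \eqref{wj2jidentity}.
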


\begin{proof}

The proof is similar to that of Proposition \ref{wjj20}. Again, by \eqref{wjlargesmallj}, the proposition holds when $j < 0$ or $j > J$, so it remains to establish \eqref{wj2jidentity}. To that end, we induct on $J$; the cases $J \in \{ 0, 1 \}$ are quickly verified so suppose that \eqref{wjj20identity} holds when $J$ is replaced by $J - 1$, for all $0 \le j \le J - 1$. 

Let us now establish \eqref{wjj20identity}. In an entirely analogous way to how we deduced \eqref{wjj20recursion2} (by applying the inductive hypothesis and inserting into Lemma \ref{fusedweightrecursion}, where now the first two terms on the right side of that identity are equal to $0$), we find that 
\begin{flalign}
\label{wj2jrecursion2}
\begin{aligned}
W_J \big( & i, j; i + j - J, J \b| v, \lambda \big) \\
& = f (2 \eta)^{j - J} \displaystyle\frac{T}{f (\lambda)} \displaystyle\frac{\big[ 2 \eta (J - 1) \big]_{J - 1} }{\big[ 2 \eta j]_j \big[ 2 \eta (J - j) \big]_{J - j}}\displaystyle\frac{\big[ 2 \eta i \big]_{J - j}\big[ 2 \eta (i + j - J) - \eta \Lambda - v \big]_{j - 1} }{  \big[ \eta \Lambda - v \big]_J } \\
& \quad \times \displaystyle\frac{\big[ 2 \eta (\Lambda - i - j+ J) \big]_{J - j}  \big[ \lambda + 2 \eta (i + 2j - \Lambda - J - 1) \big]_j \big[ \lambda + 2 \eta (i + j - J) - \eta \Lambda - v \big]_{J - j - 1}}{\big[ \lambda + 2 \eta j \big]_{J - j} \big[ \lambda + 2 \eta (2j - J - 1) \big]_j } , 
\end{aligned} 
\end{flalign}

\noindent where 	
\begin{flalign*}
T & = f \big( 2 \eta (J - j) \big) f (\lambda + 2 \eta j) f \big( \lambda - v - \eta \Lambda + 2 \eta (i + j - 2 J + 1) \big) f \big( 2 \eta (i + j - J + 1) - \eta \Lambda - v \big) \\
& \qquad f ( 2 \eta j) f \big( \lambda + 2 \eta (j - J) \big) f \big( 2 \eta (i + j - 2 J + 1) - \eta \Lambda - v \big)f \big( \lambda + 2 \eta (i + j - J + 1) - \eta \Lambda - v \big).  
\end{flalign*}

\noindent To evaluate $T$, we apply \eqref{quarticrelationf} with
\begin{flalign*}
w = \displaystyle\frac{\lambda}{2} - \eta J; \quad x = \displaystyle\frac{\lambda}{2} + \eta (2i + 2j - 3 J + 2) - \eta \Lambda - v; \quad y = \displaystyle\frac{\lambda}{2} + \eta J; \quad z = \displaystyle\frac{\lambda}{2} + \eta (2j - J), 
\end{flalign*}

\noindent which yields 
\begin{flalign*}
T = f (\lambda) f (2 \eta J) f \big( \lambda + 2 \eta (i + 2j - 2 J + 1) - \eta \Lambda - v \big) f \big( 2 \eta (i - J + 1) - \eta \Lambda - v \big). 
\end{flalign*}

\noindent Inserting the above identity into \eqref{wj2jrecursion2} yields \eqref{wj2jidentity}. 
\end{proof}

\subsection{General Fused Weights}

\label{FusedWeightsGeneral}

The purpose of this section is to establish Theorem \ref{fusedweighthypergeometric}, which provides an explicit form for the general $W_J \big( i_1, j_1; i_2, j_2 \b| v, \lambda \big)$ fused weight, and Theorem \ref{vlambdaspecialization}, which shows how this weight simplifies when $v = - \eta \Lambda$. Let us begin with the former.

\begin{proof}[Proof of Theorem \ref{fusedweighthypergeometric}]

Using Proposition \ref{wjj20} and Proposition \ref{wjj2j}, we obtain that 
\begin{flalign*}
W_{j_2} & \big( i_1, k; i_1 + k - j_2, j_2 \b| v, \lambda + 2 \eta (2j_1 + j_2 - 2k - J) \big) \\
& = f (2 \eta)^{j_2 - k} \displaystyle\frac{[2 \eta \Lambda]_{i_1}}{[2 \eta \Lambda]_{i_1 + k - j_2}} \displaystyle\frac{\big[ 2 \eta j_2 \big]_{j_2}}{\big[ 2 \eta (j_2 - k) \big]_{j_2 - k} \big[ 2 \eta k \big]_k} \displaystyle\frac{\big[ 2 \eta i_1 \big]_{j_2 - k} \big[ 2 \eta (i_1 + k - j_2) - \eta \Lambda - v \big]_k}{\big[ \eta \Lambda - v \big]_{j_2} } \\
& \quad \quad \times \displaystyle\frac{\big[ \lambda + 2 \eta (i_1 + 2j_1 - k - J) - \eta \Lambda - v \big]_{j_2 - k} \big[ \lambda + 2 \eta (i_1 + 2j_1 - J - 1 - \Lambda) \big]_k}{\big[ \lambda + 2 \eta (2j_1 + j_2 - k - J) \big]_{j_2 - k} \big[ \lambda + 2 \eta (2j_1 - J - 1) \big]_k },  
\end{flalign*}

\noindent and
\begin{flalign*}
W_{J - j_2} \big( & i_1 + k - j_2, j_1 - k; i_2, 0 | v+ 2 \eta j_2, \lambda \big) \\
& = f (2 \eta)^{k - j_1} \displaystyle\frac{[2 \eta \Lambda]_{i_1 + k - j_2}}{[2 \eta \Lambda]_{i_2}} \displaystyle\frac{\big[ 2 \eta (J - j_2) \big]_{J - j_2}}{\big[ 2 \eta (j_1 - k) \big]_{j_1 - k} \big[ 2 \eta (J - j_1 - j_2 + k) \big]_{J - j_1 - j_2 + k}} \\
& \qquad \times  \displaystyle\frac{\big[ \eta \Lambda - v - 2 \eta (i_1 + j_1) \big]_{J - j_1 - j_2 + k} \big[ 2 \eta (\Lambda - i_1 - k + j_2) \big]_{j_1 - k}}{\big[ \eta \Lambda - v - 2 \eta j_2 \big]_{J - j_2}} \\
& \qquad \times \displaystyle\frac{\big[ \lambda + 2 \eta i_2 \big]_{J - j_1 - j_2 + k}  \big[ \lambda + v + 2 \eta (i_1 + 2j_1 - k - 1) - \eta \Lambda \big]_{j_1 - k} }{ \big[ \lambda + 2 \eta (j_1 - k) \big]_{J - j_1 - j_2 + k} \big[ \lambda + 2 \eta (2j_1 + j_2 - 2k - J - 1) \big]_{j_1 - k}}.
\end{flalign*}

\noindent Using Lemma \ref{fusedweightrecursion2} and the above two identities, one can express $W_J (i_1, j_1; i_2, j_2)$ as a sum of $J$ terms; this sum can be seen to coincide with	the multiple of the hypergeometric series appearing in the statement of the theorem, by using the six identities 
\begin{flalign*}
& \displaystyle\frac{\big[ 2 \eta j_2 \big]_{j_2}}{\big[ 2 \eta (j_2 - k) \big]_{j_2 - k} \big[ 2 \eta k \big]_k} \displaystyle\frac{\big[ 2 \eta (J - j_2) \big]_{J - j_2}}{\big[ 2 \eta (j_1 - k) \big]_{j_1 - k} \big[ 2 \eta (J - j_1 - j_2 + k) \big]_{J - j_1 - j_2 + k}}  \\
& \qquad \qquad \qquad \qquad \qquad \qquad \qquad \qquad \qquad = \displaystyle\frac{\big[ 2 \eta j_2 \big]_k}{\big[ -2 \eta \big]_k} \displaystyle\frac{\big[ 2 \eta (J - j_2) \big]_{j_1}}{\big[ 2 \eta (j_1 + j_2 - J - 1) \big]_k} \displaystyle\frac{\big[ 2 \eta j_1 \big]_k}{\big[ 2 \eta j_1 \big]_{j_1}}; 
\end{flalign*}

\begin{flalign*}
\displaystyle\frac{\big[ 2 \eta i_1 \big]_{j_2 - k} \big[ 2 \eta (i_1 + k - j_2) - \eta \Lambda - v \big]_k}{ \big[ \eta \Lambda - v \big]_{j_2} \big[ \eta \Lambda - v - 2 \eta j_2 \big]_{J - j_2}} = \displaystyle\frac{ \big[ 2 \eta i_1 \big]_{j_2}}{\big[ 2 \eta (j_2 - i_1 - 1) \big]_k} \displaystyle\frac{\big[ \eta \Lambda + v + 2 \eta (j_2 - i_1 - 1 ) \big]_k}{\big[ \eta \Lambda - v \big]_J}; 
\end{flalign*}

\begin{flalign*}
& \displaystyle\frac{\big[ \lambda + 2 \eta (i_1 + 2 j_1 - k - J) - \eta \Lambda - v \big]_{j_2 - k}}{\big[ \lambda + 2 \eta (2j_1 + j_2 - k - J) \big]_{j_2 - k}} \\
& \qquad \qquad = \displaystyle\frac{\big[ \lambda + 2 \eta (i_1 + 2j_1 - J) - \eta \Lambda - v \big]_{j_2} }{\big[ \lambda + 2 \eta (i_1 + 2 j_1 - J) - \eta \Lambda - v \big]_k} \displaystyle\frac{\big[ \lambda + 2 \eta (2j_1 + j_2 - J) \big]_k}{\big[ \lambda + 2 \eta (2j_1 + j_2 - J) \big]_{j_2} }; 
\end{flalign*} 

\begin{flalign*}
& \big[ \eta \Lambda - v - 2 \eta (i_1 + j_1)  \big]_{J - j_1 - j_2 + k} \big[ 2 \eta (\Lambda - i_1 - k + j_2) \big]_{j_1 - k} \\
& \qquad \qquad = \big[ \eta \Lambda - v - 2 \eta (i_1 + j_1) \big]_{J - j_1 - j_2} \big[ \eta \Lambda - v - 2 \eta (i_1 - j_2 + J) \big]_k \displaystyle\frac{\big[ 2 \eta (\Lambda - i_1 + j_2) \big]_{j_1}}{ \big[ 2 \eta (\Lambda - i_1 + j_2) \big]_k};
\end{flalign*}

\begin{flalign*}
& \big[ \lambda + v + 2 \eta (i_1 + 2 j_1 - k - 1) - \eta \Lambda  \big]_{j_1 - k} \big[ \lambda + 2 \eta i_2 \big]_{J - j_1 - j_2 + k} \\
& \qquad \qquad = \displaystyle\frac{\big[ \lambda + v + 2 \eta (i_1 + 2 j_1 - 1) - \eta \Lambda  \big]_{j_1} }{\big[ \lambda + v + 2 \eta (i_1 + 2 j_1 - 1) - \eta \Lambda  \big]_k} \big[ \lambda + 2 \eta i_2 \big]_{J - j_1 - j_2} \big[ \lambda + 2 \eta (i_2 + j_1 + j_2 - J) \big]_k;
\end{flalign*}

\begin{flalign*}
\big[ & \lambda + 2 \eta (j_1 - k) \big]_{J - j_1 - j_2 + k} \big[ \lambda + 2 \eta (2j_1 + j_2 - 2k - J - 1) \big]_{j_1 - k} \\
&  \qquad = \big[ \lambda + 2 \eta j_1 \big]_{J - j_1 - j_2} \big[ \lambda + 2 \eta (2 j_1 + j_2 - J - 1)  \big]_{j_1} \displaystyle\frac{\big[ \lambda + 2 \eta (j_1 + j_2 - J - 1) \big]_k}{\big[ \lambda + 2 \eta j_1 \big]_k} \\
& \qquad \qquad \times \displaystyle\frac{f \big( \lambda + 2 \eta (2 j_1 + j_2 - J) \big)}{f \big( \lambda + 2 \eta (2j_1 + j_2 - 2k - J) \big)}, 
\end{flalign*}

\noindent each of which following from the three identities listed in \eqref{elliptichypergeometricidentities1}. 
\end{proof}

Now let us outline a proof of Theorem \ref{vlambdaspecialization}.

\begin{proof}[Proof of Theorem \ref{vlambdaspecialization} (Outline)]
	
We know of two ways to establish Theorem \ref{vlambdaspecialization}. 

The first is to observe that in the special case $v = - \eta \Lambda$, the $_{12} v_{11}$ elliptic hypergeometric series for the $W_J (i_1, j_1; i_2, j_2)$ fused weight (from Theorem \ref{fusedweighthypergeometric}) in fact becomes a very well-poised, balanced $_{10} v_9$ hypergeometric series. One is therefore able to apply the elliptic Jackson identity \eqref{hypergeometric109sumterminating} to deduce Theorem \ref{vlambdaspecialization} from Theorem \ref{fusedweighthypergeometric}. 

The second way is to induct as in the proofs of Proposition \ref{wjj20} and Proposition \ref{wjj2j} in Section \ref{FusedWeights0J}. Let us briefly outline this. To that end, denote the right side of the identity in Theorem \ref{vlambdaspecialization} by $R = R_J$ and induct on $J$; the theorem is quickly verified in the case $J = 1$, so let us assume that it holds with $J$ replaced by $J - 1$. 

Applying Lemma \ref{fusedweightrecursion} in the case $v = - \eta \Lambda$ yields  
\begin{flalign}
\label{svlambda4}
\widehat{W}_J \big( i_1, j_1; i_2, j_2 \b| v, \lambda \big) = S + T, 
\end{flalign} 

\noindent where 
\begin{flalign*}
S & = \widehat{W}_{J - 1} \big( i_1, j_1; i_2, j_2 \b| v, \lambda - 2 \eta \big) \displaystyle\frac{f \big( 2 \eta (\Lambda + 1 - i_2 - J ) \big) f \big( \lambda + 2 \eta i_2 \big) }{f \big( 2 \eta (\Lambda + 1 - J) \big) f \big( \lambda \big)} \\
& \quad + \widehat{W}_{J - 1} \big( i_1, j_1 - 1; i_2 - 1, j_2 \b| v, \lambda + 2 \eta \big) \displaystyle\frac{f \big( \lambda + 2 \eta (i_2 - 1 + J - \Lambda) \big) f \big( 2 \eta (  \Lambda + 1 - i_2) \big) }{f \big( 2 \eta (\Lambda + 1 - J)  \big) f \big( \lambda \big)},
\end{flalign*}

\noindent and 
\begin{flalign*}
T & = \widehat{W}_{J - 1} \big( i_1, j_1; i_2 + 1, j_2 - 1 \b| v, \lambda - 2 \eta \big) \displaystyle\frac{f \big( \lambda + 2 \eta (i_2 + 1 - J) \big) f \big( 2 \eta (i_2 + 1) \big) }{f \big(  2 \eta (\Lambda + 1 - J)  \big) f \big( \lambda \big)} \\
& \quad + \widehat{W}_{J - 1} \big( i_1, j_1 - 1; i_2, j_2 - 1 \b| v, \lambda + 2 \eta \big) \displaystyle\frac{f \big( 2 \eta (i_2 + 1 - J)  \big) f \big( \lambda  + 2 \eta (i_2 - \Lambda) \big) }{f \big( 2 \eta (\Lambda + 1 - J) \big) f \big( \lambda \big)}. 
\end{flalign*}

\noindent Using the inductive hypothesis and the identity \eqref{quarticrelationf} one finds (in a way similar to in the proofs of Proposition \ref{wjj20} and Proposition \ref{wjj2j}) that 
\begin{flalign}
\label{svlambda3}
S = \binom{J - 1}{j_2} \binom{J}{j_2}^{-1} R; \qquad T = \binom{J - 1}{j_2 - 1} \binom{J}{j_2}^{-1} R. 
\end{flalign}

\noindent Now Theorem \ref{vlambdaspecialization} follows from \eqref{hatwwj}, \eqref{svlambda4}, and \eqref{svlambda3}.
\end{proof}

\section{Stochasticity}

\label{Stochastic}

In this section we explain how to derive the dynamical stochastic higher spin vertex model and dynamical $q$-Hahn boson model from the fused $B$-symmetric functions defined in Section \ref{EllipticSymmetric}. To that end, we first in Section \ref{StochasticFactor} explain how to use the Pieri identity \eqref{stochasticbranching} and Cauchy identity \eqref{stochasticsum} to ``stochastically correct'' the fused $B_{\mu / \nu}$ symmetric functions so that they sum to one. Then in Section \ref{DynamicalParticles} we reparameterize these stochastic weights in terms of the parameters used in Section \ref{ProbabilityMeasures} and Section \ref{LambdaSpecializationModelGeneral}; in Section \ref{PsiStochasticity} we verify that these weights are indeed stochastic. Next, in Section \ref{Transition}, we explicitly explain how the stochastically corrected, fused elliptic weight functions are related to the measures $\textbf{P}_n$ and $\textbf{M}_n$ from Definition \ref{measuresignaturesvertexmodel}.

\subsection{Stochastic Corrections} 

\label{StochasticFactor}

Recall that the $B_{\mu / \nu}$ symmetric functions from Definition \ref{definitionb} are defined as partition functions for a dynamical vertex model with specific vertex weights. Since we will eventually be interested in stochastic processes, we would like for these vertex models to be stochastic, that is, for appropriate sums of the local weights at each vertex to equal one. 

Since this is not the case for the $B_{\mu / \nu}$ symmetric functions, the purpose of this section is to introduce altered versions of these fused functions, called \textit{stochastically corrected fused weight functions}, which we denote by $B_{\mu / \nu}^{(s)}$. To that end, we proceed in a way similar to what was done by Borodin in Section 8 of \cite{ERSF}, in which he produced stochastically corrected versions of the original (unfused) weight functions. The observation there (and from previous works \cite{DP,HSVMRSF} in the non-dynamical case) was one could accomplish this using the Pieri identity \eqref{stochasticbranching}. 

To explain further, let us recall some notation. Let $\eta, \tau, \lambda \in \mathbb{C}$ with $\Im \tau > 0$; $W = (w_1, w_2, \ldots ) \subset \mathbb{C}$; $Z = (z_0, z_1, \ldots ) \in \mathbb{C}$; and $L = (\Lambda_0, \Lambda_1, \ldots) \subset \mathbb{C}$. Further denote $v_{i, j} = w_j - z_i - \eta$ for each $i, j$. Recall that $\tau$ is used to define the elliptic function $f(z)$ \eqref{theta1}; $\lambda$ is the additive dynamical parameter associated with the symmetric functions $B$ and $D$; $W$ consists of the spectral parameters associated with rows of the vertex model; and $Z$ and $L$ are associated with columns of the vertex model. 

Now let $U = (u_0, u_1, \ldots , u_r) \subset \mathbb{C}$ be a set of complex numbers. In view of the branching identity \eqref{branching} and the Pieri identity \eqref{stochasticbranching}, if one defines
\begin{flalign}
\label{bstochasticu}
B_{\mu / \nu}^{(s; U)} \big( w_1, w_2, \ldots , w_k \b| \lambda \big) = C B_{\mu / \nu} \big( w_1, w_2, \ldots , w_k \b| \lambda \big) \displaystyle\frac{D_{\mu}^{(\text{n})} \big( U \b| \lambda - 2 \eta r \big) }{D_{\nu}^{(\text{n})} \big( U \b| \lambda - 2 \eta (r - k)\big)}, 
\end{flalign}

\noindent for some suitable constant $C$, dependent on all complex parameters but independent of the signatures $\mu$ and $\nu$, then $\sum_{\mu \in \Sign^+} B_{\mu / \nu}^{(s; U)} \big( W \b| \lambda \big) = 1$ (if all parameters satisfy the assumptions of Proposition \ref{bdidentities}). Thus, replacing $B_{\mu / \nu}$ with $B_{\mu / \nu}^{(s; U)}$ provides some way of producing a ``globally stochastic'' weight function. 

However, we would also like to ensure some form of local stochasticity. Specifically, we would like for $B_{\mu / \nu}^{(s; U)}$ to have an interpretation as a partition function for a vertex model with local weights that are stochastic. At the moment, this is not guaranteed since in general the $D$ symmetric functions are often quite ``non-local,'' in the sense that they do not factor.

In Section 8 of \cite{ERSF}, an exception to this phenomenon was found in the \textit{trigonometric degeneration} of the original elliptic setting, obtained by letting $\tau$ tend to $\textbf{i} \infty$. Thus, for the remainder of this section, we assume that we are in the trigonometric setting and that the limit $\tau \rightarrow \textbf{i} \infty$ has been taken. 

Under this assumption, a specialization $\rho = (u, u + 2 \eta, \ldots )$ was defined in \cite{ERSF} (based on a non-dynamical analog from Section 6.3 of \cite{HSVMRSF}), for some specific value of $u \in \mathbb{C}$, at which $D_{\mu} \big( \rho \b| \lambda \big)$ factors completely and is particularly manageable. We do not require the explicit definition of $\rho$, but the value of $D_{\mu}^{(\text{n})}$ at that specialization is given as follows. 

\begin{lem}[{\cite[Proposition 8.2]{ERSF}}]

\label{drho}

Fix a positive integer $M$, and let $\mu = (\mu_1, \mu_2, \ldots , \mu_M) \in \Sign_M^+$. If $\mu_M = 0$, then $D_{\mu}^{(\text{n})} \big( \rho \b| \lambda) = 0$. Otherwise, 
\begin{flalign*}
D_{\mu}^{(\text{\emph{n}})} \big( \rho \b| \lambda \big) & = \displaystyle\frac{\big( - f (2 \eta) \big)^M }{\pi^M c_{\mu} (\lambda)} \displaystyle\prod_{k = 0}^{M - 1} \displaystyle\frac{f (\lambda + 2 \eta \big( j + 1 - \Lambda_0) \big)}{f (\lambda + 2 \eta j)}, 
\end{flalign*}

\noindent where 
\begin{flalign}
\label{cmu} 
\begin{aligned}
c_{\mu} (\lambda) & = \displaystyle\prod_{i = 0}^{\infty} \displaystyle\prod_{j = 0}^{m_i - 1} \displaystyle\frac{f \big( \lambda + 2 \eta ( 2 m_{[0, i)} + m_i + j - \Lambda_{[0, i]} ) \big) f \big( \lambda + 2 \eta (2 m_{[0, i)} + j + 1 - \Lambda_{[0, i)} \big) }{f \big( 2 \eta (\Lambda_i - j) \big)} \\
& \qquad \times \pi^{-M} f (2 \eta)^M \displaystyle\prod_{j = 0}^{M - 1} f (\lambda + 2 \eta j)^{-1},
\end{aligned}
\end{flalign} 

\noindent and we have denoted $m_i = m_i (\mu)$; $m_{\mathcal{I}} = \sum_{k \in \mathcal{I}} m_k$; and $\Lambda_{\mathcal{I}} = \sum_{k \in \mathcal{I}} \Lambda_k$ for any finite $\mathcal{I} \subset \mathbb{Z}_{\ge 0}$. 
\end{lem}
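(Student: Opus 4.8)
The plan is to evaluate $D_\mu^{(\text{n})}(\rho \b| \lambda)$ by recognizing the principal specialization $\rho = (u, u + 2\eta, u + 4\eta, \ldots)$ as a fusion specialization and then reading off the answer column by column from the associated fused vertex model. Concretely, for the truncation $\rho^{(J)} = (u, u + 2\eta, \ldots, u + 2\eta(J-1))$, Proposition \ref{bdfunctionsfused}(2) identifies $D_\mu(\rho^{(J)} \b| \lambda)$ with the partition function of a single fused row of fusion parameter $J$, through which $M = \ell(\mu)$ paths travel from the bottom boundary (all entering at position $0$, since here $\nu = (0, \ldots, 0)$) to the top boundary (exiting at the positions $\mu_1 \ge \cdots \ge \mu_M$). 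Because the boundary data fix the horizontal flow, the arrow configuration at every column $x$ is forced: writing $h_x = \sum_{j \ge x} m_j(\mu)$, the vertex at column $x \ge 1$ carries $(i_1, j_1; i_2, j_2) = (0, h_x; m_x(\mu), h_{x+1})$, while column $0$ carries $(M, 0; m_0(\mu), h_1)$. Thus $D_\mu(\rho^{(J)} \b| \lambda)$ is an explicit product $\prod_x W_J(i_1, j_1; i_2, j_2 \b| v_x, \Phi_x, \Lambda_x)$, and I would then send $J \to \infty$. This column product is already visibly compatible with the double product over $(i, j)$ defining $c_\mu(\lambda)$ in \eqref{cmu}, which is the main reason to expect the approach to close.

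The key computational steps are, first, to evaluate each column weight in closed form and, second, to control the limit $J \to \infty$ in the trigonometric regime $f = \sin$. For the interior columns the relevant weights are of the shape $W_J(0, j_1; i_2, j_2)$ with $j_1 = i_2 + j_2$; I would obtain these from Theorem \ref{fusedweighthypergeometric}, using that the choice of $u$ defining $\rho$ places the $_{12} v_{11}$ series at an argument where it degenerates to a very-well-poised balanced $_{10} v_9$ series and can be summed by the elliptic Jackson identity \eqref{hypergeometric109sumterminating} (this is exactly the mechanism behind Theorem \ref{vlambdaspecialization}, and I expect a relation of the $v = -\eta \Lambda$ type to be what the principal specialization enforces on each column). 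After summation each column contributes a fully factored ratio of theta functions; collecting these over all $x$, multiplying by the normalization $\prod_k f(\lambda + 2\eta k)$, and simplifying with the elliptic Pochhammer identities in \eqref{elliptichypergeometricidentities1} should reproduce $\frac{(-f(2\eta))^M}{\pi^M c_\mu(\lambda)} \prod_{j=0}^{M-1} \frac{f(\lambda + 2\eta(j + 1 - \Lambda_0))}{f(\lambda + 2\eta j)}$. The vanishing assertion is the cleanest part: if $\mu_M = 0$ then $m_0(\mu) \ge 1$, forcing $i_2 \ge 1$ at column $0$, and the specialization $\rho$ is precisely tuned so that this boundary weight vanishes, killing the whole product.

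The hard part will be the $J \to \infty$ limit together with the apparent degeneracy of the interior weights. With $i_1 = 0$ the prefactor in Theorem \ref{fusedweighthypergeometric} contains $[2\eta i_1]_{j_2} = [0]_{j_2}$, which vanishes once $j_2 \ge 1$, so a naive substitution gives $0$; the honest argument must exhibit the compensating pole (a matching $[0]$ factor, or a boundary-of-summation pole of the $_{12} v_{11}$) and extract the finite ratio, and must show that the surplus factors produced by the extra fused rows cancel after normalization so that the limit exists. I expect this regularization, rather than the algebra of assembling the product, to be the crux. As a hedge against these limit issues I would also set up a purely inductive proof that never passes to $J = \infty$: the specialization is self-similar, $\rho = (u) \sqcup (\rho + 2\eta)$, so the branching rule for $D$ (the $D$-analog of \eqref{branching}) expresses $D_\mu^{(\text{n})}(\rho \b| \lambda)$ in terms of $D^{(\text{n})}$ on a signature with one fewer part and a shifted dynamical parameter, each step using only the finite, explicitly factored weights of Proposition \ref{wjj20} and Proposition \ref{wjj2j}; matching this recursion against the claimed form of $c_\mu(\lambda)$ in \eqref{cmu} would finish by induction on $\ell(\mu)$ and serve as an independent check on the limiting computation. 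One can cross-validate the final formula by feeding it into the Cauchy identity \eqref{stochasticsum} and confirming the resulting generating function factors as predicted by Proposition \ref{bdidentities}.
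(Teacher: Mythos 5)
First, a point of orientation: this paper does not prove Lemma \ref{drho} at all --- it is imported verbatim from \cite[Proposition 8.2]{ERSF}, and the paper even declines to state the specific value of $u$ defining $\rho$ (``We do not require the explicit definition of $\rho$''). So your attempt must be judged on its own merits, and as it stands it has genuine gaps beyond the one you flagged. Your fusion framing is natural ($\rho$ is indeed the $J \to \infty$ principal specialization, and your single-row reduction with the forced arrow configuration is correct), but the columnwise summation mechanism you propose cannot work as stated. The special $u$ is tied to the column-$0$ data $(z_0, \Lambda_0)$ alone --- this is visible both in the answer, where $\Lambda_0$ plays a distinguished role, and in \eqref{piqi} and \eqref{bsymmetrics}, where only $p_0, q_0$ enter the stochastic correction. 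A single $u$ cannot place every column at a point of the form $v_x = -\eta \Lambda_x$ when the $z_x, \Lambda_x$ are generic, so the elliptic Jackson mechanism behind Theorem \ref{vlambdaspecialization} is available at (at most) column $0$. At the interior columns you are left with genuine $_{12}v_{11}$ series at the degenerate point $i_1 = 0$, where the prefactor $[2\eta i_1]_{j_2}$ of Theorem \ref{fusedweighthypergeometric} vanishes; you correctly identify that a regularization is needed, but flagging the crux is not resolving it, and nothing in your sketch explains why the non-factored interior weights become the fully factored $c_\mu(\lambda)$ of \eqref{cmu} only after $J \to \infty$.

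Second, the $J \to \infty$ limit is not a routine analytic step: the infinite specialization itself requires a renormalization (the normalizing product $\prod_{k=0}^{N-1} f(\lambda + 2\eta k)$ diverges in the trigonometric regime as the number of variables grows), and both the limit formula and the vanishing claim at $\mu_M = 0$ depend on the explicit $u$, which you never pin down --- so neither can actually be verified from what you wrote. Your hedge, the induction exploiting the self-similarity $\rho = (u) \sqcup (\rho + 2\eta)$ with the one-row $D$-weights, is much closer to how this evaluation is actually established in \cite{ERSF} (and to its non-dynamical antecedent in Section 6.3 of \cite{HSVMRSF}), and is the viable route; but completing it still requires importing the definition of $\rho$ and proving stabilization of the resulting infinite product. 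As written, the proposal is a reasonable program, not a proof.
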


Replacing $U$ in \eqref{bstochasticu} by the specialization $\rho$ gives rise to the a stochastic modification $B_{\mu / \nu}^{(s)} = B_{\mu / \nu}^{(s; \rho)}$ of $B_{\mu / \nu}$ that appeared as Definition 8.4 of \cite{ERSF}, given explicitly by 
\begin{flalign}
\label{bsymmetrics}
\begin{aligned}	
B_{\mu / \nu}^{(s)} \big( u_1, u_2, \ldots , u_r \b| \lambda \big) & =  \left( - \displaystyle\frac{1}{f (2 \eta)} \right)^r \left( \displaystyle\prod_{j = 1}^r \displaystyle\frac{f (u_j - q_0) f \big( \lambda + 2 \eta (j - 1) \big) }{f (u_j - p_0)} \right)  \displaystyle\frac{D_{\mu}^{(\text{n})} \big( \rho \b| \lambda \big)}{D_{\nu}^{(\text{n})} \big( \rho \b| \lambda + 2 \eta r \big)} \\
& \qquad \times B_{\mu / \nu} \big( u_1, u_2, \ldots , u_r \b| \lambda \big),
\end{aligned}
\end{flalign}

\noindent where we have denoted 
\begin{flalign}
\label{piqi}
	 p_0 = z + \eta (1 - \Lambda_0); \qquad q_0 = z + \eta (1 + \Lambda_0). 
	 \end{flalign}

Now, we would like to express $B_{\mu / \nu}^{(s)}$ as the partition function of a dynamical model with explicit, stochastic vertex weights. These weights, denoted $\sigma_J$ below, will be (a gauge transformation of) the product of the original $W_J$ fused weights from Definition \ref{fusedweightdefinition} and a stochastic correction. Definition \ref{cstochastic} below introduces these stochastic corrections and stochastic vertex weights, and Proposition \ref{stochasticbsum} establishes that the $B_{\mu / \nu}^{(s)}$ can be expressed as the partition function of a vertex model whose weights are given by the $\sigma_J$. The fact that these weights are indeed stochastic follows from Lemma \ref{lpsiweights} and Proposition \ref{psisumstochastic} below.

\begin{definition}

\label{cstochastic}

For any $i_1, j_1, i_2, j_2 \in \mathbb{Z}_{\ge 0}$, define the \emph{stochastic correction} $C \big( i_1, j_1; i_2, j_2 \b| \lambda \big) = C_J \big( i_1, j_1; i_2, j_2 \b| \lambda, \Lambda \big)$ through 
\begin{flalign*}
C \big( i_1, j_1; i_2, j_2 | \lambda \big) & = \displaystyle\frac{ f (2 \eta)^{j_2 - j_1} \big[ 2 \eta \Lambda \big]_{i_2}}{\big[ 2 \eta \Lambda]_{i_1}} \displaystyle\frac{ \big[ \lambda + 2 \eta (i_1 + 2 j_1 - J) \big]_{j_2} \big[ \lambda + 2 \eta (i_1 + 2 j_1 - j_2 - 1 - \Lambda)\big]_{J - j_2}}{\big[ \lambda + 2 \eta (i_1 + j_1 - j_2) \big]_{J - j_1} \big[ \lambda + 2 \eta (i_1 + 2 j_1 - j_2 - 1 - \Lambda) \big]_{j_1}} \\
& \quad \times \displaystyle\frac{ \big[ \lambda + 2 \eta j_1 \big]_{J - j_1} \big[ \lambda + 2 \eta (2 j_1 - J - 1) \big]_{j_1} }{\big[ \lambda + 2 \eta (2i_1 + 2j_1 - j_2 - \Lambda) \big]_{j_2} \big[ \lambda + 2 \eta (2 i_1 + 2 j_1 - 2 j_2 - 1 - \Lambda) \big]_{J - j_2}}. 
\end{flalign*}

\noindent Furthermore, recalling the fused vertex weights $W_J (i_1, j_1; i_2, j_2)$ from Definition \ref{fusedweightdefinition}, define the \emph{stochastic vertex weights} $\sigma_J ( i_1, j_1; i_2, j_2 ) = \sigma_J \big( i_1, j_1; i_2, j_2 \b| v, \lambda \big) = \sigma_J \big( i_1, j_1; i_2, j_2 \b| v, \lambda, \Lambda \big)$ through 
\begin{flalign}
\label{sigmajdefinition}
\begin{aligned}
\sigma_J ( i_1, j_1; i_2, j_2) & = C \big( i_1, j_1; i_2, j_2 \b| \lambda \big) W_J \big( i_1, j_1; i_2, j_2 \b| v, \lambda \big) \\
& \qquad \times \displaystyle\frac{\big[ 2 \eta J \big]_J}{\big[ 2 \eta j_2 \big]_{j_2} \big[ 2 \eta (J - j_2) \big]_{J - j_2}} \left( \displaystyle\frac{\big[ 2 \eta J \big]_J}{\big[ 2 \eta j_1 \big]_{j_1} \big[ 2 \eta (J - j_1) \big]_{J - j_1}}\right)^{-1}.  
\end{aligned} 
\end{flalign}

\end{definition}

\begin{prop}

\label{stochasticbsum}

Let $M$, $N$, and $r$ be positive integers, with $M \ge N + r$. Let $\mu \in \Sign_M^+$ and $\nu \in \Sign_N^+$, and let $J_1, J_2, \ldots , J_r$ be positive integers such that $\sum_{i = 1}^r J_i = M - N$. Assume that $\nu_j \le \mu_j$ for each $j \in [1, N]$ and that all parts of $\mu$ and $\nu$ are greater than $0$. Let $w_1, w_2, \ldots , w_r \in \mathbb{C}$, and let $\varpi_k = \big( w_k, w_k + 2 \eta, \ldots , w_k + 2 \eta (J_k - 1) \big)$ for each $k \in [1, r]$. 

Then $B_{\mu / \nu}^{(s)} (\{ \varpi_1 \}, \{ \varpi_2 \}, \ldots , \{ \varpi_r \} \b| \lambda)$ can be evaluated as the partition function of the (not necessarily unfused) $r$-row vertex model, as defined in part 1 of Proposition \ref{bdfunctionsfused}, but whose vertex weights $W_J$ are altered as follows. At every vertex with positive $x$-coordinate in the $i$th row (for each $i \in [1, r]$), replace the $W_{J_i}$ weights from Proposition \ref{bdfunctionsfused} with the stochastic weights $\sigma_{J_i}$ with the same arguments; moreover, set the weight at each coordinate $(0, i)$ on the $y$-axis equal to $1$. 
	\end{prop}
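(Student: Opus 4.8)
The plan is to show that the globally defined correction factor in \eqref{bsymmetrics} distributes, column by column, into the local stochastic corrections $C_{J_i}$ that are packaged inside the weights $\sigma_{J_i}$ of \eqref{sigmajdefinition}. I would first dispose of the multi-row structure and reduce to a single row. By the branching identity \eqref{branching}, $B_{\mu / \nu}(\varpi_1, \ldots , \varpi_r \b| \lambda)$ factors as a sum over intermediate signatures $\nu = \kappa^{(0)}, \kappa^{(1)}, \ldots , \kappa^{(r)} = \mu$ of a product of single-row contributions $\prod_{i = 1}^r B_{\kappa^{(i)} / \kappa^{(i - 1)}}(\varpi_i \b| \lambda_i)$, where $\lambda_i$ is the shifted dynamical parameter at the bottom of row $i$. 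Simultaneously, both the scalar prefactor of \eqref{bsymmetrics} and the ratio $D_\mu^{(\text{n})}(\rho)/D_\nu^{(\text{n})}(\rho)$ telescope through these intermediate signatures, the $D$-ratio because $\prod_{i} D_{\kappa^{(i)}}^{(\text{n})}/D_{\kappa^{(i - 1)}}^{(\text{n})}$ collapses. Hence it suffices to prove the $r = 1$ case, after which the general statement follows by composing the single-row identities, matching the row-by-row structure of the vertex model in Proposition \ref{bdfunctionsfused}.

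For a single row with fusion parameter $J$, Proposition \ref{bdfunctionsfused} (with $r = 1$) expresses $B_{\kappa / \nu}(\varpi \b| \cdot)$ as a product of fused weights $W_J$, one per column $x \ge 1$, with no internal summation remaining: the horizontal arrow count $h_x$ entering column $x$ is forced by arrow conservation, with $J$ fused paths entering from the left edge (so the extreme counts are $J$ on the far left and $0$ on the far right), and the weight-$1$ convention at the coordinates $(0, i)$ accounts for the boundary. The stochastic weight $\sigma_J$ of \eqref{sigmajdefinition} is $C_J \cdot W_J$ times the factor $g(j_1)/g(j_2)$ with $g(j) = [2 \eta j]_j [2 \eta (J - j)]_{J - j}$. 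Since the outgoing horizontal count $j_2$ at column $x$ equals the incoming count $j_1$ at column $x + 1$, the product $\prod_x g(j_1)/g(j_2)$ telescopes to the boundary ratio $g(J)/g(0) = 1$; thus these factors contribute trivially, and the entire content is carried by $\prod_x C_J$.

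The heart of the argument is therefore to verify, for a single row, the identity
\[ \prod_{x \ge 1} C_J \big( m_x(\nu), h_x; m_x(\kappa), h_{x + 1} \b| \Phi_x, \Lambda_x \big) = (\text{prefactor}) \cdot \frac{D_\kappa^{(\text{n})}(\rho \b| \lambda)}{D_\nu^{(\text{n})}(\rho \b| \lambda + 2 \eta)}, \]
where $\Phi_x$ is the dynamical parameter at column $x$, threaded by the additive rules \eqref{lambdaadditive}. Here I would invoke Lemma \ref{drho}, which makes $D^{(\text{n})}(\rho)$ fully explicit: by \eqref{cmu} it factors into a product of per-column terms built from elliptic Pochhammer symbols in $\lambda$ and the $\Lambda_k$. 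The remaining task is a column-by-column match: I would show that the factor of $D_\kappa^{(\text{n})}(\rho)/D_\nu^{(\text{n})}(\rho)$ attached to column $x$ coincides with the explicit $C_J$ of Definition \ref{cstochastic} evaluated at $(m_x(\nu), h_x; m_x(\kappa), h_{x + 1})$ and dynamical parameter $\Phi_x$, using the three elliptic Pochhammer identities in \eqref{elliptichypergeometricidentities1} together with the values $p_0, q_0$ from \eqref{piqi}.

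This reconciliation of the (reverse-engineered) definition of $C_J$ with the explicit factorization of $D^{(\text{n})}(\rho)$, while correctly tracking how $\Phi_x$ evolves along the row, is where I expect the genuine computational difficulty to lie; everything else is structural telescoping. Once the single-row identity is in hand, the multi-row claim follows by the composition described in the first paragraph. Finally, the hypotheses $\nu_j \le \mu_j$ and that all parts of $\mu$ and $\nu$ are positive are exactly what guarantee that the required path ensemble exists and that $D_\mu^{(\text{n})}(\rho), D_\nu^{(\text{n})}(\rho) \ne 0$ by Lemma \ref{drho}, so that no denominators in \eqref{bsymmetrics} vanish.
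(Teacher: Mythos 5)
Your structural reductions coincide with the paper's own proof: it too invokes the branching identity \eqref{branching} to reduce to $r = 1$, notes that with a single row there is a unique path ensemble so $B_{\mu/\nu}$ is a bare product of $W_J$ weights, observes that the elliptic binomial factors in \eqref{sigmajdefinition} telescope to $1$ across the row, and thereby reduces the claim to showing that the ratio in \eqref{bsymmetrics2} equals the product of the corrections $C_J$ over vertices of positive $x$-coordinate times the boundary factor $W_J \big( 0, J; 0, J \b| v_1, \lambda \big)^{-1}$ at $(0,1)$ --- note that the paper makes this last factor explicit, whereas in your write-up it is only implicitly absorbed into your ``prefactor,'' so you should state it. Where you genuinely diverge is the final verification. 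You propose a direct, once-and-for-all column-by-column match between the explicit $C_J$ of Definition \ref{cstochastic} and the per-column factors of $c_\mu$ from \eqref{cmu}; the paper instead inducts on $|\mu|$, comparing $\mu / \nu$ with a locally modified pair $\widetilde{\mu} / \widetilde{\nu}$ in two cases (when $\mu_1 = \nu_1$ the largest part is deleted from both; when $\mu_1 > \nu_1$ the part $\mu_1$ is decreased by one), so that only the \emph{change} in the $C_J$-product and in $c_{\nu} (\lambda + 2 \eta J) / c_{\mu} (\lambda)$ must be matched, together with a base case $\mu = (1, 1, \ldots , 1)$ that handles the global prefactor and the $(0,1)$ weight at once. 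Your route is viable: both sides do factor over columns as functions of the same local data $(i_1, j_1; i_2, j_2)$, $\Phi_x$, $\Lambda_x$, using arrow conservation $m_{[0, x)} (\nu) + J = m_{[0, x)} (\mu) + h_x$ to convert the shift $\lambda \mapsto \lambda + 2 \eta J$ in $c_{\nu}$, and in low-rank configurations the per-column identity holds on the nose, so no cross-column cocycle appears. What your approach buys is a single closed-form elliptic Pochhammer identity in place of inductive bookkeeping with modified signatures; what the paper's induction buys is that each verification involves only one or two vertices in minimal configurations, so the manipulations via \eqref{elliptichypergeometricidentities1} and the Riemann identity \eqref{quarticrelationf} stay short. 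The one caution is that the entire substance of the proof sits precisely in the general four-index per-column identity you defer as ``the genuine computational difficulty'': it is strictly heavier than either of the paper's two local cases, and until it is carried out your argument is a correct plan rather than a complete proof.
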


\begin{proof}
	
The proof of this proposition is similar to that of Theorem 8.6 of \cite{ERSF}. 

In view of the branching property \eqref{branching}, it suffices to assume $r = 1$. Therefore, we set $r = 1$, $J_1 = J$, $w_1 = u$, and $\varpi = (u_1, u_2, \ldots , u_J) = \big( u, u + 2 \eta, \ldots , u + 2 \eta (J - 1) \big)$; further denote $v_k = u_k - z_0 - \eta$ for each $k \in [1, J]$.

In particular, this implies that the vertex model under consideration has a single row and that there is only one path ensemble satisfying the properties listed in the first part of Proposition \ref{bdfunctionsfused}. Hence, $B_{\mu / \nu} \big( \varpi \b| \lambda \big)$ is expressible as a product of the $W_J$ weights from Definition \ref{fusedweightdefinition}, and the proposition claims that it is possible to express $B_{\mu / \nu}^{(s)} \big( \varpi \b| \lambda \big)$ as the same product but using the altered weights $\sigma_J (i_1, j_1; i_2, j_2)$ in the positive quadrant and weight equal to $1$ at $(0, 1)$. 

Thus, to establish the proposition, it suffices to show (using \eqref{bsymmetrics}) that the ratio 
\begin{flalign}
\label{bsymmetrics2}
\displaystyle\frac{B_{\mu / \nu}^{(s)} \big( \varpi \b| \lambda \big)}{B_{\mu / \nu} \big( \varpi \b| \lambda \big)} = \left( - \displaystyle\frac{1}{f (2 \eta)} \right)^J \displaystyle\frac{D_{\mu}^{(\text{n})} \big( \rho \b| \lambda \big)}{D_{\nu}^{(\text{n})} \big( \rho \b| \lambda + 2 \eta J \big)} \displaystyle\prod_{k = 1}^J \displaystyle\frac{f(u_k - p_0) f \big( \lambda + 2 \eta (k - 1) \big)}{f (u_k - q_0)} 
\end{flalign}

\noindent is equal to again an analogous product of weights, but where we now replace the $\sigma_J$ weights with the stochastic corrections $C \big( i_1, j_1; i_2, j_2 \b| \lambda \big)$, and also use the weight $W_J \big( 0, J; 0, J \b| v_1, \lambda \big)^{-1} = \prod_{k = 0}^{J - 1} W_1 \big( 0, 1; 0, 1 \b| v_{J - k}, \lambda + 2 \eta k \big)^{-1}$ at the vertex $(0, 1)$; observe here that the elliptic binomial factors appearing in the definition of $\sigma_J$ (see Definition \ref{cstochastic}) multiply to $1$ and can thus be ignored. 

We establish this statement by induction on $|\mu|$ (fixing $J$). We must first verify the initial case, which is when $\mu = (1, 1, \ldots , 1)$ and $\nu$ is empty. By \eqref{bsymmetrics2} and Lemma \ref{drho}, we have that 
\begin{flalign*}
\displaystyle\frac{B_{\mu / \nu}^{(s)} \big( \varpi \b| \lambda \big)}{B_{\mu / \nu} \big( \varpi \b| \lambda \big)} & = \left( \displaystyle\prod_{k = 1}^J \displaystyle\frac{f (u_k - q_0) f \big( \lambda + 2 \eta (k - 1) \big)}{f (u_k - p_0)}  \right) \displaystyle\frac{1}{f (2 \eta)^J} \displaystyle\prod_{k = 0}^{J - 1} \displaystyle\frac{f \big( 2 \eta (\Lambda_1 - k) \big)}{f \big( \lambda + 2 \eta (J + j - \Lambda_0 - \Lambda_1) \big) }. 
\end{flalign*}

\noindent We must verify that this is equal to 
\begin{flalign*}
C \big( 0, J; J, 0 & \b| \lambda, \Lambda_0 \big) \displaystyle\prod_{k = 0}^{J - 1} \displaystyle\frac{1}{W_1 \big( 0, 1; 0, 1 \b| v_{J - k}, \lambda + 2 \eta k \big)} \\
& = \displaystyle\frac{ \big[ 2 \eta \Lambda_1 \big]_J \big[ \lambda + 2 \eta (J - 1 - \Lambda_0) \big]_J }{f (2 \eta)^J  \big[ \lambda + 2 \eta (2 J - 1 - \Lambda_0 - \Lambda_1) \big]_J} \left( \displaystyle\prod_{k = 1}^J \displaystyle\frac{f(u_k - q_0)}{f (u_k - p_0)} \displaystyle\frac{f \big( \lambda + 2 \eta (k - 1) \big) }{f \big( \lambda + 2 \eta (k - 1 - \Lambda_0) \big)} \right), 
\end{flalign*}

\noindent and it is quickly seen that the two are indeed equal. 

To proceed, we now assume that the statement of the proposition is valid for all signatures $\mu \in \Sign_M^+$ and $\nu \in \Sign_N^+$ such that $|\mu| \le S - 1$, for some integer $S > J$. It remains to show that the proposition is valid for any $\mu \in \Sign_M^+$ and $\nu \in \Sign_N^+$ such that $|\mu| = S$. 

There are two cases to consider. The first is when $\mu_1 = \nu_1$, that is the largest part of $\mu$ is equal to the largest element in $\nu$; the second is when $\mu_1 > \nu_1$.	

We will define two signatures $\widetilde{\mu}$ and $\widetilde{\nu}$ depending on these two cases. In the first case, we set $\widetilde{\mu} \in \Sign_{M - 1}^+$ equal to $\mu$ with $\mu_1$ removed, and we set $\widetilde{\nu} \in \Sign_{N - 1}^+$ equal to $\nu$ with $\nu_1$ removed. In the second case, we set $\widetilde{\mu} \in \Sign_M^+$ equal to $\mu$, with $\mu_1$ removed and $\mu_1 - 1$ appended; we also set $\widetilde{\nu} \in \Sign_N^+$ equal to $\widetilde{\nu}$. These two cases are depicted in Figure \ref{twoarrows}. 

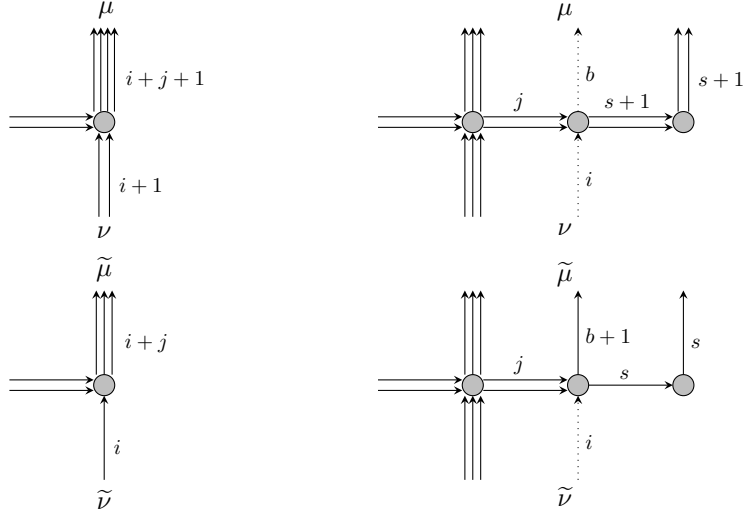
\begin{figure}

\begin{center}

\begin{tikzpicture}[
      >=stealth,
			scale = .7	
			]

			\draw[->, black] (-.195, .2) -- (-.195, 1.8);
			\draw[->, black] (-.065, .2) -- (-.065, 1.8) node[right = 2, above = 0]{$\mu$};
			\draw[->, black] (.065, .2) -- (.065, 1.8) node[scale = .8, below = 25, right = 5]{$i + j + 1$};
			\draw[->, black] (.195, .2) -- (.195, 1.8);
			
			\draw[->, black] (-1.8, -.1) -- (-.2, -.1) node[right =4, below = 34]{$\nu$};
			\draw[->, black] (-1.8, .1) -- (-.2	, .1);
			\draw[->, black] (-.1, -1.8) -- (-.1, -.2) node[scale = .8, below = 25, right = 6]{$i + 1$};
			\draw[->, black] (.1, -1.8) -- (.1, -.2);

			\draw[->, black] (-.15, -4.8) -- (-.15, -3.2) node[scale = .8, below = 25, right = 10]{$i + j$};
			\draw[->, black] (0, -4.8) -- (0, -3.2) node[above = 0]{$\widetilde{\mu}$};
			\draw[->, black] (.15, -4.8) -- (.15, -3.2);
			
			\draw[->, black] (-1.8, -5.1) -- (-.2, -5.1) node[right = 4, below = 34]{$\widetilde{\nu}$};
			\draw[->, black] (-1.8, -4.9) -- (-.2, -4.9) node[scale = .8, below = 32, right = 6]{$i$};
			
			\draw[->, black] (0, -6.8) -- (0, -5.2);

			\draw[->, black] (5.2, -.1) -- (6.8, -.1);
			\draw[->, black] (5.2, .1) -- (6.8, .1);
			\draw[->, black] (7, .2) -- (7, 1.8);
			\draw[->, black] (6.85, .2) -- (6.85, 1.8);
			\draw[->, black] (7.15, .2) -- (7.15, 1.8);
			\draw[->, black] (7, -1.8) -- (7, -.2);
			\draw[->, black] (6.85, -1.8) -- (6.85, -.2);
			\draw[->, black] (7.15, -1.8) -- (7.15, -.2);
			
			\draw[->, black] (7.2, -.1) -- (8.8, -.1); 
			\draw[->, black] (9.2, -.1) -- (10.8, -.1); 
			\draw[->, black] (11.1, .2) -- (11.1, 1.8);
			\draw[->, black, dotted] (9, .2) -- (9, 1.8) node[scale = .8, below = 22, right = 0]{$b$}; 
			\draw[->, black, dotted] (9, -1.8) -- (9, -.2) node[scale = .8, below = 22, right = 0]{$i$}; 
			\draw[->, black] (7.2, .1) -- (8.8, .1) node[scale = .8, above = 7, left = 17]{$j$}; 
			\draw[->, black] (9.2, .1) -- (10.8, .1) node[scale = .8, above = 7, left = 7]{$s + 1$}; 
			\draw[->, black] (10.9, .2) -- (10.9, 1.8) node[scale = .8, below = 25, right = 6]{$s + 1$}; 
			
			\draw[->, black] (5.2, -5.1) -- (6.8, -5.1);
			\draw[->, black] (5.2, -4.9) -- (6.8, -4.9);
			\draw[->, black] (7, -4.8) -- (7, -3.2);
			\draw[->, black] (6.85, -4.8) -- (6.85, -3.2);
			\draw[->, black] (7.15, -4.8) -- (7.15, -3.2);
			\draw[->, black] (7, -6.8) -- (7, -5.2);
			\draw[->, black] (6.85, -6.8) -- (6.85, -5.2);
			\draw[->, black] (7.15, -6.8) -- (7.15, -5.2);
			
			\draw[->, black] (7.2, -5.1) -- (8.8, -5.1);  
			\draw[->, black] (9, -4.8) -- (9, -3.2) node[left = 5, above = -2]{$\widetilde{\mu}$} node[scale = .8, below = 22, right = 0]{$b + 1$}; 
			\draw[->, black] (7.2, -4.9) -- (8.8, -4.9) node[scale = .8, above = 7, left = 17]{$j$}; 
			\draw[->, black, dotted] (9, -6.8) -- (9, -5.2) node[scale = .8, below = 22, right = 0]{$i$}; 
			\draw[->, black] (9.2, -5) -- (10.8, -5) node[scale = .8, above = 5, left = 17]{$s$}; 
			\draw[->, black] (11, -4.8) -- (11, -3.2) node[scale = .8, below = 25, right = 0]{$s$};

			\filldraw[fill=gray!50!white, draw=black] (0, 0) circle [radius=.2];
			\filldraw[fill=gray!50!white, draw=black] (0, -5) circle [radius=.2];
			
			\filldraw[fill=gray!50!white, draw=black] (7, 0) circle [radius=.2];
			\filldraw[fill=gray!50!white, draw=black] (9, 0) circle [radius=.2]  node[left = 5, above = 35] {$\mu$} node[left = 5, below = 35]{$\nu$};
			\filldraw[fill=gray!50!white, draw=black] (11, 0) circle [radius=.2];
			
			\filldraw[fill=gray!50!white, draw=black] (7, -5) circle [radius=.2];
			\filldraw[fill=gray!50!white, draw=black] (9, -5) circle [radius=.2] node[left = 5, below = 35]{$\widetilde{\nu}$};
			\filldraw[fill=gray!50!white, draw=black] (11, -5) circle [radius=.2];

\end{tikzpicture}

\end{center}

\caption{ \label{twoarrows} Depicted above are the arrow configurations of the rightmost vertices in the path ensembles corresponding to $\mu / \nu$ and $\widetilde{\mu} / \widetilde{\nu}$. The top left and bottom left diagrams correspond to the case $\mu_1 = \nu_1$, and the top right and bottom right diagrams correspond to the case $\mu_1 > \nu_1$. }
\end{figure}

Observe in either case that $|\widetilde{\mu}| < |\mu| = S$; thus, the statement of the proposition applies to the pair $(\widetilde{\mu}, \widetilde{\nu})$. This implies that we can establish the proposition for $(\mu, \nu)$ by comparing to $(\widetilde{\mu}, \widetilde{\nu})$. 

In particular, it suffices to show that the ratio of the products of weights (given by the stochastic corrections $C_J$) corresponding to the vertex models for $\mu / \nu$ and $\widetilde{\mu} / \widetilde{\nu}$ is equal to 
\begin{flalign}
\label{bstochasticb}
\begin{aligned}
\displaystyle\frac{B_{\mu / \nu}^{(s)} \big( \varpi \b| \lambda \big)}{B_{\mu / \nu} \big( \varpi \b| \lambda \big)} \displaystyle\frac{B_{\widetilde{\mu} / \widetilde{\nu} } \big( \varpi \b| \lambda \big)}{B_{\widetilde{\mu} / \widetilde{\nu}}^{(s)} \big( \varpi \b| \lambda \big)} = \displaystyle\frac{D_{\mu}^{(\text{n})} \big( \rho \b| \lambda \big)}{D_{\nu}^{(\text{n})} \big( \rho \b| \lambda + 2 \eta J \big)} \displaystyle\frac{D_{\widetilde{\nu}}^{(\text{n})} \big( \rho \b| \lambda + 2 \eta J \big)}{D_{\widetilde{\mu}}^{(\text{n})} \big( \rho \b| \lambda \big)} = \displaystyle\frac{c_{\nu} \big( \rho \b| \lambda + 2 \eta J \big)}{c_{\mu} \big( \rho \b| \lambda \big)} \displaystyle\frac{c_{\widetilde{\mu}} \big( \rho \b| \lambda \big)}{c_{\widetilde{\nu}} \big( \rho \b| \lambda + 2 \eta J \big)},
\end{aligned} 
\end{flalign}

\noindent where we have used \eqref{bsymmetrics2} and Lemma \ref{drho}. 

To do this, we analyze each of the two cases individually. Let us address the former first. 

Thus assume that $\mu_1 = \nu_1 = R$, and denote $m_R (\mu) = i + j + 1$ and $m_R (\nu) = i + 1$; then, $m_R (\widetilde{\mu}) = i + j$, and $m_R (\widetilde{\nu}) = i$. On the left side of Figure \ref{twoarrows}, we have that $i = 1$ and $j = 2$. Denote $\widehat{\lambda} = \lambda + 2 \eta \big( 2 m_{[0, R)} - \Lambda_{[0, R)} \big)$ to be the value of the dynamical parameter at the vertex $(R, 1)$. 

Since $\mu$ and $\widetilde{\mu}$, and $\nu$ and $\widetilde{\nu}$ only differ in their value of $m_R$, the partition function (with respect to the weights $C_J$) of the vertex model associated with the shape $\mu / \nu$ divided by analogous partition function associated with $\widetilde{\mu} / \widetilde{\nu}$ is equal to 
\begin{flalign}
\label{cij1}
\begin{aligned}
& \displaystyle\frac{C \big( i + 1, j; i + j + 1, 0 \b| \widehat{\lambda} \big)}{C \big( i, j; i + j, 0 \b| \widehat{\lambda} \big)} \\
& \qquad \quad = \displaystyle\frac{f \big( \widehat{\lambda} + 2 \eta (2i + 2j - J + 1 - \Lambda_R) \big) f \big( \widehat{\lambda} + 2 \eta (2i + 2j - J - \Lambda_R) \big) }{f \big( \widehat{\lambda} + 2 \eta (2i + 2j + 1 - \Lambda_R) \big) f \big( \widehat{\lambda} + 2 \eta (2i + 2j - \Lambda_R) \big)} \displaystyle\frac{f \big( 2 \eta (\Lambda_R - i - j) \big) }{ f \big( 2 \eta (\Lambda_R - i) \big)} \\
& \qquad \qquad \quad \times \displaystyle\frac{f \big( \widehat{\lambda} + 2 \eta (i + 2j - J + 1) \big) f \big( \widehat{\lambda} + 2 \eta (i + j - \Lambda_R) \big)}{f \big( \widehat{\lambda} + 2 \eta (i + 2j - J - \Lambda_R) \big) f \big( \widehat{\lambda} + 2 \eta (i + j + 1) \big) }, 
\end{aligned}
\end{flalign}

\noindent which quickly follows from Definition \ref{cstochastic} for the stochastic corrections $C_J$ and the identities \eqref{basichypergeometric1}. 

We must equate \eqref{cij1} with the right side of \eqref{bstochasticb}. To that end, observe that $m_{[0, R)} (\mu) = m_{[0, R)} (\widetilde{\mu}) = m_{[0, R)} (\nu) + J - j = m_{[0, R)} (\widetilde{\nu}) + J - j$. Inserting this into the definition \eqref{cmu}, we find  
\begin{flalign*}
\displaystyle\frac{c_{\widetilde{\mu}} (\lambda) c_{\nu} (\lambda + 2 \eta J) }{c_{\mu} (\lambda) c_{\widetilde{\nu}} (\lambda + 2 \eta J) } & = \displaystyle\prod_{k = 0}^{i + j - 1} \displaystyle\frac{f \big( \widehat{\lambda} + 2 \eta ( i + j + k - \Lambda_R ) \big) f \big( \widehat{\lambda} + 2 \eta ( k + 1 ) \big)}{f \big( 2 \eta (\Lambda_R - k) \big)} \\
& \qquad \times \displaystyle\prod_{k = 0}^i \displaystyle\frac{f \big( \widehat{\lambda} + 2 \eta ( 2j - J + i + k + 1 - \Lambda_R ) \big) f \big( \widehat{\lambda} + 2 \eta ( 2j - J + k + 1) \big)}{f \big( 2 \eta (\Lambda_R - k) \big)}  	\\
& \qquad \times \left( \displaystyle\prod_{k = 0}^{i + j} \displaystyle\frac{f \big( \widehat{\lambda} + 2 \eta ( i + j + k + 1 - \Lambda_R ) \big) f \big( \widehat{\lambda} + 2 \eta ( k + 1 ) \big)}{f \big( 2 \eta (\Lambda_R - k) \big)} \right)^{-1} \\
& \qquad \times \left( \displaystyle\prod_{k = 0}^{i - 1} \displaystyle\frac{f \big( \widehat{\lambda} + 2 \eta ( 2j - J + i + k - \Lambda_R ) \big) f \big( \widehat{\lambda} + 2 \eta ( 2j - J + k + 1) \big)}{f \big( 2 \eta (\Lambda_R - k) \big)}\right)^{-1}, 
\end{flalign*}

\noindent which is quickly verified to be equal to the right side of \eqref{cij1}. This concludes the proof of the proposition in the first case. 

We now proceed to the second case. Let $\mu_1 = R + 1$, and assume that $m_{R + 1} (\nu) = 0$. Denote $s + 1 = m_{R + 1} (\mu)$; $b = m_R (\mu)$; $i = m_R (\nu)$; $\Lambda = \Lambda_R$; and $\widetilde{\Lambda} = \Lambda_{R + 1}$. Further let $j$ denote the number of horizontal incoming arrows at the vertex $(1, R)$ in the diagram corresponding to $\mu / \nu$. In Figure \ref{twoarrows}, we have that $s = 1$, $i = b = 0$, and $j = 2$. 

Observe that $m_{R + 1} (\widetilde{\mu}) = s$, that $m_R (\widetilde{\mu}) = b + 1$, and that $i + j = s + b + 1$. Let $\widehat{\lambda} = \lambda + 2 m_{[0, R)} (\mu) - \Lambda_{[0, R)}$ denote the value of the dynamical parameter at the vertex $(R, 1)$ in the single-row vertex model corresponding to the signature $\mu$ (or $\widetilde{\mu}$).

In view of the fact that $m_R (\mu) = b$ and $m_R (\widetilde{\mu}) = b + 1$, the value of the dynamical parameter at $R + 1$ in the single-row vertex models corresponding to $\mu$ and $\widetilde{\mu}$ are $\widehat{\lambda} + 2 \eta (2b - \Lambda_R)$ and $\widehat{\lambda} + 2 \eta (2b + 2 - \Lambda_R)$, respectively. Since $\mu / \nu$ and $\widetilde{\mu} / \widetilde{\nu}$ also coincide at all vertices to the left of $R$, the partition function (with respect to the stochastic corrections $C_J$) of the single-row vertex model corresponding to $\mu / \nu$ divided by the analogous partition function of the vertex model corresponding to $\widetilde{\mu} / \widetilde{\nu}$ is equal to 
\begin{flalign} 
\label{cijcmu2}
\displaystyle\frac{C \big( a, j; b, s + 1 \b| \widehat{\lambda} \big)}{C \big( a, j; b + 1, s \b| \widehat{\lambda} \big) } \displaystyle\frac{C \big( 0, s + 1; s + 1, 0 \b| \widehat{\lambda} + 2 \eta (2b - \Lambda_R) \big)}{C \big( 0, s; s, 0 \b| \widehat{\lambda} + 2 \eta (2b + 2 - \Lambda_R) \big)};
\end{flalign}

\noindent it suffices to equate \eqref{cijcmu2} with the right side of \eqref{bstochasticb}. Similar to before, we establish this fact by explicitly evaluating \eqref{cijcmu2} and \eqref{bstochasticb}. Let us begin with the former. 

To that end, one quickly verifies using Defintion \ref{cstochastic} for the $C_J$ that 
\begin{flalign}
\label{cij2}
\displaystyle\frac{C \big( 0, s + 1; s + 1, 0 \b| \widehat{\lambda} + 2 \eta (2b - \Lambda_R) \big)}{C \big( 0, s; s, 0 \b| \widehat{\lambda} + 2 \eta (2b + 2 - \Lambda_R) \big)} & = \displaystyle\frac{f \big( 2 \eta (\Lambda_{R + 1} - s) \big)}{f (2 \eta)} \displaystyle\frac{ f \big( \widehat{\lambda} + 2 \eta (2b + s + 1 - J - \Lambda_R) \big)}{f \big( \widehat{\lambda} + 2 \eta (2b + s + 1 - \Lambda_R - \Lambda_{R + 1}) \big)},
\end{flalign}

\noindent and
\begin{flalign}
\label{cij3}
\begin{aligned}
\displaystyle\frac{C \big( a, j; b, s + 1 \b| \widehat{\lambda} \big)}{C \big( a, j; b + 1, s \b| \widehat{\lambda} \big)} & = \displaystyle\frac{f \big( \widehat{\lambda} + 2 \eta (2a + 2j - 2s - 2 - \Lambda_R) \big) f \big( \widehat{\lambda} + 2 \eta (2a + 2j - s - \Lambda_R) \big) }{ f \big( \widehat{\lambda} + 2 \eta (2a + 2j - s - J - 1 - \Lambda_R) \big) f \big( \widehat{\lambda} + 2 \eta (2a + 2j - 2s - \Lambda_R) \big)} \\
& \quad \times \displaystyle\frac{f (2 \eta)}{f \big( 2 \eta (\Lambda_R - b) \big)} \displaystyle\frac{ f \big( \widehat{\lambda} + 2 \eta (a + j - s) \big) }{ f \big( \widehat{\lambda} + 2 \eta (a + j - s - 1 - \Lambda_R) \big) }. 
\end{aligned}
\end{flalign}

\noindent Multiplying \eqref{cij2} and \eqref{cij3} yields \eqref{cijcmu2}. 

To evaluate the right side of \eqref{bstochasticb}, we recall that $\nu = \widetilde{\nu}$ and the $\mu$ and $\widetilde{\mu}$ coincide everywhere except $R$ and $R + 1$. Thus, applying the definition \eqref{cmu} of $c_{\mu}$ yields 
\begin{flalign*}
& \displaystyle\frac{c_{\widetilde{\mu}} (\lambda) c_{\nu} (\lambda + 2 \eta J) }{c_{\mu} (\lambda) c_{\widetilde{\nu}} (\lambda + 2 \eta J) } \\
& \qquad = \displaystyle\prod_{k = 0}^b \displaystyle\frac{f \big( \widehat{\lambda} + 2 \eta (b + k + 1 - \Lambda_R \big) f \big( \widehat{\lambda} + 2 \eta (k + 1) \big)}{f \big( 2 \eta (\Lambda_R - k) \big)} \\ 
& \qquad \qquad \times \displaystyle\prod_{k = 0}^{s - 1} \displaystyle\frac{f \big( \widehat{\lambda} + 2 \eta (2b + k + s + 2 - \Lambda_R - \Lambda_{R + 1}) \big) f \big( \widehat{\lambda} + 2 \eta (2b + k + 3 - \Lambda_R \big)}{f \big( 2 \eta (\Lambda_{R + 1} - k ) \big)}\\
& \qquad \qquad \times \left( \displaystyle\prod_{k = 0}^{b - 1} \displaystyle\frac{f \big( \widehat{\lambda} + 2 \eta (b + k - \Lambda_R \big) f \big( \widehat{\lambda} + 2 \eta (k + 1) \big)}{f \big( 2 \eta (\Lambda_R - k) \big)}  \right)^{-1} \\ 
& \qquad \qquad \times \left( \displaystyle\prod_{k = 0}^s \displaystyle\frac{f \big( \widehat{\lambda} + 2 \eta (2b + k + s + 1 - \Lambda_R - \Lambda_{R + 1}) \big) f \big( \widehat{\lambda} + 2 \eta (2b + k + 1 - \Lambda_R \big)}{f \big( 2 \eta (\Lambda_{R + 1} - k) \big)} \right)^{-1}. 
\end{flalign*}

\noindent Using \eqref{cij2}, \eqref{cij3}, \eqref{basichypergeometric1}, and the fact that $i + j = s + b + 1$, it is quickly seen that the above quantity is equal to \eqref{cijcmu2}. This verifies the second case and thus completes the proof of the proposition. 
\end{proof}

\subsection{Reparameterization of Weights} 

\label{DynamicalParticles}	

In Definition \ref{cstochastic}, the stochastic weights $\sigma_J$ are written in terms of the parameters $\lambda, \eta \in \mathbb{C}$ and $W, Z, L \subset \mathbb{C}$. It will be useful for us to reparameterize these weights in terms of the parameters $q, \kappa \in \mathbb{C}$ and $A, B, U, S \subset \mathbb{C}$ from Section \ref{ModelGeneral} and Section \ref{LambdaSpecializationModelGeneral}. Lemma \ref{dynamicalparticlesvlambda} does this in the $v = - \eta \Lambda$ setting, in which case they reparameterize to the $\varphi \big( j \b| i \big)$ stochastic weights for the dynamical $q$-boson model given by Definition \ref{stochasticparticlesjumps}; Lemma \ref{lpsiweights} does this for the general weights, in which case they reparameterize to the $\psi$ weights for the dynamical stochastic higher spin vertex model given by Definition \ref{psiweightdefinition}; and Lemma \ref{wjjju} does this for weights of the form $\sigma_J (i, J; i', j')$, which will be useful in Section \ref{PsiStochasticity}.

We begin with the $v = - \eta \Lambda$ case.

\begin{lem}

\label{dynamicalparticlesvlambda}

Let $i_1, j_1 \in \mathbb{Z}_{\ge 0}$, $J \in \mathbb{Z}_{\ge 1}$, and $\eta, w, z, \Lambda, \lambda, \tau \in \mathbb{C}$; denote $v = z + \eta - w$. Under the trigonometric degeneration $\tau \rightarrow \textbf{\emph{i}} \infty$ and the further specialization
\begin{flalign*}
q = e^{- 4 \pi \textbf{\emph{i}} \eta}; \quad v = w - z - \eta = - \eta \Lambda; \quad s = e^{2 \pi \textbf{\emph{i}} \eta \Lambda}; \quad \varkappa = e^{2 \pi \textbf{\emph{i}} \lambda}; \quad \widetilde{\varkappa} = q^{J - 2j_1} \varkappa,
\end{flalign*}

\noindent we have that $\sigma_J \big( i_1, j_1; i_2, j_2) = \varphi_{q, a, b, \kappa} \big( j_1 \b| i_2 \big) \textbf{1}_{i_1 + j_1 = i_2 + j_2}$, where $a = s^2 q^J$, $b = s^2$, and $\kappa = \widetilde{\varkappa}^{-1} = q^{2j_1 - J} \varkappa^{-1}$. 
\end{lem}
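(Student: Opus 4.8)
The plan is to combine the factored evaluation of $W_J$ at $v=-\eta\Lambda$ (Theorem \ref{vlambdaspecialization}) with the explicit stochastic correction $C_J$ (Definition \ref{cstochastic}), simplify as far as possible while still in the elliptic regime, and only then pass to the trigonometric limit and rewrite everything in terms of $q$-Pochhammer symbols. Since $W_J\big(i_1,j_1;i_2,j_2 \b| v,\lambda\big)=0$ unless $i_1+j_1=i_2+j_2$ by \eqref{wjlargesmallj}, the indicator $\textbf{1}_{i_1+j_1=i_2+j_2}$ is automatic, so throughout I would assume arrow conservation holds and use it freely to rewrite indices.

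First I would substitute Theorem \ref{vlambdaspecialization} into \eqref{sigmajdefinition}. The binomial factor $\frac{[2\eta J]_J}{[2\eta j_1]_{j_1}[2\eta(J-j_1)]_{J-j_1}}$ appearing in $W_J$ cancels against the inverse binomial in the definition of $\sigma_J$, leaving only the $j_2$-binomial $\frac{[2\eta J]_J}{[2\eta j_2]_{j_2}[2\eta(J-j_2)]_{J-j_2}}$. Multiplying by $C_J$, the ratio $\frac{[2\eta\Lambda]_{i_1}}{[2\eta\Lambda]_{i_2}}$ from $W_J$ cancels $\frac{[2\eta\Lambda]_{i_2}}{[2\eta\Lambda]_{i_1}}$ from $C_J$, and the factors $[\lambda+2\eta j_1]_{J-j_1}$, $[\lambda+2\eta(2j_1-J-1)]_{j_1}$ cancel between $C_J$'s numerator and $W_J$'s denominator. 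Using conservation to rewrite $i_1+j_1-j_2=i_2$, the factor $[\lambda+2\eta(i_1+j_1-j_2)]_{J-j_1}$ in $C_J$ becomes $[\lambda+2\eta i_2]_{J-j_1}$ and cancels the corresponding numerator of $W_J$. The upshot is a clean expression for $\sigma_J$ as the $j_2$-binomial times the $2\eta$-only ratio $\frac{[2\eta i_1]_{j_2}[2\eta(\Lambda-i_1)]_{J-j_2}}{[2\eta\Lambda]_J}$ times a surviving ratio of $\lambda$-shifted Pochhammers; the overall scalar $f(2\eta)^{i_2-i_1+j_2-j_1}$ equals $f(2\eta)^0=1$ by conservation.

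Next I would take the trigonometric limit $\tau\to\textbf{i}\infty$, so $f(z)=\sin(\pi z)$, and convert every surviving elliptic Pochhammer into a $q$-Pochhammer. The key identity, writing $\alpha=e^{2\pi\textbf{i}a}$ and using $e^{-2\pi\textbf{i}\eta}=q^{1/2}$, is
\begin{flalign*}
[a]_k=\prod_{j=0}^{k-1}\sin\big(\pi(a-2\eta j)\big)=\left(\frac{\textbf{i}}{2}\right)^k e^{-\pi\textbf{i}ak}\,q^{-\binom{k}{2}/2}\,(\alpha;q)_k.
\end{flalign*}
Under the specialization $s^2=q^{-\Lambda}$, $\varkappa=e^{2\pi\textbf{i}\lambda}$, the arguments convert as $[2\eta m]_k\to(q^{-m};q)_k$, $[2\eta\Lambda]_k\to(s^2;q)_k=(b;q)_k$, and $[\lambda+2\eta m]_k\to(\varkappa q^{-m};q)_k$, while $a=s^2q^J$ and $b/a=q^{-J}$. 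Since $\varphi$ is written in terms of $\kappa=q^{2j_1-J}\varkappa^{-1}=\widetilde\varkappa^{-1}$ rather than $\varkappa$, I would then flip each $\varkappa$-Pochhammer into a $\kappa$-Pochhammer using the reflection identities in \eqref{basichypergeometric1} (the $(q^{-k}a;q)_m$ and $(a;q)_{m-k}$ relations). Finally, the indicator $\textbf{1}_{i_1\ge j_2}$ emerging from Theorem \ref{vlambdaspecialization} equals $\textbf{1}_{j_1\le i_2}$ by conservation, matching the support of $\varphi_{q,a,b,\kappa}(j_1\b| i_2)$.

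The main obstacle will be the bookkeeping of the scalar prefactors. The conversion attaches to each Pochhammer a factor $(\textbf{i}/2)^k e^{-\pi\textbf{i}ak}q^{-\binom{k}{2}/2}$, and the reflection identities contribute further $(-1)^k$ and half-integer powers of $q$; I expect the $(\textbf{i}/2)$ and sign factors to cancel by conservation (the total subscript count of numerator Pochhammers matching that of the denominator), while the accumulated powers of $e^{-\pi\textbf{i}a}$ and $q$ must be collected and shown to assemble into exactly the prefactor $a^{j_1}$ of $\varphi_{q,a,b,\kappa}(j_1\b| i_2)$. Verifying that these exponents add up correctly — as opposed to the term-by-term matching of the $q$-Pochhammer arguments, which is routine once they have been identified — is where I would concentrate the care, with repeated use of $i_1+j_1=i_2+j_2$ being what makes the cancellations close.
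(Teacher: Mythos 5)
Your proposal is correct and takes essentially the same route as the paper: the paper's proof likewise substitutes Theorem \ref{vlambdaspecialization} and Definition \ref{cstochastic} into \eqref{sigmajdefinition}, performs exactly the cancellations you describe to reach the factored elliptic expression \eqref{ljetalambdav} (with $\widetilde{\lambda} = \lambda + 2\eta(2j_1 - J)$), and then converts via the trigonometric identity \eqref{akhypergeometric} and the $q$-Pochhammer identities \eqref{basichypergeometric1} to match Definition \ref{stochasticparticlesjumps}. One caveat: the computation (yours and the paper's alike) actually produces $\varphi_{q,a,b,\kappa}\big( j_2 \b| i_1 \big)$, consistent with \eqref{pndynamicalstochasticspin2}, so your closing claims about matching the support and the prefactor $a^{j_1}$ of $\varphi\big( j_1 \b| i_2 \big)$ inherit what appears to be a typo in the lemma statement --- under arrow conservation the indicators $\textbf{1}_{i_1 \ge j_2}$ and $\textbf{1}_{j_1 \le i_2}$ do agree, but $\varphi\big( j_1 \b| i_2 \big)$ and $\varphi\big( j_2 \b| i_1 \big)$ do not coincide in general, so the final matching must be done against the latter.
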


\begin{proof}

Denote $\widetilde{\lambda} = \lambda + 2 \eta (2 j_1 - J)$; observe that $e^{2 \pi \textbf{i} \widetilde{\lambda}} = \widetilde{\varkappa}$. Then, one quickly verifies using Theorem \ref{vlambdaspecialization} and Definition \ref{cstochastic} that 
\begin{flalign}
\label{ljetalambdav}
\begin{aligned}
\sigma_J \big( i_1, j_1; i_2, j_2 \b| -\eta \Lambda, \lambda \big) & = \displaystyle\frac{\big[ 2 \eta J \big]_J}{\big[ 2 \eta j_2 \big]_{j_2} \big[ 2 \eta (J - j_2) \big]_{J - j_2} } \displaystyle\frac{\big[ 2 \eta i_1 \big]_{j_2} \big[ 2 \eta (\Lambda - i_1) \big]_{J - j_2}}{\big[ 2 \eta \Lambda \big]_J} \\
& \qquad \times  \displaystyle\frac{\big[ \widetilde{\lambda} + 2 \eta i_1 \big]_{j_2} \big[ \widetilde{\lambda} + 2 \eta (i_1 + J - j_2 - 1 - \Lambda) \big]_{J - j_2} }{\big[ \widetilde{\lambda} + 2 \eta (2 i_1 + J - j_2 - \Lambda) \big]_{j_2} \big[ \widetilde{\lambda} + 2 \eta (2 i_1 + J - 2 j_2 - 1 - \Lambda) \big]_{J - j_2}}. 
\end{aligned}
\end{flalign}

\noindent Now, under the trigonometric degeneration $\tau \rightarrow \textbf{i} \infty$, we have that
\begin{flalign*}
f (a) = \displaystyle\frac{e^{\pi \textbf{i} a} - e^{- \pi \textbf{i} a}}{2 \textbf{i}} = \displaystyle\frac{\textbf{i}}{2 A^{1 / 2}} \big( 1 - A \big) = \displaystyle\frac{A^{1 / 2}}{2 \textbf{i}} \big( 1 - A^{-1} \big), \quad \text{where $A = e^{2 \pi \textbf{i} a}$,}
\end{flalign*}

\noindent for any $a \in \mathbb{C}$. This gives 
\begin{flalign}
\label{akhypergeometric}
[a]_k = \left( \displaystyle\frac{\textbf{i}}{2 A^{1 / 2}} \right)^k q^{- \binom{k}{2} / 2} \big( A; q \big)_k = \left( \displaystyle\frac{A^{1 / 2}}{2 \textbf{i}} \right)^k q^{\binom{k}{2} / 2} \big( q^{1 - k} A^{-1}; q \big)_k,
\end{flalign}

\noindent for any $a \in \mathbb{C}$ and $k \in \mathbb{Z}$. 

Applying \eqref{akhypergeometric} in the expression for $\sigma_J$ above yields
\begin{flalign*}
& \sigma_J \big( i_1, j_1; i_2, j_2 \b| - \eta \Lambda, v \big) \\
& = (s^2 q^J)^{j_2} \displaystyle\frac{(q^{-J}; q)_{j_2}}{(q; q)_{j_2}} \displaystyle\frac{(q^{i_1 - j_2 + 1}; q)_{j_2} (s^2 q^{i_1}; q)_{J - j_2}}{(s^2; q)_J} \displaystyle\frac{(q^{i_1 - j_2 + 1} / \widetilde{\varkappa}; q)_{j_2} (s^2 q^{i_1} / \widetilde{\varkappa}; q)_{J - j_2}}{(s^2 q^{2i_1 + J - 2j_2 + 1} / \widetilde{\varkappa}; q)_{j_2} (s^2 q^{2i_1 - j_2} / \widetilde{\varkappa}; q)_{J - j_2}} \\
& = (s^2 q^J)^{j_2} \displaystyle\frac{\textbf{1}_{j_2 \le i_1} (q; q)_{i_1}}{(q; q)_{i_1 - j_2} (q; q)_{j_2}}  \displaystyle\frac{ \ (q^{-J}; q)_{j_2} (s^2 q^J; q)_{i_1 - j_2} }{ (s^2; q)_{i_1} } \displaystyle\frac{(s^2 q^{i_1} \kappa; q)_{i_1 - j_2} (q^{i_1 - j_2 + 1} \kappa; q)_{j_2}}{(s^2 q^{i_1 + J - j_2} \kappa; q)_{i_1 - j_2} (s^2 q^{2i + J - 2j_2 + 1} \kappa; q)_{j_2}},
\end{flalign*}

\noindent where to deduce the second equality we used the first identity in \eqref{basichypergeometric1}. From Definition \ref{stochasticparticlesjumps} and the facts that $a = s^2 q^J$ and $b = s^2$, it follows that the above expression is equal to $\varphi_{q, a, b, \kappa} \big( j \b| i \big)$, from which we deduce the proposition. 
\end{proof} 

The following lemma provides the reparameterization for the general weights.

\begin{lem}
	
	\label{lpsiweights} 
	
	Let $i_1, j_1 \in \mathbb{Z}_{\ge 0}$, $J \in \mathbb{Z}_{\ge 1}$, and $\eta, w, z, \Lambda, \lambda, \tau \in \mathbb{C}$; denote $v = z + \eta - w$. Under the trigonometric degeneration $\tau \rightarrow \textbf{\emph{i}} \infty$ and the further specialization
	\begin{flalign}
	\label{qxiuskappa}
	q = e^{- 4 \pi \textbf{\emph{i}} \eta}; \quad \xi = e^{2 \pi \textbf{\emph{i}} z}; \quad u = e^{2 \pi \textbf{\emph{i}} (\eta - w)}; \quad s = e^{2 \pi \textbf{\emph{i}} \eta \Lambda}; \quad \varkappa = e^{2 \pi \textbf{\emph{i}} \lambda}; \quad \widetilde{\varkappa} = q^{J - 2j_1} \varkappa; \quad \kappa = \widetilde{\varkappa}^{-1},
	\end{flalign}
	
	\noindent we have that $\sigma_J \big( i_1, j_1; i_2, j_2 \b| v, \lambda \big) = \psi_{u; s; q; J} \big( i_1, j_1; i_2, j_2 \b| \kappa \big)$. 
\end{lem}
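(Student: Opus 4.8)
The plan is to prove the identity by direct comparison of the two closed-form expressions. By Definition \ref{cstochastic}, the weight $\sigma_J\big(i_1,j_1;i_2,j_2\b|v,\lambda\big)$ is the product of the stochastic correction $C\big(i_1,j_1;i_2,j_2\b|\lambda\big)$, the fused weight $W_J\big(i_1,j_1;i_2,j_2\b|v,\lambda\big)$, and the elliptic binomial normalizations in \eqref{sigmajdefinition}. Theorem \ref{fusedweighthypergeometric} evaluates $W_J$ as an explicit ratio of elliptic Pochhammer symbols $[\,\cdot\,]_k$ times a very well-poised, balanced $_{12}v_{11}$ elliptic hypergeometric series \eqref{elliptichypergeometricpoised}. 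Substituting this into \eqref{sigmajdefinition} writes $\sigma_J$ entirely in terms of $f$, the symbols $[\,\cdot\,]_k$, and the $_{12}v_{11}$ series. On the other side, Definition \ref{psiweightdefinition} writes $\psi_{u;s;q;J}$ as a ratio of $q$-Pochhammer symbols times a very well-poised $_{10}W_9$ basic hypergeometric series \eqref{poisedhypergeometric}. Thus the lemma reduces to showing that the trigonometric degeneration of the elliptic formula, under the specialization \eqref{qxiuskappa}, coincides term-by-term with the basic formula; this is exactly the strategy used for Lemma \ref{dynamicalparticlesvlambda}, except that we can no longer invoke the factored evaluation of Theorem \ref{vlambdaspecialization} and must instead degenerate the full $_{12}v_{11}$ series.

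First I would degenerate every elliptic Pochhammer symbol. The key tool is the limit formula \eqref{akhypergeometric}, already established in the proof of Lemma \ref{dynamicalparticlesvlambda}, which expresses $[a]_k$ as a half-integer power of $A=e^{2\pi\textbf{\emph{i}}a}$ times a power $q^{\pm\binom{k}{2}/2}$ times a $q$-Pochhammer symbol $(A;q)_k$ or $(q^{1-k}A^{-1};q)_k$. Under the specialization \eqref{qxiuskappa} one has $e^{4\pi\textbf{\emph{i}}\eta}=q^{-1}$ and $s^2=q^{-\Lambda}$, so each elliptic argument $a$ appearing in $C$, in the prefactor of Theorem \ref{fusedweighthypergeometric}, and in \eqref{sigmajdefinition} exponentiates to an explicit monomial $A=e^{2\pi\textbf{\emph{i}}a}$ in $q$, $u$, $s$, and $\varkappa$; in particular $e^{2\pi\textbf{\emph{i}}v}=\xi u$ furnishes the spectral combination entering $\psi$, and $e^{2\pi\textbf{\emph{i}}(\lambda+2\eta(2j_1-J))}=\widetilde{\varkappa}=\kappa^{-1}$ identifies the dynamical parameter. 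Applying \eqref{akhypergeometric} throughout converts the entire prefactor of $\sigma_J$ into a ratio of $q$-Pochhammer symbols, up to explicit powers of $q$ and of the monomials above.

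Next I would degenerate the series itself. In the trigonometric limit the very well-poised factor $f(a_1-4\eta k)/f(a_1)$ becomes $q^{-k}(1-A_1q^{2k})/(1-A_1)$, where $A_1=e^{2\pi\textbf{\emph{i}}a_1}$ and the stray $q^{-k}$ is absorbed into the summation variable; likewise each factor $[a_j]_k/[a_1-a_j-2\eta]_k$ degenerates by \eqref{akhypergeometric} into $(A_j;q)_k/(qA_1/A_j;q)_k$ up to a power of $q^k$. Since the $_{12}v_{11}$ has the seven free parameters $a_6,\dots,a_{12}$ and the $_{10}W_9$ has the seven free parameters following its first argument, the limiting series is precisely a $_{10}W_9$. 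One then checks that the images $A_1,A_6,\dots,A_{12}$ coincide with the arguments of the $_{10}W_9$ in \eqref{psistochastic}: for instance $a_1=\lambda+2\eta(2j_1+j_2-J)$ gives $A_1=\varkappa q^{J-2j_1-j_2}=q^{-j_2}\kappa^{-1}$, while $a_6=2\eta j_1$, $a_7=2\eta j_2$, and $a_8=\lambda+2\eta j_1$ give $q^{-j_1}$, $q^{-j_2}$, and $q^{j_1-J}\kappa^{-1}$, matching the first three of the seven arguments $q^{-j_1},q^{-j_2},q^{j_1-J}\kappa^{-1},q^{1-i_1}s^{-2}\kappa^{-1},su^{-1}q^{i_1-j_2+1},usq^{i_1-j_2+J},q^{-i_1}\kappa^{-1}$; the remaining four are verified identically. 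Termination of both series (an argument equal to a nonpositive power of $q$, respectively $2\eta$ times a nonnegative integer) guarantees convergence and legitimizes the term-by-term comparison.

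The main obstacle will be the bookkeeping of the prefactors rather than any conceptual point. The half-integer powers $A^{\pm1/2}$ and the balancing factors $q^{\pm\binom{k}{2}/2}$ produced by \eqref{akhypergeometric} must be shown to cancel globally, so that the degenerated prefactor of $\sigma_J$ collapses to exactly the rational and $q$-Pochhammer prefactor of $\psi$ in \eqref{psistochastic}, including the leading factor $q^{(j_2-i_1)J}(u/s)^{j_1}$. I would organize this by grouping the degenerated symbols according to their source in Definition \ref{cstochastic}, in the prefactor of Theorem \ref{fusedweighthypergeometric}, and in the binomial normalizations of \eqref{sigmajdefinition}, and then repeatedly applying the $q$-Pochhammer identities \eqref{basichypergeometric1} (the splitting rule, the reflection $(q^{-k}a;q)_m$ formula, and the $(a;q)_{m-k}$ reversal) to consolidate them into the compact form of \eqref{psistochastic}. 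Since no new ideas beyond this degeneration and simplification are required, the argument parallels that of Lemma \ref{dynamicalparticlesvlambda} carried out without the simplifying hypothesis $v=-\eta\Lambda$.
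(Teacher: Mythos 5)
Your proposal is correct and is essentially the paper's own argument: the paper likewise combines Theorem \ref{fusedweighthypergeometric} with Definition \ref{cstochastic} to write $\sigma_J$ as an explicit elliptic-Pochhammer prefactor times a $_{12}v_{11}$ series, and then concludes by repeated application of \eqref{akhypergeometric} under the specialization \eqref{qxiuskappa}, citing the bookkeeping of Lemma \ref{dynamicalparticlesvlambda} for the omitted details. If anything, your write-up is more explicit than the paper's, since you carry out the degeneration of the very-well-poised factor and the matching of the $_{12}v_{11}$ arguments to those of the $_{10}W_9$ in \eqref{psistochastic}, which the paper leaves implicit.
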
 

\begin{proof} 
	
	Combining the explicit form of the $W_J$ weights (given by Theorem \ref{fusedweighthypergeometric}) with Definition \ref{cstochastic} for the $\sigma_J$ weights, we deduce that $\sigma_J (i_1, j_1; i_2, j_2 \b| v, \lambda)$ is equal to
	\begin{flalign*}
	& \displaystyle\frac{\big[ 2 \eta i_1\big]_{j_2}  \big[ 2 \eta (J - j_2) \big]_{j_1} \big[ \eta \Lambda - v - 2 \eta (i_1 + j_1) \big]_{J - j_1 - j_2} \big[ 2 \eta (\Lambda - i_1 + j_2)  \big]_{j_1} \big[ 2 \eta (J - j_1)  \big]_{J - j_1}}{\big[ \eta \Lambda - v \big]_J \big[ 2 \eta j_2 \big]_{j_2} \big[ 2 \eta (J - j_2) \big]_{J - j_2}} \\
	& \quad \times \displaystyle\frac{\big[ \lambda + 2 \eta i_2 \big]_{J - j_1 - j_2} \big[ \lambda - \eta \Lambda - v + 2 \eta (i_1 + 2 j_1 - J) \big]_{j_2} \big[ \lambda + v - \eta \Lambda + 2 \eta (i_1 + 2 j_1 - 1) \big]_{j_1}}{\big[ \lambda + 2 \eta (2 j_1 + j_2 - J - 1) \big]_{j_1} \big[ \lambda + 2 \eta (i_1 + j_1 - j_2) \big]_{J - j_1} \big[ \lambda + 2 \eta (i_1 + 2j_1 - j_2 - \Lambda - 1) \big]_{j_1} } \\
	& \quad \times \displaystyle\frac{\big[ \lambda + 2 \eta (i_1 + 2 j_1 - J) \big]_{j_2} \big[ \lambda + 2 \eta (i_1 + 2j_1 - j_2 - \Lambda - 1) \big]_{J - j_2} \big[ \lambda + 2 \eta (2j_1 - J - 1) \big]_{j_1}}{\big[ \lambda + 2 \eta (2i_1 + 2j_1 - j_2 - \Lambda) \big]_{j_2} \big[ \lambda + 2 \eta (2i_1 + 2j_1 - 2j_2 - \Lambda - 1)\big]_{J - j_2}} \\
	& \quad \times {_{12} v_{11}} \big( \lambda + 2 \eta (2j_1 + j_2 - J); 2 \eta j_1, 2 \eta j_2, \lambda + 2 \eta j_1, \lambda + 2 \eta (i_1 + 2j_1 - J - \Lambda - 1), \\
	& \qquad \qquad \quad \eta \Lambda + v + 2 \eta (j_2 - i_1 - 1), \eta \Lambda - v - 2 \eta (i_1 - j_2 + J), \lambda + 2 \eta (i_2 + j_1 + j_2 - J) \big). 
	\end{flalign*}
	
	\noindent Now the proof of the lemma follows from repeated application of \eqref{akhypergeometric}; further details are similar those provided in the proof of Lemma \ref{dynamicalparticlesvlambda} are therefore omitted. 
\end{proof} 

The following lemma provides a specialization of the previous result, which will be useful for us when verifying stochasticity of the $\psi$ weights in the next section. 

\begin{lem}
	
	\label{wjjju} 
	
	Let $i \in \mathbb{Z}_{\ge 0}$, $J \in \mathbb{Z}_{\ge 1}$, and $\eta, w, z, \Lambda, \lambda, \tau \in \mathbb{C}$; denote $v = z + \eta - w$. Under the trigonometric degeneration $\tau \rightarrow \textbf{\emph{i}} \infty$ and the further specialization \eqref{qxiuskappa} (in particular, $\widetilde{\varkappa} = q^{-J} \varkappa$), we have that 
	\begin{flalign}
	\label{wjjq}
	\begin{aligned}
	\sigma_J \big( i, J; i + J - j, j \b| v, \lambda \big) & = (s u \xi q^J)^j \displaystyle\frac{(q^{-J}; q)_j (s^2 q^i; q)_{J - j} (s q^{i - j + 1} / u \xi; q)_j}{(q; q)_j (s u \xi; q)_J} \\
	& \qquad \times \displaystyle\frac{(q^{i - j + 1} \kappa; q)_j (s u \xi q^i \kappa; q)_{J - j}}{(q^{2i + J - 2j + 1} s^2 \kappa; q)_j (q^{2i + J - j} s^2 \kappa; q)_{J - j}}, 
	\end{aligned}
	\end{flalign}
	
	\noindent for any $j \in \mathbb{Z}_{\ge 0}$. Furthermore, abbreviating $\sigma (j) = \sigma_J \big( i_1, J; i + J - j, j \b| v, \lambda \big)$, we have that $\sum_{j = 0}^{\infty} \sigma (j) = 1$.  
\end{lem}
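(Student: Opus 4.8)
The formula \eqref{wjjq} itself is the natural starting point, so I would first record that it is known: specializing $j_1 = J$ in Lemma \ref{wjjj} gives the factored weight $W_J\big(i, J; i+J-j, j \b| v, \lambda\big)$, and inserting this together with $C_J$ (Definition \ref{cstochastic}) into \eqref{sigmajdefinition}, followed by the trigonometric reparameterization \eqref{akhypergeometric} exactly as in the proof of Lemma \ref{lpsiweights}, yields the right side of \eqref{wjjq}. Granting \eqref{wjjq}, I would then note that the sum is in fact finite: since $|q| < 1$, the factor $(q^{-J}; q)_j$ appearing in \eqref{wjjq} vanishes for every $j > J$, so $\sigma(j) = 0$ there and $\sum_{j=0}^{\infty} \sigma(j) = \sum_{j=0}^{J} \sigma(j)$.

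The plan is to exhibit this finite sum as a terminating, very well-poised ${_6 W_5}$ basic hypergeometric series (in the notation of \eqref{poisedhypergeometric}) and then evaluate it by the Rogers summation \eqref{hypergeometric65sumterminating}, in direct analogy with the second proof of stochasticity for the $\varphi$ weights sketched after Definition \ref{stochasticparticlesjumps}. Concretely, I would factor out $\sigma(0)$, write $\sigma(j) = \sigma(0)\, t_j$, and massage $t_j$ into the standard very well-poised summand. The numerator and denominator Pochhammer factors of length $J - j$ would be converted to length $J$ (absorbing the remainders into new length-$j$ factors) using the third and fourth identities of \eqref{basichypergeometric1}, while the prefactor $(su\xi q^J)^j$ together with the reflection formulas collects into the argument $z^j$. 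The decisive move is to produce the characteristic very well-poised factor $\dfrac{1 - A q^{2j}}{1 - A}$: this should arise from combining the two denominator factors whose base carries the summation index, namely $(q^{2i+J-2j+1} s^2 \kappa; q)_j$ and $(q^{2i+J-j} s^2 \kappa; q)_{J-j}$, by means of the second and sixth identities of \eqref{basichypergeometric1}. Carrying this out would identify the well-poised parameter as (the reflection of) $A = q^{-2i-J} s^{-2} \kappa^{-1}$, together with two further parameters $B$ and $C$ built from $su\xi q^i \kappa$ and $s^{-2} q^{-i}$, mirroring the choices $A = (q^{2i} a \kappa)^{-1}$, $B = b/a$, $C = (q^i \kappa)^{-1}$ used for $\varphi$, and the terminating parameter $q^{-J}$.

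Once $\sum_j t_j = {_6 W_5}(A; B, C, q^{-J}; q, z)$ with the balancing argument $z = A q^{J+1} / (BC)$ verified, applying \eqref{hypergeometric65sumterminating} evaluates it to $\dfrac{(Aq; q)_J \, (Aq/BC; q)_J}{(Aq/B; q)_J \, (Aq/C; q)_J}$. The proof then concludes by multiplying this product by $\sigma(0)$ and simplifying with the identities \eqref{basichypergeometric1} until every factor cancels and the expression collapses to $1$.

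The main obstacle is precisely this reduction to very well-poised form: the summation index $j$ appears simultaneously in the base and the length of several $q$-Pochhammer symbols, so one must track the reflections carefully enough that the factor $\dfrac{1 - A q^{2j}}{1 - A}$ emerges cleanly and the Rogers balancing condition $z = A q^{J+1}/(BC)$ holds exactly rather than only up to a $j$-independent constant. Everything after the sum has been cast in standard form is routine $q$-Pochhammer bookkeeping via \eqref{basichypergeometric1}, and I would leave those cancellations to a direct (if tedious) verification.
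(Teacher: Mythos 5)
Your proposal is correct and follows essentially the same route as the paper: \eqref{wjjq} is derived exactly as you describe (Lemma \ref{wjjj} plus Definition \ref{cstochastic}, then the trigonometric reparameterization \eqref{akhypergeometric} and the identities \eqref{basichypergeometric1}), and the paper likewise recasts the terminating sum as a very well-poised ${_6 W_5}$ and evaluates it by the Rogers identity \eqref{hypergeometric65sumterminating}. Its explicit parameter choices, $a = \widetilde{\varkappa} / s^2 q^{2i + J}$ (your $A$, since $\kappa = \widetilde{\varkappa}^{-1}$), $b = u \xi / s q^i$, $c = \widetilde{\varkappa} q^{-i}$, and $n = J$, confirm the identifications you anticipated, with the factor $\frac{1 - \widetilde{\varkappa}/s^2 q^{2i + J - 2j}}{1 - \widetilde{\varkappa}/s^2 q^{2i + J}}$ emerging from the two $j$-dependent denominator factors just as you predicted.
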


\begin{proof}
	Let us begin with the first statement of the lemma. To that end, we apply Lemma \ref{wjjj} and Definition \ref{cstochastic} to find that	
	\begin{flalign*}
	\sigma_J (i, J; i + J - j, j) & = \displaystyle\frac{[2 \eta J]_J}{[2 \eta j]_j \big[ 2 \eta (J - j) \big]_{J - j}} \displaystyle\frac{\big[2 \eta (\Lambda - i) \big]_{J - j}  [2 \eta i - \eta \Lambda - v]_j}{[\eta \Lambda - v]_J } \\
	& \qquad \times \displaystyle\frac{\big[ \lambda + 2 \eta (i + J) \big]_j \big[ v + \lambda - \eta \Lambda + 2 \eta (i + 2 J - j - 1) \big]_{J - j}}{\big[ \lambda + 2 \eta (2i + 2J - j - \Lambda)]_j \big[ \lambda + 2 \eta (2i + 2 J - 2j - \Lambda - 1) \big]_{J - j}}.
	\end{flalign*}
	
	\noindent Now \eqref{wjjq} follows from repeated use of \eqref{akhypergeometric} and the first four of the identities listed in \eqref{basichypergeometric1}. 
	
	To establish the second part of the lemma, observe that combining \eqref{wjjq} with repeated use of the first six of the identities listed in \eqref{basichypergeometric1} yields that 
	\begin{flalign*}
	\sigma (j) & = \left( \displaystyle\frac{q}{u \xi s} \right)^j \displaystyle\frac{(q^{-J}; q)_j (u \xi / s q^i; q)_j}{(q; q)_j (q^{1 - i - J} / s^2; q)_j} \displaystyle\frac{(\widetilde{\varkappa} q^{-i}; q)_j (\widetilde{\varkappa} q^{-2i} / s^2 q^J; q)_j}{(\widetilde{\varkappa} / s u \xi q^{i + J - 1}; q)_j (\widetilde{\varkappa} / s^2 q^{2i - 1}; q)_j} \displaystyle\frac{1 - \widetilde{\varkappa} / s^2 q^{2i + J - 2j}}{1 - \widetilde{\varkappa} / s^2 q^{2i + J}} \\
	& \qquad \times \displaystyle\frac{(q^{1 - i - J} / s^2; q)_J (\widetilde{\varkappa} q^{1 - i - J} / u \xi s; q)_J}{(q^{1 - 2i - J} \widetilde{\varkappa} / s^2; q)_J (q^{1 - J} / u \xi s; q)_J}.
	\end{flalign*}
	
	\noindent Now the fact that $\sum_{j = 0}^{\infty} \sigma (j) = 1$ follows from the terminating Rogers identity \eqref{hypergeometric65sumterminating}, applied with $a = \widetilde{\varkappa} / s^2 q^{2i + J} $; $b = u \xi / s q^i$; $c = \widetilde{\varkappa} q^{-i}$; and $n = J$. 
\end{proof}

\subsection{Stochasticity of the \texorpdfstring{$\psi$}{} Weights}

\label{PsiStochasticity} 	

The purpose of this section is to establish the following proposition, which shows that the $\psi$ weights from Definition \ref{psiweightdefinition} are stochastic.  

\begin{prop}
	
	\label{psisumstochastic}
	
	Fix $i, j \in \mathbb{Z}_{\ge 0}$, $J \in \mathbb{Z}_{\ge 1}$, and $q, u, s , \kappa \in \mathbb{C}$. Then, 
	\begin{flalign*}
	\displaystyle\sum_{j' = 0}^{\infty} \psi_{u; s; q; J} \big( i, j; i + j - j', j' \b| \kappa \big) = 1.
	\end{flalign*}
\end{prop}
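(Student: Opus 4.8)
The plan is to deduce this general stochasticity statement from the special case already established in Lemma \ref{wjjju}, by reducing an arbitrary value of the first argument $i_1 = i$ to the distinguished case $i_1 = 0$ where summation is known to yield $1$. Concretely, I would first translate the claim back into the elliptic (or trigonometric-degenerated) language of the $\sigma_J$ weights. By Lemma \ref{lpsiweights}, under the specialization \eqref{qxiuskappa} we have $\psi_{u;s;q;J}(i,j;i+j-j';j' \b| \kappa) = \sigma_J(i,j;i+j-j';j' \b| v,\lambda)$, so it suffices to prove $\sum_{j'=0}^{\infty} \sigma_J(i,j;i+j-j';j') = 1$ for all $i,j \ge 0$. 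The summation over $j'$ is exactly the local stochasticity of the vertex weights at a single vertex: summing over all admissible outgoing horizontal arrow counts $j'$ (with the vertical outgoing count $i_2 = i+j-j'$ determined by arrow conservation).

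The key idea is that this local stochasticity is encoded in Proposition \ref{stochasticbsum}, which expresses the stochastically corrected symmetric function $B_{\mu/\nu}^{(s)}$ as a partition function of a vertex model whose vertex weights are precisely the $\sigma_{J}$. First I would invoke the global stochasticity $\sum_{\mu} B_{\mu/\nu}^{(s;U)}(W \b| \lambda) = 1$, which follows from the Pieri identity \eqref{stochasticbranching} and the Cauchy identity \eqref{stochasticsum} as recorded around \eqref{bstochasticu}. Then, using the single-row ($r=1$) case of Proposition \ref{stochasticbsum} together with the branching property, the fact that the full partition function sums to one forces the local weights $\sigma_J$ at each individual vertex to sum to one; this is the standard ``stochasticity propagates from global to local'' argument. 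The careful point is to peel off one column at a time so that the sum over outgoing arrows at a fixed vertex isolates exactly $\sum_{j'} \sigma_J(i,j;i+j-j';j')$.

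Alternatively — and this may be cleaner to write — I would prove the identity directly by reduction to Lemma \ref{wjjju}. That lemma already gives $\sum_{j'} \sigma_J(0,J;J-j';j') = 1$ via the terminating Rogers summation \eqref{hypergeometric65sumterminating}. The general statement with arbitrary $i_1 = i$ and $j_1 = j$ should follow from the same Rogers identity \eqref{hypergeometric65sumterminating} applied with suitably shifted parameters, exactly as indicated in the second verification method for Definition \ref{stochasticparticlesjumps}: there one applies \eqref{hypergeometric65sumterminating} with $a = 1/q^{2i}a\kappa$, $b = b/a$, $c = 1/q^i\kappa$, and $n = i$. I would carry out the analogous reindexing here: after inserting the explicit $\psi$ weights from Definition \ref{psiweightdefinition}, rewrite the $q$-Pochhammer prefactors and the ${}_{10}W_9$ series so that the sum over $j'$ becomes a very-well-poised, balanced terminating ${}_6W_5$ (or, in the fully general elliptic case, a ${}_{10}v_9$ summable by the elliptic Jackson identity \eqref{hypergeometric109sumterminating}), and then apply the summation formula to collapse it to $1$.

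The main obstacle I expect is the bookkeeping needed to recognize the sum over $j'$ as a summable very-well-poised series in the correct normalization. The $\psi$ weights carry a ${}_{10}W_9$ in each summand, so summing over $j'$ naively produces a double sum; the work is to interchange orders and identify the resulting single-variable sum as a terminating balanced series matching the hypothesis of \eqref{hypergeometric65sumterminating} or \eqref{hypergeometric109sumterminating}. Because Lemma \ref{wjjju} handles the $j_1 = J$ case via precisely this mechanism, I am fairly confident the general case is an analytic continuation or a parallel computation; but verifying that all the shifted parameters satisfy the very-well-poised and balancing constraints, and that the prefactors cancel to leave exactly $1$, is where the genuine calculation lies. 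Given the remark after Definition \ref{stochasticparticlesjumps} that analytic continuation in $a/b$ extends the identity from $q^{\mathbb{Z}_{>0}}$ to all of $\mathbb{C}$, I would also note that one may first prove the claim for $J \in \mathbb{Z}_{>0}$ and $c_y = q^{J_y}$ and then analytically continue if broader generality is desired, though for this proposition $J$ is already a positive integer.
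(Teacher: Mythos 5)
Your first route is the same one the paper takes in outline --- reduce via Lemma \ref{lpsiweights} to the identity $\sum_{j'} \sigma_J (i, j; i + j - j', j') = 1$ and exploit the global identity $\sum_{\mu} B_{\mu / \nu}^{(s)} = 1$ together with Proposition \ref{stochasticbsum} --- but the step where you assert that global stochasticity ``forces the local weights $\sigma_J$ at each individual vertex to sum to one'' by ``peeling off one column at a time'' is precisely where the genuine gap lies. Peeling does not isolate a single row sum: in any configuration with more than one nontrivial vertex, the output $m_1$ of the first vertex feeds into both the horizontal input $j_1 = J - m_1$ \emph{and} the dynamical parameter $\lambda + 4 \eta (m_1 + j - J)$ of the next vertex, so for fixed parameters the global identity yields only a single scalar relation of the form $A_0 + \sum_{k = 1}^J A_k \sigma (k) = 0$, where the unknowns $A_k$ are built from the row sums at the second vertex evaluated at different arguments. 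One relation cannot force each $A_k$ to vanish; weights could a priori compensate across the terms. The paper's proof supplies exactly the missing mechanism: it engineers a two-vertex test configuration ($\nu = (2, \dots , 2)$ with $i$ parts, and the column-$3$ parameters tuned so that $v_3 = - \eta \Lambda_3$, whence the factor $[2 \eta i_1]_{j_2}$ in \eqref{ljetalambdav} kills any path escaping past column $3$), leaves the inhomogeneity $\widetilde{\xi}$ at the first column free, and then invokes Corollary \ref{linearlyindependentbeta} --- linear independence of the functions $\sigma (1), \dots , \sigma (J)$ of $\widetilde{\xi}$, with coefficients in $\mathbb{C} (q, \kappa, u, s)$ --- to conclude $A_k = 0$ for every $k$, which is the desired local stochasticity at arbitrary $(i_1, j_1) = (i, j)$. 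Without this linear-independence step, or some substitute for it, your argument does not close.

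Your alternative direct route --- insert the explicit $\psi$ weights, interchange the double sum coming from the ${_{10}	W_9}$, and recognize a terminating very-well-poised series summable by \eqref{hypergeometric65sumterminating} or \eqref{hypergeometric109sumterminating} --- is exactly what the authors say they were unable to do: the paper states that Proposition \ref{psisumstochastic} ``can be viewed as an identity for basic hypergeometric series, but we unfortunately do not know of a quick way to verify it directly.'' The cases that do succumb to Rogers summation (the $\varphi$ weights after Definition \ref{stochasticparticlesjumps}, and Lemma \ref{wjjju}) are special precisely because there ($u = s$, respectively $j_1 = J$) the ${_{10} W_9}$ collapses and the weight factors completely, leaving a genuine single-variable sum; for general $j_1 < J$ no such collapse is available, and your stated confidence that the computation is ``parallel'' or follows by analytic continuation is unfounded --- the parameter $j_1$ is discrete, so there is no continuation argument to run, and the interchange of summations leaves a double sum with no known closed evaluation. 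That route should be regarded as open, not as a clean fallback.
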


Proposition \ref{psisumstochastic} can be viewed as an identity for basic hypergeometric series, but we unfortunately do not know of a quick way to verify it directly (although it seems to resemble some of the identities given in \cite{STCBRF}). Instead, we establish Proposition \ref{psisumstochastic} through properties of the stochastic elliptic weight functions $B_{\mu / \nu}^{(s)}$. 

To begin, we require the following consequence of Lemma \ref{wjjju}, which states the the $\{ \sigma (j) \}_{j \ge 1}$ (defined in that lemma) are in some sense linearly independent as functions of $\xi$. 

\begin{cor}
	\label{linearlyindependentbeta}
	
	Fix $J \in \mathbb{Z}_{\ge 1}$, and adopt the notation of Lemma \ref{wjjju}, with $i_1$ set to $0$; in particular, recall that $\sigma (j)$ depends on complex parameters $q, \kappa, u, s, \xi$. If $A_0, A_1, A_2, \ldots , A_m \in \mathbb{C} (q, \kappa, u, s)$ satisfy $A_0 + \sum_{j = 1}^m A_j \sigma (j)= 0$ for all $q, \kappa, u, s, \xi \in \mathbb{C}$, then $A_0 = A_1 = \cdots = A_m = 0$. 
\end{cor}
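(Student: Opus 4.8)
The plan is to read off the explicit $\xi$-dependence of $\sigma(j)$ from Lemma~\ref{wjjju}, reduce the statement to the linear independence of an associated family of polynomials, and establish that independence by a triangularity argument based on the zeros of the $q$-Pochhammer factors. First I would set $i_1 = 0$ in \eqref{wjjq} and separate the factors that depend on $\xi$ from those that do not. The factor $(su\xi q^J)^j$ contributes $\xi^j$; the factor $(sq^{1-j}/(u\xi);q)_j$ equals $\xi^{-j}R_j(\xi)$ for a monic polynomial $R_j$ of degree $j$; the factor $(su\kappa\xi;q)_{J-j}$ is a polynomial of degree $J-j$; and the only $\xi$-dependent factor in the denominator is $(su\xi;q)_J$. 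All remaining factors are independent of $\xi$. Hence
\begin{flalign*}
\sigma(j) = C_j\,\frac{R_j(\xi)\,(su\kappa\xi;q)_{J-j}}{(su\xi;q)_J} =: C_j\,\frac{N_j(\xi)}{D(\xi)},
\end{flalign*}
where $C_j = C_j(q,\kappa,u,s)$ is a nonzero ratio of $q$-Pochhammer symbols, the denominator $D(\xi) = (su\xi;q)_J$ is independent of $j$, and $N_j$ is a polynomial of degree exactly $J$ with the displayed factorization. Since the hypothesis is an identity in all parameters, I may take $q,\kappa,u,s$ generic throughout, so that each $C_j$ and every quantity asserted below to be nonzero is indeed nonzero. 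I also note $\sigma(j)\equiv 0$ for $j>J$ because $(q^{-J};q)_j = 0$ there, so the statement is meaningful (and will be proved) only for $m\le J$.

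Next I would multiply the relation $A_0 + \sum_{j=1}^m A_j\sigma(j) = 0$ by $D(\xi)$, turning it into the polynomial identity $A_0 D(\xi) + \sum_{j=1}^m A_jC_jN_j(\xi) = 0$. The key structural input is the location of the zeros of $(su\kappa\xi;q)_{J-j}$: setting $\zeta_b = q^{-b}/(su\kappa)$, one has $N_j(\zeta_b) = 0$ precisely when $j\le J-1-b$ (using that $R_j(\zeta_b)\ne0$ generically), while $N_j(\zeta_b)\ne0$ for $j\ge J-b$. Moreover $D(\zeta_b) = (q^{-b}/\kappa;q)_J\ne0$ generically. Thus, evaluating the polynomial identity at $\xi = \zeta_0,\zeta_1,\dots,\zeta_{J-1}$ and ordering the polynomials $N_J,N_{J-1},\dots,N_1$ produces a lower-triangular linear system whose diagonal entries $N_{J-b}(\zeta_b)$ do not vanish.

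From here the conclusion follows by descending induction. For $m\le J-1$, evaluation at $\zeta_0$ annihilates every $N_j$ with $j\le m$ and leaves $A_0 D(\zeta_0) = 0$, so $A_0 = 0$; the remaining triangular $m\times m$ system (rows $\zeta_{J-m},\dots,\zeta_{J-1}$) then forces $A_m = A_{m-1} = \cdots = A_1 = 0$ in turn. For the endpoint $m = J$ the $J$ evaluations at the $\zeta_b$ no longer separate $A_0$ from $A_J$, so I would adjoin one further linear functional, namely the coefficient of $\xi^J$; comparing leading coefficients gives
\begin{flalign*}
A_0\,(-su)^J q^{\binom{J}{2}} + \sum_{j=1}^J A_jC_j(-su\kappa)^{J-j}q^{\binom{J-j}{2}} = 0,
\end{flalign*}
and the resulting $(J+1)\times(J+1)$ system is again nondegenerate for generic parameters. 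I expect this endpoint case $m = J$ to be the only genuine obstacle: it reduces to checking that $D$ does not lie in the span of $N_1,\dots,N_J$, equivalently that the augmented matrix—the triangular block from the $\zeta_b$ together with the leading-coefficient row—has nonzero determinant, which is a finite product of $q$-theta type factors that does not vanish identically in $q,\kappa,u,s$.
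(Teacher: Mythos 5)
Your decomposition $\sigma(j) = C_j N_j(\xi)/D(\xi)$, with $N_j(\xi) = R_j(\xi)\,(su\kappa\xi;q)_{J-j}$ and $D(\xi) = (su\xi;q)_J$, is correct, and the descending triangular argument at the points $\zeta_b = q^{-b}/(su\kappa)$ does dispose of the cases $m \le J-1$. But the endpoint case $m = J$, which you flag and then wave away, is a genuine gap, and it is exactly the case that matters: the proof of Proposition \ref{psisumstochastic} invokes the corollary with $m = J$. Your $J$ evaluation points $\zeta_0, \ldots, \zeta_{J-1}$ produce a system that is triangular in $A_J, \ldots, A_1$ but carries a full extra column coming from $D$, since $D(\zeta_b) = (q^{-b}/\kappa;q)_J \ne 0$ for every $b$; adjoining the leading-coefficient functional does make the system square, but its nonsingularity is precisely the assertion to be proved (namely that $D \notin \operatorname{span}(N_1, \ldots, N_J)$), and your claim that the $(J+1)\times(J+1)$ determinant ``is a finite product of $q$-theta type factors'' is asserted, not shown --- the determinant is an alternating sum, not visibly a product, and no computation or cofactor expansion is offered. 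As written, the argument establishes nothing for $m = J$.

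The fix is available inside your own framework, and it is what the paper actually does: evaluate at the zeros of the \emph{other} Pochhammer factor. Since $R_j$ is monic with roots $\xi = q^{-t}s/u$ for $0 \le t \le j-1$, setting $\xi = q^{-b}s/u$ annihilates $N_j$ (equivalently $\sigma(j)$) for every $j > b$, while $N_b(q^{-b}s/u)$ and $D(q^{-b}s/u) = (q^{-b}s^2;q)_J$ are generically nonzero. Running $b = 0, 1, \ldots, m$ therefore yields an ascending triangular system with nonzero diagonal for all $m \le J$: at $b = 0$ one reads off $A_0 = 0$, and at $b = k$ one reads off $A_k = 0$ given $A_0 = \cdots = A_{k-1} = 0$, with no endpoint obstruction since $m + 1 \le J + 1$ such points are available. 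This is exactly the paper's two-line proof (set $\xi = s/u$ to get $A_0 = 0$, then $\xi = s/(q^k u)$ for the smallest $k$ with $A_k \ne 0$ to reach a contradiction). One point in your favor: your explicit remark that the statement is only meaningful for $m \le J$, because $\sigma(j) \equiv 0$ for $j > J$ via the factor $(q^{-J};q)_j$, is a correct caveat that the paper leaves implicit.
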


\begin{proof}
	
	First, observe from \eqref{wjjq} that setting $\xi = s / u$ yields $\sigma (j) = 0$ if $j > 0$. Inserting this into the identity $A_0 + \sum_{j = 1}^m A_j \sigma (j)= 0$ gives $A_0 = 0$.   
	
	Now, assume that at least one of the $A_0, A_1, \ldots , A_m$ is nonzero, and let $k \in [1, m]$ be the smallest index such that $A_k \ne 0$. Setting $\xi = s / q^k u$, we find again from \eqref{wjjq} that $\sigma (j) = 0$ if $j > k$ and that $\sigma (k)$ is generically (in the parameters $q$, $\kappa$, $u$, and $s$) nonzero. Inserting this into the identity $A_0 + \sum_{j = 1}^m A_j \sigma (j)= 0$ yields that $A_k = 0$. This is a contradiction, which implies the corollary. 
\end{proof}

Now we can establish Proposition \ref{psisumstochastic}.

\begin{proof}[Proof of Proposition \ref{psisumstochastic}]
	
	First recall from \eqref{wjlargesmallj} that, if $\max \{ j_1, j_2 \} > J$, then $W_J (i_1, j_1; i_2, j_2) = 0$ and therefore $\sigma_J (i_1, j_1; i_2, j_2) = 0$. In what follows, we will often use that fact implicitly. 
	
	Now, use \eqref{qxiuskappa} to define parameters $\eta, w, z, \Lambda, \lambda \in \mathbb{C}$ in terms of the given parameters $q, u, \xi, s, \kappa \in \mathbb{C}$, and denote $v = z + \eta - w$. In view of Lemma \ref{lpsiweights}, it suffices to show that $\sum_{j' = 0}^J \sigma_J \big( i, j; i + j - j', j' \b| v, \lambda \big) = 1$. 
	
	To that end, we use Proposition 8.5 of \cite{ERSF}, which states that 
	\begin{flalign}
	\label{sbsum1}
	\sum_{\mu \in \Sign_{N + k}^+} B_{\mu / \nu}^{(s)} \big( W \b| \lambda + 2 \eta (\Lambda_1 + 2j - 2J ), Z, L \big) = 1, 
	\end{flalign}
	
	\noindent for any integers $j \ge 0$ and $J, N, k \ge 1$; signature $\nu \in \Sign_N^+$; and complex parameters $\lambda \in \mathbb{C}$, $W = (w_1, w_2, \ldots , w_k) \subset \mathbb{C}$, $Z = (z_0, z_1, \ldots ) \subset \mathbb{C}$, and $L = (\Lambda_0, \Lambda_1, \ldots ) \subset \mathbb{C}$, provided that the left side of \eqref{sbsum1} converges absolutely. Observe here that \eqref{sbsum1} would remain true if $\lambda + 2 \eta (\Lambda_1 + 2j - 2 J)$ were replaced by $\lambda$; our specific choice of the dynamical parameter in \eqref{sbsum1} will be useful to us later. Also denote $v_k = w - z_k - \eta$, for each $k \ge 0$. 
	
	The proof of the identity $\sum_{j' = 0}^J \sigma_J \big( i, j; i + j - j', j' \b| v, \lambda \big) = 1$ will follow from a suitable specialization of \eqref{sbsum1}. In particular, recall that Proposition \ref{stochasticbsum} implies that, for any $\mu \in \Sign_{i + J}^+$, the function $B_{\mu / \nu}^{(s)}$ can be expressed as a product of weights (each of the form $\sigma_J \big( i_1, j_1; i_2, j_2)$) corresponding to the single-row path ensemble associated with the shape $\mu / \nu$; see Figure \ref{singlerowdouble} for an example. We will pick the parameters $W$, $Z$, and $L$ such that the weight at a vertex of the form $(k, 1)$ for any $k > 2$ is either equal to $0$ or $1$; in particular, $B_{\mu / \nu}$ can be written as the product of the weights at $(1, 1)$ and $(2, 1)$. We will further choose the signature $\nu$ such that the vertex weight at $(1, 1)$ is of the form $\sigma (j)$ (from Corollary \ref{linearlyindependentbeta}) and such that the weight at $(2, 1)$ is of the form $\sigma_j (i_1, \widetilde{j_1}; i_2, \widetilde{j}_2)$ for some $\widetilde{j}_1$ and $\widetilde{j}_2$. Then, \eqref{sbsum1} will imply a that a suitable linear combination of the $\sigma (j)$ weights is equal to $0$, and we can apply Corollary \ref{linearlyindependentbeta} to conclude. 
	
	Let us implement this in more detail. To that end, let $\widetilde{z}, \widetilde{\Lambda} \in \mathbb{C}$ be arbitrary, and set $\widetilde{s} = e^{2 \pi \textbf{i} \eta \widetilde{\Lambda}}$ and $\widetilde{\xi} = e^{2 \pi \textbf{i} z}$. Next apply \eqref{sbsum1} with the following choice of parameters. 
	\begin{itemize} 
		\item Set $k = J$ and $W = (w_1, w_2, \ldots , w_J) = \big( w, w + 2 \eta, \ldots , w + 2 \eta (J - 1) \big)$. 
		
		\item Set $Z = (z_0, z_1, \ldots ) = (0, \widetilde{z}, z, 0, 0, \ldots )$. 
		
		\item Set $L = (\Lambda_0, \Lambda_1, \ldots ) = (0, \widetilde{\Lambda}, \Lambda, 1 - \eta^{-1} w, 0, 0, \ldots )$, so in particular $\eta \Lambda_3 = \eta - w = - v_3$. 
		
		\item Set $N = i$ and $\nu = (2, 2, \ldots , 2)$, where $2$ appears with multiplicity $i$.
		 
	\end{itemize} 
	
	Now let us evaluate the vertex weights associated with the diagram $\mu / \nu$.

	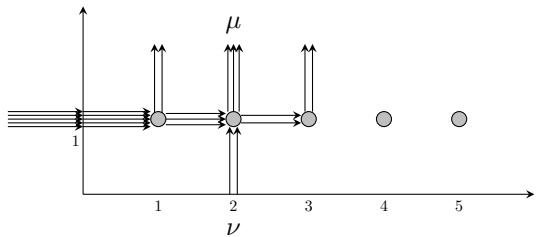
\begin{figure}
		
		\begin{center}
			
			\begin{tikzpicture}[
			>=stealth,
			scale=.5
			]
			
			\draw[->,black] (0, 0) --(12, 0);
			\draw[->,black] (0, 0) --(0, 5);
			
			\draw[->, black] (3.9, 0) -- (3.9, 1.8); 
			\draw[->, black] (4.1, 0) -- (4.1, 1.8); 
			
			\draw[->,black] (-2, 1.8) -- (0, 1.8);
			\draw[->,black] (-2, 1.9) -- (0, 1.9);
			\draw[->,black] (-2, 2) -- (0, 2);
			\draw[->,black] (-2, 2.1) -- (0, 2.1);
			\draw[->,black] (-2, 2.2) -- (0, 2.2);
			
			\draw[->,black] (0, 1.8) -- (1.8, 1.8);
			\draw[->,black] (0, 1.9) -- (1.8, 1.9);
			\draw[->,black] (0, 2) -- (1.8, 2);
			\draw[->,black] (0, 2.1) -- (1.8, 2.1);
			\draw[->,black] (0, 2.2) -- (1.8, 2.2);
			
			\draw[->,black] (2.2, 1.85) -- (3.8, 1.85);
			\draw[->,black] (2.2, 2) -- (3.8, 2);
			\draw[->,black] (2.2, 2.15) -- (3.8, 2.15);
			
			\draw[->,black] (1.9, 2) -- (1.9, 4);
			\draw[->,black] (2.1, 2) -- (2.1, 4);
			
			\draw[->, black] (3.85, 2.2) -- (3.85, 4);
			\draw[->, black] (4, 2.2) -- (4, 4);
			\draw[->, black] (4.15, 2.2) -- (4.15, 4);
			
			\draw[->,black] (4.2, 1.9) -- (5.8, 1.9);
			\draw[->,black] (4.2, 2.1) -- (5.8, 2.1);
			
			\draw[->,black] (5.9, 2.2) -- (5.9, 4);
			\draw[->,black] (6.1, 2.2) -- (6.1, 4);

			\filldraw[fill=gray!50!white, draw=black] (2, 2) circle [radius=.2] node[scale = .6, below = 48]{$1$} node[scale = .6, left = 52, below = 7]{$1$};
			\filldraw[fill=gray!50!white, draw=black] (4, 2) circle [radius=.2] node[scale = .6, below = 48]{$2$} node[below = 36] {$\nu$} node[above = 29]{$\mu$};
			\filldraw[fill=gray!50!white, draw=black] (6, 2) circle [radius=.2] node[scale = .6, below = 48]{$3$}; 
			\filldraw[fill=gray!50!white, draw=black] (8, 2) circle [radius=.2] node[scale = .6, below = 48]{$4$};
			\filldraw[fill=gray!50!white, draw=black] (10, 2) circle [radius=.2] node[scale = .6, below = 48]{$5$};

			\end{tikzpicture}

		\end{center}

		\caption{\label{singlerowdouble} Depicted above is the single-row, fused, directed path ensemble corresponding to the shape $\mu / \nu$, where $\mu = (3, 3, 2, 2, 2, 1, 1)$ and $\nu = (2, 2)$. }
	\end{figure}
	
	First, consider the weight at the vertex $(1, 3)$. Since $\eta \Lambda_3 + v_3 = 0$, the stochastic weight $\sigma_J (i_1, j_1; i_2, j_2)$ is given explicitly by \eqref{ljetalambdav}. Due to the factor of $[2 \eta i_1]_{j_2}$ in that identity, and the fact that $i_1 = 0$ at $(3, 1)$ (since no part of $\nu$ is equal to $3$), we have that $\sigma_J (i_1, j_1; i_2, j_2) = 0$ unless $j_2 = 0$. This implies that, if $\mu_1 > 3$, then $B_{\mu / \nu}^{(s)} (W)$ is equal to $0$ and thus does not contribute to the sum \eqref{sbsum1}. Hence, we will assume in what follows that $\mu_1 \le 3$. 
	
	Moreover, again using \eqref{ljetalambdav}, we find that $\sigma_J (0, j_1; j_1, 0) = 1$. Therefore, $B_{\mu / \nu}^{(s)} \big( W \big)$ is equal to the product of the $\sigma_J$ weights at $(1, 1)$ and $(1, 2)$. Let $m_1 = m_1 (\mu)$ and $m_2 = m_2 (\mu)$. Then, the dynamical parameter at $(1, 1)$ is equal to $\lambda + 2 \eta (\widetilde{\Lambda} + 2j - 2J)$ (this comes from the choice of dynamical parameter in \eqref{sbsum1} and the fact that $\Lambda_0 = 0$), and the dynamical parameter at $(2, 1)$ is equal to $\lambda + 4 \eta (m_1 + j - J)$. Hence, using the fact that $v_1 = w - z_1 - \eta = w - \widetilde{z}- \eta$, we obtain that 
	\begin{flalign}
	\label{bljlj}
	\begin{aligned}
	B_{\mu / \nu}^{(s)} \big( & W \b| \lambda + 2 \eta (\widetilde{\Lambda} + 2 j - 2J), Z, L \big) \\ 
	& = \sigma_J \big( 0, J; m_1, J - m_1 \b| w - \widetilde{z} - \eta, \lambda + 2 \eta (\widetilde{\Lambda} + 2j - 2J), \widetilde{\Lambda} \big) \\
	& \quad \times \sigma_J \big( i, J - m_1; m_2, i + J - m_1 - m_2 \b| v, \lambda + 4 \eta (m_1 + j - J), \Lambda \big). 
	\end{aligned}
	\end{flalign} 
	
	Similar to in the statement of Lemma \ref{wjjju}, let us abbreviate $\sigma (J - m_1) = \sigma_J \big( 0, J; m_1, J - m_1 \b| w - \widetilde{z} - \eta, \lambda + 2 \eta (\widetilde{\Lambda} + 2j - 2J), \widetilde{\Lambda} \big)$. Now, \eqref{sbsum1} becomes 
	\begin{flalign*}
	\displaystyle\sum_{m_1 = 0}^J \sigma (J - m_1) \displaystyle\sum_{m_2 = 0}^J \sigma_J \big( i, J - m_1; m_2, i + J - m_1 - m_2 \b| v, \lambda + 4 \eta (m_1 + j - J), \Lambda \big) = 1. 
	\end{flalign*}
	
	\noindent Changing variables $k = J - m_1$ and using the fact that $\sum_{j = 0}^{\infty} \sigma (j) = 1$ (due to the second statement in Lemma \ref{wjjju}), this can be equivalently rewritten as $A_0 + \sum_{k = 1}^J A_k \sigma (k) = 0$, where 
	\begin{flalign*}
	A_0 = \displaystyle\sum_{m_2 = 0}^J \sigma_J \big( i, 0; m_2, i - m_2 \b| v, \lambda + 4 \eta j, \Lambda \big) - 1,
	\end{flalign*}
	
	\noindent and 
	\begin{flalign*}
	A_k = \displaystyle\sum_{m_2 = 0}^J \sigma_J \big( i, k; m_2, i + k - m_2 \b| v, \lambda + 4 \eta (j - k), \Lambda \big) - A_0 - 1. 
	\end{flalign*}
	
	\noindent Observe that, for fixed $\Lambda$ and $z$, we find from Lemma \ref{lpsiweights} that $A_k \in \mathbb{C} (q, \kappa, u, \widetilde{s})$ for each $k \ge 0$. Therefore, since  $A_0 + \sum_{k = 1}^J A_k \sigma (k) = 0$ holds for all $\widetilde{\xi} = e^{2 \pi \textbf{i} \widetilde{z}}$, Corollary \ref{linearlyindependentbeta} gives that $A_k = 0$ for all $k \ge 0$. In particular, combining $A_0 = 0$ and $A_j = 0$ (and relabeling $j' = i + j - m_2$) yields $\sum_{j' = 0}^J \sigma_J (i, j; i + j - j', j' \b| v, \lambda) = 1$, which implies the proposition. 
 	\end{proof}

\subsection{Transition Functions} 

\label{Transition}

Now let us explain the connection between the $B_{\mu / \nu}^{(s)}$ stochastically corrected fused symmetric functions and the measures $\textbf{P}_n$ and $\textbf{M}_n$ (from Definition \ref{measuresignaturesvertexmodel}) prescribing the probability distribution for the dynamical stochastic higher spin vertex model and dynamical $q$-Hahn boson model.

In what follows, we assume that $q, \delta \in \mathbb{C}$; $U = (u_1, u_2, \ldots ) \subset \mathbb{C}$; $\Xi = (\xi_1, \xi_2, \ldots ) \subset \mathbb{C}$, and $S = (s_1, s_2, \ldots ) \subset \mathbb{C}$ are chosen as in Section \ref{ProbabilityMeasures}, and that $B = (b_1, b_2, \ldots ) \subset \mathbb{C}$; $C = (c_1, c_2, \ldots ) \subset \mathbb{C}$; and $a_{x, y} = b_x c_y \in \mathbb{C}$ are chosen as in Section \ref{LambdaSpecializationModelGeneral}. 

The following proposition follows quickly from the definition of the dynamical stochastic higher spin vertex model $\mathcal{P}$ from Section \ref{ProbabilityMeasures}; Definition \ref{measuresignaturesvertexmodel} for $\textbf{P}_n$; Proposition \ref{stochasticbsum}; and Lemma \ref{lpsiweights}. Observe that the fact that $\kappa = \widetilde{\varkappa}^{-1}$ (instead of $\kappa = \varkappa^{-1}$) corresponds to the fact that the $\psi$ weights on the right side of \eqref{pndynamicalstochasticspin} is taken at the dynamical parameter $\kappa_{x, y - 1}$ (instead of at $\kappa_{x, y}$). 

\begin{prop}
	
	\label{measuremmodel} 
	
	Let $\eta, \tau, \lambda \in \mathbb{C}$, $W = (w_1, w_2, \ldots ) \subset \mathbb{C}$; $Z = (z_0, z_1, \ldots ) \subset \mathbb{C} $; $L = (\Lambda_0, \Lambda_1, \ldots) \subset \mathbb{C}$, with $\Lambda_0 = 0$; and $\mathscr{J} = (J_1, J_2, \ldots ) \subset \mathbb{Z}_{\ge 0}$. Also denote $\varpi_i = \big( w_i, w_i + 2 \eta, \ldots , w_i + 2 \eta (J_i - 1) \big)$ for each $i \ge 1$. Assume (for all integers $j \ge 1$) that
	\begin{flalign}
	\label{qusdelta}
	q = e^{- 4 \pi \textbf{\emph{i}} \eta}; \quad u_j = e^{	2 \pi \textbf{\emph{i}} (\eta - w_j)}; \quad \xi_j = e^{2 \pi \textbf{\emph{i}} z_j}; \quad s_j = e^{2 \pi \textbf{\emph{i}} \Lambda_j}; \quad \delta = e^{-2 \pi \textbf{\emph{i}} \lambda}. 
	\end{flalign}
	
	\noindent Then, under the trigonometric degeneration $\tau \rightarrow \textbf{\emph{i}} \infty$, we have that 
	\begin{flalign*}
	\textbf{\emph{P}}_n (\mu) = B_{\mu}^{(s)} \left( \varpi_1, \varpi_2, \ldots , \varpi_n \b| \lambda - 2 \eta \displaystyle\sum_{k = 1}^n J_k, Z, L \right). 
	\end{flalign*}

\end{prop}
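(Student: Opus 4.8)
The plan is to identify both sides of the claimed identity as partition functions of the same single-column-at-a-time vertex model and match them coordinate-by-coordinate. The statement asserts that the measure $\textbf{P}_n(\mu)$, defined in Definition \ref{measuresignaturesvertexmodel} via the dynamical stochastic higher spin vertex model $\mathcal{P}$ run with $\mathcal{J}$-step initial data, coincides with the stochastically corrected fused symmetric function $B_{\mu}^{(s)}$ evaluated at the fused spectral specialization $\varpi_1,\ldots,\varpi_n$ with dynamical parameter $\lambda - 2\eta\sum_{k=1}^n J_k$. First I would recall that Proposition \ref{stochasticbsum} already expresses $B_{\mu/\nu}^{(s)}(\varpi_1,\ldots,\varpi_r \mid \lambda)$ as the partition function of an $r$-row (potentially fused) vertex model whose interior vertex weights are the stochastic weights $\sigma_{J_i}$ and whose $y$-axis weights are set to $1$. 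Taking $\nu$ empty (so that $B_{\mu}^{(s)} = B_{\mu/\varnothing}^{(s)}$) and $r = n$ gives a direct combinatorial expression for the right-hand side as a sum of products of $\sigma_{J_y}$ weights over the (here unique, since $\nu$ is empty and the initial data is deterministic) admissible path ensemble reaching $\mu$.

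Next I would unwind the left-hand side. By the definition of $\mathcal{P}$ in Section \ref{ProbabilityMeasures}, the measure is built row-by-row along diagonals through the Markov transition probabilities \eqref{pndynamicalstochasticspin}, where each local update is governed by $\psi_{\xi_x u_y; s_x; q; J_y}(i_1,j_1;i_2,j_2 \mid \kappa_{x,y-1})$. Under $\mathcal{J}$-step initial data the boundary assignment forces $j_1(1,y)=J_y$ and $i_1(x,1)=0$, so $\textbf{P}_n(\mu)$ is the product over all interior vertices of these $\psi$ weights along the trajectory producing $\mu$. The crucial bridge is Lemma \ref{lpsiweights}, which states that under the change of variables \eqref{qxiuskappa} — matching precisely the reparameterization \eqref{qusdelta} here, together with $\kappa = \widetilde{\varkappa}^{-1}$ — the stochastic weight $\sigma_J(i_1,j_1;i_2,j_2 \mid v,\lambda)$ becomes exactly $\psi_{u;s;q;J}(i_1,j_1;i_2,j_2 \mid \kappa)$. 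Thus vertex-by-vertex the two families of local weights agree once the spectral, spin, inhomogeneity, and dynamical parameters are dictionary-translated via \eqref{qusdelta}.

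The remaining content is bookkeeping about the dynamical parameter. I would verify that the value of the additive dynamical parameter $\Phi_{x,y}$ generated by the rules \eqref{lambdaadditive} along the path ensemble, when exponentiated via $\kappa_{x,y} = e^{-2\pi\textbf{i}\Phi_{x,y}}$, reproduces the multiplicative dynamical parameter $\kappa_{x,y}$ of \eqref{kappamultiplicative} used in $\mathcal{P}$; this is guaranteed by the translation $\delta = e^{-2\pi\textbf{i}\lambda}$ and $\kappa_{x,y}=e^{-2\pi\textbf{i}\Phi_{x,y}}$ recorded in Section \ref{Parameters}. The one genuinely delicate point, which I expect to be the main obstacle, is reconciling the \emph{shift} in the dynamical parameter. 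In $B_{\mu}^{(s)}$ the top-row dynamical parameter is $\lambda - 2\eta\sum_{k=1}^n J_k$, reflecting that Definition \ref{definitionb} places $\lambda$ at the origin and the weights accumulate downward, whereas in \eqref{pndynamicalstochasticspin} the $\psi$ weight at vertex $(x,y)$ is evaluated at $\kappa_{x,y-1}$ rather than $\kappa_{x,y}$; the parenthetical remark preceding the proposition flags exactly this, namely that $\kappa = \widetilde{\varkappa}^{-1} = q^{J-2j_1}\varkappa^{-1}$ (rather than $\varkappa^{-1}$) encodes the one-step offset between evaluating at the incoming versus outgoing dynamical parameter.

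Therefore the proof reduces to assembling these three facts: Proposition \ref{stochasticbsum} (combinatorial form of $B_{\mu}^{(s)}$ with empty $\nu$), Lemma \ref{lpsiweights} (identification $\sigma_J = \psi$ under the parameter dictionary), and the dynamical-parameter translation of Section \ref{Parameters}, while carefully tracking the $\lambda \mapsto \lambda - 2\eta\sum_k J_k$ normalization and the $\kappa_{x,y-1}$ versus $\kappa_{x,y}$ offset so that the arguments of every local weight on the two sides match. Once the local weights and dynamical parameters are matched at each interior vertex, the global products agree term-by-term, and summing over the unique admissible ensemble reaching $\mu$ (forced by the deterministic $\mathcal{J}$-step initial data and empty $\nu$) yields $\textbf{P}_n(\mu) = B_{\mu}^{(s)}(\varpi_1,\ldots,\varpi_n \mid \lambda - 2\eta\sum_{k=1}^n J_k, Z, L)$, as claimed.
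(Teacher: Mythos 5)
Your proof follows essentially the same route as the paper: the paper's own ``proof'' is exactly the one-paragraph remark preceding the proposition, which cites the definition of $\mathcal{P}$ from Section \ref{ProbabilityMeasures}, Definition \ref{measuresignaturesvertexmodel}, Proposition \ref{stochasticbsum}, and Lemma \ref{lpsiweights}, together with the observation that $\kappa = \widetilde{\varkappa}^{-1}$ (rather than $\varkappa^{-1}$) accounts for the $\psi$ weights in \eqref{pndynamicalstochasticspin} being evaluated at $\kappa_{x, y-1}$ instead of $\kappa_{x, y}$ --- precisely the three ingredients and the one delicate point you assemble. One correction, though: for $n > 1$ the admissible path ensemble reaching $\mu$ is \emph{not} unique. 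The measure $\textbf{P}_n(\mu)$ is a sum of products of transition probabilities over all trajectories (equivalently, all directed path ensembles with $\mathcal{J}$-step boundary data) whose configuration at time $n$ is $\mu$, and likewise $B_{\mu}^{(s)}$, via Proposition \ref{bdfunctionsfused} and Proposition \ref{stochasticbsum}, is a sum of products of $\sigma_{J_y}$ weights over all ensembles with $J_k$ paths entering in row $k$ and top boundary $\mu$; uniqueness of the ensemble holds only in the single-row case $r = 1$, which is what is exploited inside the proof of Proposition \ref{stochasticbsum} itself. The identity therefore holds because the two sums range over the same collection of ensembles with weights matching vertex-by-vertex under the dictionary \eqref{qusdelta} and Lemma \ref{lpsiweights}, not because there is a single surviving term; with that repair your argument is complete and identical in substance to the paper's.
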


By setting $w_j - z_i - \eta = - \eta \Lambda_i$ in Proposition \ref{measuremmodel}, using Lemma \ref{dynamicalparticlesvlambda}, and recalling the definition of the dynamical $q$-Hahn boson model from Section \ref{LambdaSpecializationModelGeneral}, we obtain the following corollary, which is the analog of the previous proposition for the measure $\textbf{M}_n$ (from Definition \ref{measuresignaturesvertexmodel}).  

\begin{cor}
	
	\label{measuremmodelb} 
	
	Let $\eta, \tau, \lambda \in \mathbb{C}$, $W = (w_1, w_2, \ldots ) \subset \mathbb{C}$; $Z = (z_0, z_1, \ldots ) \subset \mathbb{C} $; $L = (\Lambda_0, \Lambda_1, \ldots) \subset \mathbb{C}$, with $\Lambda_0 = 0$; and $\mathscr{J} = (J_1, J_2, \ldots ) \subset \mathbb{Z}_{\ge 0}$. Also denote $\varpi_i = \big( w_i, w_i + 2 \eta, \ldots , w_i + 2 \eta (J_i - 1) \big)$ for each $i \ge 1$. Assume (for all integers $i, j \ge 1$) that
	\begin{flalign}
	\label{qcbdelta}
	q = e^{- 4 \pi \textbf{\emph{i}} \eta}; \quad c_i = q^{J_i} = e^{-4 J_i \pi \textbf{\emph{i}} \eta }; \quad w_j - z_i - \eta = - \eta \Lambda_i; \quad b_i = e^{4 \pi \textbf{\emph{i}} \eta \Lambda_i}; \quad \delta = e^{-2 \pi \textbf{\emph{i}} \lambda}, 
	\end{flalign} 
	
	\noindent so in particular we must have that $w_1 = w_2 = \cdots $. Then, under the trigonometric degeneration $\tau \rightarrow \textbf{\emph{i}} \infty$, we have that 
	\begin{flalign*}
	\textbf{\emph{M}}_n (\mu) = B_{\mu}^{(s)} \left( \varpi_1, \varpi_2, \ldots , \varpi_n \b| \lambda - 2 \eta \displaystyle\sum_{k = 1}^n J_k, Z, L \right).
	\end{flalign*}

\end{cor}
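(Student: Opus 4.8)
The plan is to deduce this corollary directly from Proposition \ref{measuremmodel} by restricting to the parameter regime in which the dynamical stochastic higher spin vertex model degenerates to the dynamical $q$-Hahn boson model. Recall from Section \ref{LambdaSpecializationModelGeneral} that the latter is recovered from the former precisely when the spectral and spin parameters satisfy $u_1 = u_2 = \cdots = u$ and $u \xi_j = s_j$ for all $j$; in the elliptic variables of \eqref{qxiuskappa} this is exactly the condition $w_j - z_i - \eta = - \eta \Lambda_i$, which, as noted in the statement, forces $w_1 = w_2 = \cdots$. Thus I would begin by imposing the specialization \eqref{qcbdelta}, observing that it is the $v = - \eta \Lambda$ instance of the specialization \eqref{qusdelta} used in Proposition \ref{measuremmodel}, together with the reparameterizations $b_i = s_i^2 = e^{4 \pi \mathbf{i} \eta \Lambda_i}$ and $c_i = q^{J_i}$.

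Next I would invoke Lemma \ref{dynamicalparticlesvlambda}, which shows that under exactly this specialization the stochastic vertex weights $\sigma_J$ collapse to the $q$-Hahn weights, namely $\sigma_J (i_1, j_1; i_2, j_2) = \varphi_{q, a, b, \kappa} \big( j_1 \b| i_2 \big) \mathbf{1}_{i_1 + j_1 = i_2 + j_2}$ with $a = s^2 q^J$, $b = s^2$, and $\kappa = \widetilde{\varkappa}^{-1}$. By Lemma \ref{lpsiweights} the $\sigma_J$ are the weights $\psi$ that govern $\mathcal{P}$ through \eqref{pndynamicalstochasticspin}, while the $\varphi$ are the weights that govern $\mathcal{M}$ through \eqref{pndynamicalstochasticspin2}. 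Since the two families of weights coincide under the present specialization, and since both models are run from $\mathcal{J}$-step initial data (per Definition \ref{measuresignaturesvertexmodel}), the Markovian update rules of $\mathcal{P}$ and $\mathcal{M}$ are identical; hence the induced measures on signatures agree, $\textbf{M}_n (\mu) = \textbf{P}_n (\mu)$ for every $\mu \in \Sign^+$.

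Finally I would combine this identification with Proposition \ref{measuremmodel}, which evaluates $\textbf{P}_n (\mu)$ as $B_{\mu}^{(s)} \big( \varpi_1, \ldots , \varpi_n \b| \lambda - 2 \eta \sum_{k = 1}^n J_k, Z, L \big)$, to obtain the claimed formula for $\textbf{M}_n (\mu)$. The only point requiring genuine care is the bookkeeping of the dynamical parameter: I must check that the multiplicative rules \eqref{kappamultiplicative} for $\mathcal{M}$ (with $b_x = s_x^2$ and $c_y = q^{J_y}$) are the exact image of the additive rules used to build $B_{\mu}^{(s)}$ under $\delta = e^{-2 \pi \mathbf{i} \lambda}$ and $\kappa_{x, y} = e^{-2 \pi \mathbf{i} \Phi_{x, y}}$, and in particular that the shift recorded by $\kappa = \widetilde{\varkappa}^{-1}$ rather than $\varkappa^{-1}$ correctly reflects that the weight in \eqref{pndynamicalstochasticspin2} is evaluated at $\kappa_{x, y - 1}$ rather than $\kappa_{x, y}$. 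This is the identical shift already handled in Proposition \ref{measuremmodel}, so it transfers to the present setting without modification, and the corollary follows.
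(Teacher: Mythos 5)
Your proposal is correct and follows essentially the same route as the paper: the paper likewise obtains the corollary by specializing $w_j - z_i - \eta = -\eta \Lambda_i$ in Proposition \ref{measuremmodel}, invoking Lemma \ref{dynamicalparticlesvlambda} to identify the $\sigma_J$ weights with the $\varphi$ weights of the dynamical $q$-Hahn boson model, and recalling the definition of $\mathcal{M}$ from Section \ref{LambdaSpecializationModelGeneral}. Your extra remark about the $\kappa = \widetilde{\varkappa}^{-1}$ shift matches the observation the paper makes just before Proposition \ref{measuremmodel}, so nothing is missing.
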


\section{Some Degenerations of the Dynamics} 

\label{JumpRandomWalk}

The dynamical stochastic higher spin vertex model can be degenerated to a number of other models of interest. For example, we mentioned in Section \ref{DegenerationsParticle} that when one sets $\delta = 0$, it becomes the stochastic higher spin vertex model of Corwin-Petrov \cite{SHSVML}, and when one sets $\mathcal{J} = (1, 1, \ldots )$ it becomes the stochastic IRF model of Borodin \cite{ERSF}. We also mentioned in Section \ref{LambdaSpecializationModelGeneral} that setting $u = s$ and reparameterizing $s^2 = b$ and $q^J = c$ produces the dynamical $q$-Hahn boson model. 

In this section we focus on the latter by listing a few degenerations of the dynamical $q$-Hahn boson model $\mathcal{M} (B, C; q; \delta)$; this list is by no means exhaustive. In what follows, we assume that all of the parameters $\{ b_ i \}$ and $\{ c_i \}$ defining $\mathcal{M}$ (from Section \ref{LambdaSpecializationModelGeneral}) are equal, that is, there exist $b, c \in \mathbb{R}$ such that all $b_i = b$ and all $c_i = c$; we also denote $a = bc$.

\begin{example}[$q$-Hahn boson model]
	
	Setting $\delta = 0$, $a = \mu$, and $b = \nu$, the weights from Definition \ref{stochasticparticlesjumps} become 
	\begin{flalign*}
	\mu^j \displaystyle\frac{\textbf{1}_{j \le i} (q; q)_i}{(q; q)_j (q; q)_{i - j}} \displaystyle\frac{(\nu / \mu; q)_j (\mu; q)_{i - j}}{(\nu; q)_i},
	\end{flalign*}
	
	\noindent which coincide with the weights for the $q$-Hahn boson model defined in equation (8) of \cite{IZRCMFSS}. 
	
\end{example} 

\begin{example}[$q$-Hahn boson model with inverted parameters]
	
	Letting $\delta$ tend to $\infty$, the weights from Definition \ref{stochasticparticlesjumps} become
	\begin{flalign*}
	\left(\displaystyle\frac{b}{a} \right)^{i - j} \displaystyle\frac{\textbf{1}_{j \le i} (q; q)_i}{(q; q)_j (q; q)_{i - j}} \displaystyle\frac{(b / a; q)_j (a; q)_{i - j}}{(b; q)_i}, 
	\end{flalign*}
	
	\noindent which coincide with the stochastic weights for the $q$-Hahn boson model with the parameters $q$, $\mu$, and $\nu$ equal to $q^{-1}$, $a^{-1}$, and $b^{-1}$, respectively. Thus, one can view the dynamical $q$-Hahn boson model as an interpolation between the $q$-Hahn boson model and the same model but with inverted parameters.
\end{example}

\begin{example}[Discrete-time AIP]
	
	Fixing positive real numbers $A, B > 0$, let us set $a = 1 - A \varepsilon$, $b = (1 - B \varepsilon) a = (1 - A \varepsilon) (1 - B \varepsilon)$, and send $\varepsilon$ to $0$. Then, the $\varphi \big( i \b| j \big)$ weight from Definition \ref{stochasticparticlesjumps} tends to $0$ unless $j = 0$ or $j = i$. In those cases, one obtains 
	\begin{flalign*}
	\varphi \big( 0 \b| i \big) = \displaystyle\frac{A}{A + B}; \qquad \varphi \big( i \b| i \big) = \displaystyle\frac{B}{A + B}. 
	\end{flalign*}
	
	\noindent In particular, if $i$ particles are at a site, then either all of them jump to the right with probability $B / (A + B)$ or none of them jump; thus, stacks of particles ``coalesce'' as time evolves. Scaling time by $B^{-1}$ and sending $B$ to $0$, this becomes the continuous-time \textit{asymmetric inclusion process} (AIP) introduced in \cite{AIP}. 
	
	Interestingly, these (discrete-time) dynamics are independent of both $q$ and the dynamical parameter $\kappa$ (they could have been derived from the original $q$-Hahn boson model but we are unaware of any previous mention to this), which leads us to believe that they are free-fermionic. Indeed, they can in fact be obtained by combining a limit transition with a zero-range transformation of the geometric discrete-time TASEP discussed in Section 2.6 of \cite{AGRSD}; we are again unaware of any previous mention to this fact in the literature.  
	
\end{example}

\begin{example}[$q$-Boson model with few particles] 
	
	\label{modelfewparticles}
	
	Set $a = qb$ in the weights from Definition \ref{stochasticparticlesjumps}, and consider the associated dynamical $q$-Hahn boson model with \textit{step initial data}. Thus one path enters through each vertex of the $y$-axis or, in terms of the interacting particle interpretation from Section \ref{InteractingParticles}, one particle enters the system at each time step. Under the specialization $a = qb$, the stochastic weights become 
	\begin{flalign*}
	\varphi \big( 0 \b| i \big) = \displaystyle\frac{(1 - q^i b)(1 - b \omega)}{(1 - b)(1 - q^i b \omega)}; \qquad \varphi \big( 1 \b| i \big) = \displaystyle\frac{b (q^i - 1)(1 - \omega)}{(1 - b)(1 - q^i b \omega)},
	\end{flalign*}
	
	\noindent and $\varphi \big( j \b| i \big) = 0$ for $j \ge 2$; here, we have set $\omega = \omega (x, y) = q^{i_1 (x, y)} \kappa_{x, y - 1}$. The multiplicative dynamical rules for $\kappa$ listed in \eqref{kappamultiplicative} can be used to provide similar dynamical rules for $\omega$. In particular, $\omega$ is defined by setting $\omega (1, 0) = \delta$; $\omega (x, y) = b q^{i_1 (x, y)} \omega( x - 1, y)$ for each $x \ge 2$ and $y \ge 0$; and $\omega (x, y) = q^{1 - j_1 (x, y - 1) - j_2 (x, y - 1)} \omega (x, y - 1)$ for each $x, y \ge 1$.
	
	Now set $b = - \varepsilon$ and $y = \varepsilon^{-1} t$, and then take the limit as $\varepsilon$ tends to $0$. Then, denoting $\widehat{\omega} (x, t) = \omega (x, \lfloor \varepsilon^{-1} t \rfloor)$, we find that $\widehat{\omega} (x, t) = 0$ for $x > 1$. Thus, particles jump from site $x$ to $x + 1$ (for $x > 1$) according to independent exponential clocks of rate $1 - q^i$, where $i$ denotes the number of particles at site $x$. In particular, the parameter $\omega$ does not affect the dynamics in the \textit{bulk} (to the right of the first site) of the model; there, the dynamics coincide with those of the $q$-boson model (which is also the zero-range transformation of the $q$-TASEP \cite{MP}), introduced in \cite{DIGM,RTAEM}. 
	
	At the first site, particles jump one space to the right according to an exponential clock of rate $(1 - q^i) (1 - \widehat{\omega})$, where $\omega = \widehat{\omega} (1, t)$. In particular, $\widehat{\omega} (0) = \delta$ and, due to the step initial data, $\widehat{\omega} (1, t)$ remains constant unless a particle jumps away from site $1$ at time $t$; at that moment, $\widehat{\omega} (1, t)$ is multiplied by $q^{-1}$. Hence, $\widehat{\omega} (t) = \delta q^{-r}$, where $r = r_t$ denotes the total number of particles that have jumped to the right from site $1$ before (or at) time $t$. 
	
	Now, further set $\omega = q^d$, for some integer $d > 0$. Then, $\widehat{\omega} = 1$ after $d$ particles have jumped out of site $1$; due to the factor of $1 - \widehat{\omega}$ in the jump rate out of the first site, this implies that at most $d$ particles will be in the bulk of the system at any given time. To summarize, setting $a = qb$, $b = \varepsilon$, scaling time by $\varepsilon^{-1}$, and setting $\delta = q^d$ for some positive integer $d$ in the stochastic weights of the dynamical $q$-Hahn boson model gives rise to a $q$-boson model in which particles ``enter'' the system according to an exponential clock of rate $1 - q^{d - r}$, where $r$ denotes the number of particles already in the system. 
	
	It is plausible that taking the limit as $q$ tends to $1$ might give rise to a polymer model with an inhomogeneous boundary condition, but we do not pursue that direction further here. 
	
\end{example}

\begin{example}[Dynamical asymmetric midpoint corner growth model and partial exclusion process]
	
	\label{dynamicalasymmetricmidpointcorner}
	
	Fix $q \in (0, 1)$ and $\delta \le 0$, and set $p = 1 / (q + 1)$. Consider the dynamical $q$-Hahn boson model with $a = q^{-1}$, and $b = q^{-2}$. Then, the weights from Definition \ref{stochasticparticlesjumps} become
	\begin{flalign*}
	\varphi \big( 0 \b| 0 \big) = 1; \quad \varphi \big( 1 \b| 2 \big) = 1; \quad \varphi \big( 0 \b| 1 \big) = \displaystyle\frac{q - \kappa}{(q + 1)(1 - \kappa) }; \quad \varphi \big( 1 \b| 1 \big) = \displaystyle\frac{1 - q \kappa}{(q + 1) (1 - \kappa)}. 
	\end{flalign*}
	
	\noindent and $\varphi \big( i \b| j \big) = 0$ for all other $i, j$. 
	
	If $\delta = 0$, then all $\kappa_{x, y} = 0$; in this case, $\varphi \big( 0 \b| 1 \big) = 1 - p$ and $\varphi \big( 1 \b| 1 \big) = p$. When the corresponding $q$-Hahn TASEP is run with $(1, 1, \ldots )$-initial data, this gives rise to the following discrete-time partial exclusion process on $\mathbb{Z}_{> 0}$, in which each site can accommodate at most two particles. Initially (at time $0$), no site is occupied by any particles. At each (discrete) time step $t > 0$, one particle enters the leftmost site (site 1), and then particles move to the right as follows. If a site $i$ has two particles at time $t - 1$, then it deterministically transfers one of its particles to site $i + 1$. If site $i$ has one particle at time $t - 1$, then it transfers one particle to site $i + 1$ with probability $p$ and no particles with probability $1 - p$. We refer to this (non-dynamical) system as the \textit{$(q; 1)$-asymmetric PEP}; its dynamical analog is called the \textit{dynamical $(q; 1; \delta)$-asymmetric PEP}. 
	
	It is quickly verified that, denoting the current of this model by $\mathfrak{h}_t (x)$, the height function $\zeta_t (x)$ of the asymmetric midpoint corner growth model with parameter $p$ (see Section \ref{DynamicalCornerGrowth}) is related to $\mathfrak{h}_t (x)$ through $2 \mathfrak{h}_t (x) + 2 (x - 1) - t = \zeta_t \big( x - \frac{t}{2} - 1 \big)$. In the symmetric setting, this was explained as the $J = 1$ case of \eqref{dynamicalparameterpartialexclusion} in Remark \ref{cornerpartialexclusion}, and the asymmetric case is entirely analogous. 
	
	If $\delta < 0$, then this model becomes a dynamical partial exclusion process. It can again be mapped to an asymmetric dynamical midpoint corner growth model, as in Section \ref{DynamicalCornerGrowth}, in which the probabilities $p (t, \zeta_t , z)$ of a midpoint $z = (z_1 + z_2) / 2$ (of two consecutive vertices $z_1, z_2 \in V_t$ forming a horizontal segment) going up and down are $\frac{1 - q^{1 - \zeta_t (z)} \delta}{(q + 1)(1 - q^{- \zeta_t (z)} \delta)}$ and $\frac{q - q^{- \zeta_t (z)} \delta}{(q + 1)(1 - q^{- \zeta_t (z)} \delta)}$, respectively. We refer to this model as the \emph{dynamical asymmetric midpoint corner growth model}. Observe that, when $\delta = 0$ and $\delta = - \infty$, it becomes the (non-dynamical) asymmetric midpoint corner growth models with parameters $p$ and $1 - p$, respectively.
	\end{example}

\begin{example}[Dynamical $(A, J; \gamma)$-PEP with Hahn weights]
	
	\label{dynamicjagamma}
	
	Fix $A, J \in \mathbb{Z}_{\ge 0}$, $\gamma \in \mathbb{R}$, and set $c = q^J$, $a = q^{-A}$, $\delta = q^{-\gamma}$. Next let $q$ tend to $1$. Using the seventh identity in \eqref{basichypergeometric1}, we find that 
	\begin{flalign}
	\label{dynamicalq1}
	\varphi \big( j \b| i \big) = \binom{J}{j} \binom{A}{i - j} \binom{A+ J}{i}^{-1} \displaystyle\frac{(i - A - J - \widehat{\kappa})_{i - j} (i - j - \widehat{\kappa} + 1)_j}{(i - j - A - \widehat{\kappa})_{i - j} (2i - 2j + 1 - A - \widehat{\kappa})_j}, 
	\end{flalign}
	
	\noindent where we have denoted $\kappa_{x, y} = q^{-\widehat{\kappa}}$. 
	
	Run this model with \textit{$J$-step initial data}, meaning that $J$ paths enter through each site of the $y$-axis, and no paths enter through the $x$-axis. In view of the weights \eqref{dynamicalq1}, we have that $\varphi \big( j \b| i \big) = 0$ if $j > J$, meaning that at most $J$ particles can jump from some site $x \in \mathbb{Z}_{> 0}$ to site $x + 1$ at any given time. Furthermore, using the $J$-step initial data and \eqref{dynamicalq1}, it can be quickly shown that $i_1 (x, y) \le A + J$ for any vertex $(x, y)$ in a path ensemble supported by this dynamical $q$-Hahn boson model. Stated alternatively, at most $A + J$ particles can occupy any given site. 
	
	Now let us see when the $\varphi \big( j \b| i \big)$ given by \eqref{dynamicalq1} are nonnegative. In view of the description \eqref{dynamicalparametercurrent} of the dynamical parameter in terms of the current, we find that $\widehat{\kappa} = \gamma - A - J + 2 \mathfrak{h} (x, y) + Ax + J (x - y) $. Due to the $J$-step initial data, we have that $\mathfrak{h} (x, y) \ge \max \big\{ 0, J (y - x + 1) \big\}$. This implies that $\widehat{\kappa} (x, y) \ge \gamma$ for each $x \ge 1$ and $y \ge 0$. Using the fact that $j \le i \le A + J$ and $j \le J$, we deduce that the weights $\varphi$ above are nonnegative if $\gamma > 2 A + 2 J$. 
	
	Thus, setting $\mu = q^{-J}$, $a = q^{-A}$, $\delta = q^{-\gamma}$ (with $\gamma> 2 A + 2 J$), and letting $q$ tend to $1$ gives rise to a totally asymmetric, dynamical \textit{partial exclusion process} (PEP), in which at most $A + J$ particles may occupy any site and at most $J$ particles can jump from any site. When $\gamma$ tends to $\infty$, the $\varphi$ \eqref{dynamicalq1} become the weights for the Hahn orthogonal polynomials, so we refer to this process as the \textit{dynamical $(A, J; \gamma)$-PEP with Hahn weights}. 
	
\end{example}

\begin{example}[Dynamical $(J; \gamma)$-PEP]
	
	\label{dynamicjgammaexclusion} 
	
	Consider the dynamical $(A, J; \gamma)$-PEP with Hahn weights (run with $J$-step initial data), as defined above, with $A = 1$ and $\gamma > J + 1$. Further setting $\Upsilon = \widehat{\kappa}$, the weights \eqref{dynamicalq1} become 
	\begin{flalign}
	\label{dynamicalq2}
	\varphi \big( i - 1 \b| i \big) = \displaystyle\frac{i (\Upsilon + J - i + 1) }{(J + 1) \Upsilon }; \qquad \varphi \big( i \b| i \big) = \displaystyle\frac{(J - i + 1) (\Upsilon - i)}{ (J + 1) \Upsilon},
	\end{flalign}
	
	\noindent and $\varphi \big( j \b| i \big) = 0$ for all other $j$, where we have again denoted $\kappa_{x, y} = q^{-\Upsilon}$. As before, $\Upsilon$ can be defined in terms of the current of the model through $\Upsilon = \gamma + 2 \mathfrak{h} (x, y) + (J + 1) (x - 1) - J y$. Thus, due to \eqref{dynamicalparameterpartialexclusion}, this model coincides with the dynamical $(J; \gamma)$-PEP from Definition \ref{dynamicjgammaexclusion1}. 
	
	Observe that the $J = 1$ special case can also be obtained by taking the limit as $q$ tends to $1$ in Example \ref{dynamicalasymmetricmidpointcorner}. One could also consider an asymmetric deformation of the general $J$ case by considering the dynamical $q$-Hahn boson model with $a = q^{-1}$ and $b = q^{-J - 1}$, which again coincides with the degeneration from Example \ref{dynamicalasymmetricmidpointcorner} when $J = 1$. 
	
	Let us also mention that it might be possible to recover some dynamical analog of the Gaussian limit of the $q$-Whittaker process studied in \cite{AGLP}, by taking the limit of this dynamical $(J; \gamma)$-exclusion process as $J$ tends to $\infty$ (and suitably rescaling mass and time); however, we will not pursue this further in the present paper. 
\end{example}

\begin{example}[Dynamic MADM]
	
	Denote $b / a = \mu$, set $\mu = 1 - \varepsilon$, and set $a = q$. Let us consider what happens to the jump probabilities $\varphi = \varphi_{q, a, b, \kappa}$ when $\varepsilon$ tends to $0$. Denoting $\widehat{\kappa} = \widehat{\kappa} (x, t) = \kappa_{x, \lfloor \varepsilon^{-1} t \rfloor}$, we find that 
	\begin{flalign}
	\label{jumpratemu0}
	\widetilde{\varphi}_{\widehat{\kappa}} \big( j \b| i \big) = \displaystyle\lim_{\varepsilon \rightarrow 0} \varepsilon^{-1} \varphi \big(j \b| i \big) = \textbf{1}_{j \le i} \displaystyle\frac{q^j (1 - q^{2i - 2j + 1} \widehat{\kappa}) }{(1 - q^j)(1 - q^{2i - j + 1} \widehat{\kappa})}. 
	\end{flalign} 
	
	\noindent Hence, at any time $t$, $j$ particles move from site $x$ to site $x + 1$ according to a (time-inhomogeneous) exponential clock of rate $\widetilde{\varphi}_{\widehat{\kappa} (x, t)} \big( j \b| i \big)$ defined above, where $i$ denotes the number of particles at site $x$ and time $t^-$. 
	
	Observe here that the dynamical parameter $\widehat{\kappa} (x, t)$ now changes continuously over time. In particular, we set $\widehat{\kappa} (1, 0) = \delta$ and in general have that $\widehat{\kappa} (x, T) = q^{x - 1 - 2 \mathfrak{h}_T (x)} e^T \delta$ due to \eqref{dynamicalparametercurrent} where, as in Section \ref{InteractingParticles}, $\mathfrak{h}_T (x)$ denotes the current of the model through $x$ after time $T$. 
	
	In particular, if $\delta \le 0 < q < 1$, then $\widehat{\kappa} (x, t) \le 0$ for all $x$ and $t$, which implies that the rates $\widetilde{\varphi}$ from \eqref{jumpratemu0} are positive. When $\delta = 0$, this process degenerates to a totally asymmetric version of the \textit{multi-particle asymmetric diffusion model} (MADM) introduced by Sasamoto and Wadati \cite{DWE} and later analyzed by Corwin and Barraquand \cite{AEP}. Therefore, we refer to the general ($\delta < 0$) version of this process as a \textit{dynamical totally asymmetric MADM}. 
	
\end{example}

\section{Observables and Asymptotics}

\label{ObservablesDegenerations}

The purpose of this section is to provide explicit contour integral identities for a certain family of observables of the dynamical $q$-Hahn boson model, under some restrictions of the underlying parameters; this is given by Theorem \ref{fusedhigherspin}, which will follow from suitable analytic continuation of a similar identity produced in \cite{ERSF} for the (unfused) stochastic IRF models. 

As an application of Theorem \ref{fusedhigherspin}, we establish Corollary \ref{kexclusionidentity} in Section \ref{MomentIntegrals}, which is a contour integral identity for moments of the dynamical $(J; \gamma)$-PEP introduced in Definition \ref{dynamicjgammaexclusion1}. In Section \ref{momentm} and Section \ref{momentmdynamic}, we use this identity to analyze asymptotics of the current for this model and in particular establish Proposition \ref{momentm} and Theorem \ref{momentmdynamic}. 

\subsection{Observables of the Dynamical \texorpdfstring{$q$}{}-Hahn Boson Model}

\label{MomentIntegrals}

The goal of this section is to provide contour integral identities for certain observables of the dynamical $q$-Hahn boson model, given by Theorem \ref{fusedhigherspin}. We then use this theorem to prove Corollary \ref{kexclusionidentity}, which gives observables for the dynamical $(J; \gamma)$-PEP. 

To establish that theorem we require the proposition below, established in \cite{ERSF}, that evaluates averages of specific quantities against the \textit{formal measure} on $\Sign^+$ given by $B_{\mu}^{(s)}$ (recall its definition from \eqref{bsymmetrics}). Here, we use the term ``formal measure'' since $\sum_{\mu \in \Sign_N^+} B_{\mu}^{(s)} \big( w_1, w_2, \ldots , w_M \b| \lambda \big) = 1$ under suitable constraints on $W = (w_1, w_2, \ldots, w_M) \subset \mathbb{C}$, but $B_{\mu}^{(s)} (W \b| \lambda)$ is not always guaranteed to be nonnegative (or real). 

In what follows, for any $\mu \in \Sign^+$ and $k \in \mathbb{Z}$, we denote by $\mathfrak{h}_{\mu} (k)$ the number of indices $j$ for which $\mu_j \ge k$.

\begin{prop}[{\cite[Theorem 10.1 and Equation (10.8)]{ERSF}}]
	
\label{higherspin}

Fix parameters $M \in \mathbb{Z}_{> 0}$; $\eta, \lambda, \tau \in \mathbb{C}$; $W = (w_1, w_2, \ldots ) \subset \mathbb{C}$; $Z = (z_0, z_1, \ldots ) \subset \mathbb{C} $; and $L = (\Lambda_0, \Lambda_1, \ldots) \subset \mathbb{C}$. Also set $v_{i, j} = w_j - z_i - \eta$ for each $i, j$. Denote 
\begin{flalign}
\label{uxis}
q = e^{- 4 \pi \textbf{\emph{i}} \eta}; \qquad s_j = e^{2 \pi \textbf{\emph{i}} \eta \Lambda_j}; \qquad \xi_j = e^{2 \pi \textbf{\emph{i}} z_j}; \qquad u_k = e^{2 \pi \textbf{\emph{i}} (\eta - w_k)}, 
\end{flalign}

\noindent and $\Lambda_{[i, j)} = \sum_{k = i}^{j - 1} \Lambda_k$ for any integers $i \le j$; assume that $|q| < 1$; and suppose we have taken the trigonometric limit $\tau \rightarrow \textbf{\emph{i}} \infty$. We additionally impose the following restrictions on parameters. 

\begin{itemize}
	
	\item The $\{ s_j \}$ and $\{ \xi_j \}$ are sufficiently close so that $\min_j |s_j^{-1} \xi_j| < q \max_j |s_j^{-1} \xi_j|$ 
	
	\item The $\{ s_j \}$ are sufficiently large so that $\min_j |q^M s_j \xi_j| > \max_j |s_j^{-1} \xi_j|$. 

\end{itemize}

\noindent Then, there exists an open subset $\mathcal{D} \in \mathbb{C}$ containing $s_1^{-1} \xi_1$ such that if all $s_j^{-1} \xi_j \in \mathcal{D}$ and all $u_j^{-1} \in \mathcal{D}$, then for any positive integers $x_1, x_2, \ldots, x_k$ we have that 
\begin{flalign} 
\label{integralbunfused}
\begin{aligned}
& \displaystyle\frac{1}{(e^{2 \pi \textbf{\emph{i}} \lambda}; q)_k} \displaystyle\sum_{\mu \in \Sign_M^+} B_{\mu}^{(s)} \big( w_1, w_2, \ldots , w_M \b| \lambda + 2 \eta (\Lambda_0 - M) \big) \\
& \qquad \qquad \qquad \qquad \qquad \times  \displaystyle\prod_{j = 0}^{k - 1} \Bigg( \big( 1 - q^{k - \mathfrak{h}_{\mu} (x_{j + 1})} \big) \bigg( q^M \prod_{i = 1}^{x_{j + 1} - 1} s_i^2 - e^{2 \pi \textbf{\emph{i}} \lambda} q^{k + \mathfrak{h}_{\mu} (x_{j + 1})} \bigg) \Bigg)	 \\
& \qquad \qquad = \displaystyle\frac{q^{\binom{k}{2}}}{(2 \pi \textbf{\emph{i}})^k} \displaystyle\oint \cdots \displaystyle\oint \displaystyle\prod_{1 \le i < j \le k} \displaystyle\frac{y_i - y_j}{y_i - q y_j} \displaystyle\prod_{j = 1}^k \left( \displaystyle\prod_{i = 1}^{x_j - 1} \displaystyle\frac{ \xi_i - s_i y_j}{\xi_i - s_i^{-1} y_j} \displaystyle\prod_{i = 1}^N \displaystyle\frac{1 - q u_i y_j}{1 - u_i y_j} \right) \displaystyle\frac{d y_j}{y_j},
\end{aligned}
\end{flalign}

\noindent where the $y_j$ contours are positively oriented and contain the $u_i^{-1}$ but not the $\{ \xi_i s_i \}$ or $0$.  
\end{prop}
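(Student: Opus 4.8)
The plan is to exhibit the formal measure $B_\mu^{(s)}$ as a Cauchy-type measure and then extract the observable by resumming against the Cauchy identity \eqref{stochasticsum}. First I would observe that, taking $\nu$ empty in \eqref{bsymmetrics}, the weight $B_\mu^{(s)}\big(w_1,\ldots,w_M|\lambda\big)$ is proportional to $D_\mu^{(\text{n})}\big(\rho|\lambda\big)\,B_\mu\big(w_1,\ldots,w_M|\lambda\big)$, with a proportionality constant (independent of $\mu$) fixed so that the total mass is $1$; this normalization is precisely what the Pieri and Cauchy identities of Proposition \ref{bdidentities} supply, subject to the absolute-convergence hypotheses imposed on the $\{s_j\}$ and $\{\xi_j\}$. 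Thus the left side of \eqref{integralbunfused} is an average of the product $\prod_{j}(\cdots)$ against the Cauchy measure $\propto D_\mu^{(\text{n})}(\rho)\,B_\mu(W)$.

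The central step is to realize the product observable as the effect of appending $k$ auxiliary spectral parameters $y_1,\ldots,y_k$ to the $D$-side of the measure. Each factor $\big(1-q^{k-\mathfrak{h}_\mu(x_{j+1})}\big)\big(q^M\prod_{i}s_i^2 - e^{2\pi \textbf{i}\lambda}q^{k+\mathfrak{h}_\mu(x_{j+1})}\big)$ is affine in $q^{+\mathfrak{h}}$ and in $q^{-\mathfrak{h}}$, and this is exactly the dependence on $\mathfrak{h}_\mu(x)$ produced by a single-row $D$-weight read off at column $x$, as is visible from the explicit $W_1$ vertex weights \eqref{weightsj1}. Concretely, I would use the branching rule for $D$ together with Definition \ref{definitiond} to write $D_\mu^{(\text{n})}(\rho,y_1,\ldots,y_k|\lambda)$ as $D_\mu^{(\text{n})}(\rho|\lambda)$ times a skew factor that is a generating series in the $y_j$, whose residue at the relevant pole reproduces the prescribed affine-in-$q^{\pm\mathfrak{h}_\mu(x_{j+1})}$ expression. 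Summing the resulting combination $\sum_\mu D_\mu^{(\text{n})}(\rho,\vec y|\lambda)\,B_\mu(W|\lambda+2\eta k)$ over $\mu$ via the Cauchy identity \eqref{stochasticsum} collapses the measure to the explicit double product $f(2\eta)^{\,r}\prod_j(\cdots)\prod_{i,j}\frac{f(u_i-w_j-2\eta)}{f(u_i-w_j)}$ appearing there.

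It then remains to carry out the residue/contour bookkeeping and degenerate trigonometrically. The interaction kernel $\prod_{i<j}\frac{y_i-y_j}{y_i-qy_j}$ (together with the prefactor $q^{\binom{k}{2}}$) arises from the commutation of the $k$ auxiliary operators, governed by the dynamical Yang--Baxter relations \eqref{caacdbbd}--\eqref{dacb}; equivalently, it records the ordering of the nested $y_j$-contours when the $\vec y$-symmetric $D$-function is expanded as an iterated single-variable residue. The per-variable factors $\prod_{i=1}^{x_j-1}\frac{\xi_i-s_iy_j}{\xi_i-s_i^{-1}y_j}$ come from the single-row $D$-weights up through column $x_j$, while $\prod_{i}\frac{1-qu_iy_j}{1-u_iy_j}$ is the image of the Cauchy kernel $\prod_{i,j}\frac{f(u_i-w_j-2\eta)}{f(u_i-w_j)}$. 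Sending $\tau\to\textbf{i}\infty$ replaces every $f=\theta$ by a sine and, via the identities \eqref{basichypergeometric1}, converts these theta products into the stated rational integrand; the denominator $(e^{2\pi \textbf{i}\lambda};q)_k=(\delta^{-1};q)_k$ absorbs the remaining $\lambda$-dependent prefactors.

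The hard part is the second step: building the auxiliary-row (``current-reading'') construction so that its residues yield exactly the affine-in-$q^{\pm\mathfrak{h}}$ factors, and rigorously deriving the factor $\prod_{i<j}\frac{y_i-y_j}{y_i-qy_j}$ from the repeated application of the diagonal generators. Here one must track the dynamical parameter with care, since each insertion shifts $\lambda$ by $\pm2\eta$ (as in \eqref{modifiedabcd}), so the spectral arguments and the dynamical shifts must be synchronized for the telescoping in \eqref{stochasticsum} to apply. A secondary but genuine difficulty is justifying the interchange of the infinite sum over $\mu\in\Sign_M^+$ with the residue extraction and the placement of contours enclosing $\{u_i^{-1}\}$ but not $\{\xi_is_i\}$ or $0$; this is exactly where the hypotheses $\min_j|s_j^{-1}\xi_j|<q\max_j|s_j^{-1}\xi_j|$, $\min_j|q^Ms_j\xi_j|>\max_j|s_j^{-1}\xi_j|$, and membership in the domain $\mathcal{D}$ are used to secure absolute convergence and keep the relevant poles separated.
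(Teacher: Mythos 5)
The paper never proves Proposition \ref{higherspin}: it is imported verbatim from \cite{ERSF} (Theorem 10.1 and equation (10.8) there) and used as a black box in the derivation of Theorem \ref{fusedhigherspin}, so there is no internal proof to compare against. Measured against the proof in the cited source, your outline reconstructs the correct architecture and is essentially the same route, not a new one: with $\nu$ empty, \eqref{bsymmetrics} exhibits $B_{\mu}^{(s)}$ as proportional to $D_{\mu}^{(\text{n})} (\rho) B_{\mu} (W)$; the $q$-moment-type observable is generated by adjoining auxiliary spectral parameters on the $D$-side; the sum over $\mu$ collapses through the Cauchy identity \eqref{stochasticsum}; the auxiliary variables are then extracted by nested contour integration, which is what produces $q^{\binom{k}{2}} \prod_{i < j} (y_i - y_j)/(y_i - q y_j)$ together with the per-variable factors; and the trigonometric degeneration converts the theta-function products into the stated rational integrand.

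That said, as a standalone proof your proposal has its key lemma missing rather than merely deferred. The decisive computation is the claim that inserting an extra variable into the specialization $\rho$ multiplies $D_{\mu}^{(\text{n})}$ by an explicit function of $\mu$ whose residue at the prescribed pole reproduces exactly $\big( 1 - q^{k - \mathfrak{h}_{\mu} (x)} \big) \big( q^M \prod_{i < x} s_i^2 - e^{2 \pi \textbf{i} \lambda} q^{k + \mathfrak{h}_{\mu} (x)} \big)$ up to the $(e^{2 \pi \textbf{i} \lambda}; q)_k$ normalization. You assert this (``as is visible from the explicit $W_1$ vertex weights''), but the quadratic-in-$q^{\pm \mathfrak{h}}$ structure is a genuinely dynamical effect, driven by the $\lambda$-dependent prefactors of the normalized $D$-functions and the factor $c_{\mu} (\lambda)$ of Lemma \ref{drho}; it has no analog in the non-dynamical Borodin--Petrov computation, degenerates to a single factor as the dynamical parameter is switched off, and therefore carries the entire content of the theorem --- it must be derived, not read off. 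The same applies to synchronizing the $\pm 2 \eta$ shifts of $\lambda$ under \eqref{modifiedabcd} so that the telescoping in \eqref{stochasticsum} goes through, and to the convergence and contour-separation checks where the two displayed inequalities and the domain $\mathcal{D}$ enter: you locate these difficulties correctly but do not execute them. A minor point worth recording: the index $N$ in the product $\prod_{i = 1}^N$ on the right side of \eqref{integralbunfused} is undefined in the statement and should be read as $M$.
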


\begin{rem}
	
	\label{contouryunfused}
	
	Observe that poles of the integrand on the right side of \eqref{integralbunfused} also arise when $y_j = q^{-k} u_i^{-1}$ for some $i, j, k$. As explained in the proof of the second part of Theorem 8.13 in \cite{HSVMRSF}, the residues at these poles are zero, so the integral does not depend on whether the $y_j$ contours contain enclose numbers of the form $q^{-k} u_i^{-1}$. 
\end{rem}

Using Proposition \ref{measuremmodel} and Proposition \ref{higherspin}, we can establish the Theorem \ref{fusedhigherspin}, which evaluates observables (similar to the ones on the left side of \eqref{integralbunfused}) for the dynamical $q$-Hahn boson model.

\begin{proof}[Proof of Theorem \ref{fusedhigherspin}]
	
	Define parameters $\eta, \lambda \in \mathbb{C}$; $W = (w_1, w_2, \ldots ) = (w, w, \ldots) \subset \mathbb{C}$; $Z = (z_0, z_1, \ldots) \subset \mathbb{C}$; and $L = (\Lambda_0, \Lambda_1, \ldots ) \subset \mathbb{C}$ from the parameters $q, \delta \in \mathbb{C}$, $B, C \subset \mathbb{C}$, and $\mathcal{J} \subset \mathbb{Z}_{> 0}$ through \eqref{qcbdelta}. Further define the parameter sets $\Xi = (\xi_1, \xi_2, \ldots ), S = (s_1, s_2, \ldots ) \subset \mathbb{C}$ in terms of $\eta$, $W$, $Z$, and $L$ through \eqref{uxis}; we also set $U = \varpi_1 \cup \varpi_2 \cup \cdots \varpi_N$, where $\varpi_k = (u, qu, \ldots , q^{J_k - 1} u)$ and $u = e^{2 \pi \textbf{i} (\eta - w)}$, as in \eqref{uxis}.
	
	Denoting by $\mu$ the signature associated with the particle configuration of the dynamical $q$-Hahn boson model after $N$ steps (recall the end of Section \ref{InteractingParticles}), we find that the observable on the left side of \eqref{integralbunfused} is equal to 
	\begin{flalign*}
	& \displaystyle\prod_{j = 0}^{k - 1} \Bigg( \big( 1 - q^{k - \mathfrak{h}_{\mu} (x)} \big) \bigg( q^{\sum_{i = 1}^N J_i} \displaystyle\prod_{i = 1}^{x - 1} s_i^2 - e^{2 \pi \textbf{i} \lambda} q^{k + \mathfrak{h}_{\mu}	 (x)} \bigg) \Bigg)\\
	& \qquad  =	\displaystyle\prod_{j = 0}^{k - 1} \Bigg( \big( 1 - q^{k - \mathfrak{h}_N (x)} \big) \bigg( \displaystyle\prod_{i = 1}^N c_i \displaystyle\prod_{i = 1}^{x - 1} b_i - \delta^{-1} q^{k + \mathfrak{h}_N (x)} \bigg) \Bigg),
	\end{flalign*}
	
	\noindent since $M = |U| = \sum_{i = 1}^N J_i$. Then, due to Lemma \ref{measuremmodel}, it suffices to show that the left side of \eqref{integralbunfused} is equal to the right side of \eqref{fusedhigherspinintegral}. 
	
	Denote the former by $E = E (\eta, \mathcal{J}, \lambda, W, Z, L)$ and the latter by $I = I (\eta, \mathcal{J}, \lambda, W, Z, L)$; also denote the right side of \eqref{integralbunfused} by $\widetilde{I} = \widetilde{I} (\eta, \mathcal{J}, \lambda, W, Z, L) = \widehat{I} (q, U, \Xi, S)$. Since $u = s_j \xi_j$ and $s_j^2 = b_j$ for each $j \ge 1$, we deduce from the change of variables $z_j = u y_j$ (and Remark \ref{contouryunfused}) that $I = \widetilde{I}$. 
	
	Hence, it suffices to show that $E = \widetilde{I}$. Under specific constraints of the parameters, this is the content of Proposition \ref{higherspin}. However, those constraints do not apply in our situation; for example, the different $u_j^{-1}$ will differ by powers of $q$, meaning that they are not sufficiently close as in Proposition \ref{higherspin}. To remedy this issue, we proceed through analytic continuation. 
	
	To that end, observe that since $1 - q^{k - \mathfrak{h}_{\mu} (x)}$ and $q^M \prod_{i = 1}^{x_j - 1} s_i^2 - e^{2 \pi \textbf{i} \lambda} q^{k + \mathfrak{h}_{\mu} (x)}$ are uniformly bounded, the expression $E(\eta, \mathcal{J}, \lambda, W, Z, L)$ is analytic in all parameters in the (connected) region $\mathcal{R}_1$ where 
	\begin{flalign}
	\label{bsumconvergence}
	\displaystyle\sum_{\mu \in \Sign_M^+} \Big| B_{\mu}^{(s)} \big( w_1, w_2, \ldots , w_M \b| \lambda + 2 \eta (\Lambda_0 - M) \big) \Big| 
	\end{flalign}
	
	\noindent converges absolutely. 
	
	Furthermore, $\widetilde{I} (\eta, \mathcal{J}, \lambda, W, Z, L)$ is also analytic in all parameters in the (connected) region $\mathcal{R}_2$ where all $\xi_i s_i \notin \big[ \min_j |u_j|^{-1}, \max_j |u_j| \big]$; indeed, this is the domain where the $y_j$ contours satisfying the properties in Proposition \ref{higherspin} (listed directly after \eqref{integralbunfused}) exist. 
	
	By Proposition \ref{higherspin}, $E (\eta, \mathcal{J}, W, Z, L) = \widetilde{I} (\eta, \mathcal{J}, \lambda, W, Z, L)$ when the parameters $\eta, \mathcal{J}, W, Z, L$ are in some open subset $\mathcal{R}_3 \subset \mathcal{R}_1 \cap \mathcal{R}_2$. Thus, by analytic continuation, we have that $E (\eta, \mathcal{J}, W, Z, L) = \widetilde{I} (\eta, \mathcal{J}, \lambda, W, Z, L)$ when $(\eta, \mathcal{J}, W, Z, L) \in \mathcal{R}_1 \cap \mathcal{R}_2$. 
	
	Now, observe that since the parameters $(\eta, \mathcal{J}, \lambda, W, Z, L)$ were chosen in the statement of the theorem so that $\varphi_{q, b_x, a_{x, y}, \kappa_{x, y - 1}} \big( j \b| i \big) \ge 0$, the sum \eqref{bsumconvergence} converges and is fact equal to $1$; hence, in our given situation, we have that $(\eta, \mathcal{J}, W, Z, L) \in \mathcal{R}_1$. Furthermore, since we stipulated that $b_j \notin \big[ 1, \max_i q^{1 - J_i} | \big]$, we also have that $(\eta, \mathcal{J}, W, Z, L) \in \mathcal{R}_2$.
	
	Thus, in view of the above, it follows that $E = \widetilde{I}$, from which we deduce the theorem.
\end{proof}

\begin{rem}
	
	\label{generalfusedcontour}
	
	A contour integral identity, similar to \eqref{fusedhigherspinintegral}, exists for observables of the measure $\textbf{P}_N$. In this case, the identity becomes 
	\begin{flalign}
	\label{fusedhigherspinintegralgeneral}
	\begin{aligned}
	& \displaystyle\frac{1}{\big( \delta^{-1}; q \big)_k} \mathbb{E}_{\textbf{P}_N} \left[ \displaystyle\prod_{j = 0}^{k - 1} \bigg( \delta^{-1} q^j - q^{-\mathfrak{h}_N (x_{j + 1})} \displaystyle\prod_{i = 1}^N c_i \displaystyle\prod_{i = 1}^{x_j - 1} b_i \bigg) \big(q^j - q^{\mathfrak{h}_N (x_{j + 1})} \big) \right] \\
	& \qquad \qquad \quad = \displaystyle\frac{q^{\binom{k}{2}}}{(2 \pi \textbf{i})^k} \displaystyle\oint \cdots \displaystyle\oint \displaystyle\prod_{1 \le i < j \le k} \displaystyle\frac{z_i - z_j}{z_i - q z_j} \displaystyle\prod_{j = 1}^k \left( \displaystyle\prod_{i = 1}^{x_j - 1} \displaystyle\frac{\xi_i s_i - s_i^2 z_j}{\xi_i s_i - z_j} \displaystyle\prod_{i = 1}^N \displaystyle\frac{1 - q^{J_i} u_i z_j}{1 - u_i z_j} \right) \displaystyle\frac{d z_j}{z_j},
	\end{aligned}
	\end{flalign}
	
	\noindent where the $z_j$ contours are positively oriented and contain $\bigcup_{i = 1}^N \{ u_i^{-1}, q^{-1} u_i^{-1}, \ldots , q^{1 - J_i} u_i^{-1} \}$ but no other poles of the integrand. Here, we assume that such contours exist and further that all parameters $U = (u_1, u_2, \ldots ) \subset \mathbb{C}$, $\Xi = (\xi_1, \xi_2, \ldots ) \subset \mathbb{C}$, $S = (s_1, s_2, \ldots ) \subset \mathbb{C}$, $J_1, J_2, \ldots \in \mathbb{Z}_{\ge 1}$, and $\kappa \in \mathbb{C}$ are chosen the $\psi$ stochastic weights are nonnegative (so that $\textbf{P}_N$ is indeed a probability measure). The proof of \eqref{fusedhigherspinintegralgeneral} is similar to that of Theorem \ref{fusedhigherspin} and is therefore omitted. 
\end{rem}

Through a suitable degeneration of Theorem \ref{fusedhigherspin}, we can establish Corollary \ref{kexclusionidentity}.

\begin{proof}[Proof of Corollary \ref{kexclusionidentity}]
	
Recall from Example \ref{dynamicjgammaexclusion} that the dynamical $(J, \gamma)$-PEP is obtained from the dynamical $q$-Hahn boson model by setting $\delta = q^{-\gamma}$; $b_i = q^{-J - 1}$ for each $i \ge 1$; $c_i = q^J$ for each $i \ge 1$; and taking the limit as $q$ tends to $1$. Applying this degeneration to Theorem \ref{fusedhigherspinintegral} (with both sides divided by $(1 - q)^k$) yields 
\begin{flalign}
\label{mexclusionintegral1} 
\begin{aligned}
& \displaystyle\lim_{q \rightarrow 1} \displaystyle\frac{1}{(1 - q)^k (q^{\gamma}; q)_k} \mathbb{E} \left[ \displaystyle\prod_{j = 0}^{k - 1} \Big( q^{\gamma + j} -  q^{NJ - (J + 1) x_j - \mathfrak{h}_N (x_j + 1)} \Big) \big( q^j - q^{\mathfrak{h}_N (x_j + 1)} \big) \right] \\
& \quad = \displaystyle\lim_{q \rightarrow 1} \displaystyle\frac{q^{\binom{k}{2}} (1 - q)^k}{(2 \pi \textbf{i})^k} \displaystyle\oint \cdots \displaystyle\oint \displaystyle\prod_{1 \le i < j \le k} \displaystyle\frac{z_i - z_j}{z_i - q z_j} \displaystyle\prod_{j = 1}^k \left( \displaystyle\frac{1 - z_j}{1 - q^{J + 1} z_j}  \right)^{x_j} \left( \displaystyle\frac{1 - q^J z_j}{1 - z_j} \right)^N \displaystyle\frac{d z_j}{z_j},
\end{aligned}
\end{flalign}

\noindent where the $z_j$ contours are positively oriented and enclose $\{ 1, q^{-1}, \ldots , q^{1 - J} \}$ but not $0$ or $q^{-J - 1}$. 

Now change variables on the right side of the above equality by denoting $z_j = q^{y_j - J - 1}$. Since $z_j^{-1} d z_j = (\log q) d y_j$, the right side becomes 
\begin{flalign}
\label{mexclusionintegral2}
\displaystyle\lim_{q \rightarrow 1} \displaystyle\frac{q^{\binom{k}{2}} (\log q)^k}{(2 \pi \textbf{i})^k (1 - q)^k} \displaystyle\oint \cdots \displaystyle\oint \displaystyle\prod_{1 \le i < j \le k} \displaystyle\frac{q^{y_i} - q^{y_j}}{q^{y_i} - q^{y_j + 1}} \displaystyle\prod_{j = 1}^k \left( \displaystyle\frac{1 - q^{y_j - J - 1}}{1 - q^{y_j}} \right)^{x_j} \left( \displaystyle\frac{1 - q^{ y_j - 1}}{1 - q^{y_j  -J - 1}} \right)^N d y_j,
\end{flalign}

\noindent where the $y_j$ are negatively oriented and enclose $2, 3, \ldots , J + 1$ but not $0$; observe here that the contours for the $y_j$ are negatively oriented since the contours for $z_j = q^{y_j}$ were positively oriented and $|q| < 1$. 

Observing that $\displaystyle\lim_{q \rightarrow 1} (1 - q)^{-1} \log q = -1$ and inserting \eqref{mexclusionintegral2} into \eqref{mexclusionintegral1} yields the theorem.  	
\end{proof}

\subsection{Asymptotics for the \texorpdfstring{$(J, \infty)$}{}-Partial Exclusion Process} 

\label{NonDynamickAsymptotic}

In this section we use Corollary \ref{kexclusionidentity} to outline a proof of Proposition \ref{momentm}, which accesses asymptotics for the current of the non-dynamical limit ($\gamma \rightarrow \infty$) of the $(J; \gamma)$-PEP, given in Section \ref{DynamicalPartialExclusion}. This proposition is plausibly accessible from the hydrodynamical limit methods described in the book \cite{SLIS}; however, let us briefly outline how to derive it through the identities given by Corollary \ref{kexclusionidentity}. 

We will see that the current is governed by the heat equation. In particular, for any $s \in \mathbb{R}$ and $r \in \mathbb{R}_{> 0}$, define 
\begin{flalign}
\label{hsr}
\mathcal{H} (s, r) & = \displaystyle\frac{1}{2 \pi \textbf{i}} \displaystyle\oint_{1 + \textbf{i} \mathbb{R}} \exp \left(\displaystyle\frac{r J z^2}{2} - s (J + 1) z \right)\displaystyle\frac{dz}{z^2}, 
\end{flalign}

\noindent where $z$ is integrated from top to bottom. It can be quickly verified that $\mathcal{H} (s, r)$ is the solution to the normalized heat equation $J \partial_r \mathcal{H} = 2 (J + 1)^2 \partial_s^2 \mathcal{H}$, with initial data $\mathcal{H} (s, 0) = (J + 1) |s| \textbf{1}_{s < 0}$. In particular, $\mathcal{H} (0, r) = \sqrt{r J / 2 \pi}$.

The following proposition shows the convergence of the normalized current of the $(J, \infty)$-PEP $\mathfrak{H}_T (s, r)$ (from \eqref{htsr}) to $\mathcal{H} (s, r)$ in the sense of moments; it is possible to establish stronger forms of convergence through a more refined analysis, but we do not pursue that further in this paper. 

\begin{prop} 
	
	\label{momentm}
	
	Let $m, J \in \mathbb{Z}_{\ge 1}$ and $s, r \in \mathbb{R}$ with $r > 0$. Then, 
	\begin{flalign*}
	\displaystyle\lim_{T \rightarrow \infty} \mathbb{E} \bigg[ \mathfrak{H}_T (s, r)^m \bigg] = \displaystyle\lim_{T \rightarrow \infty} T^{- m / 2} \mathbb{E} \bigg[ \mathfrak{h}_{\lfloor r T \rfloor} \Big( \Big\lfloor \displaystyle\frac{J r T}{J + 1} + s \sqrt{T} \Big\rfloor \Big)^m \bigg] = \mathcal{H} (s, r)^m, 
	\end{flalign*}
	
	\noindent where the expectations are taken with respect to the $(J, \infty)$-PEP. 
\end{prop}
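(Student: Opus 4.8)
The plan is to begin from the contour integral identity in Corollary~\ref{kexclusionidentity}, specialized to the $\gamma \to \infty$ limit, and extract the $m$-th moment of the normalized current. First I would take the limit $\gamma \to \infty$ in Corollary~\ref{kexclusionidentity}: dividing both sides by $\prod_{j=0}^{k-1}(\gamma+j)$ and letting $\gamma \to \infty$ kills the factor $\bigl(NJ - (J+1)x_j - \mathfrak{h}_N(x_j+1) - \gamma - j\bigr)$ down to its leading term $-\gamma$, so the left side reduces to $(-1)^k \, \mathbb{E}\bigl[\prod_{j=0}^{k-1}\bigl(\mathfrak{h}_N(x_j+1) - j\bigr)\bigr]$, while the right side loses its $\gamma$-dependence entirely (consistent with Remark~\ref{deltaintegral}, since $\gamma \to \infty$ corresponds to $\delta \to 0$, the non-dynamical case). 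This produces a clean contour identity for the factorial-type moments $\mathbb{E}\bigl[\prod_{j=0}^{k-1}(\mathfrak{h}_N(x_j+1)-j)\bigr]$ of the $(J,\infty)$-PEP current at a single point (taking all $x_j = x$).

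\textbf{Moment conversion and scaling.} Next I would pass from these ``$q$-deformed'' factorial moments to genuine power moments. Setting all $x_j = x$ and using that $\prod_{j=0}^{k-1}(\mathfrak{h}_N(x)-j)$ is the falling factorial $(\mathfrak{h}_N(x))_{\downarrow k}$, one can invert (via Stirling numbers) to write $\mathbb{E}[\mathfrak{h}_N(x)^m]$ as a finite linear combination of the falling-factorial moments, each of which is given by a $k$-fold contour integral. Then I would insert the scaling $N = \lfloor rT\rfloor$ and $x = \lfloor JrT/(J+1) + s\sqrt{T}\rfloor$ and perform steepest-descent/saddle-point asymptotics on each contour integral as $T \to \infty$. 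The integrand $\bigl(\frac{y-J-1}{y}\bigr)^{x+1}\bigl(\frac{y-1}{y-J-1}\bigr)^N$ has a large-$T$ exponent whose critical point, under this precise centering of $x$ around $JrT/(J+1)$, should collapse to the point $y = J+1$ on the contour; after the change of variables $y = (J+1)(1 + z/\sqrt{T})$ (or similar) one expects the exponent to linearize to the Gaussian-type form $\frac{rJz^2}{2} - s(J+1)z$ appearing in \eqref{hsr}, and the Vandermonde-type cross terms $\prod_{i<j}\frac{y_i-y_j}{y_i-y_j-1}$ to degenerate so that the $k$-fold integral factors into $k$ independent copies of the single integral $\mathcal{H}(s,r)$. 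This would yield $\lim_{T\to\infty} T^{-k/2}\,\mathbb{E}[(\mathfrak{h}_N)_{\downarrow k}] = \mathcal{H}(s,r)^k$, and since the falling factorial agrees with the $k$-th power to leading order in the scaling, the Stirling recombination collapses to $\lim T^{-m/2}\mathbb{E}[\mathfrak{h}_N^m] = \mathcal{H}(s,r)^m$.

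\textbf{The main obstacle.} I expect the principal difficulty to be the rigorous steepest-descent analysis of the $k$-fold contour integral, specifically the justification that the cross-factors $\prod_{i<j}\frac{y_i-y_j}{y_i-y_j-1}$ do not spoil the factorization in the limit. When all saddles coalesce at the same point $y=J+1$, these factors become singular (the denominators $y_i - y_j - 1$ and the numerators $y_i - y_j$ both approach controlled limits), and one must nest the contours at slightly different radii and track how the local rescaling near the saddle interacts with these interaction terms. The standard technique (as in the ASEP/$q$-TASEP moment literature, e.g.\ the analyses underlying \cite{HSVMRSF,SHSVMAEP}) is to deform each $y_j$-contour so that after rescaling they become parallel vertical lines $1 + \mathbf{i}\mathbb{R}$ at distinct heights, argue that the interaction factor tends to $1$ away from a measure-zero set while the residue contributions from pinching vanish under the $T^{-k/2}$ normalization, and thereby recover a product of single integrals. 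A secondary technical point is controlling the tails of the contour integrals and verifying uniform integrability so that the convergence of integrals genuinely gives convergence of moments; since the proposition is stated only in the sense of moments, I would not need tightness or a full distributional limit, which simplifies matters. I would carry out the saddle-point step in detail only for $k=1$ to exhibit the emergence of $\mathcal{H}(s,r)$ from \eqref{hsr}, and then indicate that the general $k$ case follows by the contour-separation argument above.
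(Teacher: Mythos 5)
Your overall route is the same as the paper's: take $\gamma \to \infty$ in Corollary~\ref{kexclusionidentity} (equivalently, invoke Remark~\ref{deltaintegral}) to get contour formulas for the falling-factorial moments $\mathbb{E}\bigl[\prod_{j=0}^{k-1}(\mathfrak{h}_N(x+1)-j)\bigr]$, do asymptotics on the integrals, and recover power moments (your Stirling inversion is an acceptable, if slightly heavier, substitute for the paper's observation that the falling factorial agrees with $\mathfrak{h}^m$ to leading order under the $T^{-m/2}$ scaling). However, there is a genuine error in your central analytic step: the saddle point is \emph{not} at $y = J+1$, and the rescaling $y = (J+1)(1 + z/\sqrt{T})$ would fail. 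Writing the exponent $G(y) = x\log\frac{y-J-1}{y} + N\log\frac{y-1}{y-J-1}$ with $x \approx \frac{JrT}{J+1}$, $N = \lfloor rT \rfloor$, the combined integrand behaves like $(y-J-1)^{x+1-N}$ near $y = J+1$, and $x+1-N \approx -\frac{rT}{J+1} < 0$: the point $y=J+1$ is a pole of order $\sim rT$, not a quadratic critical point, so no Gaussian linearization is possible there. In fact $G'(y) = O(y^{-3})$ at large $y$ under this centering (the $y^{-1}$ and $y^{-2}$ coefficients cancel exactly), so the critical point degenerates to $y = \infty$, and the correct scaling window is $y \asymp \sqrt{T}$. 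This is what the paper's substitution $y = z^{-1}\sqrt{T}$ implements, and it is also visible in the target formula \eqref{hsr}: the measure $dz/z^2$ there arises precisely from $dy = -z^{-2}\sqrt{T}\,dz$, and could not appear from a Gaussian window around a finite saddle.

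Two downstream consequences of this correction. First, your worry about the cross-factors $\prod_{i<j}\frac{y_i-y_j}{y_i-y_j-1}$ spoiling factorization is largely moot under the correct scaling: with all $y_j \asymp \sqrt{T}$, generic differences satisfy $y_i - y_j \asymp \sqrt{T}$, so each factor tends to $1$ pointwise and is bounded (it vanishes, rather than blows up, at $y_i = y_j$, and the pole at $y_i - y_j = 1$ is avoided by an infinitesimal nesting of contours); the coalescing-saddle machinery you invoke from the ASEP/$q$-TASEP literature is unnecessary here, which is consistent with the limit being deterministic (a law of large numbers) rather than Tracy--Widom. Second, the contour deformation needed is the one the paper uses: deform to a rectangle $\mathcal{C}_T \cup \mathcal{D}_T$ of linear size $\asymp \sqrt{T}$, verify the uniform bound \eqref{contourdc} on the integrand over the whole deformed contour, and check that the contribution of the far sides $\mathcal{D}_T$ vanishes because $|z| > \sqrt{T}/(J+1)$ there, leaving the vertical line $1 + \textbf{i}\mathbb{R}$ and hence $\mathcal{H}(s,r)$. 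With the saddle location corrected, the rest of your plan goes through essentially as the paper's proof does.
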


\begin{proof}[Proof (Outline)]
	
	The first equality of the proposition is from the definition \eqref{htsr} of $\mathfrak{H}_T (s, r)$, so it suffices to establish the second equality. We only do this in the case $m = 1$; the general case $m \ge 1$ is entirely analogous. 
	
	To that end, let $x + 1 = \lfloor J r T / (J + 1) + s \sqrt{T} \rfloor$. Applying Corollary \ref{kexclusionidentity} with $k = m = 1$, $x_1 = x$, and $\gamma = \infty$ yields 
	\begin{flalign*}
	\mathbb{E}\big[ \mathfrak{h}_T (x + 1) \big] = \displaystyle\frac{- 1}{2 \pi \textbf{i}} \displaystyle\oint \left( \displaystyle\frac{y - J - 1}{y} \right)^{x + 1} \left( \displaystyle\frac{y - 1}{y - J - 1} \right)^T dy,
	\end{flalign*}

	\noindent where the $y$ contour is positively oriented, contains $J + 1$, and avoids $0$. 
	
	Changing variables $y = z^{-1} \sqrt{T}$, and observing that $dy = - z^{-2} d z \sqrt{T}$, we obtain that 
	\begin{flalign*}
	\mathbb{E}\big[ T^{-1 / 2} \mathfrak{h}_T (x + 1) \big] = \displaystyle\frac{1}{2 \pi \textbf{i}} \displaystyle\oint \left( 1 - \displaystyle\frac{(J + 1) z }{\sqrt{T}} \right)^{x + 1} \left( 1 + \displaystyle\frac{zJ}{\sqrt{T} - (J + 1) z} \right)^T \displaystyle\frac{dz}{z^2} ,
	\end{flalign*}
	
	\noindent where the contour for $z$ is a positively oriented circle containing $T^{1 / 2} (J + 1)^{-1}$ and avoiding $0$. This contour can be deformed (without passing any poles of the integrand) to the union of two contours. The first $\mathcal{C}_T$ is the (top to bottom) interval $[1 + \textbf{i} T^{1 / 2} / (J + 1), 1 - \textbf{i} T^{1 / 2} / (J + 1)]$; the second $\mathcal{D}_T$ is the positively oriented union $[1 - \textbf{i} T^{1 / 2} / (J + 1), T^{1 / 2} - \textbf{i} T^{1 / 2} / (J + 1)] \cup [T^{1 / 2} - \textbf{i} T^{1 / 2} / (J + 1), T^{1 / 2} + \textbf{i} T^{1 / 2} / (J + 1)] \cup [T^{1 / 2} + \textbf{i} T^{1 / 2} / (J + 1), 1 + \textbf{i} T^{1 / 2} / (J + 1)]$. Thus, $\mathcal{C}_T \cup \mathcal{D}_T$ is a rectangle of height $2 T^{1 / 2} / (J + 1)$ and width $T^{1 / 2} - 1$. 
	
	Further observe that 
	\begin{flalign}
	\label{exponential1}
	\begin{aligned}
	1 - \displaystyle\frac{(J + 1) z}{\sqrt{T}} & = \exp \left( -\displaystyle\frac{(J + 1) z}{\sqrt{T}} \right) \left( 1 - \displaystyle\frac{(J + 1)^2 z^2}{2 T} + \mathcal{O} (T^{-3 / 2})  \right); \\
	 1 + \displaystyle\frac{zJ}{\sqrt{T} - (J + 1)z} & = \exp \left( \displaystyle\frac{zJ}{\sqrt{T}} \right) \left( 1 + \displaystyle\frac{(J^2 + 2J)z^2}{2 T} + \mathcal{O} (T^{-3 / 2}) \right), 
	 \end{aligned}
	\end{flalign}
	
	\noindent uniformly over $z$ in any fixed compact subset of $\mathbb{C}$. Moreover, using the fact that $x = \lfloor J r T / (J + 1) + s \sqrt{T} \rfloor$, it can be shown (through a proof we omit) that 
	\begin{flalign}
	\label{contourdc} 
	\displaystyle\sup_{T \ge 1}  \displaystyle\sup_{z \in \mathcal{C}_T \cup \mathcal{D}_T} \Bigg| \left( 1 - \displaystyle\frac{(J + 1) z }{\sqrt{T}} \right)^{x + 1} \left( 1 + \displaystyle\frac{zJ}{\sqrt{T} - (J + 1) z} \right)^T \Bigg| < \infty. 
	\end{flalign}
	
	\noindent Thus, again using the fact that $x = J r T / (J + 1) + s \sqrt{T}$ and also \eqref{exponential1} and \eqref{contourdc}, we obtain 
	\begin{flalign*}
	\mathbb{E}\big[ T^{-1 / 2} \mathfrak{h}_T (x + 1) \big] & = \displaystyle\frac{1}{2 \pi \textbf{i}} \displaystyle\oint_{\mathcal{C}_T \cup \mathcal{D}_T} \exp \big( -s (J + 1) z \big) \left( 1 + 	\displaystyle\frac{r J z^2}{2 T} + o (T^{-1}) \right)^{r T} \displaystyle\frac{dz}{z^2} + o (1) \\
	& = \displaystyle\frac{1}{2 \pi \textbf{i}} \displaystyle\oint_{\mathcal{C}_T \cup \mathcal{D}_T} \exp \left( \displaystyle\frac{r J z^2}{2} - s (J + 1) z + o (1	) \right) \displaystyle\frac{dz}{z^2} + o (1). 
	\end{flalign*}
	
	\noindent Now, we can let $T$ tend to $\infty$; in this limit, the integral over $\mathcal{D}_T$ tends to $0$, due to \eqref{contourdc} and the fact that $|z| > \sqrt{T} / (J + 1)$ when $z \in \mathcal{D}_T$. Thus, the contour for $z$ becomes $\mathcal{C}_{\infty} = 1 + \textbf{i} \mathbb{R}$ (from top to bottom), from which the theorem follows from the definition \eqref{hsr} of $\mathcal{H} (s, r)$. 
\end{proof}

\subsection{Asymptotics for the Dynamical \texorpdfstring{$(J; \gamma)$}{}-Partial Exclusion Process} 

\label{DynamicPartialExclusionAsymptotic}

The goal of this section is to access asymptotics for the current of the dynamical $(J; \gamma)$-partial exclusion process. In this setting the scaling changes, and we are instead interested in asymptotics for the current $\mathfrak{H}_T (s, r; \gamma)$ from \eqref{htsrgamma}. In particular, we would like to show the convergence \eqref{convergencedynamical1}, or equivalently (since $\mathfrak{H}$ is nonnegative) the convergence 
\begin{flalign}
\label{convergencedynamical}
\mathfrak{H}_T (s, r; \gamma) \big( \mathfrak{H}_T (s, r; \gamma) + s (J + 1) \big) \rightarrow \chi_{a, b}. 
\end{flalign}

The following theorem establishes \eqref{convergencedynamical} in the sense of moments; in what follows, recall the definition of the rational Pochhammer symbol $(a)_m$ from \eqref{productbasicelliptic}. 

\begin{thm}
	
	\label{momentmdynamic} 
	
	Let $m, J \in \mathbb{Z}_{\ge 1}$, and let $s, r, \gamma \in \mathbb{R}$, with $r > 0$ and $\gamma > J + 1$. Denoting $a = \gamma$ and $b = \sqrt{2 \pi / r J}$, we have that 
	\begin{flalign}
	\label{dynamicmomentsconverge}
	\displaystyle\lim_{T \rightarrow \infty} \mathbb{E} \Big[ \mathfrak{H}_T (s, r; \gamma)^m \big( \mathfrak{H}_T (s, r; \gamma) + s (J + 1) \big)^m  \Big] = \left( \displaystyle\frac{r J}{2 \pi}\right)^{m / 2} (\gamma)_m = \mathbb{E} [ \chi_{a, b}^m ],  
	\end{flalign}
	
	\noindent where the first expectation above is with respect to the dynamical $(J; \gamma)$-PEP and the second expectation is with respect to the $\chi_{a, b}$-distributed random variable. 
\end{thm}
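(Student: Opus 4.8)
The plan is to start from the contour integral identity in Corollary~\ref{kexclusionidentity}, which gives the $k$-th mixed moment of the quantities $\big(NJ - (J+1)x_j - \mathfrak{h}_N(x_j+1) - \gamma - j\big)\big(\mathfrak{h}_N(x_j+1) - j\big)$ against a multiple contour integral. As in the proof of Proposition~\ref{momentm}, I would specialize all $x_j$ to a common value $x+1 = \lfloor JrT/(J+1) + sT^{1/4}\rfloor$ (so $N = \lfloor rT\rfloor$), since the theorem concerns a one-point marginal, and take $k = m$. The first task is to understand what the left side of Corollary~\ref{kexclusionidentity} becomes under the dynamical $T^{1/4}$ scaling. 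Using the relation $\Upsilon = \gamma + 2\mathfrak{h}_T(x) + (J+1)(x-1) - Jt$ from \eqref{dynamicalparameterpartialexclusion}, the factor $NJ - (J+1)x_j - \mathfrak{h}_N(x_j+1) - \gamma$ can be rewritten in terms of the current centered at its deterministic location $JrT/(J+1)$; the surviving fluctuation is exactly of order $T^{1/4}$, which is where the scaling exponent $1/4$ enters. I would show that each factor $\big(NJ - (J+1)x_j - \mathfrak{h}_N - \gamma - j\big)\big(\mathfrak{h}_N - j\big)$, after dividing by the appropriate power of $T^{1/2}$, converges to $-\mathfrak{H}_T(s,r;\gamma)\big(\mathfrak{H}_T(s,r;\gamma) + s(J+1)\big)$ up to lower-order terms, so that the left side of \eqref{dynamicmomentsconverge} is governed by the limit of $\prod_{j=0}^{m-1}(\gamma + j)^{-1}$ times the integral.

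Next I would perform the asymptotic analysis of the right-hand contour integral of Corollary~\ref{kexclusionidentity}. The integrand is $\prod_{i<j}\frac{y_i - y_j}{y_i - y_j - 1}\prod_j\big(\frac{y_j - J - 1}{y_j}\big)^{x_j+1}\big(\frac{y_j-1}{y_j-J-1}\big)^N$, with contours enclosing $2,\dots,J+1$. The key substitution is $y_j = T^{1/4} z_j^{-1} + \text{(shift)}$, analogous to the $y = z^{-1}\sqrt{T}$ change of variables in Proposition~\ref{momentm} but now matched to the $T^{1/4}$ window. Under this change, the product $\big(\frac{y_j-J-1}{y_j}\big)^{x+1}\big(\frac{y_j-1}{y_j-J-1}\big)^N$ should exponentiate: the leading $T^{1/2}$ term in the exponent must cancel by the choice of centering point $JrT/(J+1)$, the $T^{1/4}$ term produces the linear factor $-s(J+1)z_j$, and — crucially — the would-be $T^0$ Gaussian term $\frac{rJz_j^2}{2}$ that appeared in Proposition~\ref{momentm} is now of order $T^{-1/2}$ and hence vanishes in the limit. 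What remains at order $T^0$ must be the simple pole contributions. I would then argue that the integral localizes, and that the cross-term $\prod_{i<j}\frac{y_i - y_j}{y_i - y_j - 1}$, which degenerates as the $y_j \to \infty$ like $T^{1/4}$, contributes a combinatorial factor; collecting the residue structure should reproduce $\big(\frac{rJ}{2\pi}\big)^{m/2}$ times a Pochhammer-type product in $\gamma$.

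The main obstacle I expect is matching the combinatorics of the $m$-fold integral to the rising factorial $(\gamma)_m = \gamma(\gamma+1)\cdots(\gamma+m-1)$ on the right of \eqref{dynamicmomentsconverge}, together with the normalization $\prod_{j=0}^{m-1}(\gamma+j)$ coming from the left side of Corollary~\ref{kexclusionidentity}. The appearance of $\gamma$ in the limiting answer is subtle: in the non-dynamical case ($\gamma = \infty$) the Gaussian exponent survives and one gets a deterministic limit $\mathcal{H}(s,r)^m$, whereas here the Gaussian term is subleading and the $\gamma$-dependence must emerge from a different mechanism, namely the interaction kernel $\frac{y_i-y_j}{y_i-y_j-1}$ and the residues it forces. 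I anticipate the careful bookkeeping is to show that, after the change of variables, the $m$ integration variables decouple into a product of single integrals each contributing a factor $\big(\frac{rJ}{2\pi}\big)^{1/2}$ (reproducing the $b = \sqrt{2\pi/rJ}$ scaling of the Gamma distribution) while the pairwise factors and the $\gamma$-shifted normalization conspire to yield precisely $(\gamma)_m$. To finish, I would verify by a direct moment computation that $\big(\frac{rJ}{2\pi}\big)^{m/2}(\gamma)_m$ equals $\mathbb{E}[\chi_{a,b}^m]$ with $a=\gamma$, $b=\sqrt{2\pi/rJ}$, which is immediate from $\mathbb{E}[\chi_{a,b}^m] = b^{-m}\,a(a+1)\cdots(a+m-1)$, thereby identifying the limiting random variable. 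Controlling the error terms uniformly on the (unbounded) rescaled contours, as in \eqref{contourdc} of Proposition~\ref{momentm}, will require an analogous tail estimate, which I would state and use but not reprove in detail.
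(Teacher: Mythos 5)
Your overall frame --- start from Corollary \ref{kexclusionidentity} with $k = m$, all $x_j$ equal to $\lfloor JrT/(J+1) + sT^{1/4}\rfloor$, rewrite the left side via \eqref{dynamicalparameterpartialexclusion} so that each factor becomes $T^{1/2}$ times $\mathfrak{H}_T(\mathfrak{H}_T + s(J+1)) + o(1)$ --- is correct and matches the paper. But your plan for the right side contains two genuine errors. First, the substitution $y_j = T^{1/4}z_j^{-1}$ is at the wrong scale, and your claim that the Gaussian term becomes $O(T^{-1/2})$ and vanishes is backwards: writing $x+1 = \frac{JrT}{J+1} + sT^{1/4}$ and expanding the logarithm, the quadratic term in the exponent is $\frac{rJ z_j^2}{2}T^{1/2}$, which \emph{diverges} under your change of variables, so the integrand has no pointwise limit on fixed contours and the integral does not localize at the $T^{1/4}$ scale. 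The natural scale of the integral remains $T^{1/2}$ --- it is the same substitution $y_j = z_j^{-1}\sqrt{T}$ as in Proposition \ref{momentm} that works, under which the spatial deviation $sT^{1/4}$ contributes only $\exp\big(-s(J+1)z_jT^{-1/4}\big)\rightarrow 1$ while the Gaussian factor \emph{survives}. It must survive: the $\sqrt{2\pi}$ in $b = \sqrt{2\pi/rJ}$ comes precisely from the Gaussian integral through $\mathcal{H}(0,r) = \sqrt{rJ/2\pi}$, so a mechanism in which the Gaussian vanishes cannot produce the stated answer. Second, the $(\gamma)_m$ cannot emerge from ``the interaction kernel and the residues it forces'': the integrand of Corollary \ref{kexclusionidentity} contains no $\gamma$ at all, and under the correct rescaling the cross-factors $\frac{y_i - y_j}{y_i - y_j - 1}$ tend to $1$, so the $m$-fold integral simply factorizes and contributes $(rJ/2\pi)^{m/2}$ with no combinatorial correction. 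The rising factorial enters purely algebraically, from the prefactor $\prod_{j=0}^{m-1}(\gamma+j)^{-1}$ on the left side of the identity.

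The idea you are missing --- flagged in Remark \ref{deltaintegral} --- is that the $\gamma$-independence of the right side lets you avoid any new contour analysis entirely. The paper equates the left side of Corollary \ref{kexclusionidentity} at finite $\gamma$ with the same left side at $\gamma = \infty$, i.e., with $\mathbb{E}\big[\prod_{j=0}^{m-1}\big(\widetilde{\mathfrak{h}}_T(x+1) - j\big)\big]$ for the \emph{non-dynamical} $(J;\infty)$-PEP, multiplies through by $T^{-m/2}(\gamma)_m$, and applies Proposition \ref{momentm} to the non-dynamical side; since the $T^{1/4}$ window is $o(T^{1/2})$, the non-dynamical limit is $\mathcal{H}(0,r)^m = (rJ/2\pi)^{m/2}$, yielding \eqref{dynamicmomentsconverge2} and hence the theorem. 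Your route could in principle be repaired into a direct proof by redoing the $T^{1/2}$-scale steepest descent of Proposition \ref{momentm} at the shifted point, but as written the announced saddle-point mechanism would fail and the announced source of $(\gamma)_m$ does not exist. Your verification of the second equality in \eqref{dynamicmomentsconverge} from $\mathbb{E}[\chi_{a,b}^m] = b^{-m}(a)_m$ is fine and agrees with the paper.
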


\begin{proof}
	
	Recall that moments of a Gamma-distributed random variable $\chi_{a, b}$ are given by 
	\begin{flalign}
	\label{chiabmoments}
	\mathbb{E} \big[ \chi_{a, b}^m \big] = \displaystyle\frac{\Gamma (a + m)}{b^m \Gamma (a)} = b^{-m} (a)_m,  
	\end{flalign}
	
	\noindent for each integer $m$. Thus, the second equality in \eqref{dynamicmomentsconverge} follows from \eqref{chiabmoments}, so it suffices to establish the first. 
	
	To that end, we observe that the right side of the identity given by Corollary \ref{kexclusionidentity} is independent of $\gamma$. Thus, applying that lemma with $k = m$, $N = \lfloor r T \rfloor$, and $x_1 = x_2 = \cdots = x_m = x = \lfloor J r T / (J + 1) + s T^{1 / 4} \rfloor - 1$, and then equating its left sides with $\gamma$ equal to $\gamma$ and $\gamma$ equal to $\infty$, we obtain that 
	\begin{flalign*}
	& \displaystyle\frac{1}{(\gamma)_m} \mathbb{E} \left[ \displaystyle\prod_{j = 0}^{m - 1} \big( \mathfrak{h}_{\lfloor r T \rfloor} (x + 1) - j \big) \big( \mathfrak{h}_{\lfloor r T \rfloor} (x + 1) + s (J + 1) T^{1 / 4} + \gamma + j) \right] \\
	& \qquad \qquad = \mathbb{E} \left[ \displaystyle\prod_{j = 0}^{m - 1} \big( \widetilde{\mathfrak{h}_T} (x + 1) - j \big) \right].
	\end{flalign*} 
	
	Above, $\widetilde{\mathfrak{h}}$ denotes the height function of the non-dynamical exclusion process analyzed in Section \ref{NonDynamickAsymptotic}. Multiplying both sides by $T^{- m / 2} (\gamma)_m$, taking the limit as $T$ tends to $\infty$, and applying Theorem \ref{momentm} yields
	\begin{flalign}
	\label{dynamicmomentsconverge2}
	\begin{aligned}
	& \displaystyle\lim_{T \rightarrow \infty} \mathbb{E} \left[ \displaystyle\prod_{j = 0}^{m - 1} \big( \mathfrak{H} (s, r; \gamma) - j T^{-1 / 4} \big) \big( \mathfrak{H}_T (s, r; \gamma) + s (J + 1) + T^{-1 / 4} (\gamma + j) \big) \right] \\
	& \qquad = \mathcal{H} (0; r)^m (\gamma)_m = \left( \displaystyle\frac{rJ}{2 \pi} \right)^{m / 2} (\gamma)_m.
	\end{aligned}
	\end{flalign}
	
	\noindent In the second equality above, we used that $\mathcal{H} (0) = \sqrt{rJ / 2 \pi}$, which follows from the fact that $\mathcal{H} (s, r)$ solves the heat equation $J \partial_r \mathcal{H} = 2 (J + 1)^2 \partial_s^2 \mathcal{H}$ with initial data $\mathcal{H} (s, 0) = (J + 1) |s| \textbf{1}_{s < 0}$ (see the discussion directly after \eqref{hsr}). Now the first equality in \eqref{dynamicmomentsconverge} follows from \eqref{dynamicmomentsconverge2}; this establishes the theorem. 
\end{proof}

\appendix 

\section{Asymptotics for the Asymmetric Corner Growth Model} 

\label{AsymmetricCorner}

In this appendix we outline a proof of the following proposition, which determines the law of large numbers for the height function $\zeta$ of the non-dynamical, asymmetric midpoint corner growth model from Section \ref{DynamicalCornerGrowth}, and also establishes KPZ-type fluctuations of $\zeta$ around this limit shape.

\begin{prop}
	
	\label{asymmetriccornerfluctuations} 
	
	Consider the (non-dynamical) asymmetric midpoint corner growth model, as defined in Section \ref{DynamicalCornerGrowth}, in which the probability that a midpoint goes up is $p \in \big( \frac{1}{2}, 1 \big)$. Denote $q = (1 - p) / p$. Then, for any $s \in \big( \frac{1 - 2p}{2}, \frac{2p - 1}{2} \big)$, we have that
	\begin{flalign}
	\label{asymmetricasymptotic}
	\displaystyle\lim_{T \rightarrow \infty} \mathbb{P} \left[ \displaystyle\frac{M T - \zeta_T ( s T )}{\mathcal{F} T^{1 / 3}} \le r \right] = F_{\TW} (r), 
	\end{flalign}
	
	\noindent where $F_{\TW} (s)$ denotes the GUE Tracy-Widom distribution \cite{LSK} and
	\begin{flalign*}
	M = M (s) & = \displaystyle\frac{q + 1 - 2 \sqrt{q (1 - 4 s^2)}}{1 - q}; \\
	 \mathcal{F} = \mathcal{F} (s) & = \displaystyle\frac{2 q^{1 / 3} \Big(  \sqrt{\frac{1}{2} - s} - \sqrt{q \big( \frac{1}{2} + s \big)} \Big)^{2 / 3} \Big( \sqrt{\frac{1}{2} + s} - \sqrt{q \big( \frac{1}{2} - s \big)} \Big)^{2 / 3}}{ (1 - q)^{4 / 3} q^{1 / 6} \big( \frac{1}{2} + s \big)^{1 / 6} \big( \frac{1}{2} - s \big)^{1 / 6} }. 
	\end{flalign*}

\end{prop}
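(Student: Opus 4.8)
The plan is to establish Proposition~\ref{asymmetriccornerfluctuations} by first transferring the asymmetric midpoint corner growth model to a discrete-time totally asymmetric simple exclusion process (or, more precisely, its interpretation as a $(q;1)$-asymmetric partial exclusion process from Example~\ref{dynamicalasymmetricmidpointcorner}), and then invoking the known Tracy-Widom asymptotics for such models. Recall from Remark~\ref{cornerpartialexclusion} and Example~\ref{dynamicalasymmetricmidpointcorner} that the height function $\zeta_t$ of the asymmetric midpoint corner growth model with parameter $p$ is related to the current $\mathfrak{h}_t(x)$ of the non-dynamical $(q;1)$-asymmetric PEP (with $q = (1-p)/p$ and $\delta = 0$) through the identity $2\mathfrak{h}_t(x) + 2(x-1) - t = \zeta_t\big(x - \frac{t}{2} - 1\big)$. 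Thus the first step is to rewrite the event $\{MT - \zeta_T(sT) \le r \mathcal{F} T^{1/3}\}$ as a fluctuation statement for the current $\mathfrak{h}_T$ at an appropriate rescaled position, converting the spatial coordinate $s$ into a corresponding site in the exclusion process.

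Second, I would set $\delta = 0$ in Theorem~\ref{fusedhigherspin} (using Remark~\ref{deltaintegral}, which shows the $\delta \to 0$ limit recovers the non-dynamical $q$-Hahn boson model under $\mathcal{J}$-step initial data) and degenerate to the case $J = 1$, $a = q^{-1}$, $b = q^{-2}$ that produces the $(q;1)$-asymmetric PEP. This yields a contour integral formula for the $q$-moments $\mathbb{E}\big[\prod_{j=0}^{k-1}\big(q^j - q^{\mathfrak{h}_N(x_{j+1})}\big)\big]$ of the current. The standard route is then to assemble these $q$-moments into a $q$-Laplace transform of the form $\mathbb{E}\big[(\zeta q^{\mathfrak{h}_N(x)}; q)_\infty^{-1}\big]$ via a Fredholm determinant / Mellin-Barnes summation, exactly as was done for the $q$-TASEP and stochastic six-vertex model in the references \cite{SHSVML,HSVMRSF,PTAEPSSVM}. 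Since the $(q;1)$-asymmetric PEP is a direct degeneration of the stochastic higher spin vertex model, these moment formulas match those already analyzed in the literature, so one can cite the corresponding asymptotic analysis rather than redo it.

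Third, I would perform the steepest descent analysis of the resulting Fredholm determinant. One identifies the critical point $z_c$ of the exponential action appearing in the kernel, governed by the double-logarithmic factor $x_j \log\frac{1-z}{1-q^{J+1}z} + N\log\frac{1-q^J z}{1-z}$ specialized to $J=1$; matching the leading-order term determines the law of large numbers constant $M(s)$, while the second-order (vanishing first derivative, cubic behavior) determines the $T^{1/3}$ scale $\mathcal{F}(s)$ and produces the Airy kernel. The explicit algebraic forms of $M(s)$ and $\mathcal{F}(s)$ in the statement are precisely what one obtains by solving the critical-point equations, so the bulk of this step is a (routine but lengthy) calculation verifying that the saddle point reproduces these constants; I would present the critical-point equation and state that its solution yields the claimed formulas, deferring the algebra.

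The main obstacle, and the only genuinely nontrivial analytic point, will be justifying the convergence of the Fredholm determinant under the steepest-descent contour deformation uniformly in $T$—that is, controlling the tails of the kernel away from the critical point and verifying the contours can be deformed to steepest-descent paths without crossing poles. This is a technical estimate of the type carried out in \cite{PTAEPSSVM,SHSVML}, and because the $(q;1)$-asymmetric PEP sits inside the stochastic higher spin vertex model framework with $0 < q < 1$ and $p > \frac{1}{2}$ (so that $q = (1-p)/p \in (0,1)$), the required decay estimates follow from the same arguments. I would therefore indicate that this uniform control is inherited from the cited works and note the restriction $s \in \big(\frac{1-2p}{2}, \frac{2p-1}{2}\big)$ guarantees the critical point lies in the region where the steepest-descent contours exist and the limit shape is strictly curved, which is exactly what forces the GUE Tracy-Widom limit $F_{\TW}$ rather than a degenerate distribution.
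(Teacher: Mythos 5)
Your proposal is correct in outline, but it takes a genuinely different route from the paper. You share the first step exactly: the paper likewise uses Remark \ref{cornerpartialexclusion} and \eqref{dynamicalparameterpartialexclusion} (with $J = 1$) to reduce \eqref{asymmetricasymptotic} to a fluctuation statement \eqref{partialexclusionasymptotic} for the current $\mathfrak{h}_T(\lfloor \eta T\rfloor)$ of the $(q;1)$-asymmetric PEP. After that the paths diverge. You propose the $q$-moment route: specialize Theorem \ref{fusedhigherspin} at $\delta = 0$, assemble the moments into a $q$-Laplace transform via a Mellin--Barnes/Fredholm determinant, and do steepest descent on the resulting kernel, citing \cite{SHSVML,PTAEPSSVM} for the uniform tail control. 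The paper instead sidesteps the Fredholm machinery entirely: it identifies the $(q;1)$-PEP with a stochastic higher spin vertex model with spectral, inhomogeneity, and spin parameters $U = (1,\ldots,1)$, $\Xi = S = (q^{-1}, q^{-1},\ldots)$, and invokes the matching of such models with \emph{Schur measures} (Corollary 4.11 of \cite{SHSVMM}) to show that $\mathfrak{h}_T(N)$ is asymptotically equivalent to $T - \ell(\lambda)$, where $\lambda$ is sampled from the Schur measure with $F_X(z) = (1-z)^{-T}$ and $F_Y(z) = (1-z)^{-N}(1-qz)^{-N}$. Then $\{\lambda_k - k\}$ is a determinantal point process with Okounkov's explicit double-contour kernel, Kerov's complementation principle converts $\mathbb{P}[-\ell(\lambda) > R]$ into a Fredholm determinant of a complemented kernel, and a single standard steepest-descent analysis around the double critical point $\vartheta$ of $G(z) = (\eta-1)\log(z-1) + \eta\log(zq-1) + m\log z$ yields $F_{\TW}$ together with the constants $m(\eta)$, $f(\eta)$ (and hence $M(s)$, $\mathcal{F}(s)$). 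What each approach buys: yours is self-contained relative to this paper's own integrability output (Theorem \ref{fusedhigherspin}) but would force you to actually carry out, or carefully adapt, the Mellin--Barnes summation and uniform Fredholm convergence estimates for this specific degeneration ($b = q^{-2}$, $a = q^{-1}$, discrete time), which are not literally in the cited works; the paper's Schur-measure comparison imports determinantal structure for free and reduces the analytic work to a textbook-style saddle-point computation on one explicit kernel, which is why the appendix can afford to be an outline. Your one slightly glib point is the claim that the tail estimates ``follow from the same arguments'' in \cite{PTAEPSSVM,SHSVML} — the double-logarithmic action for this model differs from those cases, so the contour deformations and decay bounds would need to be reverified — but this is routine rather than a gap.
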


\begin{rem}
	\label{sq} 
	
	If $s \notin \big( \frac{1 - 2p}{2}, \frac{2p - 1}{2} \big)$, then it can be shown that $\zeta_T (s T) = 2 |s| T + o (T^{1 / 3})$ with probability $1 + o (1)$. 
\end{rem}

In view of \eqref{dynamicalparameterpartialexclusion} (specialized to the case $J = 1$) and Remark \ref{cornerpartialexclusion}, we can instead consider the current of the $(q; 1)$-asymmetric PEP as defined in Example \ref{dynamicalasymmetricmidpointcorner}. Specifically, to show \eqref{asymmetricasymptotic} it suffices to show that 
\begin{flalign}
\label{partialexclusionasymptotic}
\displaystyle\lim_{T \rightarrow \infty} \mathbb{P} \left[ \displaystyle\frac{m T - \mathfrak{h}_T (\lfloor \eta T \rfloor)}{f T^{1 / 3}} \le r \right] = F_{\TW} (r), 
\end{flalign} 

\noindent for any $\eta \in (1 - p, p)$, where 
\begin{flalign*}
m = m (\eta) = \displaystyle\frac{\big(\sqrt{1 - \eta} - \sqrt{q \eta} \big)^2}{1 - q}; \quad f = f (\eta) = \displaystyle\frac{q^{1 / 3} \big(  \sqrt{\eta} - \sqrt{q (1 - \eta)} \big)^{2 / 3} (\sqrt{1 - \eta} - \sqrt{q \eta})^{2 / 3}}{ (1 - q)^{4 / 3} q^{1 / 6} \eta^{1 / 6} (1 - \eta)^{1 / 6}}, 
\end{flalign*}

\noindent and the probability in \eqref{partialexclusionasymptotic} is taken with respect to the $(q; 1)$-asymmetric PEP. 

We explain a route to establishing \eqref{partialexclusionasymptotic} through a comparison between the stochastic higher spin vertex models \cite{HSVMRSF,SHSVML} and Schur measures \cite{IWRP}. This type of comparison was first observed in \cite{SHSVMM} to provide a new proof of the GUE Tracy-Widom current fluctuations in the stochastic six-vertex model with step initial data; later, it was used to establish similar asymptotics in a number of other works, including \cite{PTAEPSSVM,SHSVMAEP}. Here, we outline how to modify that proof so that it applies to asymmetric corner growth model. Since our work here will be similar to what was done in \cite{PTAEPSSVM,SHSVMM,SHSVMAEP}, we will not go through all details. 

First, we require some notation on symmetric functions; see Chapter 1 of Macdonald's book \cite{SFP} for a more thorough introduction. Denote the set of all partitions by $\mathbb{Y}$, and fix two (finite or infinite) sets of nonnegative real numbers $X = (x_1, x_2, \ldots ) \subset \mathbb{R}_{\ge 0}$ and $Y = (y_1, y_2, \ldots ) \subset \mathbb{R}_{\ge 0}$. For any $\lambda \in \mathbb{Y}$, let $h_{\lambda} (X), h_{\lambda} (Y)$ and $s_{\lambda} (X), s_{\lambda} (Y)$ denote the \emph{complete symmetric functions} and the \emph{Schur functions} associated with $\lambda$, respectively. Moreover, for any $z \in \mathbb{C}$, define the products 
\begin{flalign*}
F_X (z) = \displaystyle\sum_{n = 0}^{\infty} h_n (X) z^n = \displaystyle\prod_{j = 1}^{\infty} \displaystyle\frac{1}{1 - x_i z}; \qquad F_Y (z) = \displaystyle\sum_{n = 0}^{\infty} h_n (Y) z^n = \displaystyle\prod_{j = 1}^{\infty} \displaystyle\frac{1}{1 - y_i z},
\end{flalign*}

\noindent where we assume that the infinite sums and products converge.	

The \emph{Schur measure} $\textbf{SM}$ was introduced by Okounkov in \cite{IWRP} as a measure on $\mathbb{Y}$ weighted by products of Schur functions. Specifically, it is the probability measure that assigns weight 
\begin{flalign*}
\textbf{SM} (\lambda) = \textbf{SM}_{X; Y} \big( \{ \lambda \} \big) = Z^{-1} s_{\lambda} (X) s_{\lambda} (Y)
\end{flalign*}

\noindent to each $\lambda \in \mathbb{Y}$, where $Z = \prod_{j = 1}^n F_{x_j} (Y)$ is the normalization constant chosen to ensure that $\sum_{\lambda \in \mathbb{Y}} \textbf{SM} (\lambda) = 1$ (this follows from the \emph{Cauchy identity}). 

We will be interested in the Schur measure $\textbf{SM}_{X; Y}$ with parameters $X = (x_1, x_2, \ldots , x_T) = (1, 1, \ldots , 1)$ and $Y = (y_1, y_2, \ldots , y_{2 N}) = (1, q, 1, q, \ldots , 1, q,)$, where $N = \lfloor \eta T \rfloor$. In this case, 
\begin{flalign}
\label{xfyf}
F_X  (z) =  \left( \displaystyle\frac{1}{1 - z} \right)^T; \qquad F_Y (z) = \left( \displaystyle\frac{1}{1 - z} \right)^N \left( \displaystyle\frac{1}{1 - q z} \right)^N. 
\end{flalign}

Our interest in this measure is given by the following lemma. In what follows, $\ell (\lambda)$ denotes the \emph{length} (number of non-zero parts) of the partition $\lambda$, and we recall the notion of \emph{asymptotic equivalence} from Definition 4.2 of \cite{SHSVMM}.

\begin{lem}
	
	\label{lengthpartitionheight} 
	
	Sample a partition $\lambda$ randomly from the Schur measure $\textbf{\emph{SM}}$ defined above, and consider the current $\mathfrak{h}_t (x)$ of the $(q; 1)$-asymmetric PEP from Example \ref{dynamicalasymmetricmidpointcorner}. Then, the sequences of random variables $\big\{ \mathfrak{h}_T (N) \big\}_{T \ge 1}$ and $\big\{ T - \ell (\lambda) \big\}_{T \ge 1}$ are asymptotically equivalent. 
	
\end{lem}

\begin{proof}[Proof (Outline)]
	
	 In view of the discussion in Example \ref{dynamicalasymmetricmidpointcorner}, the $(q; 1)$-asymmetric PEP can be identified with the $q$-Hahn boson model (given in Section \ref{LambdaSpecializationModelGeneral}) with parameters $\delta = 0$, $b = q^{-2}$, and $a = q^{-1}$. Thus, this model can be identified with a (non-dynamical) stochastic higher spin vertex model (see, for example, Section 6.6 of \cite{HSVMRSF}) with spectral parameters $U = (1, 1, \ldots , 1)$ (where $1$ appears with multiplicity $T$); inhomogeneity parameters $\Xi = (q^{-1}, q^{-1}, \ldots )$; and spin parameters $S = (q^{-1}, q^{-1}, \ldots )$. Now, the lemma follows as a quick consequence of Corollary 4.11 of \cite{SHSVMM}; we omit further details. 
\end{proof}

The remainder of this section will be devoted a brief explanation of how to establish \eqref{lengthpartitionasymptotic}. What makes such a proof possible is that the configuration $\mathfrak{S} (\lambda) = \{ \lambda_k - k \}_{k \ge 1}$ forms a \emph{determinantal point process}; we refer to the survey \cite{DPP} for a precise definition of and more information about these processes. Its correlation kernel is known (see Theorem 2 of \cite{IWRP} or equations (21) and (22) of \cite{CPLGR}) to be given by a two-fold contour integral. Under our specialization, this kernel becomes 
\begin{flalign}
\label{determinantalkernel1}
\begin{aligned}
K (i, j) & = \displaystyle\frac{1}{(2 \pi \textbf{i})^2} \displaystyle\oint \displaystyle\oint \displaystyle\frac{F_X \big( w^{-1} \big) F_Y (v)}{F_Y (w) F_X \big( v^{-1} \big)} \displaystyle\frac{w^j dv dw}{v^{i + 1} (v - w)} \\ 
& = \displaystyle\frac{1}{(2 \pi \textbf{i})^2} \displaystyle\oint \displaystyle\oint \left( \displaystyle\frac{1 - v}{1 - w} \right)^T \left( \displaystyle\frac{(1 - w)(1 - wq)}{(1 - v)(1 - vq)} \right)^N \displaystyle\frac{w^{T + j} dv dw}{v^{T + i} (v - w)}, 
\end{aligned} 
\end{flalign} 

\noindent where $v$ and $w$ are integrated along positively oriented, simple, closed loops satisfying the following two properties. First, the contour for $w$ is contained inside the contour for $v$. Second, both contours enclose $0$ and $1$, but leave outside $q^{-1}$. 

In view of Lemma \ref{lengthpartitionheight}, to establish \eqref{partialexclusionasymptotic} it suffices to show that 
\begin{flalign}
\label{lengthpartitionasymptotic}
\displaystyle\lim_{T \rightarrow \infty} \mathbb{P} \left[ \displaystyle\frac{(m - 1) T + \ell (\lambda)}{f T^{1 / 3}} \le r \right] = F_{\TW} (r).  
\end{flalign}

Thus, we would like to understand the asymptotic behavior of $- \ell (\lambda)$. Since $- \ell (\lambda)$ is the smallest integer not contained in $\mathfrak{S} (\lambda)$, Kerov's complementation principle for determinantal point processes (see Proposition A.8 of \cite{AMSG}) implies that 
\begin{flalign}
\label{determinantalkernel3}
\mathbb{P} \big[ - \ell (\lambda) > R \big] = \det \big( \Id - \widetilde{K} \big)_{L^2 (\{ R, R + 1, \ldots \})},
\end{flalign} 

\noindent for any integer $R$, where $\widetilde{K} (i, j) = \textbf{1}_{i = j} - K (i, j)$. Since $\textbf{1}_{i = j}$ is the residue of the right side of \eqref{determinantalkernel1} at $w = v$, it follows that $- \widetilde{K} (i, j)$ is equal to the right side of \eqref{determinantalkernel1}, in which the contours for $v$ and $w$ are switched. More specifically,
\begin{flalign}
\label{determinantalkernel2} 
\begin{aligned}
\widetilde{K} (i + (m - 1) T & , j + (m - 1) T) = \displaystyle\frac{1}{(2 \pi \textbf{i})^2} \displaystyle\oint \displaystyle\oint \exp \Big( T \big( G(w) - G(v) \big) \Big) \displaystyle\frac{w^j d v d w}{\widetilde{v}^{i + 1} (v - w) }, 
\end{aligned}
\end{flalign}

\noindent where $w$ and $v$ are integrated along positively oriented, simple, closed loops satisfying the following two properties. First, the contour for $v$ is contained inside the contour for $w$. Second, both contours enclose $0$ and $1$, but leave outside $q^{-1}$. In \eqref{determinantalkernel2} we have denoted 
\begin{flalign*}
G(z) = (\eta - 1) \log (z - 1) + \eta \log (z q - 1) + m \log z.  	
\end{flalign*}

\noindent Now, using \eqref{determinantalkernel3}, \eqref{determinantalkernel2}, and the facts that 
\begin{flalign*}
G' (z) = \displaystyle\frac{q (\sqrt{\eta} - \sqrt{q (1 - \eta)})^2 (z - \vartheta)^2}{(1 - q) z (z - 1) (zq - 1)} \quad \text{and} \quad \displaystyle\frac{G''' (\vartheta)}{6} = \displaystyle\frac{1}{3} \left(\displaystyle\frac{f}{\vartheta} \right)^3, \quad \text{where} \quad \vartheta = \displaystyle\frac{\sqrt{1 - \eta} - \sqrt{q \eta}}{q \sqrt{1 - \eta} - \sqrt{q \eta}}, 
\end{flalign*}

\noindent one can show using a steepest descent analysis (by scaling around $\vartheta$ by a factor of $f T^{1 / 3} / \vartheta$) to establish \eqref{lengthpartitionasymptotic}. Asymptotic analyses of this type have by now become fairly standard (see, for example, \cite{PTAEPSSVM,RWBRE,AEP,SHSVMM,MP,SSVM,AGRSD,AMSG,SWLP,PTAP,IWRP,CPLGR,SHSVMAEP,LAEP}) and so we omit further details.

\end{document}